\documentclass[journal,onecolumn]{IEEEtran}
\usepackage{amsfonts}
\usepackage{mathrsfs}
\usepackage{amsmath}
\usepackage{amssymb}
\usepackage{graphicx}
\usepackage{epic}
\usepackage{makecell}
\usepackage{epstopdf}
\usepackage{color}
\usepackage{cite}
\usepackage{bbm}
\usepackage{multirow}
\usepackage{makecell}
\usepackage{mathtools}
\usepackage{enumerate}
\usepackage{hyperref}
\usepackage{lineno}
\usepackage{url}
\usepackage{float}
\usepackage{color}
\usepackage{bm}
\usepackage{comment}
\usepackage{booktabs}
\usepackage{tikz}
\usepackage{indentfirst}
\usepackage{exscale} 
\usepackage{relsize}
\usepackage{pifont}
\usepackage{enumerate}
\newtheorem{claim}{Claim}[section]
\newtheorem{theorem}{Theorem}[section]
\newtheorem{corollary}[theorem]{Corollary}
\newtheorem{definition}[theorem]{Definition}

\newtheorem{lemma}[theorem]{Lemma}

\newtheorem{remark}[theorem]{Remark}
\newtheorem{observation}[theorem]{Observation}

\newcommand{\C}{\mathcal {C}}

\begin{document}
	
\title{Correcting One Deletion and One Substitution with a Constant Number of Reads}
\author{Yuling~Li, Yubo~Sun, Gennian~Ge%
\thanks{This research was supported by the National Key Research and Development Program of China under Grant 2025YFC3409900, the National Natural Science Foundation of China under Grant 12231014, and Beijing Scholars Program.}
\thanks{Y. Li ({\tt 2240501022@cnu.edu.cn}), Y. Sun ({\tt 2200502135@cnu.edu.cn}), and G. Ge ({\tt gnge@zju.edu.cn}) are with the School of Mathematical Sciences, Capital Normal University, Beijing 100048, China.}
}

\maketitle

\begin{abstract}
In this paper, we investigate the problem of designing \((n, N; \mathcal{B})\)-reconstruction codes for $N\in \{14,11,9,5\}$, where \(\mathcal{B}\) is the single-deletion single-substitution ball function that maps a sequence to the set of all sequences obtainable via one deletion and one substitution. Such a code is defined by the requirement that the intersection size of any two distinct single-deletion single-substitution balls is strictly less than the given number of noisy reads $N$. Note that for any $1\le N<N'$, an \((n, N; \mathcal{B})\)-reconstruction code is also an \((n, N'; \mathcal{B})\)-reconstruction code. It follows that the problem of designing \((n, N; \mathcal{B})\)-reconstruction codes with less redundancy becomes more challenging as $N$ decreases, particularly because the problem for $N=1$ already reduces to the coding problem of single-deletion and single-substitution correcting codes.
To the best of our knowledge, most existing results focus on the case where $N$ is a linear function of $n$, while only a limited number consider constant $N$. When $N=1$, the best known \((n, 1; \mathcal{B})\)-reconstruction codes (single-deletion and single-substitution correcting codes) require  $(4+o(1))\log n$ redundant bits. In this work, we show that this redundancy can be reduced to $3\log n+4$ when $N=5$. As $N$ increases further to $9$ and $11$, the redundancy can be improved to $2\log n+12\log\log n+O(1)$ and $\log n +12\log \log n+O(1)$, respectively. Finally,
for $N=14$, we provide a reconstruction code with $\log n+3$ bits of redundancy, which is only two bits more than the best known \((n, 18; \mathcal{B})\)-reconstruction codes. 
\end{abstract}

\begin{IEEEkeywords}
      Reconstruction codes, deletion, substitution, constant number of reads
\end{IEEEkeywords}

\section{Introduction}\label{I}

\IEEEPARstart{T}{he} \emph{sequence reconstruction problem} as a generalization of \emph{error-correcting codes} was first proposed by Levenshtein \cite{Levenshtein-01-IT-recons}, \cite{Levenshtein-01-JCTA-recons} in 2001. The model of this problem is as follows: a sender  transmits a codeword from a codebook over $N$ identical noisy channels each producing a distinct output. The receiver then attempts to recover the original codeword from these corrupted outputs. The central question in this framework is to determine the minimum number of noisy reads $N$ required to uniquely reconstruct the transmitted codeword. For any transmitted codeword $\bm{x}$, we denote its error ball as $\mathcal{B}(\bm{x})$, which contains a set of all possible channel outputs derived from $\bm{x}$. Levenshtein showed that if the intersection between the error balls of any two distinct codewords has size strictly less than $N$, then the original codeword $\bm{x}$ can be uniquely reconstructed from any $N$ distinct elements in $\mathcal{B}(\bm{x})$. Levenshtein studied this problem for channels involving deletions, insertions, substitutions, and transpositions. Subsequent works have extended these results to multiple insertions~\cite{Sala-17-IT-reconstr-ins,Ye-23-IT,Lan-25}, multiple deletions~\cite{Gabrys-18-IT-reconstr-del,Chrisnata-22-IT-reconstr-del,Pham-25-JCTA-reconstr-del,Lan-25}, and combined substitution–insertion channels~\cite{Abu-Sini-21-IT-recons}.

Recently, the dual problem known as the \emph{reconstruction codes problem} was introduced by Cai et al. \cite{Cai-22-IT-recon-edit} and Chrisnata et al. \cite{Chrisnata-22-IT-reconstr-del}.  This formulation is motivated by 
the applications in DNA storage \cite{Church-12-science-DNA, Goldman-13-nature-DNA, Yazdi-15-TMBMC-DNA, Organick-18-nature-DNA} and racetrack memory \cite{Chee-18-IT}. In this problem, the number of noisy reads $N$ is fixed, and the goal is to design codes with minimal redundancy guaranteeing every codeword can be uniquely reconstructed from the $N$ noisy reads. Thus, the core  question of this problem is to determine the maximum size of the intersection between the error balls of any two distinct codewords in a codebook. This problem has been studied for various channels \cite{Cai-22-IT-recon-edit, Chrisnata-22-IT-reconstr-del, Ye-23-IT, Wu-24-DCC-reconstr, Sun-23-IT-reconstr, Zhang-24-ISIT-reconstr, Sun-24-arXiv}. For more details regarding the reconstruction problem, we refer interested readers to recent surveys \cite{Sabary-24-TMBMC, Yang-25-JSAIT}.

In this paper, we study reconstruction codes problem with a constant number of reads for channels that introduce one deletion and one substitution. We refer to \(\mathcal{B}\) as the \emph{single-deletion single-substitution ball}, which maps a sequence to the set of all sequences obtainable via one deletion and one substitution.  It is well known in \cite{Smagloy-23-IT-DS} that the order of deletions and substitutions is commutative in terms of the final result. Therefore, throughout the paper we will usually choose the order that simplifies our proofs. We define  an \emph{\((n, N; \mathcal{B})\)-reconstruction code} as a set of binary sequences of length \(n\) such that any two distinct codewords \(\bm{x}\) and \(\bm{y}\) satisfy \(|\mathcal{B}(\bm{x}) \cap \mathcal{B}(\bm{y})| < N\). This condition ensures that we can uniquely reconstruct each codeword in a specific \emph{\((n, N; \mathcal{B})\)-reconstruction code} by using $N$ distinct elements from its single-deletion single-substitution ball.

For any two distinct sequences $\bm{x},\bm{y}\in \{0,1\}^n$, the authors in \cite{Li-25} recovered that
\[  
\big| \mathcal{B}(\bm{x}) \cap \mathcal{B}(\bm{y}) \big| \leq (d + s) n + O(1), 
\]  
where \(d\in \{0,1,2\}\) (respectively, \(s\in \{0,2\}\)) denotes the size of the intersection between the single-deletion (respectively, single-substitution) balls of \(\bm{x}\) and \(\bm{y}\). For any case of $(d,s)\in \{0,1,2\}\times \{0,2\}$, the upper bounds of $\big| \mathcal{B}(\bm{x}) \cap \mathcal{B}(\bm{y}) \big|$ have been determined in \cite{Li-25}. \cite{Li-25} demonstrated that when \( N \) is set to \( 4n - 8 \), \( 3n - 4 \), \( 2n + 9 \), \( n+21 \), \( 31 \),  and \( 7 \), the redundancy of binary \((n,N;\mathcal{B})\)-reconstruction codes can be \( 0 \), \( 1 \), \( 2 \), \( \log\log n + 3 \), \( \log n + 1 \), and \( 3\log n + 4 \), respectively. \cite{Cai-25-arxiv-sd} also investigated \( (n, N; \mathcal{B}) \)-reconstruction codes, constructing codes with redundancies of \( 1 \) and \( \log n + O(1) \) for \( N = 4n - 8 \) and \( 18 \), respectively, where the logarithm is on base two. 

Motivated by the work of \cite{Li-25} and \cite{Cai-25-arxiv-sd}, we continue to consider the binary reconstruction codes problem for case $(d,s)=(0,0)$, i.e., the intersections between distinct single-deletion balls and  distinct single-substitution balls are both empty. In this situation, $N$ is a constant number independent of the length $n$. The authors in \cite{Li-25} determined the size of $\mathcal{B}(\bm{x}) \cap \mathcal{B}(\bm{y})$ for this case by analyzing the possible positions where deletions occur in the sequences. We begin by dividing the set $\mathcal{B}(\bm{x}) \cap \mathcal{B}(\bm{y})$ into eighteen sets, which are unnecessarily disjoint just like \cite{Cai-25-arxiv-sd}. From  \cite{Li-25} and \cite{Cai-25-arxiv-sd}, we know that $\big| \mathcal{B}(\bm{x}) \cap \mathcal{B}(\bm{y}) \big|\le 17$ when $(d,s)=(0,0)$, and there exists an \( (n, 18; \mathcal{B}) \)-reconstruction code  with redundancy  \( \log n + 1 \). In this paper, we find that by imposing a constraint with only two additional bits of redundancy being required to the \( (n, 18; \mathcal{B}) \)-reconstruction code, we can obtain an \( (n, 14; \mathcal{B}) \)-reconstruction code with \(\log n + 3\) bits of redundancy. By further restricting the length of periodic substrings in each sequence belonging to our \( (n, 14; \mathcal{B}) \)-reconstruction code, we derive an \( (n, 11; \mathcal{B}) \)-reconstruction code with \(\log n +12 \log\log n + O(1)\) bits of redundancy. Furthermore, by imposing constraints to the differential sequence of each sequence in \( (n, 11; \mathcal{B}) \)-reconstruction code, we obtain an \( (n, 9; \mathcal{B}) \)-reconstruction code with \(2\log n +12 \log\log n + O(1)\) bits of redundancy. Finally, we show that the single-deletion single-substitution list-decodable code with list size two developed in \cite{Song-22-ISIT-LD}, which has redundancy \(3 \log n + 4\) and  is already known to be an \((n, 7; \mathcal{B})\)-reconstruction code in \cite{Li-25},  also qualifies as an \((n, 5; \mathcal{B})\)-reconstruction code.  

The remainder of this paper is organized as follows. Section~\ref{sec:pre} introduces the relevant notations, definitions, main tools, and conclusions used throughout the paper.  
In Section~\ref{sec:key}, we first divide the set $ \mathcal{B}(\bm{x}) \cap \mathcal{B}(\bm{y})$ into eighteen  subsets for any $\bm{x},\bm{y}\in \{0,1\}^n$ that satisfy $(d,s)=(0,0)$. Then, we   present the key theorems that characterize the intersection between two distinct single-deletion single-substitution balls under various constraints on these subsets.  
Section~\ref{sec:constr} describes the construction of \((n, N; \mathcal{B})\)-reconstruction codes for $N=14,11,9$, and $5$.  
Section~\ref{sec:proof} provides the proofs of the theorems stated in Section~\ref{sec:key}.  
Finally, Section~\ref{sec:concl} concludes the paper.   

\section{preliminaries}\label{sec:pre}

\subsection{Notations and Definitions}
Let $[n]$ denote the set $\{1,2,\ldots,n\}$ for any integer $n\ge 1$, and  we use $[i,j]$ to denote the set $\{i,i+1,\ldots,j\}$, for $i,j\in \mathbb{N}$ with $i\le j$,  especially, $[i,j]$ refers to the empty set when $i>j$. Such set $[i,j]$ is called an interval. Let \(\Sigma= \{0,1\}\) be the binary alphabet set, and let \(\Sigma^n\) be the set of all sequences of length \(n\) over \(\Sigma\). Furthermore, we define \(\Sigma^{\ast}= \bigcup_{n\geq 0} \Sigma^n\).  

For a sequence \(\bm{x}\in\Sigma^n\), its \(i\)-th entry is denoted by \(x_i\), and the sequence is written as  \(x_1x_2\cdots x_n\). Moreover, we define its \emph{complement} as \(\overline{\bm{x}}\triangleq \overline{x_1}\overline{x_2}\ldots \overline{x_n}\), where \(\overline{x_i}=1-x_i\) for \(i\in [n]\), and its \emph{reversal} as \(R(\bm{x})\triangleq x_n x_{n-1}\cdots x_1\). Let \(|\bm{x}|\) denote the length of \(\bm{x}\) when \(\bm{x}\) is a sequence, and let \(|\mathcal{C}|\) be the size of \(\mathcal{C}\) when \(\mathcal{C}\) is a set.  

For two sequences \(\bm{x}\in\Sigma^n\) and \(\bm{y}\in\Sigma^m\), we use \(\bm{x}\bm{y}=x_1\cdots x_n y_1\cdots y_m\) to denote their \emph{concatenation}. If there exists a set \(\mathcal{I}=\{i_1,i_2,\ldots,i_m\}\) with \(1\le i_1<i_2<\cdots <i_m \leq n\) such that \(\bm{y}=\bm{x}_{\mathcal{I}}\triangleq x_{i_1}x_{i_2}\cdots x_{i_m}\), we say that \(\bm{y}\) is a \emph{subsequence} of \(\bm{x}\). Specifically, when \(\mathcal{I}\) is an interval, \(\bm{y}\) is called a \emph{substring} of \(\bm{x}\).  

For a positive integer \(t\), we say that \(\bm{x}\in \Sigma^n\) has a \emph{period} of \(t\) or is \(t\)-\emph{periodic} if \(t\) is the smallest integer such that \(x_i=x_{i+t}\) for every \(i\in [n-t]\). Moreover, we say that \(\bm{x}\) is \emph{\(^{\le}t\)-periodic} if it has a period of at most \(t\). A \emph{run} in \(\bm{x}\) is a maximal substring of \(\bm{x}\) of period one, and an \emph{alternating sequence} in \(\bm{x}\) is a maximal substring of \(\bm{x}\) of period two. We use \(r(\bm{x})\) to denote the number of runs in \(\bm{x}\). 

For two sequences \(\bm{x}, \bm{y} \in \Sigma^n\), we define the \emph{weight} of \(\bm{x}\) as \( \mathrm{wt}(\bm{x})\triangleq \sum_{i=1}^n x_i\), which records the number of ones in $\bm{x}$. The \emph{Hamming distance} between \(\bm{x}\) and \(\bm{y}\) is given by  \(d_H(\bm{x}, \bm{y}) \triangleq |\{i\in [n]: x_i\neq y_i\}|\), 
counting the number of coordinates where \(\bm{x}\) and \(\bm{y}\) differ. When \(\bm{x}\) and \(\bm{y}\) are clear from the context, we write \(d_H \triangleq d_H(\bm{x}, \bm{y})\). For \(i \in [d_H]\), let   \(j_i \triangleq j_i(\bm{x}, \bm{y})  \) 
denote the \(i\)-th smallest index where $\bm{x}$ and $\bm{y}$ disagree.
Moreover, define \(\bm{a} \triangleq \bm{a}(\bm{x}, \bm{y}) = \bm{x}_{[1, j_1 - 1]} = \bm{y}_{[1, j_1 - 1]}\) and \(\bm{b} \triangleq \bm{b}(\bm{x}, \bm{y}) = \bm{x}_{[j_{d_H} + 1, n]} = \bm{y}_{[j_{d_H} + 1, n]}\)
as the longest common prefix and suffix of \(\bm{x}\) and \(\bm{y}\), respectively. 

For any sequence \(\bm{x}\in \Sigma^n\), we say $\bm{x}$ incurs a \emph{deletion} if it is transformed into one of its subsequences of length $n-1$, and we say $\bm{x}$ incurs a \emph{substitution} if it is transformed into a sequence $\bm{y}\in \Sigma^n$ satisfying $d_H(\bm{x},\bm{y})\le 1$. The \emph{single-deletion ball} of \(\bm{x}\) is defined as $\mathcal{D}(\bm{x})\triangleq \{\bm{y}\in \Sigma^{n-1}:\bm{y}~\text{is a subsequence of}~\bm{x} \}$, and the \emph{single-substitution ball} as $\mathcal{S}(\bm{x})\triangleq \{\bm{y}\in \Sigma^{n}:d_H(\bm{x},\bm{y})\le 1 \}$. Furthermore, the \emph{single-deletion single-substitution ball} of \(\bm{x}\), defined by \(\mathcal{B}(\bm{x})\), is the set of all sequences obtainable from \(\bm{x}\) after one deletion and one substitution. For simplicity, for any sequences $\bm{x}$ and $\bm{y}$, we define  \(\mathcal{D}(\bm{x},\bm{y})\triangleq \mathcal{D}(\bm{x})\cap \mathcal{D}(\bm{y})\), \(\mathcal{S}(\bm{x},\bm{y})\triangleq \mathcal{S}(\bm{x})\cap \mathcal{S}(\bm{y})\), and \(\mathcal{B}(\bm{x},\bm{y})\triangleq \mathcal{B}(\bm{x})\cap \mathcal{B}(\bm{y})\).

For any ${d_x,e_x}\in [n]$, let $\bm{x}(d_x,e_x)$ stand the sequence obtained from $\bm{x}\in \Sigma^n$ after deleting $x_{d_x}$ and substituting $x_{e_x}$ with $\overline{x_{e_x}}$, if $\bm{x}$ only suffers one deletion (or one substitution), we write it as $\bm{x}(d_x,0)$ (or $\bm{x}(0,e_x)$). A set of sequences \(\mathcal{C}\subseteq \Sigma^n\) is referred to as an \emph{\((n,N;\mathcal{B})\)-reconstruction code} if for any two distinct sequences \(\bm{x},\bm{y}\in \mathcal{C}\), it holds that \(|\mathcal{B}(\bm{x},\bm{y})|<N\). To evaluate a code \(\mathcal{C}\subseteq \Sigma^n\), we calculate its \emph{redundancy} \(\mathrm{red}
(\mathcal{C})\triangleq n-\log |\mathcal{C}|\), where the logarithm is on base two.    

The following definition will be crucial in our code constructions.

\begin{definition}
    For any \( \bm{x}\in\Sigma^n\), and  non-negative integer $k$, the \emph{$k$-th order VT syndrome} of $\bm{x}$ is defined as $\mathrm{VT}^{k}(\bm{x})=\sum_{i=1}^n\sum_{j=1}^i j^{k-1}x_i$ when $k\ge 1$, and $\mathrm{VT}^{0}(\bm{x})=\mathrm{wt}(\bm{x})$ when $k=0$.
\end{definition}

\subsection{Useful Tools and Conclusions}\label{sub:conclu}
In this subsection, we review several well-known tools and conclusions that will be critical for our later discussion.

\subsubsection{Differential Sequence}
For any sequence $ \bm{x}\in \Sigma^n$, let $x_{n+1}=0$, $x_0=0$, its differential sequence is defined as
    \[
    \psi(\bm{x})=(\psi(x)_1,\psi(x)_2,\cdots,\psi(x)_{n+1})\in \Sigma^{n+1},
    \]
    where $\psi(x)_i=x_i\oplus x_{i+1}\triangleq x_i-x_{i-1} \pmod{ 2}$, $1\le i\le n+1$.

Now, we summarize how a single deletion or  a substitution in \(\bm{x}\) affects the number of ones in the differential sequence \(\psi(\bm{x})\).

\begin{observation}\cite[Observation 1]{Sun-24-IT}\label{obs:del}
After a deletion at position \(i\) in \(\bm{x}\), i.e.,
\[
\cdots x_{i-1}x_ix_{i+1}\cdots \longrightarrow \cdots x_{i-1}x_{i+1}\cdots,
\]
the pair \(\psi(x)_i \psi(x)_{i+1}\) becomes \(\psi(x)_i \oplus \psi(x)_{i+1}\). The rest of \(\psi(\bm{x})\) remains unchanged. Thus, the number of ones in \(\psi(\bm{x})\) either decreases by two or stays the same.

\begin{itemize}
    \item If the number of ones remains unchanged, the error type in \(\psi(\bm{x})\) is one of: \(00 \to 0\), \(10 \to 1\), or \(01 \to 1\).
    \item If the number of ones decreases by two, the error type is \(11 \to 0\).
\end{itemize}
\end{observation}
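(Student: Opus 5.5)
The plan is to verify Observation~\ref{obs:del} by direct computation on the entries of $\psi(\bm{x})$ next to position $i$, using the convention $x_0=x_{n+1}=0$. The boundary convention also fixes the second case of the definition, since $\psi(x)_i$ must be read consistently as the XOR of two adjacent entries. We take $\psi(x)_i=x_{i-1}\oplus x_i$ for $i\in[n+1]$, as in the second expression of the definition (with $x_{n+1}=0$).

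Let $\bm{x}'=\bm{x}(i,0)$, so that $x'_j=x_j$ for $j<i$ and $x'_j=x_{j+1}$ for $j\ge i$, with padding $x'_0=0$ and $x'_n=0=x_{n+1}$. We compare $\psi(\bm{x}')\in\Sigma^{n}$ with $\psi(\bm{x})\in\Sigma^{n+1}$ entry by entry.

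\textbf{Entries $j<i$.} Here $\psi(x')_j=x'_{j-1}\oplus x'_j=x_{j-1}\oplus x_j=\psi(x)_j$.

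\textbf{Entries $j>i$.} Here $\psi(x')_j=x_j\oplus x_{j+1}=\psi(x)_{j+1}$. This covers the right end as well: for $j=n$ we get $x_n\oplus x_{n+1}$, which equals $\psi(x)_{n+1}$ because of the padding.

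\textbf{Entry $j=i$.} Here $\psi(x')_i=x'_{i-1}\oplus x'_i=x_{i-1}\oplus x_{i+1}=(x_{i-1}\oplus x_i)\oplus(x_i\oplus x_{i+1})=\psi(x)_i\oplus\psi(x)_{i+1}$.

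So the pair $\psi(x)_i\psi(x)_{i+1}$ is replaced by the single symbol $\psi(x)_i\oplus\psi(x)_{i+1}$, and every other entry is carried over unchanged. The endpoint cases $i=1$ (where $x_0=0$ is used) and $i=n$ (where $x_{n+1}=0$ is used) go through verbatim, because the padding holds for both $\bm{x}$ and $\bm{x}'$.

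The weight statement follows by checking the four possible values of the pair. The cases $00\to0$, $10\to1$ and $01\to1$ preserve the number of ones, while $11\to0$ removes exactly two ones. Hence $\mathrm{wt}(\psi(\bm{x}'))\in\{\mathrm{wt}(\psi(\bm{x})),\mathrm{wt}(\psi(\bm{x}))-2\}$, and the listed error types are exactly the ones stated.

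The only delicate point is bookkeeping: fixing a consistent definition of $\psi$ and handling the boundary indices. Everything else is a one-line verification.
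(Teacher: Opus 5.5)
Your entrywise computation is correct and complete, including the boundary cases $i=1$ and $i=n$, which are handled by the padding $x_0=x_{n+1}=0$ and $x'_0=x'_n=0$. The paper itself gives no proof here (it quotes the observation from \cite{Sun-24-IT}), so your argument is simply the standard verification. Your choice $\psi(x)_i=x_{i-1}\oplus x_i$ is the right way to resolve the inconsistent formula in the paper's definition, since it is the only indexing consistent with Observation~\ref{obs:sub} and Equation~(\ref{equ:e_x}).
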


\begin{observation}\cite[Observation 2]{Sun-24-IT}\label{obs:sub}
After a substitution at position \(i\) in \(\bm{x}\), i.e.,
\[
\cdots x_{i-1}x_ix_{i+1}\cdots \longrightarrow \cdots x_{i-1}\overline{x_i}x_{i+1}\cdots,
\]
the pair \(\psi(x)_i \psi(x)_{i+1}\) becomes \(\overline{\psi(x)_i \psi(x)_{i+1}}\). The rest of \(\psi(\bm{x})\) remains unchanged. Thus, the number of ones in \(\psi(\bm{x})\) may change by \(-2\), \(0\), or \(+2\).

\begin{itemize}
    \item If the number of ones increases by two, the error type is \(00 \to 11\).
    \item If the number of ones remains unchanged, the error type in \(\psi(\bm{x})\) is one of: \(01 \to 10\) or \(10 \to 01\).
    \item If the number of ones decreases by two, the error type is \(11 \to 00\).
\end{itemize}
\end{observation}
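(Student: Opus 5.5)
The plan is a direct computation from the definition of $\psi$. I will use the convention that is consistent with the length $n+1$ and the padding $x_0=x_{n+1}=0$, namely $\psi(x)_j=x_{j-1}\oplus x_j$ for $1\le j\le n+1$. Under this convention the coordinate $x_i$ appears in exactly two entries of $\psi(\bm{x})$: in $\psi(x)_i=x_{i-1}\oplus x_i$ and in $\psi(x)_{i+1}=x_i\oplus x_{i+1}$. Every other entry $\psi(x)_j$ with $j\notin\{i,i+1\}$ depends only on $x_{j-1},x_j$, and neither of these is $x_i$.

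\textbf{Step 1: locality.} Let $\bm{x}'$ be the sequence obtained by replacing $x_i$ with $\overline{x_i}$. For $j\notin\{i,i+1\}$ we have $\psi(x')_j=\psi(x)_j$, since those entries do not involve $x_i$. For the two remaining entries, since $\overline{x_i}=x_i\oplus 1$,
\[
\psi(x')_i=x_{i-1}\oplus x_i\oplus 1=\overline{\psi(x)_i},\qquad \psi(x')_{i+1}=x_i\oplus 1\oplus x_{i+1}=\overline{\psi(x)_{i+1}}.
\]
So the pair $\psi(x)_i\psi(x)_{i+1}$ is replaced by its complement, and the rest of $\psi(\bm{x})$ is unchanged. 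This holds also at the boundary $i=1$ or $i=n$, because the fixed padding $x_0=0$ and $x_{n+1}=0$ is not affected by the substitution.

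\textbf{Step 2: case analysis.} The change in the number of ones comes entirely from the pair, so it suffices to check the four possible values of $\psi(x)_i\psi(x)_{i+1}$.
\begin{itemize}
\item $00\to 11$ gives a change of $+2$.
\item $11\to 00$ gives a change of $-2$.
\item $01\to 10$ and $10\to 01$ give a change of $0$.
\end{itemize}
These four cases are exhaustive and mutually exclusive. Hence the possible changes are exactly $-2$, $0$, and $+2$, and each change value identifies the error type as listed in the statement.

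\textbf{Expected obstacle.} There is essentially no mathematical difficulty. The only delicate point is the indexing convention: the statement writes $x_i\oplus x_{i+1}$, while the length $n+1$ and the padding $x_0=0$ force the convention $x_{i-1}\oplus x_i$. I will fix the latter convention explicitly so that the affected pair is indeed $(\psi(x)_i,\psi(x)_{i+1})$. With the other convention the same argument goes through, but the affected pair shifts to $(\psi(x)_{i-1},\psi(x)_i)$.
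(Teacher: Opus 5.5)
Your proof is correct, and it is the natural argument. The paper does not prove this observation itself but quotes it from \cite{Sun-24-IT}; it rests on exactly your locality computation (only $\psi(x)_i$ and $\psi(x)_{i+1}$ involve $x_i$, and both get complemented) followed by the four-case check. Your choice of the convention $\psi(x)_j=x_{j-1}\oplus x_j$ is the right way to resolve the paper's inconsistent definition of $\psi$, since it is the only reading compatible with the padding $x_0=x_{n+1}=0$, the length $n+1$, and the deletion rule in Observation~\ref{obs:del}.
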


\subsubsection{Strong-Locally-Balanced Sequence} 
Let $l$ be a positive integer, $0\le \epsilon\le \frac{1}{2}$. A sequence $ \bm{x}\in \Sigma^n$ is called a strong-$(l,\epsilon)$-locally-balanced sequence if for  $1\le i\le n$, and all $l'\ge l$, such that $i+l'-1\le n$, we have $(\frac{1}{2}-\epsilon)l'\le wt(\bm{x}_{[i,i+l'-1]})\le (\frac{1}{2}+\epsilon)l'$. Furthermore, a code $\C\subseteq \Sigma^n$ is termed a strong-($l,\epsilon$)-locally-balanced code, if every $ \bm{x}\in \C$  is a strong-($l,\epsilon$)-locally-balanced sequence. 

In this paper, we will construct an \( (n, 9; \mathcal{B}) \)-reconstruction code by utilizing strong-$(l,\epsilon)$-locally-balanced sequence and differential sequence.

\begin{lemma}\cite[Lemma 1]{Sun-24-IT} \label{lem:N=9_1}
    Suppose $n\ge 10$ and $b=6$, set $l=b^4\log n=1296\log n$, and $\epsilon=\frac{1}{3b}=\frac{1}{18}$, the size of $\mathcal{M}\triangleq \{\bm{x}\in \Sigma^n: \bm{x}~ \text{and}~ \psi(\bm{x})~\text{are strong-}(l,\epsilon)\text{-locally-}\text{balanced sequences}\}$ is at least $2^n\left(1-4n^{2-\frac{2}{9}b^2\log e}\right)$,  especially, this size is no less than $2^{n-1}$. 
\end{lemma}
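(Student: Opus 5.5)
The plan is a union bound over the two failure events, with each handled by a Chernoff/Hoeffding estimate for one window. Let $\bm{x}$ be uniform on $\Sigma^n$. First I will bound the probability that $\bm{x}$ is not strong-$(l,\epsilon)$-locally-balanced. For a fixed start $i\in[n]$ and a fixed length $l'\ge l$ with $i+l'-1\le n$, the weight $\mathrm{wt}(\bm{x}_{[i,i+l'-1]})$ is a sum of $l'$ independent fair bits. Hoeffding's inequality then gives
\[
\Pr\Big[\big|\mathrm{wt}(\bm{x}_{[i,i+l'-1]})-\tfrac{l'}{2}\big|>\epsilon l'\Big]\le 2e^{-2\epsilon^2 l'}\le 2e^{-2\epsilon^2 l}.
\]
With $\epsilon=\frac{1}{3b}$ and $l=b^4\log n$ we get $2\epsilon^2 l=\frac{2}{9}b^2\log n$. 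Hence the bound is $2e^{-\frac{2}{9}b^2\log n}=2n^{-\frac{2}{9}b^2\log e}$, since $e^{-c\log n}=2^{-c\log n\log e}=n^{-c\log e}$. There are at most $n^2$ pairs $(i,l')$, so the union bound gives failure probability at most $2n^{2-\frac{2}{9}b^2\log e}$.

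Next I treat $\psi(\bm{x})$. With the conventions $x_0=x_{n+1}=0$, the map $\bm{x}\mapsto(\psi(x)_1,\dots,\psi(x)_n)$ is a bijection of $\Sigma^n$. It is invertible by prefix sums, since $x_i=\bigoplus_{j\le i}\psi(x)_j$ up to the indexing convention. So the first $n$ coordinates of $\psi(\bm{x})$ are again uniform and i.i.d. The last coordinate $\psi(x)_{n+1}$ is determined by the others, because the total parity of $\psi(\bm{x})$ is $0$. Windows lying inside $[1,n]$ are therefore handled exactly as before. A window containing position $n+1$ is a window of length $l'-1\ge l-1$ inside the uniform part plus one extra bit. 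Its weight deviates from $l'/2$ by at most $\frac12$ more than the corresponding shorter window. I will absorb this either by shrinking $\epsilon$ slightly in the Hoeffding step, at the cost of a negligible change of constants since $l=\Theta(\log n)$, or by conditioning on the parity. Counting at most $(n+1)^2$ windows gives a second failure probability of roughly $2n^{2-\frac{2}{9}b^2\log e}$, up to lower-order factors. Adding the two bounds yields $|\mathcal{M}|\ge 2^n(1-4n^{2-\frac{2}{9}b^2\log e})$.

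Finally, with $b=6$ the exponent is $2-8\log e\approx 2-11.54<-9$. Thus $4n^{2-8\log e}\le 4\cdot 10^{-9}<\frac12$ for $n\ge 10$, which gives $|\mathcal{M}|\ge 2^{n-1}$. The only delicate step is the boundary bit $\psi(x)_{n+1}$ and the exact indexing convention of $\psi$, which make the windows of $\psi(\bm{x})$ not literally i.i.d. The first thing I would try is to show the extra $\pm\frac12$ deviation is swallowed by the enormous slack in the exponent. The slack is $\approx 9.5$ powers of $n$ beyond what is needed, so a constant-factor loss in $\epsilon^2 l$ still yields the stated bound after rounding constants. The clean factor $4$ would come from doing the first-$n$-coordinates argument carefully and treating the final window separately.
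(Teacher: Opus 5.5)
Your strategy is the right one: Hoeffding on each window, a union bound over windows, and the bijectivity of $\psi$ to transfer uniformity to $\psi(\bm{x})$. The paper gives no proof of its own; it imports the lemma from \cite{Sun-24-IT}. However, your closing accounting does not yield the displayed inequality as written. Write $c=\frac{2}{9}b^2\log e$, so that $e^{-2\epsilon^2 l}=n^{-c}$. Your counts of $n^2$ windows for $\bm{x}$ and $(n+1)^2$ for $\psi(\bm{x})$ give $2n^{-c}\big(n^2+(n+1)^2\big)>4n^{2-c}$ even before any boundary correction. Phrases like ``roughly'' and ``up to lower-order factors'' cannot produce the exact constant $4$. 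Your fallback justification is also off. A constant-factor loss in $\epsilon^2 l$ would change the exponent $2-c$ itself. The $\approx 9.5$ powers of slack only help the weaker consequence $|\mathcal{M}|\ge 2^{n-1}$, not the explicit bound.

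Both points are easy to repair.

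\emph{Counting.} A window is determined by its two endpoints, so a sequence of length $m$ has at most $\binom{m+1}{2}$ windows. Hence $\bm{x}$ and $\psi(\bm{x})$ together have at most $\frac{n(n+1)}{2}+\frac{(n+1)(n+2)}{2}=(n+1)^2$ windows.

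\emph{The boundary.} It is less delicate than you feared. The map $\psi$ is a bijection from $\Sigma^n$ onto the even-weight words of $\Sigma^{n+1}$. Deleting any one coordinate of an even-weight word is a bijection onto $\Sigma^n$. So any $n$ coordinates of $\psi(\bm{x})$ are i.i.d.\ uniform, and Hoeffding applies verbatim to every window of $\psi(\bm{x})$ except the full window $[1,n+1]$. For that single window, your $\pm\frac12$ comparison with $\psi(\bm{x})_{[1,n]}$ gives failure probability at most $2\exp\big(-2(\epsilon(n+1)-\frac12)^2/n\big)\le 2e^{2\epsilon}n^{-c}$. The shift is additive, so it lowers $2\epsilon^2 l'$ only by $2\epsilon$ rather than by a constant factor. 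The same computation shows that your cruder treatment of all windows ending at $n+1$ would also work, given the sharper count.

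\emph{Conclusion.} The total failure probability is at most $2n^{-c}\big((n+1)^2-1+e^{2\epsilon}\big)\le 4n^{2-c}$ for $n\ge 3$. Your final numerical check then gives $|\mathcal{M}|\ge 2^{n-1}$.
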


\subsubsection{Error Correcting Codes}
We first review a construction of binary codes that can correct one deletion or one substitution.

\begin{lemma}\cite[Theorem 2]{Levenshtein-66-SPD-1D}\label{lem:VT_1}
    For any $s \in [0,2n-1]$, define $\mathrm{VT}_{s}(n) = \left\{\bm{x} \in \Sigma^n : \mathrm{VT}^{1}(\bm{x}) \equiv s \pmod{2n}\right\}$, it holds that $\mathcal{D}(\bm{x},\bm{y})=\emptyset$ and $\mathcal{S}(\bm{x},\bm{y})=\emptyset$ for any two distinct sequences $\bm{x}, \bm{y}\in \mathrm{VT}_s(n)$.
\end{lemma}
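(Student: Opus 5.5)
The statement has two independent parts, and I would prove each by showing that the syndrome $\mathrm{VT}^1(\bm{x})=\sum_{i=1}^n i\,x_i$ (note that $\sum_{j=1}^i j^{0}=i$) taken modulo $2n$ separates the relevant sequences. Throughout, fix distinct $\bm{x},\bm{x}'\in\mathrm{VT}_s(n)$.

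\emph{Substitution part.} First I observe that $\mathcal{S}(\bm{x},\bm{x}')\neq\emptyset$ if and only if $d_H(\bm{x},\bm{x}')\le 2$. So it suffices to rule out $d_H\in\{1,2\}$.
\begin{itemize}
\item If $d_H=1$ with differing position $i$, then $\mathrm{VT}^1(\bm{x})-\mathrm{VT}^1(\bm{x}')=\pm i$. Since $1\le i\le n<2n$, this is nonzero modulo $2n$.
\item If $d_H=2$ with differing positions $i<j$, the difference is $\pm(i+j)$ or $\pm(j-i)$, according to whether the two flips go in the same or in opposite directions. Here $3\le i+j\le 2n-1$ and $1\le j-i\le n-1$. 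Neither value is divisible by $2n$.
\end{itemize}
Both cases contradict $\mathrm{VT}^1(\bm{x})\equiv\mathrm{VT}^1(\bm{x}')\equiv s\pmod{2n}$. Hence $d_H\ge 3$ and $\mathcal{S}(\bm{x},\bm{x}')=\emptyset$.

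\emph{Deletion part.} I would use Levenshtein's classical argument. Suppose $\bm{y}\in\mathcal{D}(\bm{x})$ is obtained by deleting a bit of value $b$ at position $p$. Let $L_0$ and $R_1$ denote, respectively, the number of zeros to the left of $p$ and the number of ones to the right of $p$ in $\bm{x}$. Then the deficiency is
\[
\Delta=\mathrm{VT}^1(\bm{x})-\mathrm{VT}^1(\bm{y})=
\begin{cases}
R_1, & b=0,\\
p+R_1=\mathrm{wt}(\bm{y})+1+L_0, & b=1.
\end{cases}
\]
Consequently $0\le\Delta\le n$. Moreover, $\Delta\le\mathrm{wt}(\bm{y})$ exactly when a $0$ was deleted.

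Given $\bm{y}$ and $\Delta$, the insertion is therefore determined up to the position within a run, and any position within a run gives the same $\bm{x}$:
\begin{itemize}
\item if $\Delta\le\mathrm{wt}(\bm{y})$, insert a $0$ so that exactly $\Delta$ ones lie to its right;
\item otherwise, insert a $1$ so that exactly $\Delta-\mathrm{wt}(\bm{y})-1$ zeros lie to its left.
\end{itemize}

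Now suppose $\bm{y}\in\mathcal{D}(\bm{x},\bm{x}')$. Both deficiencies lie in $[0,n]$ and are congruent to $s-\mathrm{VT}^1(\bm{y})$ modulo $2n$. Since $n<2n$, they are equal, and the reconstruction above forces $\bm{x}=\bm{x}'$, a contradiction.

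The only step requiring care is the deficiency computation and the verification that the two cases yield a unique $\bm{x}$. This is standard, and the modulus $2n\ge n+1$ gives more room than is needed.
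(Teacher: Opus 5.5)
Your proof is correct. The paper gives no proof of its own here. The lemma is quoted from Levenshtein (1966, Theorem 2) and used only as a black box, for instance in Appendix A to get $|\mathcal{D}(\bm{x},\bm{y})|=|\mathcal{S}(\bm{x},\bm{y})|=0$ for $\C_1$ in Lemma \ref{lem:VT}.

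Your argument is the standard one, in two halves. For substitutions, you check directly that $d_H\in\{1,2\}$ changes $\sum_i i x_i$ by $\pm i$, $\pm(i+j)$ or $\pm(j-i)$, and none of these is divisible by $2n$. For deletions, you use Levenshtein's deficiency-based reconstruction.

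One remark your write-up makes visible is which part needs which modulus. The deletion half only needs the modulus to exceed $n$, the classical $n+1$. The substitution half genuinely needs $2n$, because $i+j$ ranges over $[3,2n-1]$, so any smaller modulus fails for some pair of positions.
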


We now review a construction of single-deletion single-substitution list-decodable codes with list size two.

\begin{lemma}\cite[Definition 1, Theorem 1, and Lemma 1]{Song-22-ISIT-LD}\label{lem:DS_L}
For any $s_0 \in [0,3]$, $s_1 \in [0,2n-1]$, and $s_2 \in [0,2n^2-1]$, define the code 
\begin{align*}
    \mathcal{C}_L\triangleq \big\{\bm{x}\in \Sigma^n :\mathrm{VT}^{0}(\bm{x})\equiv s_0 \pmod{4},~ \mathrm{VT}^k(\bm{x})\equiv s_{k} \pmod{2n^k} \text{ for } k\in \{1,2\} \big\}.
\end{align*}
For any three distinct sequences $\bm{x},\bm{y},\bm{z}\in \mathcal{C}_L$, it holds that $\mathcal{B}(\bm{x})\cap \mathcal{B}(\bm{y})\cap \mathcal{B}(\bm{z})=\emptyset$.
Moreover, if $\bm{x}(d_x,e_x)=\bm{y}(d_y,e_y)$, then $e_x, e_y\in [d_x,d_y]\cup [d_y,d_x]$.
\end{lemma}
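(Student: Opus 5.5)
The plan is to follow the syndrome-comparison strategy of \cite{Song-22-ISIT-LD}. Fix $\bm{x}\neq\bm{y}$ in $\mathcal{C}_L$ with $\bm{x}(d_x,e_x)=\bm{y}(d_y,e_y)=\bm{z}$.

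\textbf{Equal weights.} A deletion changes the weight by $0$ or $-1$, and a substitution changes it by $\pm1$. Hence
\[
\mathrm{wt}(\bm{x})-\mathrm{wt}(\bm{y})=\bigl(x_{d_x}\mp 1\bigr)-\bigl(y_{d_y}\mp1\bigr)\in[-3,3].
\]
Since $\mathrm{VT}^0(\bm{x})\equiv\mathrm{VT}^0(\bm{y})\pmod 4$, we get $\mathrm{wt}(\bm{x})=\mathrm{wt}(\bm{y})$. This forces a relation between the deleted symbols and the directions of the substitutions, which cuts the analysis down to a few symbol patterns, for instance $x_{d_x}=y_{d_y}$ with substitutions of equal sign, or the mixed patterns.

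\textbf{Both $\bm{x}$ and $\bm{y}$ from $\bm{z}$.} Write $\bm{x}$ as $\bm{z}$ with a symbol inserted at $d_x$ and the bit at $e_x$ flipped back, and similarly for $\bm{y}$. Then express $\mathrm{VT}^k(\bm{x})-\mathrm{VT}^k(\bm{z}')$ for $k=1,2$ through the usual insertion formula: an insertion at position $p$ adds roughly the $k$-th moment of the ones to the right of $p$, plus a boundary term. The substitution contributes $\pm\sum_{j\le e}j^{k-1}$.

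\textbf{Syndrome equations.} Subtracting the two expressions, the difference $\mathrm{VT}^1(\bm{x})-\mathrm{VT}^1(\bm{y})$ has absolute value $<2n$ and $\mathrm{VT}^2(\bm{x})-\mathrm{VT}^2(\bm{y})$ has absolute value $<2n^2$. These are routine but careful estimates. So the congruences become integer equations
\[
\Delta_1=0,\qquad \Delta_2=0.
\]
Assume w.l.o.g.\ $d_x\le d_y$. In $\Delta_1$, the insertion terms give the number of ones, i.e.\ a count of something in $\bm{z}_{[d_x,d_y]}$, and the substitution terms give $\pm e_x\mp e_y$ plus small corrections.

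\textbf{Localisation claim.} We show that if $e_x$ or $e_y$ lies outside $[d_x,d_y]$, then $\Delta_1=0$ and $\Delta_2=0$ cannot hold simultaneously. The first-moment equation pins $e_x-e_y$ to a value determined by the segment between the deletions. The second-moment equation then has a sign-definite excess coming from the substitution term lying outside the interval, because the squares of the positions weight the outside term strictly more. This gives the ``Moreover'' part.

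\textbf{List size two.} Suppose three distinct codewords $\bm{x},\bm{y},\bm{w}$ share a common $\bm{z}$. Apply the pairwise localisation to all three pairs. Each codeword's substitution must lie between its own deletion point and the other two deletion points. Ordering $d_x\le d_y\le d_w$, the middle codeword $\bm{y}$ has $e_y\in[d_x,d_y]\cap[d_y,d_w]=\{d_y\}$. So its substitution coincides with its deletion, which effectively reduces $\bm{y}$'s error to a pure deletion, up to adjacent-symbol shifts. Feeding this into the $\mathrm{VT}^1$ equation with $\bm{x}$ (or $\bm{w}$) turns it into a single-deletion-plus-substitution versus single-deletion comparison. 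That comparison can then be contradicted with the weight and first-moment equations, in the spirit of Lemma~\ref{lem:VT_1}.

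\textbf{Main obstacle.} The delicate step is the localisation claim, i.e.\ extracting a strict sign argument from the $\mathrm{VT}^2$ equation across all weight-compatible symbol patterns. I would first handle the case $e_x,e_y$ both outside the interval on the same side, where the moments clearly do not cancel. I would then treat the split cases by combining $\Delta_2-c\cdot\Delta_1$ for a suitable position $c$ inside $[d_x,d_y]$, so that all inside terms have a fixed sign.
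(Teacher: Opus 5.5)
The paper gives no proof of this lemma; it is imported from \cite{Song-22-ISIT-LD}. Your argument therefore has to stand on its own, and its final step fails. After the closed-interval localization forces $e_y=d_y$, you treat the middle codeword as having suffered a pure deletion. You then assert that a pure deletion of one codeword cannot coincide with a deletion plus a substitution of another, using only the weight and first-moment equations. That is false. Take $n=8$, $\bm{x}=10001000$ and $\bm{y}=01010000$. Both have weight $2$ and $\mathrm{VT}^1=6$, so they lie in the same code $\mathcal{C}_1$ of Lemma~\ref{lem:VT}, with $d_H(\bm x,\bm y)=4$. Yet $\bm{x}(1,0)=0001000=\bm{y}(6,2)$. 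This is a $B_5$-element, and the substitution even lies between the two deletion positions. Only $\mathrm{VT}^2(\bm{x})=16\neq 13=\mathrm{VT}^2(\bm{y})$ separates them. Excluding this configuration therefore needs a second-moment argument that you have not supplied.

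This exposes the real issue. The localization is the whole content of the lemma, and you only outline it (your ``main obstacle''). Your outline treats only a nonzero substitution lying outside $[d_x,d_y]$. It does not address $e_x=0$ or $e_y=0$, i.e.\ $\mathcal{D}(\bm x)\cap\mathcal{B}(\bm y)\neq\emptyset$. There, no outside substitution term exists to produce your sign-definite excess, and by the example above these are exactly the cases the weight and $\mathrm{VT}^1$ equations cannot handle. The paper needs this part: it uses the lemma to place every common element in $B_{17}\cup B_{18}$.

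You should prove the sharp form instead: for distinct codewords $\mathcal{D}(\bm x)\cap\mathcal{B}(\bm y)=\emptyset$, and if $d_x<d_y$ then $d_x<e_x\le d_y$ and $d_x\le e_y<d_y$. With that, the three-codeword case is immediate and needs no extra comparison. The deletion indices are pairwise distinct because $d_H\ge 4$. For $d_x<d_y<d_w$, the pair $(\bm x,\bm y)$ gives $e_y<d_y$, while the pair $(\bm y,\bm w)$ gives $e_y>d_y$. This uses one fixed representation $\bm z=\bm y(d_y,e_y)$ against two partners. So you must prove the localization for an arbitrary representation of $\bm z$, not merely for some representation, since representations are not unique (e.g.\ the deletion can slide within its run).
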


\subsubsection{$P$-bounded Error-Correcting Codes}  
A code $\mathcal{C}\subseteq\Sigma^n$ is referred to as a \emph{$P$-bounded single-deletion single-substitution correcting code} if, for any two distinct sequences $\bm{x}=\bm{a}\tilde{\bm{x}}\bm{b},\bm{y}=\bm{a}\tilde{\bm{y}}\bm{b}\in \C$ with $|\tilde{\bm{x}}|=|\tilde{\bm{y}}|\le P$, it holds that $\mathcal{B}(\bm{x},\bm{y})=\emptyset$.

Then, we review a construction of $P$-bounded single-deletion single-substitution correcting codes.

\begin{lemma}\cite[Lemma 10]{Sun-24-IT}  \label{lem:N=11}
Let $P<n$, for fixed integers \(g_1, g_1' \in [0,6P-1]\), \(g_2, g_2' \in [0,12P^2-1]\), and \(g_3, g_3' \in [0,24P^3-1]\), we define \(\C_{\mathcal{DS}}^P \subseteq \Sigma^n\) satisfying the following constraints:
\begin{itemize}
    \item $\sum_i\mathrm{VT}^{k}(\bm{x}^{(P,2i-1)}\bm{x}^{(P,2i)})\equiv g_k\pmod{3(2P)^k}$, $k\in \{1,2,3\}$;
    \item $\sum_i\mathrm{VT}^{k}(\bm{x}^{(P,2i)}\bm{x}^{(P,2i+1)})\equiv g_k'\pmod{3(2P)^k}$, $k\in \{1,2,3\}$;
\end{itemize}
    where $\bm{x}^{(P,i)}=\bm{x}_{[(i-1)P+1,iP]}$ if $1\le i\le \left \lfloor n/P \right \rfloor$ and $\bm{x}^{(P,\left \lfloor n/P \right \rfloor+1)}=\bm{x}_{[\left \lfloor n/P \right \rfloor P+1,n]}$. Then $\C_{\mathcal{DS}}^P$ is a $P$-bounded single-deletion single-substitution correcting code.
\end{lemma}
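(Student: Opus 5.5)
Lemma~\ref{lem:N=11} is quoted from \cite[Lemma 10]{Sun-24-IT}, so one option is simply to cite it. To make the argument self-contained, I would proceed as follows.

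\textbf{Setup.} Take two distinct codewords $\bm{x}=\bm{a}\tilde{\bm{x}}\bm{b}$ and $\bm{y}=\bm{a}\tilde{\bm{y}}\bm{b}$ in $\C_{\mathcal{DS}}^P$ with $|\tilde{\bm{x}}|=|\tilde{\bm{y}}|\le P$. Suppose, for contradiction, that $\bm{x}(d_x,e_x)=\bm{y}(d_y,e_y)$.

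\textbf{Step 1: localize the errors.} Since $\bm{x}$ and $\bm{y}$ agree outside a window of length at most $P$, I would argue that the two deletion positions and the two substitution positions can be taken inside a window of length at most $2P$ that contains the differing region. Commutativity of deletion and substitution, and the freedom to move a deletion within a run, allow pushing $d_x,d_y$ (and $e_x,e_y$) toward the differing region. Any window of length at most $2P$ lies inside a single block pair $\bm{x}^{(P,j)}\bm{x}^{(P,j+1)}$. This holds either with $j$ odd, which is a pair appearing in the first family of constraints, or with $j$ even, which appears in the second family. This is exactly why two interleaved partitions are used.

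\textbf{Step 2: restrict to that pair.} Fix the relevant family and the pair $\bm{u}=\bm{x}^{(P,j)}\bm{x}^{(P,j+1)}$, $\bm{v}=\bm{y}^{(P,j)}\bm{y}^{(P,j+1)}$. All other pairs are identical in $\bm{x}$ and $\bm{y}$. The summed syndromes therefore give
\[
\mathrm{VT}^k(\bm{u})\equiv \mathrm{VT}^k(\bm{v}) \pmod{3(2P)^k},\quad k\in\{1,2,3\}.
\]
Moreover, $\bm{u}$ and $\bm{v}$ have length at most $2P$, and $\bm{u}(d,e)=\bm{v}(d',e')$ after re-indexing. The deletion lengths match because the common prefix and suffix outside the pair are untouched.

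\textbf{Step 3: conclude.} It remains to show that the three congruences for $k=1,2,3$ with moduli $3m^k$, where $m=2P$, force $\bm{u}=\bm{v}$ whenever $\mathcal{B}(\bm{u},\bm{v})\neq\emptyset$. This is the standard single-deletion single-substitution VT-type argument, as in \cite{Smagloy-23-IT-DS,Song-22-ISIT-LD}. Write $\mathrm{VT}^k(\bm{u})-\mathrm{VT}^k(\bm{v})$ as a sum of a deletion contribution and substitution contributions. Each term is bounded in absolute value by less than $3m^k/2$, so the congruence lifts to an equality over the integers. The three resulting moment equations then pin down the error positions and force the two error patterns to coincide, hence $\bm{u}=\bm{v}$. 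This contradicts $\bm{x}\neq\bm{y}$.

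\textbf{Main obstacle.} I expect Step 1 to be the hardest. One must justify that the errors can be confined to a window of length at most $2P$ that sits in a single aligned block pair, including when the deletion lies outside the differing region. Equally delicate is the moment analysis in Step 3. For both, I would defer to \cite[Lemma 10]{Sun-24-IT}.
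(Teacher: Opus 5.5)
The paper gives no proof of this lemma; it is imported verbatim from \cite[Lemma 10]{Sun-24-IT}. Your first option, citing it, is therefore exactly what the paper does. The architecture of your Steps 2--3 is also the right one: isolate one aligned pair, cancel all other pairs in the summed syndromes, then invoke single-deletion single-substitution correction on a word of length at most $2P$ with moduli $3(2P)^k$.

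Your self-contained outline does, however, contain a false step. In Step 1 you localize the errors only to some window of length at most $2P$, and then assert that every such window lies inside a single pair $\bm{x}^{(P,j)}\bm{x}^{(P,j+1)}$. Since blocks have length $P$, a window is guaranteed to meet at most two consecutive blocks only when its length is at most $P+1$. For example, $[2,2P+1]$ has length $2P$ and meets blocks $1,2,3$, so it is contained in neither $\bm{x}^{(P,1)}\bm{x}^{(P,2)}$ nor $\bm{x}^{(P,2)}\bm{x}^{(P,3)}$. The two interleaved families only handle windows meeting at most two consecutive blocks, which is precisely why the hypothesis is $|\tilde{\bm{x}}|\le P$.

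The fix is to localize to the differing window $[j_1,j_{d_H}]$ itself, and this is easier than you anticipate. Sliding a deletion within its run does not suffice, since a deletion in $\bm{a}$ may be separated from $j_1$ by run boundaries. Instead, given $\bm{x}(d_x,e_x)=\bm{y}(d_y,e_y)$, replace $d_x,d_y$ by their projections $d_x',d_y'$ onto $[j_1,j_{d_H}]$; this preserves their order. Use the three-part decomposition of $d_H(\bm{x}(d_x,0),\bm{y}(d_y,0))$ at the start of Section~\ref{sec:key}. One checks case by case that every mismatching comparison $x_p\neq y_q$ arising for $(d_x',d_y')$ also arises for $(d_x,d_y)$, so $d_H(\bm{x}(d_x',0),\bm{y}(d_y',0))\le 2$ still holds.

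Both shortened words coincide with $\bm{a}$ before position $j_1$ and with $\bm{b}$ from position $j_{d_H}$ on. So by Lemma~\ref{lem:sub} the substitutions can be taken inside the window as well. All four error positions then lie in an interval of length at most $P$, which sits in one pair of one of the two families, and your Step 2 goes through unchanged.

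Step 3 remains the actual content of the cited lemma, and your outline only gestures at it. Also note that lifting the congruence to an integer equality needs the whole difference $\mathrm{VT}^k(\bm{u})-\mathrm{VT}^k(\bm{v})$ to be smaller than $3m^k$ in absolute value, where $m=|\bm{u}|\le 2P$. For $k=1$ this holds because each side $\mathrm{VT}^1(\bm{u})-\mathrm{VT}^1(\bm{u}(d,e))$ lies in $[-(m-1),2m-1]$, not because each term is below $3m/2$.
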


\subsubsection{Error Ball}
 We first consider the intersection sizes between distinct single-substitution balls and single-deletion balls, respectively.

\begin{lemma}\cite[Corollary 1]{Levenshtein-01-IT-recons}\label{lem:sub}
    Let \(\bm{x}\) and \(\bm{y}\) be distinct sequences in \(\Sigma^{n}\). It holds that \(|S(\bm{x},\bm{y})|\in \{0,2\}\). Specifically,
    \begin{itemize}
        \item If \(d_H(\bm{x},\bm{y})=1\), we can write \(\bm{x}=\bm{a}\alpha \bm{b}\) and \(\bm{y}= \bm{a} \overline{\alpha} \bm{b}\) for some \(\alpha\in \Sigma\). Then, \(|S(\bm{x},\bm{y})|=2\) and \(S(\bm{x},\bm{y})= \{\bm{a}\alpha \bm{b}, \bm{a}\overline{\alpha} \bm{b}\}\).
        \item If \(d_H(\bm{x},\bm{y})=2\), we can write \(\bm{x}=\bm{a}\alpha \bm{c} \beta \bm{b}\) and \(\bm{y}= \bm{a} \overline{\alpha} \bm{c} \overline{\beta} \bm{b}\) for some \(\alpha, \beta\in \Sigma\) and \(\bm{c}\in \Sigma^{\ast}\). Then, \(|S(\bm{x},\bm{y})|=2\) and \(S(\bm{x},\bm{y})= \{\bm{a}\overline{\alpha} \bm{c} \beta \bm{b}, \bm{a}\alpha \bm{c} \overline{\beta} \bm{b}\}\).
        \item If \(d_H(\bm{x},\bm{y})\geq 3\), then \(|S(\bm{x},\bm{y})|=0\).
    \end{itemize}
\end{lemma}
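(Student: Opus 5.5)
The plan is a direct argument based on the Hamming distance $d_H=d_H(\bm{x},\bm{y})$. Write $\bm{x}$ and $\bm{y}$ as agreeing on the complement of $\{j_1,\ldots,j_{d_H}\}$. Since $\bm{x}\neq\bm{y}$, we have $d_H\ge 1$. A sequence $\bm{z}\in\mathcal{S}(\bm{x},\bm{y})$ satisfies $d_H(\bm{x},\bm{z})\le 1$ and $d_H(\bm{y},\bm{z})\le 1$. The triangle inequality then gives $d_H\le 2$, so $d_H\ge 3$ immediately yields $\mathcal{S}(\bm{x},\bm{y})=\emptyset$.

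For $d_H\le 2$, the key observation is coordinatewise. On every index outside $\{j_1,\ldots,j_{d_H}\}$, any $\bm{z}$ that differs from the common value costs one unit of distance toward \emph{both} $\bm{x}$ and $\bm{y}$. On each index $j_i$, $\bm{z}$ disagrees with exactly one of $\bm{x}$ and $\bm{y}$.

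When $d_H=2$, $\bm{z}$ already incurs a total cost of $2$ from the two differing positions. That cost must split as exactly one mismatch with $\bm{x}$ and one with $\bm{y}$, with nothing elsewhere. Hence $\bm{z}$ agrees with $\bm{a},\bm{c},\bm{b}$ and takes one differing coordinate from $\bm{x}$ and the other from $\bm{y}$. This yields exactly the two sequences $\bm{a}\overline{\alpha}\bm{c}\beta\bm{b}$ and $\bm{a}\alpha\bm{c}\overline{\beta}\bm{b}$, and both clearly lie in both balls.

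When $d_H=1$, position $j_1$ contributes distance $1$ to exactly one of $\bm{x},\bm{y}$. Any further change elsewhere would add $1$ to both distances, pushing one of them to $2$. Therefore $\bm{z}$ equals $\bm{a}\alpha\bm{b}$ or $\bm{a}\overline{\alpha}\bm{b}$, which are $\bm{x}$ and $\bm{y}$ themselves, and both belong to the intersection.

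Together these cases show $|\mathcal{S}(\bm{x},\bm{y})|\in\{0,2\}$ with the stated descriptions. There is no real obstacle. The only point requiring care is the counting argument in the $d_H=2$ case, which shows that $\bm{z}$ cannot deviate from the common prefix, middle, or suffix.
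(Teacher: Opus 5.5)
Your proof is correct and complete. Note that the paper gives no argument of its own for this lemma: it simply imports it from Levenshtein's reconstruction paper. There the statement is the binary, radius-one case of a general count of the intersection of two Hamming balls, over an arbitrary alphabet and radius, as a function of the distance between their centres.

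Your route is instead a self-contained direct verification. Its engine, stated explicitly, is the identity $d_H(\bm{x},\bm{z})+d_H(\bm{y},\bm{z})=d_H(\bm{x},\bm{y})+2k$, where $k$ is the number of positions at which $\bm{x}$ and $\bm{y}$ agree but $\bm{z}$ deviates.

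This identity relies on the binary alphabet. At a position where $\bm{x}$ and $\bm{y}$ differ, $\bm{z}$ must disagree with exactly one of them. Over a $q$-ary alphabet $\bm{z}$ could disagree with both, and for $d_H=1$ the intersection would then contain $q$ sequences rather than two. Since the paper only works over $\Sigma=\{0,1\}$, this restriction costs nothing.

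The citation buys generality. Your argument buys transparency, and it directly produces the explicit two-element descriptions of $\mathcal{S}(\bm{x},\bm{y})$. The paper relies on those descriptions repeatedly later, when it lists the candidate elements of the sets $B_k$ and identifies coincidences among them.
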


\begin{lemma}\cite[Lemma 2.12]{Li-25}\label{lem:del_int} 
    Let \(\bm{x}\) and \(\bm{y}\) be distinct sequences in \(\Sigma^{n}\). We have that $|\mathcal{D}(\bm{x},\bm{y})|=0$ if and only if $d_H(\bm{x}_{[j_1+1,j_{d_H}]},\bm{y}_{[j_1,j_{d_H}-1]})\ge 1$  and $d_H(\bm{x}_{[j_1,j_{d_H}-1]},\bm{y}_{[j_1+1,j_{d_H}]})\ge 1$.
\end{lemma}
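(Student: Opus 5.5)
The plan is to characterize directly when some $(n-1)$-length sequence lies in both $\mathcal{D}(\bm{x})$ and $\mathcal{D}(\bm{y})$, and to show this happens exactly when one of the two Hamming distances in the statement is $0$.

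Suppose $\bm{x}$ with $x_i$ deleted equals $\bm{y}$ with $y_k$ deleted.

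First consider the case $i\le k$. Comparing the two length-$(n-1)$ sequences coordinate by coordinate gives three equalities:
\begin{itemize}
\item $\bm{x}_{[1,i-1]}=\bm{y}_{[1,i-1]}$;
\item $\bm{x}_{[i+1,k]}=\bm{y}_{[i,k-1]}$;
\item $\bm{x}_{[k+1,n]}=\bm{y}_{[k+1,n]}$.
\end{itemize}
Since $x_{j_1}\neq y_{j_1}$, the first equality forces $i\le j_1$. Since $x_{j_{d_H}}\neq y_{j_{d_H}}$, the third forces $k\ge j_{d_H}$. Hence $[j_1,j_{d_H}-1]\subseteq[i,k-1]$. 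Restricting the middle equality (a shift by one) to this window gives $\bm{x}_{[j_1+1,j_{d_H}]}=\bm{y}_{[j_1,j_{d_H}-1]}$, i.e. the first Hamming distance is $0$.

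The case $i>k$ is symmetric: swap the roles of $\bm{x}$ and $\bm{y}$. It yields $\bm{x}_{[j_1,j_{d_H}-1]}=\bm{y}_{[j_1+1,j_{d_H}]}$.

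This proves the "if" direction by contraposition: when both distances are at least $1$, no such pair $(i,k)$ exists, so $\mathcal{D}(\bm{x},\bm{y})=\emptyset$.

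For the converse, suppose first that $\bm{x}_{[j_1+1,j_{d_H}]}=\bm{y}_{[j_1,j_{d_H}-1]}$. Delete $x_{j_1}$ from $\bm{x}$ and $y_{j_{d_H}}$ from $\bm{y}$. Writing $\bm{x}=\bm{a}\bm{x}_{[j_1,j_{d_H}]}\bm{b}$ and $\bm{y}=\bm{a}\bm{y}_{[j_1,j_{d_H}]}\bm{b}$, both results equal $\bm{a}\bm{x}_{[j_1+1,j_{d_H}]}\bm{b}$. So $\mathcal{D}(\bm{x},\bm{y})\neq\emptyset$. The other equality is handled symmetrically by deleting $x_{j_{d_H}}$ and $y_{j_1}$.

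The only delicate point is the degenerate case $d_H=1$, where $j_1=j_{d_H}$:
\begin{itemize}
\item Both windows are empty, so both Hamming distances are $0$ and the stated condition fails.
\item This is consistent, since deleting the single differing position from both $\bm{x}$ and $\bm{y}$ gives a common subsequence.
\item The argument above covers this case, including $i=k=j_1$, provided we adopt the convention that the Hamming distance between empty strings is $0$.
\end{itemize}
I expect no real obstacle beyond this careful bookkeeping of the index shifts.
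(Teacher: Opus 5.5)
Your proof is correct and complete. It covers both directions and handles the index bookkeeping properly, including the $i=k$ and $d_H=1$ edge cases. The paper gives no proof of its own here and simply imports the lemma from Li et al.; your coordinate comparison is the natural argument, and it matches the three-block decomposition of $d_H(\bm{x}(d_x,0),\bm{y}(d_y,0))$ (prefix, shifted middle, suffix) that the paper writes at the start of Section~3, specialized to the case where all three blocks agree.
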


Then, we review a property about the periodicity of sequences.

\begin{lemma}\label{lem:period}\cite[Lemma 3.7]{Sun-23-IT-reconstr}
 For two distinct sequences $\bm{x}$ and $\bm{y}$ in $\Sigma^m$, if $\bm{x}_{[1,m-1]}=\bm{y}_{[2,m]}$ and $\bm{x}_{[2,m]}=\bm{y}_{[1,m-1]}$, then $\bm{x}$ and $\bm{y}$ are alternating sequences.
\end{lemma}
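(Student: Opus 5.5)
The plan is to turn the two hypotheses into index-by-index equalities and chain them. The hypothesis \(\bm{x}_{[1,m-1]}=\bm{y}_{[2,m]}\) says \(x_i=y_{i+1}\) for every \(i\in[m-1]\). The hypothesis \(\bm{x}_{[2,m]}=\bm{y}_{[1,m-1]}\) says \(x_{i+1}=y_i\) for every \(i\in[m-1]\).

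Combining the two, for every \(i\in[m-2]\) I get
\[
x_i=y_{i+1}=x_{i+2},
\]
where the first equality uses the first hypothesis at index \(i\) and the second uses the second hypothesis at index \(i+1\). Symmetrically, \(y_i=x_{i+1}=y_{i+2}\). Hence both \(\bm{x}\) and \(\bm{y}\) satisfy \(z_i=z_{i+2}\) throughout, so each has period at most two.

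It remains to exclude period one. Suppose \(\bm{x}\) were constant, say \(\bm{x}=\alpha^m\). Then \(y_{i+1}=x_i=\alpha\) for \(i\in[m-1]\), and \(y_1=x_2=\alpha\) when \(m\ge 2\). So \(\bm{y}=\alpha^m=\bm{x}\), contradicting the assumption that \(\bm{x}\) and \(\bm{y}\) are distinct. The same argument with the roles of \(\bm{x}\) and \(\bm{y}\) swapped shows \(\bm{y}\) is not constant.

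Therefore \(x_1\neq x_2\), and together with \(x_i=x_{i+2}\) this makes \(\bm{x}\) the alternating string starting with \(x_1\). From \(y_i=x_{i+1}\) it follows that \(\bm{y}=\overline{\bm{x}}\), which is also alternating. So both are sequences of period exactly two, i.e.\ alternating.

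I do not expect a genuine obstacle here; the only point needing care is the degenerate small lengths. For \(m=1\) the hypotheses are vacuous and \(\bm{x},\bm{y}\) are single distinct bits. For \(m=2\) the chain \(x_i=x_{i+2}\) is empty, and the hypotheses give \(x_1=y_2\) and \(x_2=y_1\); distinctness then forces \(x_1\neq x_2\). In both cases the conclusion is read in the trivial sense, so I would state the main argument for \(m\ge 3\) and dispose of these cases in one line.
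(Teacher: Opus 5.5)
Your proof is correct and is the standard argument: chain $x_i=y_{i+1}=x_{i+2}$ (and symmetrically for $\bm{y}$), then use $\bm{x}\neq\bm{y}$ to exclude a constant sequence; the paper gives no proof of its own and simply imports the fact from \cite{Sun-23-IT-reconstr}, applying it only to strings of length at least two. Two small nitpicks: for $\bm{y}=\overline{\bm{x}}$ you also need the last coordinate, $y_m=x_{m-1}=\overline{x_m}$; and $m=2$ is already covered by your main argument, which gives period exactly two there, so $m=1$ is the only genuinely degenerate case, since under the paper's ``smallest period'' definition a single bit has period one.
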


The following lemma characterizes the Hamming distance between distinct sequences within a deletion ball. 
\begin{lemma}\label{lem:del}\cite[Lemma 5]{Abu-Sini-21-IT-recons}
    Let $\bm{x}\in \Sigma^n$ be a word consisting of $r=r(\bm{x})$ runs. For $1\le i\le r$, let $\bm{x}(i)$ be the word received by deleting a bit from the $i$-th run, then it holds that $\mathcal{D}(\bm{x})=\{\bm{x}(1),\bm{x}(2),\cdots,\bm{x}(r)\}$ and $d_{H}(\bm{x}(i),\bm{x}(j))=j-i $.
\end{lemma}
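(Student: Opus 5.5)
The plan is to make everything explicit in terms of run boundaries. We write $\bm{x}=\bm{u}_1\bm{u}_2\cdots\bm{u}_r$, where $\bm{u}_t$ is the $t$-th run. Let $e_t$ denote the index of the last bit of $\bm{u}_t$, so that $e_1<e_2<\cdots<e_r=n$. The key fact is that $x_k\neq x_{k+1}$ holds for $k\in[n-1]$ exactly when $k\in\{e_1,\ldots,e_{r-1}\}$.

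First, deleting any bit inside the $t$-th run yields the same word, namely $\bm{x}$ with $\bm{u}_t$ shortened by one. Hence $\bm{x}(t)$ is well defined. Since every deletion lies in some run, $\mathcal{D}(\bm{x})\subseteq\{\bm{x}(1),\ldots,\bm{x}(r)\}$, and the reverse inclusion is trivial. Without loss of generality, $\bm{x}(t)$ is obtained by deleting the last bit $x_{e_t}$ of the run. In coordinates, this gives $\bm{x}(t)_k=x_k$ for $k<e_t$ and $\bm{x}(t)_k=x_{k+1}$ for $k\ge e_t$.

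Next, fix $i<j$ and compare $\bm{x}(i)$ and $\bm{x}(j)$ coordinatewise in three ranges.
\begin{itemize}
\item For $k<e_i$, both words equal $x_k$, so they agree.
\item For $k\ge e_j$, both words equal $x_{k+1}$, so they agree.
\item For $e_i\le k\le e_j-1$, we have $\bm{x}(i)_k=x_{k+1}$ and $\bm{x}(j)_k=x_k$. These differ exactly when $k$ is a run end, i.e., when $k\in\{e_i,e_{i+1},\ldots,e_{j-1}\}$.
\end{itemize}
The last range contains exactly $j-i$ such indices, so $d_H(\bm{x}(i),\bm{x}(j))=j-i$.

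In particular this distance is at least $1$ whenever $i\neq j$. So the $r$ words are pairwise distinct, and $|\mathcal{D}(\bm{x})|=r$.

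The only delicate step is the index bookkeeping in the middle range, specifically checking that the differing positions are precisely the run ends $e_i,\ldots,e_{j-1}$ and not $e_j$. Choosing to delete the last bit of each run makes this transparent, because $e_j$ itself lies in the range $k\ge e_j$ where both words read $x_{k+1}$.
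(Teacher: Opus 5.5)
Your proof is correct and complete. Deleting the last bit $x_{e_t}$ of each run gives $\bm{x}(t)$ explicit coordinates, and the positions where $\bm{x}(i)$ and $\bm{x}(j)$ disagree are exactly the run ends $e_i,\ldots,e_{j-1}$, so the distance is $j-i$. The paper does not prove this lemma itself; it imports it from Abu-Sini et al., so there is no in-paper argument to compare against, and your run-boundary bookkeeping is the standard proof of this fact.
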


Based on Lemma \ref{lem:del}, we can derive the following conclusion.
\begin{lemma}\label{lem:run_number}
For sequences $\alpha\bm{w}\beta,\overline{\alpha}\bm{w}\overline{\beta}\in \Sigma^{*}$, where $\alpha,\beta\in \{0,1\}$ and $|\bm{w}|\ge 0$, let 
\begin{align*}
    d_1\triangleq d_H(\bm{w}\beta, \overline{\alpha}\bm{w}) ,\quad d_2\triangleq d_H(\alpha\bm{w}, \bm{w}\overline{\beta}).
\end{align*}
Then the following results hold:
\begin{itemize}
    \item $\alpha=\beta$ if and only if $d_1, d_2$ are both odd;
    \item $\alpha\neq\beta$ if and only if $d_1, d_2$ are both even.
\end{itemize}
\end{lemma}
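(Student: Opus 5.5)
The approach is to realize both Hamming distances in the statement as distances between two elements of a single deletion ball, and then apply Lemma~\ref{lem:del}. Set $\bm{u}\triangleq \overline{\alpha}\bm{w}\beta$. Then $\bm{w}\beta$ is obtained from $\bm{u}$ by deleting its first bit, and $\overline{\alpha}\bm{w}$ by deleting its last bit. The first bit lies in the first run of $\bm{u}$ and the last bit in the last (the $r(\bm{u})$-th) run. Lemma~\ref{lem:del} therefore gives
\[
d_1=d_H(\bm{w}\beta,\overline{\alpha}\bm{w})=r(\bm{u})-1 .
\]
In the same way, put $\bm{v}\triangleq \alpha\bm{w}\overline{\beta}$. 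Then $\bm{w}\overline{\beta}$ and $\alpha\bm{w}$ arise by deleting the first and the last bit of $\bm{v}$, respectively, so $d_2=r(\bm{v})-1$.

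Next, $r(\bm{s})-1$ is the number of indices $i$ with $s_i\neq s_{i+1}$, i.e.\ the number of adjacent transitions. Its parity is odd exactly when the first and last symbols of $\bm{s}$ differ, because each transition flips the current symbol. Applying this to $\bm{u}$, $d_1$ is odd if and only if $\overline{\alpha}\neq\beta$, i.e.\ if and only if $\alpha=\beta$. Applying it to $\bm{v}$, $d_2$ is odd if and only if $\alpha\neq\overline{\beta}$, i.e.\ again if and only if $\alpha=\beta$.

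Consequently $d_1$ and $d_2$ always have the same parity, and that parity is odd precisely when $\alpha=\beta$. Since the two cases $\alpha=\beta$ and $\alpha\neq\beta$ are complementary, both ``if and only if'' claims of the statement follow.

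The only point needing care is the use of Lemma~\ref{lem:del}. We must check that deleting the first and last bits corresponds to the indices $i=1$ and $j=r$ in that lemma. This holds because any deletion within a run produces the same word. When $\bm{w}$ is empty, $\bm{u}$ has length $2$, and the formula $d_1=r(\bm{u})-1$ can be checked directly; the same applies to $d_2$. I do not anticipate any real obstacle. If one prefers to avoid Lemma~\ref{lem:del}, $d_1$ can also be counted directly: $d_1=|\{i:u_i\neq u_{i+1}\}|$, since comparing $\bm{w}\beta$ with $\overline{\alpha}\bm{w}$ position by position compares consecutive symbols of $\bm{u}$.
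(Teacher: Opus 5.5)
Your proposal is correct and follows the paper's proof: both set $\bm{u}=\overline{\alpha}\bm{w}\beta$ and $\bm{v}=\alpha\bm{w}\overline{\beta}$, use Lemma~\ref{lem:del} to get $d_1=r(\bm{u})-1$ and $d_2=r(\bm{v})-1$, and then read off the parity from whether the first and last symbols of $\bm{u}$ and $\bm{v}$ differ. Because you show that each of $d_1,d_2$ is separately odd exactly when $\alpha=\beta$, the second bullet follows at once, whereas the paper only says it is handled similarly.
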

	
\begin{IEEEproof}
Let $\bm{u}=\overline{\alpha}\bm{w}\beta,\bm{v}=\alpha\bm{w}\overline{\beta}$, by Lemma \ref{lem:del}, we have that $d_1=r(\bm{u})-1$ and $d_2=r(\bm{v})-1$. Below we only prove the first statement, as the second statement can be discussed similarly.

If $\alpha=\beta$, then $\overline{\alpha}\neq \beta$ and $\alpha\neq \overline{\beta}$.
This implies that $r(\bm{u}),r(\bm{v})\geq 2$ are even and then $d_1, d_2$ are odd. On the other hand, if $d_1, d_2$ are odd, we have that $r(\bm{u}),r(\bm{v})\geq 2$ are even.
This implies that $\overline{\alpha}\neq \beta$, that is, $\alpha=\beta$.
\end{IEEEproof}

\begin{remark}\label{rem:run_number}
From the proof of Lemma \ref{lem:run_number}, we know that $d_1=r(\overline{\alpha}\bm{w}\beta)-1$ and $d_2=r(\alpha\bm{w}\overline{\beta})-1$. Then the following conclusions can be easily checked: 
\begin{itemize}
    \item When $\alpha=\beta$, we have $d_1=d_2=1$ if and only if $r(\bm{w})\le 1$, and  $d_1,d_2\ge 3$ if and only if $r(\bm{w})\ge 3$. Moreover, we have $d_1\neq d_2\in\{1,3\}$ if and only if $r(\bm{w})=2$.
   \item When $\alpha\neq \beta$, we have $d_1=d_2=0$ if and only if $r(\bm{w})=0$, $d_1=d_2=2$ if and only if $r(\bm{w})=2$, and  $d_1,d_2\ge 4$ if and only if $r(\bm{w})\ge 4$. Moreover, $d_1\neq d_2\in\{0,2\}$ if and only if $r(\bm{w})=1$, and $d_1\neq d_2\in\{2,4\}$ if and only if $r(\bm{w})=3$.
\end{itemize}
\end{remark}

Then we can also obtain the following conclusion.
\begin{corollary}\label{cor:run_number}
  For any two distinct sequences $\bm{u},\bm{v}\in \Sigma^*$, where $|\bm{u}|=|\bm{v}|$. Let $d_H\triangleq d_H(\bm{u},\bm{v})$, then for any $1\le i<i'\le d_H$, we can obtain that  $d_H(\bm{u}_{[j_i,j_{i'}-1]},\bm{v}_{[j_i+1,j_{i'}]})\ge 0$ is even if $d_H(\bm{u}_{[j_i+1,j_{i'}]},\bm{v}_{[j_i,j_{i'}-1]})=0$, and $d_H(\bm{u}_{[j_i+1,j_{i'}]},\bm{v}_{[j_i,j_{i'}-1]})\ge 0$ is even if $d_H(\bm{u}_{[j_i,j_{i'}-1]},\bm{v}_{[j_i+1,j_{i'}]})=0$.
\end{corollary}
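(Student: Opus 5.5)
The plan is to apply Lemma~\ref{lem:run_number} to the substrings $\bm{u}_{[j_i,j_{i'}]}$ and $\bm{v}_{[j_i,j_{i'}]}$. By the definition of $j_i$ and $j_{i'}$ as positions where $\bm{u}$ and $\bm{v}$ disagree, we have $v_{j_i}=\overline{u_{j_i}}$ and $v_{j_{i'}}=\overline{u_{j_{i'}}}$. Set $\alpha\triangleq u_{j_i}$ and $\beta\triangleq u_{j_{i'}}$, and let $\bm{w}\triangleq \bm{u}_{[j_i+1,j_{i'}-1]}$, which may be empty when $i'=i+1$ and $j_{i'}=j_i+1$. The positions strictly between $j_i$ and $j_{i'}$ need not agree in general. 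We therefore also need $\bm{w}'\triangleq\bm{v}_{[j_i+1,j_{i'}-1]}$, so that $\bm{u}_{[j_i,j_{i'}]}=\alpha\bm{w}\beta$ and $\bm{v}_{[j_i,j_{i'}]}=\overline{\alpha}\bm{w}'\overline{\beta}$. Lemma~\ref{lem:run_number} requires $\bm{w}=\bm{w}'$, which holds only when $i'=i+1$, so the general case has to be handled directly.

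I will do the direct argument with a parity count, which also covers $i'=i+1$. Write $A\triangleq d_H(\bm{u}_{[j_i+1,j_{i'}]},\bm{v}_{[j_i,j_{i'}-1]})$ and $B\triangleq d_H(\bm{u}_{[j_i,j_{i'}-1]},\bm{v}_{[j_i+1,j_{i'}]})$. Modulo $2$, the Hamming distance between two binary strings of equal length equals the parity of the sum of their weights. Hence
\[
A\equiv \mathrm{wt}(\bm{u}_{[j_i+1,j_{i'}]})+\mathrm{wt}(\bm{v}_{[j_i,j_{i'}-1]}),\qquad B\equiv \mathrm{wt}(\bm{u}_{[j_i,j_{i'}-1]})+\mathrm{wt}(\bm{v}_{[j_i+1,j_{i'}]}) \pmod 2 .
\]
Adding the two congruences, each of $\bm{u}$ and $\bm{v}$ contributes its interior weight twice, and that cancels. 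What remains is
\[
A+B\equiv u_{j_i}+u_{j_{i'}}+v_{j_i}+v_{j_{i'}}\equiv (u_{j_i}+\overline{u_{j_i}})+(u_{j_{i'}}+\overline{u_{j_{i'}}})\equiv 0 \pmod 2 .
\]
So $A$ and $B$ always have the same parity. In particular, $A=0$ forces $B$ to be even and $B=0$ forces $A$ to be even, which is exactly the statement. Nonnegativity is trivial.

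The only delicate point is noticing that Lemma~\ref{lem:run_number} does not apply verbatim when $i'>i+1$, because the middle parts of $\bm{u}$ and $\bm{v}$ differ. The weight-parity identity removes that obstacle. The remaining checks are the index bookkeeping and the degenerate case $j_{i'}=j_i+1$. In that case both windows have length one, so $A=d_H(u_{j_{i'}},v_{j_i})$ and $B=d_H(u_{j_i},v_{j_{i'}})$, and since $v_{j_i}=\overline{u_{j_i}}$ and $v_{j_{i'}}=\overline{u_{j_{i'}}}$ these are equal, so the parities agree trivially. When $i'=i+1$, I will also remark that the argument agrees with Lemma~\ref{lem:run_number} and Remark~\ref{rem:run_number}, where $d_1=A$ and $d_2=B$ are both odd or both even.
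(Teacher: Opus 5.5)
Your proof is correct, and it takes a different route from the paper. The paper splits $A$ and $B$ into sums over the consecutive windows $[j_l,j_{l+1}]$, $l\in[i,i'-1]$. On each such window the middle parts of $\bm{u}$ and $\bm{v}$ coincide, so Lemma~\ref{lem:run_number} applies. The hypothesis $A=0$ forces every summand of $A$ to vanish, hence $u_{j_l}\neq u_{j_{l+1}}$ for all $l$, hence every summand of $B$ is even. The other implication is obtained by passing to $R(\bm{u})$ and $R(\bm{v})$.

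You bypass both the windowing and the run-counting machinery (Lemma~\ref{lem:run_number}, and behind it Lemma~\ref{lem:del}). You use only the weight-parity identity $A+B\equiv (u_{j_i}+v_{j_i})+(u_{j_{i'}}+v_{j_{i'}})\equiv 0 \pmod 2$, which needs nothing beyond $j_i$ and $j_{i'}$ being disagreement positions. This is more elementary and handles both implications at once. It also proves the stronger fact that $A\equiv B\pmod 2$ unconditionally. The paper's decomposition would give this too, since Lemma~\ref{lem:run_number} makes each pair of window summands share parity, but the paper only states the conclusion under the hypothesis that one side vanishes.

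What the paper's route gives in addition is structural information: under $A=0$, the values $u_{j_i},u_{j_{i+1}},\ldots,u_{j_{i'}}$ alternate. That matches how the later case analysis reasons in terms of run counts, though it is not needed for the corollary itself.
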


\begin{IEEEproof}
  For any $1\le i<i'\le d_H$, we know that    
\begin{align*}
d_H(\bm{u}_{[j_i+1,j_{i'}]},\bm{v}_{[j_i,j_{i'}-1]})= d_H(\bm{u}_{[j_i+1,j_{i+1}]},\bm{v}_{[j_i,j_{i+1}-1]})+\cdots +d_H(\bm{u}_{[j_{i'-1}+1,j_{i'}]},\bm{v}_{[j_{i'-1},j_{i'}-1]}),  
\end{align*}
and
\begin{align*}
d_H(\bm{u}_{[j_i,j_{i'}-1]},\bm{v}_{[j_i+1,j_{i'}]})= d_H(\bm{u}_{[j_i,j_{i+1}-1]},\bm{v}_{[j_i+1,j_{i+1}]})+\cdots +d_H(\bm{u}_{[j_{i'-1},j_{i'}-1]},\bm{v}_{[j_{i'-1}+1,j_{i'}]}). 
\end{align*}
Thus, if $d_H(\bm{u}_{[j_i+1,j_{i'}]},\bm{v}_{[j_i,j_{i'}-1]})=0$, then we can obtain that $u_{j_l}\neq u_{j_{l+1}}$ for all $l\in [i,i'-1]$ according to Lemma \ref{lem:run_number}. Consequently, we have $d_H(\bm{u}_{[j_i,j_{i'}-1]},\bm{v}_{[j_i+1,j_{i'}]})\ge 0$ is even. Similarly, by considering $R(\bm{u})$ and $R(\bm{v})$, we can obtain that $d_H(\bm{u}_{[j_i+1,j_{i'}]},\bm{v}_{[j_i,j_{i'}-1]})\ge 0$ is even if $d_H(\bm{u}_{[j_i,j_{i'}-1]},\bm{v}_{[j_i+1,j_{i'}]})=0$. 
\end{IEEEproof}

\section{key conclusions}\label{sec:key}
In this section, we provide several key lemmas, which serve to the construction of our reconstruction codes in the next section. In addition, we always assume $|\mathcal{D}(\bm{x},\bm{y})|=0$ and $|\mathcal{S}(\bm{x},\bm{y})|= 0$ for any two distinct sequences $\bm{x},\bm{y}\in \Sigma^n$ throughout the remainder, unless otherwise specified.

 Since $|\mathcal{S}(\bm{x},\bm{y})|=0$, then by Lemma \ref{lem:sub}, we have $d_H(\bm{x},\bm{y})\ge 3$. On the other hand, since $|\mathcal{D}(\bm{x},\bm{y})|=0$, then by definition, we have $d_H(\bm{x}(d_x,0),\bm{y}(d_y,0))\ge 1$ for any $(d_x,d_y)\in [n]\times [n]$. Moreover, if $\mathcal{B}(\bm{x},\bm{y})$ is not empty, then by Lemma \ref{lem:sub}, there exists some $(d_x,d_y)\in [n]\times [n]$ such that $d_H(\bm{x}(d_x,0),\bm{y}(d_y,0)) \in \{1,2\}$.
  Note that $d_x\neq d_y$ when $d_H(\bm{x}(d_x,0),\bm{y}(d_y,0))=1$, otherwise,  we have $d_H(\bm{x},\bm{y})=2$, which leads a contradiction. Thus, we default $d_x\neq d_y$ when $d_H(\bm{x}(d_x,0),\bm{y}(d_y,0))=1$ in the following discussion. For any $d_x,d_y\in [n]$, let 
 \begin{align*}
     E\triangleq \big\{(d_x,d_y):d_H(\bm{x}(d_x,0),\bm{y}(d_y,0)) \in \{1,2\}\big\}.
 \end{align*}
  Then, we can obtain that
 \begin{align*}
   \mathcal{B}(\bm{x},\bm{y})=\bigcup_{ (d_x,d_y)\in E} \mathcal{S}(\bm{x}(d_x,0),\bm{y}(d_y,0)). 
 \end{align*}
On the other hand, we note that
\begin{align*}
   d_H(\bm{x}(d_x,0),\bm{y}(d_y,0))=d_H(\bm{x}_{[1,d_x-1]},\bm{y}_{[1,d_x-1]})+ d_H(\bm{x}_{[d_x+1,d_y]},\bm{y}_{[d_x,d_y-1]})+d_H(\bm{x}_{[d_y+1,n]},\bm{y}_{[d_y+1,n]}),
\end{align*}
when $d_x\le d_y$, and
\begin{align*}
   d_H(\bm{x}(d_x,0),\bm{y}(d_y,0))=d_H(\bm{x}_{[1,d_y-1]},\bm{y}_{[1,d_y-1]})+ d_H(\bm{x}_{[d_y,d_x-1]},\bm{y}_{[d_y+1,d_x]})+d_H(\bm{x}_{[d_x+1,n]},\bm{y}_{[d_x+1,n]}),
\end{align*}
when $d_x\ge d_y$. Thus, the set $E$ can be determined by the values of tuples
\begin{align*}
   \big(d_H(\bm{x}_{[1,d_x-1]},\bm{y}_{[1,d_x-1]}), d_H(\bm{x}_{[d_x+1,d_y]},\bm{y}_{[d_x,d_y-1]}),d_H(\bm{x}_{[d_y+1,n]},\bm{y}_{[d_y+1,n]})\big),
\end{align*}
and
\begin{align*}
   \big(d_H(\bm{x}_{[1,d_y-1]},\bm{y}_{[1,d_y-1]}),d_H(\bm{x}_{[d_y,d_x-1]},\bm{y}_{[d_y+1,d_x]}),d_H(\bm{x}_{[d_x+1,n]},\bm{y}_{[d_x+1,n]})\big).
\end{align*}
Correspondingly, the set $\mathcal{B}(\bm{x},\bm{y})$ can be determined.
The values of these two tuples are summarized in Table \ref{tab:tuple}.\footnote{In fact, the values of these two tuples have also been studied in \cite{Cai-25-arxiv-sd} and summarized in Table 1. Comparing with the definitions in \cite{Cai-25-arxiv-sd}, it is clear that the tuple $\big(d_H(\bm{x}_{[1,j-1]},\bm{y}_{[1,j-1]}), d_H(\bm{x}_{[j+1,j']},\bm{y}_{[j,j'-1]}),d_H(\bm{x}_{[j'+1,n]},\bm{y}_{[j'+1,n]})\big)$ in this paper is equivalent to the tuple $\big(|S\cap[j-1]|,|T^L\cap[j+1,j']|,|S\cap [j'+1,n]|\big)$ in \cite{Cai-25-arxiv-sd}. Similarly, the tuple $\big(d_H(\bm{x}_{[1,j-1]},\bm{y}_{[1,j-1]}),d_H(\bm{x}_{[j,j'-1]},\bm{y}_{[j+1,j']}),d_H(\bm{x}_{[j'+1,n]},\bm{y}_{[j'+1,n]})\big)$ is equivalent to the tuple $\big(|S\cap[j-1]|,|T^R\cap[j+1,j']|,|S\cap [j'+1,n]|\big)$, where $j\le j'$.} 

\begin{table*}[ht]
\caption{In this table, we display the possible values of two tuples. When $d_x\le d_y$, let $d_1\triangleq d_H(\bm{x}_{[1,d_x-1]},\bm{y}_{[1,d_x-1]})$, $d_2\triangleq d_H(\bm{x}_{[d_x+1,d_y]},\bm{y}_{[d_x,d_y-1]})$ and $d_3 \triangleq d_H(\bm{x}_{[d_y+1,n]},\bm{y}_{[d_y+1,n]})$. Moreover, when $d_x\ge d_y$, let $d_1\triangleq d_H(\bm{x}_{[1,d_y-1]},\bm{y}_{[1,d_y-1]})$, $d_2\triangleq d_H(\bm{x}_{[d_y,d_x-1]},\bm{y}_{[d_y+1,d_x]})$ and $d_3 \triangleq d_H(\bm{x}_{[d_x+1,n]},\bm{y}_{[d_x+1,n]})$. Note that we have $d_1+d_3=2$ when $d_x=d_y$.} 
\centering
\begin{tabular}{|c|c|c|c|c|c|}  
\hline
 \multicolumn{3}{|c}{$d_x\le d_y$} & \multicolumn{3}{|c|}{$d_x\ge d_y$}\\
\hline 
$d_1$ & $d_2$& $d_3$ & $d_1$ & $d_2$ & $d_3$\\
\hline
$1$ & $0$ & $0$ & $1$ & $0$ & $0$\\
\hline
$0$ & $0$ & $1$ & $0$ & $0$ & $1$\\
\hline
$0$ & $1$ & $0$ & $0$ & $1$ & $0$\\
\hline 
$2$ & $0$ & $0$ & $2$ & $0$ & $0$ \\
\hline
$0$ & $0$ & $2$ & $0$ & $0$ & $2$ \\
\hline
$1$ & $0$ & $1$ & $1$ & $0$ & $1$ \\
\hline
$1$ & $1$ & $0$ & $1$ & $1$ & $0$ \\
\hline
$0$ & $1$ & $1$ & $0$ & $1$ & $1$ \\
\hline
$0$ & $2$ & $0$ & $0$ & $2$ & $0$ \\
\hline
\end{tabular}
\label{tab:tuple}
\end{table*}

 Suppose $\bm{z}=\bm{x}(d_x,e_x)=\bm{y}(d_y,e_y)$, according to the Table \ref{tab:tuple} and Lemma \ref{lem:sub}, we can divide $\mathcal{B}(\bm{x},\bm{y})$ into the following eighteen subsets, which are unnecessarily disjoint.
\begin{align*}
  &B_1\triangleq \big\{\bm{z}\in \mathcal{B}(\bm{x},\bm{y}):e_y=0, e_x<d_x<d_y~\text{or}~e_x=0, e_y<d_x<d_y\big\}, \\ 
   &B_2\triangleq \big\{\bm{z}\in \mathcal{B}(\bm{x},\bm{y}):e_y=0, e_x<d_y<d_x~\text{or}~e_x=0, e_y<d_y<d_x\big\},\\
   &B_3\triangleq \big\{\bm{z}\in \mathcal{B}(\bm{x},\bm{y}):e_y=0, d_x<d_y<e_x~\text{or}~e_x=0, d_x<d_y<e_y\big\},\\
   &B_4\triangleq \big\{\bm{z}\in \mathcal{B}(\bm{x},\bm{y}):e_y=0, d_y<d_x<e_x~\text{or}~e_x=0, d_y<d_x<e_y\big\},\\
   &B_5\triangleq \big\{\bm{z}\in \mathcal{B}(\bm{x},\bm{y}):e_y=0, d_x<e_x\leq d_y~\text{or}~e_x=0, d_x\leq e_y<d_y\big\},\\
   &B_6\triangleq \big\{\bm{z}\in \mathcal{B}(\bm{x},\bm{y}):e_y=0, d_y\leq e_x< d_x~\text{or}~e_x=0, d_y< e_y\le d_x\big\},\\
   &B_7\triangleq \big\{\bm{z}\in \mathcal{B}(\bm{x},\bm{y}):e_x< d_x\le d_y~\text{and}~e_y< d_x\le d_y~\text{and}~e_x, e_y\neq 0\big\},\\
   &B_8\triangleq \big\{\bm{z}\in \mathcal{B}(\bm{x},\bm{y}):e_x< d_y\le d_x~\text{and}~e_y< d_y\le d_x~\text{and}~e_x, e_y\neq 0\big\},\\
   &B_9\triangleq \big\{\bm{z}\in \mathcal{B}(\bm{x},\bm{y}): d_x\le d_y<e_x~\text{and}~d_x\le d_y<e_y~\text{and}~e_x, e_y\neq 0\big\},\\
   &B_{10}\triangleq \big\{\bm{z}\in \mathcal{B}(\bm{x},\bm{y}):d_y\le d_x<e_x~\text{and}~ d_y\le d_x<e_y~\text{and}~e_x, e_y\neq 0\big\},\\
   &B_{11}\triangleq \big\{\bm{z}\in \mathcal{B}(\bm{x},\bm{y}):e_x<d_x\leq d_y<e_y~\text{or}~e_y<d_x\leq d_y<e_x\big\}, \\
   &B_{12}\triangleq \big\{\bm{z}\in \mathcal{B}(\bm{x},\bm{y}):e_x<d_y\leq d_x<e_y~\text{or}~e_y<d_y\leq d_x<e_x\big\},\\
   &B_{13}\triangleq \big\{\bm{z}\in \mathcal{B}(\bm{x},\bm{y}):e_x<d_x\le e_y<d_y~\text{or}~e_y<d_x< e_x\le d_y\big\},\\
   &B_{14}\triangleq \big\{\bm{z}\in \mathcal{B}(\bm{x},\bm{y}):e_x<d_y< e_y\le d_x~\text{or}~e_y<d_y\le e_x< d_x\big\},
   \end{align*}
   \begin{align*}
   &B_{15}\triangleq \big\{\bm{z}\in \mathcal{B}(\bm{x},\bm{y}):d_x<e_x\le d_y<e_y~\text{or}~d_x\le e_y< d_y<e_x\big\},\\
   &B_{16}\triangleq \big\{\bm{z}\in \mathcal{B}(\bm{x},\bm{y}):d_y\le e_x< d_x<e_y~\text{or}~d_y<e_y\le d_x<e_x\big\},\\
   &B_{17}\triangleq \big\{\bm{z}\in \mathcal{B}(\bm{x},\bm{y}):d_x<e_x\le d_y~\text{and}~d_x\le e_y<d_y~\text{and}~e_x, e_y\neq 0\big\},\\
   &B_{18}\triangleq \big\{\bm{z}\in \mathcal{B}(\bm{x},\bm{y}):d_y\le e_x< d_x~\text{and}~d_y< e_y\le d_x~\text{and}~e_x, e_y\neq 0\big\}.\\
    \end{align*}  
In other words, the following lemma holds. 

\begin{lemma}\label{lem:sets}
    Let $\bm{x}, \bm{y} \in \Sigma^n$ be two distinct sequences such that $|\mathcal{D}(\bm{x},\bm{y})|=0$ and $|\mathcal{S}(\bm{x},\bm{y})|=0$. If $|\mathcal{B}(\bm{x},\bm{y})|\neq 0$, then  
    \[
    \mathcal{B}(\bm{x},\bm{y}) = \bigcup_{k=1}^{18} B_k.
    \]
\end{lemma}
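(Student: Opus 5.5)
The inclusion $\bigcup_{k=1}^{18} B_k\subseteq \mathcal{B}(\bm{x},\bm{y})$ is trivial, since each $B_k$ is defined as a subset of $\mathcal{B}(\bm{x},\bm{y})$. For the reverse inclusion we take an arbitrary $\bm{z}\in\mathcal{B}(\bm{x},\bm{y})$ and fix one representation $\bm{z}=\bm{x}(d_x,e_x)=\bm{y}(d_y,e_y)$ with $d_x,d_y\in[n]$ and $e_x,e_y\in\{0\}\cup[n]$. We then show that the relative order of $d_x,d_y,e_x,e_y$ matches one of the eighteen patterns. Since each $B_k$ only asks for \emph{some} valid representation of $\bm{z}$, it is enough to exhibit one representation that fits.

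First we record constraints coming from the hypotheses. If $e_x=e_y=0$, then $\bm{z}\in\mathcal{D}(\bm{x},\bm{y})=\emptyset$, which is impossible. Hence at least one of $e_x,e_y$ is nonzero. By the discussion before Table \ref{tab:tuple}, $\bm{z}\in\mathcal{S}(\bm{x}(d_x,0),\bm{y}(d_y,0))$ and $(d_x,d_y)\in E$. If exactly one substitution is nonzero, then $d_H(\bm{x}(d_x,0),\bm{y}(d_y,0))=1$, which forces $d_x\neq d_y$. Positions of the substitution are measured in the original sequence, so $e_x\neq d_x$ whenever $e_x\neq 0$. We also use the convention, allowed by the commutativity of deletion and substitution, that a substitution index lying between the two deletion indices is assigned the weak or strict inequality according to which original coordinate is flipped. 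This convention is what produces the mix of $<$ and $\le$ in $B_5,B_6,B_{13},\dots,B_{18}$.

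Next comes the case analysis. In Case 1, $e_y=0$ and $e_x\neq0$, with $d_x\neq d_y$. The point $e_x$ lies either below $\min(d_x,d_y)$, between them, or above $\max(d_x,d_y)$. Combined with the order of $d_x,d_y$, this gives the six sets $B_1$ to $B_6$, with the between case split into $d_x<e_x\le d_y$ or $d_y\le e_x<d_x$. Case 2, with $e_x=0$, is symmetric and gives the second alternatives in $B_1$ to $B_6$. In Case 3 both $e_x$ and $e_y$ are nonzero. Each of them is classified as left of $\min(d_x,d_y)$, between the deletion positions, or right of $\max(d_x,d_y)$, and the order $d_x\le d_y$ or $d_y\le d_x$ is recorded. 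The pairs (left,left), (right,right), (left,right), (left,between), (between,right) and (between,between) then correspond exactly to $B_7/B_8$, $B_9/B_{10}$, $B_{11}/B_{12}$, $B_{13}/B_{14}$, $B_{15}/B_{16}$ and $B_{17}/B_{18}$, with the two alternatives inside each set covering which of $e_x,e_y$ plays which role.

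The main obstacle is the boundary bookkeeping. We must check that the definitions, which use $d_x\le d_y$ (allowing equality) in $B_7$ to $B_{18}$ and specific strict or weak inequalities for a substitution adjacent to a deletion, leave no configuration uncovered. Examples are $d_x=d_y$ with both substitutions on one side, and $e_x$ equal to $d_y$. I would handle this by showing that when an index coincidence occurs, for instance $e_x=d_y$ with $d_x<d_y$, the coordinate of $\bm{z}$ involved is the same under an alternative indexing of the deletion within its run. We can then shift $d_x$ or $d_y$ inside its run, or reinterpret the shifted substring $\bm{x}_{[d_x+1,d_y]}$ versus $\bm{y}_{[d_x,d_y-1]}$ as in Table \ref{tab:tuple}, so that the representation meets the stated inequalities. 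Checking each of the nine rows of Table \ref{tab:tuple} against the eighteen patterns completes the verification.
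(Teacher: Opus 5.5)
Your proof is correct, but it organizes the argument differently from the paper. The paper reads the eighteen sets off Table~\ref{tab:tuple} together with Lemma~\ref{lem:sub}. For $(d_x,d_y)\in E$, the tuple $(d_1,d_2,d_3)$ records in which segment the one or two mismatches between $\bm{x}(d_x,0)$ and $\bm{y}(d_y,0)$ lie: the common prefix, the shifted middle $\bm{x}_{[d_x+1,d_y]}$ versus $\bm{y}_{[d_x,d_y-1]}$, or the common suffix. Lemma~\ref{lem:sub} then places the substitutions exactly at those mismatches, so the nine rows in the two orientations yield $B_1,\dots,B_{18}$.

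You instead partition directly by the positions of $e_x,e_y$ relative to $d_x,d_y$. You use the hypotheses only to exclude $e_x=e_y=0$ and to force $d_x\neq d_y$ when one substitution is trivial. Your route is more elementary, and it makes plain that the lemma is just an exhaustive covering of index orders: every normalized representation lands in some $B_k$, and no Hamming-distance information is needed. The paper's route, in exchange, attaches to each $B_k$ a Hamming condition on shifted substrings. That is exactly what Lemmas~\ref{lem:E_k_odd} and~\ref{lem:E_k_even} and the later counting use.

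You should, however, delete your last paragraph. The obstacle it describes does not exist, and the run-shifting fix is unnecessary; it would not even be available when the relevant run has length one. Normalize $e_x\neq d_x$ and $e_y\neq d_y$ by replacing a substitution of the deleted symbol with $0$, and take $d_x\le d_y$. Then the regions $e_x<d_x$, $d_x<e_x\le d_y$, $e_x>d_y$ tile $[n]\setminus\{d_x\}$, and the regions $e_y<d_x$, $d_x\le e_y<d_y$, $e_y>d_y$ tile $[n]\setminus\{d_y\}$. These are literally the inequalities in $B_1,B_3,B_5$ and in $B_7,B_9,B_{11},B_{13},B_{15},B_{17}$; the case $d_y\le d_x$ is symmetric with the even-indexed sets.

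So $e_x=d_y$ is simply a middle-region case. When $d_x=d_y$, both middle regions are empty, and the weak inequality $d_x\le d_y$ in $B_7,B_9,B_{11}$ covers all remaining pairs. For the same reason, the mix of $<$ and $\le$ is not a convention coming from commutativity. It is just the exclusion of each word's own deletion index, i.e.\ the index shift between $\bm{x}_{[d_x+1,d_y]}$ and $\bm{y}_{[d_x,d_y-1]}$.
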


 The following lemma suggests that we can forbid four sets so as to simplify  this problem by adding only two bits of redundancy. Moreover, the proof of this lemma will be provided in Appendix A.

\begin{lemma}\label{lem:VT}
   Let 
   \begin{align*}
   \C_1\triangleq\left\{\bm{x}\in \Sigma^n :~  \mathrm{VT}^{0}(\bm{x})\equiv s_0  \pmod{4},~ \mathrm{VT}^{1}(\bm{x})\equiv s_1\pmod{2n} \right\}, 
   \end{align*} 
where $s_0 \in [0,3]$ and $s_1 \in [0,2n-1]$. Then, for any two distinct sequences $\bm{x},\bm{y}\in \C_1$, we have $|\mathcal{D}(\bm{x},\bm{y})|=0$ and $|\mathcal{S}(\bm{x},\bm{y})|=0$, and $d_H(\bm{x},\bm{y})\ge 4$. Moreover, 
\[
    \mathcal{B}(\bm{x},\bm{y}) = \bigcup_{k=5}^{18} B_k.
    \]
\end{lemma}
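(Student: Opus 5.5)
The argument has three parts: the two empty-ball claims, the bound $d_H\ge 4$, and the removal of $B_1,\dots,B_4$. Since $\C_1\subseteq \mathrm{VT}_{s_1}(n)$, Lemma \ref{lem:VT_1} immediately gives $|\mathcal{D}(\bm{x},\bm{y})|=0$ and $|\mathcal{S}(\bm{x},\bm{y})|=0$ for distinct $\bm{x},\bm{y}\in\C_1$. By Lemma \ref{lem:sub}, this already forces $d_H(\bm{x},\bm{y})\ge 3$.

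To exclude $d_H=3$ I will use the weight constraint. If $d_H(\bm{x},\bm{y})=3$, then $\mathrm{wt}(\bm{x})-\mathrm{wt}(\bm{y})$ is odd, with absolute value $1$ or $3$. In either case it is nonzero modulo $4$, which contradicts $\mathrm{VT}^0(\bm{x})\equiv \mathrm{VT}^0(\bm{y})\pmod 4$. Hence $d_H\ge 4$.

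For the last claim, Lemma \ref{lem:sets} already gives $\mathcal{B}(\bm{x},\bm{y})=\bigcup_{k=1}^{18}B_k$. It therefore suffices to show $B_1=B_2=B_3=B_4=\emptyset$. Each of these sets is defined by one of $e_x,e_y$ being $0$, so an element of it is $\bm{z}=\bm{x}(d_x,e_x)=\bm{y}(d_y,0)$ or the symmetric case with the roles of $\bm{x}$ and $\bm{y}$ swapped. Using reversal $R(\cdot)$ (which preserves weight and, after adjusting the residue, the VT-class) and the $\bm{x}\leftrightarrow\bm{y}$ symmetry, the four sets reduce to two configurations: either $e_x<d_x<d_y$ ($B_1$, with $B_2$ symmetric) or $d_x<d_y<e_x$ ($B_3$, with $B_4$ symmetric). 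Reversal maps the first configuration onto the second, so a single case remains.

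I will handle that case, $\bm{z}=\bm{x}(d_x,e_x)=\bm{y}(d_y,0)$ with $e_x<d_x<d_y$, directly, rather than via the Hamming distance. Comparing positions, $\bm{x}$ and $\bm{y}$ agree on $[1,d_x-1]$ except at $e_x$, and they agree on $[d_y+1,n]$. On the middle stretch the relation is a shift: $y_i=x_{i+1}$ for $d_x\le i<d_y$. Hence $\bm{y}$ is obtained from $\bm{x}$ by a substitution at $e_x$ followed by deleting $x_{d_x}$ and inserting $y_{d_y}$ at position $d_y$.

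The key computation is the weight difference. We have $\mathrm{wt}(\bm{y})-\mathrm{wt}(\bm{x})=\pm1+(y_{d_y}-x_{d_x})$, which lies in $\{-2,\dots,2\}$. The weight condition modulo $4$ forces this to be $0$, so $y_{d_y}-x_{d_x}=\mp 1$ matches the flip at $e_x$. Then $\mathrm{VT}^1(\bm{y})-\mathrm{VT}^1(\bm{x})$ is the flip contribution $\pm e_x$ plus the shift contribution. The shift contribution is $-\sum_{i=d_x+1}^{d_y}x_i$ plus the terms from the deleted and inserted bits, $d_y y_{d_y}-d_x x_{d_x}$. Its absolute value is strictly between $0$ and $2n$, using $e_x<d_x$ and the nonzero flip $\pm e_x$, which contradicts the VT congruence modulo $2n$.

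The main obstacle is precisely this $\mathrm{VT}^1$ estimate. I expect to split into the two sign cases: the flip is $0\to1$ with $x_{d_x}=1,y_{d_y}=0$, or the reverse. In each case I will show the difference equals something like $e_x-d_x-(\text{number of ones in }\bm{x}_{[d_x+1,d_y]})$, or its analogue. This quantity is nonzero because $e_x<d_x$, and its absolute value is less than $2n$. The same Levenshtein-style bound then finishes the case, and hence the lemma.
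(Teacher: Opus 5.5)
Your core computation is correct and follows the paper's argument. The first two claims are handled identically. For $B_1$ you use the mod-$4$ weight condition to force $x_{d_x}$ and $y_{d_y}$ to compensate the flipped bit, after which $\mathrm{VT}^1(\bm x)-\mathrm{VT}^1(\bm y)$ equals $d_x-e_x+\sum_{i=d_x+1}^{d_y}x_i$ or $e_x-d_y+\sum_{i=d_x+1}^{d_y}x_i$, each nonzero and of absolute value below $n$. These are exactly the paper's two expressions, with $(e_x,d_x,d_y)$ in the roles of $(j_1,j_2,j_{d_H})$. The only difference is that you read the positional relations off the witness directly instead of quoting Lemma~\ref{lem:E_k_odd}, which makes your version slightly more self-contained.

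The symmetry reduction, however, does not do what you claim. At the level of whole sets it is fine: $B_2$ is $B_1$ with $\bm x$ and $\bm y$ interchanged, and $B_4$, $B_3$ are the reversals of $B_1$, $B_2$. But $B_1$ itself has two halves, and you treat only the one with $e_y=0$.

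The other half, $\bm x(d_x,0)=\bm y(d_y,e_y)$ with $e_y<d_x<d_y$, is not an image of your normalized case:
\begin{itemize}
\item interchanging $\bm x,\bm y$ turns it into $\bm x(d_x,e_x)=\bm y(d_y,0)$ with $e_x<d_y<d_x$, which is the $e_y=0$ half of $B_2$;
\item reversal sends $e_x<d_x<d_y$ to $d_y<d_x<e_x$, not to $d_x<d_y<e_x$ as you assert.
\end{itemize}
In fact the orbit of your case under both symmetries covers only four of the eight half-cases.

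The repair is one line. If $e<\min\{d_x,d_y\}$, position $e$ is the same coordinate of $\bm x(d_x,0)$ and $\bm y(d_y,0)$, so $\bm x(d_x,e)=\bm y(d_y,0)$ holds if and only if $\bm x(d_x,0)=\bm y(d_y,e)$. Equivalently, your computation only uses the induced relation between $\bm x$ and $\bm y$: agreement on $[1,d_x-1]$ except at one position $e<d_x$, $y_i=x_{i+1}$ for $d_x\le i<d_y$, and agreement on $[d_y+1,n]$. Both halves of $B_1$ produce this relation. That is why the paper argues via the deletion pair in $E_1$ rather than a single witness $\bm z$.

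Lastly, reversal does not in general map $\C_1$ into one VT class. We have $\mathrm{VT}^1(R(\bm x))=(n+1)\mathrm{wt}(\bm x)-\mathrm{VT}^1(\bm x)$, and the weight is fixed only modulo $4$, so ``after adjusting the residue'' is not right as a statement about the code. What you need is a pairwise statement, and it does hold. For a witness in $B_3\cup B_4$, the same count gives $|\mathrm{wt}(\bm x)-\mathrm{wt}(\bm y)|\le 2$, hence equal weights. Then $\mathrm{VT}^1(R(\bm x))\equiv\mathrm{VT}^1(R(\bm y))\pmod{2n}$, and your $B_1$ argument applies to the pair $(R(\bm x),R(\bm y))$. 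The paper leaves this step implicit as well.
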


Based on Lemma \ref{lem:VT}, we refine our analysis of $\mathcal{B}(\bm{x},\bm{y})$ by the constrained code $\C_1$. Under this setting, only the remaining fourteen sets $B_i$ for $5 \leq i \leq 18$ need to be considered.

The following theorems and lemmas will be proved in Section \ref{sec:proof}. Recall that for any two distinct sequences $\bm{x}$ and $\bm{y}$ with $d_H\triangleq d_H(\bm{x},\bm{y})$, we define \(\bm{a} \triangleq \bm{a}(\bm{x}, \bm{y}) = \bm{x}_{[1, j_1 - 1]} = \bm{y}_{[1, j_1 - 1]}\) and \(\bm{b} \triangleq \bm{b}(\bm{x}, \bm{y}) = \bm{x}_{[j_{d_H} + 1, n]} = \bm{y}_{[j_{d_H} + 1, n]}\)
as the longest common prefix and suffix of \(\bm{x}\) and \(\bm{y}\), respectively. In the subsequent discussions, we will use $\tilde{\bm{x}}$ and $\tilde{\bm{y}}$ to represent $\bm{x}_{[j_1, j_{d_H}]}$ and $\bm{y}_{[j_1, j_{d_H}]}$, respectively.

\begin{theorem}\label{th:B5_nonempty}
    For any two distinct sequences $\bm{x}, \bm{y} \in \C_1$, let $\bm{x}=\bm{a}\tilde{\bm{x}}\bm{b}$ and $\bm{y}=\bm{a}\tilde{\bm{y}}\bm{b}$ with $d_H\triangleq d_H(\bm{x},\bm{y})$. If $B_5$ or $B_6$ is a nonempty set, then we have $|\mathcal{B}(\bm{x},\bm{y})| \le 13$. Moreover, one of the following holds:
\begin{itemize}
  \item $\tilde{\bm{x}},\tilde{\bm{y}}$ are combinations of at most four $^{\le}2$-periodic sequences;
  \item $|\mathcal{B}(\bm{x},\bm{y})| \le 10$.
\end{itemize}
\end{theorem}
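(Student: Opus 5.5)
By the symmetry $(\bm{x},\bm{y})\mapsto(\bm{y},\bm{x})$ (or reversal, which swaps the roles of $d_x,d_y$ relative to $e$), $B_6$ is the mirror image of $B_5$. So I will assume $B_5\neq\emptyset$ and fix $\bm{z}=\bm{x}(d_x,e_x)=\bm{y}(d_y,0)$ with $d_x<e_x\le d_y$, or the symmetric variant with $e_x=0$.

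In this configuration, $\bm{x}(d_x,0)$ and $\bm{y}(d_y,0)$ differ in exactly one position, namely inside the window $[d_x,d_y]$ (row $(0,1,0)$ of Table~\ref{tab:tuple}). Hence $\bm{x}$ and $\bm{y}$ agree outside $[d_x,d_y]$, so $[j_1,j_{d_H}]\subseteq[d_x,d_y]$. Moreover, $\bm{x}_{[d_x+1,d_y]}$ and $\bm{y}_{[d_x,d_y-1]}$ differ in exactly one coordinate.

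Writing $\tilde{\bm{x}}=\bm{u}\,\gamma\,\bm{v}$ around that coordinate, one sees that $\tilde{\bm{y}}$ is a left shift of $\tilde{\bm{x}}$ with one flipped symbol. Lemma~\ref{lem:del_int} ($\mathcal{D}(\bm{x},\bm{y})=\emptyset$) and the condition $d_H\ge 4$ from Lemma~\ref{lem:VT} constrain this further. In particular, the opposite-direction shift distance $d_H(\bm{x}_{[j_1,j_{d_H}-1]},\bm{y}_{[j_1+1,j_{d_H}]})$ must be at least $1$, and Corollary~\ref{cor:run_number} and Remark~\ref{rem:run_number} then relate its parity to the run structure of $\bm{u},\bm{v}$. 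The outcome of this step is a rigid normal form: $\tilde{\bm{x}}=\alpha\bm{u}\gamma\bm{v}\beta$ and $\tilde{\bm{y}}=\bm{u}\gamma'\bm{v}\beta'\ldots$, determined up to a few bits by $\bm{u},\bm{v}$.

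With this normal form fixed, I will go through the remaining sets $B_5,\ldots,B_{18}$ one at a time. For each $B_k$, the membership condition forces one row of Table~\ref{tab:tuple} for some pair $(d_x',d_y')\in E$. Since $\bm{x}$ and $\bm{y}$ agree outside $[j_1,j_{d_H}]$, the deletion positions $d_x',d_y'$ are determined up to their run in $\bm{a}$ or $\bm{b}$, or lie inside $\tilde{\bm{x}},\tilde{\bm{y}}$. In the latter case, a shifted-Hamming-distance count of $1$ or $2$ between $\tilde{\bm{x}}$ and $\tilde{\bm{y}}$ forces $\bm{u}$ and $\bm{v}$ to have at most two or three runs, or to be alternating (Lemma~\ref{lem:period}). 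Each admissible pair $(d_x',d_y')$ contributes at most $2$ sequences by Lemma~\ref{lem:sub}. I will tabulate the number of distinct $\bm{z}$ obtained from each $B_k$, carefully merging coincidences: for example, $\bm{z}$'s arising from deletions in the same run coincide, and the two elements of $\mathcal{S}(\cdot,\cdot)$ are often shared between different $B_k$.

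Summing gives at most $13$ in general. The dichotomy then comes from watching which $B_k$ contribute in the extremal cases. Reaching more than $10$ requires simultaneous contributions from sets whose existence forces every segment of $\tilde{\bm{x}}$ between consecutive differing positions to be $^{\le}2$-periodic. Counting the segments $\bm{u}$, $\bm{v}$ and the boundary bits shows that $\tilde{\bm{x}}$ and $\tilde{\bm{y}}$ are then concatenations of at most four $^{\le}2$-periodic pieces. Otherwise at least three of the contributions vanish, and we get $|\mathcal{B}(\bm{x},\bm{y})|\le 10$.

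The main obstacle is the bookkeeping in the third step: fourteen sets, each with subcases depending on where the single mismatch $\gamma$ sits relative to $e_x$, and the overlaps between sets. I would first try to organize the count by grouping the sets by the sign pattern of $d_x'-d_y'$ and by the position of $e$ relative to the deletions (left, middle, right). I would use reversal symmetry to halve the cases, and bound each group by a uniform constant before refining to the exact totals of $13$ and $10$.
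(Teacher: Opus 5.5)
Your first step is correct: from $\bm z=\bm x(d_x,e_x)=\bm y(d_y,0)$, $\bm x$ and $\bm y$ agree outside $[d_x,d_y]$, and the left-shifted comparison has exactly one mismatch. The gap is in what follows.

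\textbf{The normal form.} You never use that $B_1=B_3=\emptyset$ for distinct codewords of $\C_1$ (Lemma~\ref{lem:VT}). By Lemma~\ref{lem:E_k_odd}, this is exactly what forces the unique mismatch into an interior gap $[j_t+1,j_{t+1}]$ with $2\le t\le d_H-2$. That yields the real normal form $\bm x=\bm a\bm c\bm u\bm w\bm b$, $\bm y=\bm a\bm d\bm u\bm v\bm b$. Here $\bm d$ is an exact left shift of $\bm c$, $\bm v$ is an exact left shift of $\bm w$, $|\bm c|,|\bm w|\ge 2$ and $r(\bm u)\le 2$. It is also what kills $B_{11}$ and $B_{12}$. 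Your ``normal form determined up to a few bits by $\bm u,\bm v$'' is not accurate: $d_H$ is unbounded, and $\bm c,\bm w$ may have arbitrarily many runs.

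\textbf{The dichotomy.} The same point invalidates your dichotomy argument. Under the one-sided shift relation, every gap between consecutive differing positions inside $\bm c$ and $\bm w$ is automatically a single run. So the property you propose to detect holds in every case. It neither separates $>10$ from $\le 10$ nor bounds the number of pieces. The correct split is whether $\bm c,\bm d$ and $\bm w,\bm v$ \emph{also} match under the reverse shift. If both do, Lemma~\ref{lem:period} makes $\bm c$ and $\bm w$ entire alternating blocks, so $\tilde{\bm x}=\bm c\bm u\bm w$ has at most four $^{\le}2$-periodic pieces. If one fails, Corollary~\ref{cor:run_number} makes that block's reverse-shift distance at least $2$. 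Hence $d_H(\bm x_{[j_1,j_{d_H}-1]},\bm y_{[j_1+1,j_{d_H}]})\ge 3$, and $B_6$, $B_{18}$ and two further mirror-side sets are empty.

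\textbf{The counting.} The bounds $13$ and $10$ are asserted, not derived, and this counting is the substance of the theorem. Even after discarding $B_{11},B_{12}$, Lemma~\ref{lem:set size} allows up to $2\cdot 10+4+4=28$ elements. Removing ``at least three contributions'' still leaves well over $10$. What is needed are concrete coincidences, none of which your plan names:
\begin{enumerate}
\item When $B_5\neq\emptyset$, one has $B_{13}\subseteq B_7\cup B_{17}$ and $B_{15}\subseteq B_9\cup B_{17}$. Hence $|B_5\cup B_7\cup B_9\cup B_{13}\cup B_{15}\cup B_{17}|\le 8$, dropping to $\le 7$ or $\le 6$ once $B_7$ and/or $B_9$ is empty (Lemma~\ref{lem:intersect}, mirrored in Corollary~\ref{cor:intersect}).
\item One has $|B_{10}\cup B_{16}|\le 3$ (Lemma~\ref{lem:B7B13}).
\item In the alternating case, $B_{10}=B_{16}$ or $B_7=B_{13}$ when an outer block has length $2$, and $B_8$ and $B_9$ share an element when both blocks have length $2$.
\item Also in the alternating case, $B_6,B_{14},B_{16},B_{18}$ are empty when $r(\bm u)=2$, and $B_7,B_8$ are empty when $|\bm c|\ge 3$ and $r(\bm u)\le 1$.
\end{enumerate}
These produce the totals $8+4$, $6+6$, $7+6$ and $7+7-1$ in the alternating case, and $7+3$ otherwise. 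Without them, grouping the fourteen sets by sign pattern and bounding each group by a uniform constant will not reach either number.
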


\begin{theorem}\label{th:B11_nonempty}
    For any two distinct sequences $\bm{x},\bm{y}$ in $\C_1$, let $\bm{x}=\bm{a}\tilde{\bm{x}}\bm{b}$ and $\bm{y}=\bm{a}\tilde{\bm{y}}\bm{b}$ and $d_H\triangleq d_H(\bm{x},\bm{y})$. If $B_5$ and $B_6$ are two empty sets, and $B_{11}$ or $B_{12}$ is a nonempty set, then we have $|\mathcal{B}(\bm{x},\bm{y})| \le 12$. Moreover, one of the following holds:
\begin{itemize}
  \item $\tilde{\bm{x}},\tilde{\bm{y}}$ are combinations of at most five $^{\le}2$-periodic sequences and two additional symbols from $\{0,1\}$;
  \item $|\mathcal{B}(\bm{x},\bm{y})| \le 10$.
\end{itemize}
\end{theorem}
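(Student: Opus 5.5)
Assume $B_5=B_6=\emptyset$ and, by the symmetry $\bm{x}\leftrightarrow\bm{y}$ together with reversal $R(\cdot)$ (which maps $B_{11}$ to $B_{12}$ and preserves membership in $\C_1$ up to the constraints used), that $B_{11}\neq\emptyset$. My plan is first to extract structural information from a witness $\bm{z}\in B_{11}$, then to bound the remaining sets $B_7,\dots,B_{18}$ one group at a time using that structure.

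\textbf{Structure from $B_{11}$.} Take the witness with $e_x<d_x\le d_y<e_y$, so that $\bm{x}(d_x,0)$ and $\bm{y}(d_y,0)$ are at Hamming distance $2$, with one mismatch left of $d_x$ and one right of $d_y$. By the row $(1,0,1)$ of Table~\ref{tab:tuple}, $d_1=d_3=1$ and $d_2=0$, i.e. $\bm{x}_{[d_x+1,d_y]}=\bm{y}_{[d_x,d_y-1]}$. Hence $j_1=e_x$ and $j_{d_H}=e_y$ are the only mismatches outside $[d_x,d_y]$, and on $[d_x,d_y]$ the pair $\tilde{\bm{x}},\tilde{\bm{y}}$ restricted there is a one-step shift. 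Since $d_H\ge4$ (Lemma~\ref{lem:VT}), Corollary~\ref{cor:run_number} and Lemma~\ref{lem:run_number} force the shifted block to contain the mismatches $j_2,\dots,j_{d_H-1}$ with alternating values at consecutive mismatch positions. By Remark~\ref{rem:run_number}, the segments between consecutive mismatches then have controlled run counts. So $\tilde{\bm{x}}=x_{e_x}\,\bm{u}\,\bm{w}\,\bm{v}\,x_{e_y}$, where $\bm{u},\bm{v}$ are common unshifted pieces and $\bm{w}$ is a shifted piece. I expect this to split into two regimes. In the first, every auxiliary Hamming distance is at most $2$; then $\bm{u},\bm{w},\bm{v}$ have at most two runs each (or are alternating, via Lemma~\ref{lem:period}), which gives the ``five $^{\le}2$-periodic pieces plus two symbols $x_{e_x},x_{e_y}$'' description. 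In the second regime, some distance is at least $3$ or $4$, which kills many tuple options.

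\textbf{Counting the other sets.} For each $B_k$, $k\in\{7,\dots,18\}$, I will translate membership into a row of Table~\ref{tab:tuple} and a pair $(d_x',d_y')$. Each $(d_x',d_y')\in E$ contributes at most $2$ elements by Lemma~\ref{lem:sub}, and distinct deletion positions within the same run give the same sequence by Lemma~\ref{lem:del}. So I count admissible run-pairs rather than positions. The constraints $j_1=e_x$, $j_{d_H}=e_y$ and the shift structure should pin down $(d_x',d_y')$ up to the run containing the endpoints of $[d_x,d_y]$.
\begin{itemize}
\item $B_{11},B_{12}$ should give at most $2$.
\item $B_{13}$--$B_{16}$ need one of the end mismatches to sit adjacent to the shifted block, and should give at most $2$ each.
\item $B_{7}$--$B_{10}$ and $B_{17},B_{18}$ require the whole discrepancy on one side, which, given $d_H\ge4$ and the structure, forces run counts that are bounded by Remark~\ref{rem:run_number}.
\end{itemize}
I will also use the Lemma~\ref{lem:VT} restriction that weights agree mod $4$. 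This excludes configurations where the weight difference is $\pm2$ incompatible with the tuple, and it is what rules out $B_1$--$B_4$; I expect it to cut several cases here as well. Summing, and removing overlaps where the same $\bm{z}$ arises from two tuples, should give $|\mathcal{B}(\bm{x},\bm{y})|\le12$.

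\textbf{Dichotomy and main obstacle.} In the second regime, where some segment has at least $3$ runs, Remark~\ref{rem:run_number} makes the corresponding distances at least $3$, so several sets become empty and the count drops to at most $10$. The hard step is the careful case analysis: tracking which of $B_7$--$B_{18}$ can be simultaneously nonempty, and showing that the overlaps $B_k\cap B_{k'}$ are large enough to bring $14$ or so naive candidates down to $12$. I would first try to organize this by the position of $d_H$ relative to $d_x,d_y$, and by whether $x_{e_x}=x_{e_y}$, using Lemma~\ref{lem:run_number} for parity.
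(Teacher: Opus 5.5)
Your starting point is right: a witness in $B_{11}$ gives $e_x=j_1$, $e_y=j_{d_H}$ and the shift $\bm{x}_{[j_2+1,j_{d_H-1}]}=\bm{y}_{[j_2,j_{d_H-1}-1]}$. But the counting step is missing its key idea.

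\textbf{The count.} The bound $12$ does not come from overlaps among ``fourteen or so'' candidates. A priori $B_7,\dots,B_{18}$ can hold up to $24$ elements, and what brings this down is that most of these sets are \emph{empty}. The essential fact is Lemma~\ref{lem:B11B9}: $B_{11}\neq\emptyset$ forces $B_7=B_9=\emptyset$.
\begin{itemize}
\item A nonempty $B_7$ gives $\bm{x}_{[j_3+1,j_{d_H}]}=\bm{y}_{[j_3,j_{d_H}-1]}$. Since $j_3\le j_{d_H-1}$ for $d_H\ge4$, this glues with the $B_{11}$ shift into $\bm{x}_{[j_2+1,j_{d_H}]}=\bm{y}_{[j_2,j_{d_H}-1]}$, i.e.\ $B_1\neq\emptyset$, contradicting Lemma~\ref{lem:VT}.
\item Likewise a nonempty $B_9$ would make $B_3$ nonempty.
\item By reversal, $B_{12}\neq\emptyset$ empties $B_8$ and $B_{10}$.
\end{itemize}
The constraints of $\C_1$ matter here only through $d_H\ge4$ and $B_1=\dots=B_4=\emptyset$, not by cutting cases directly via weights. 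The hypothesis $B_5=B_6=\emptyset$ is what caps $|B_{17}|,|B_{18}|$ at $2$ (Table~\ref{tab:E_k}). Your plan uses neither fact in the count.

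In the paper, $12$ is simply six nonempty sets of size at most two. When $B_{11},B_{12}\neq\emptyset$, only $B_{11},\dots,B_{18}$ survive. Two of $B_{13},\dots,B_{18}$ are then shown to be empty, or to lie inside $B_{11}\cup B_{12}$:
\begin{itemize}
\item $d_H(\bm{x}_{[j_{d_H-1}+1,j_{d_H}]},\bm{y}_{[j_{d_H-1},j_{d_H}-1]})\ge2$ empties $B_{13}$ and $B_{17}$;
\item $r(\bm u)=2$ empties $B_{16}$ and $B_{18}$;
\item when $r(\bm u),r(\bm v)\le1$, two of these sets are contained in $B_{11}\cup B_{12}$.
\end{itemize}

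\textbf{The structure.} From $B_{11}\neq\emptyset$ alone, each $\bm{x}_{[j_l+1,j_{l+1}]}$ with $2\le l\le d_H-2$ is a run of $\overline{x_{j_l}}$ ending at a mismatch. So if every gap between consecutive mismatches is nonempty, the block $\bm{x}_{[j_2,j_{d_H-1}]}$ consists of its first symbol followed by $d_H-3$ runs of length at least two. Every per-segment shifted distance is still $0$ or $2$. Hence your claim that the shifted piece has at most two runs or is alternating fails, and no bounded number of $^{\le}2$-periodic pieces covers it.

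Alternation needs the \emph{total} reverse distance $d_H(\bm{x}_{[j_2,j_{d_H-1}-1]},\bm{y}_{[j_2+1,j_{d_H-1}]})$ to vanish, which is exactly $B_{12}\neq\emptyset$. Then Lemma~\ref{lem:period} applies to the two shifts. The bounds $r(\bm u),r(\bm v)\le2$ come from boundary distances equal to $1$, which are forced, for example, when $B_{14},B_{16}\neq\emptyset$.

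So the dichotomy must be organized by whether $B_{12}$, and then $B_8$, is empty. The case $B_{12}=\emptyset$ needs its own proof that $|\mathcal{B}(\bm{x},\bm{y})|\le10$. There the reverse distance above is even and at least $2$ by Corollary~\ref{cor:run_number}, which empties $B_{14}$ and $B_{16}$. One further split removes another set: on $B_{13}$, or on $B_{15}$ when $B_8\neq\emptyset$.

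As written, this whole case analysis is deferred (``should give'', ``the hard step''), so the proposal does not yet prove the statement.
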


\begin{theorem}\label{th:B7_nonempty}
    For any two distinct sequences $\bm{x},\bm{y}\in \C_1$, let $\bm{x}=\bm{a}\tilde{\bm{x}}\bm{b}, \bm{y}=\bm{a}\tilde{\bm{y}}\bm{b}$ and $d_H\triangleq d_H(\bm{x},\bm{y})$.  If $B_5$ and $B_6$ are two empty sets, and $B_{7}$ or $B_{8}$ is a nonempty set, then $|\mathcal{B}(\bm{x},\bm{y})|\le 13$. Moreover, one of the following holds:
\begin{itemize}
  \item $\tilde{\bm{x}}$ and $\tilde{\bm{y}}$ are combinations of at most six $^{\le}2$-periodic sequences and four additional symbols from $\{0,1\}$;
  \item $|\mathcal{B}(\bm{x},\bm{y})|\le 8$.
\end{itemize} 
\end{theorem}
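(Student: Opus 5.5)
By the symmetry $\bm{x}\leftrightarrow\bm{y}$, which exchanges $B_7$ and $B_8$, I will assume $B_7\neq\emptyset$. The plan is to pin down the structure of $\tilde{\bm{x}},\tilde{\bm{y}}$ from one element of $B_7$, and then bound each of the fourteen sets $B_5,\dots,B_{18}$ of Lemma \ref{lem:VT} separately; $B_5=B_6=\emptyset$ by hypothesis.

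Take $\bm{z}=\bm{x}(d_x,e_x)=\bm{y}(d_y,e_y)\in B_7$ with $e_x,e_y<d_x\le d_y$. Comparing $\bm{x}(d_x,0)$ with $\bm{y}(d_y,0)$ through the decomposition preceding Table \ref{tab:tuple}, the row must be $(d_1,d_2,d_3)=(2,0,0)$. So the prefixes $\bm{x}_{[1,d_x-1]}$ and $\bm{y}_{[1,d_x-1]}$ differ in exactly the two positions $e_x,e_y$. We also have $\bm{x}_{[d_x+1,d_y]}=\bm{y}_{[d_x,d_y-1]}$ and a common suffix after $d_y$. Since $d_H(\bm{x},\bm{y})\ge 4$ by Lemma \ref{lem:VT}, the shifted equality forces the window $[d_x,d_y]$ to carry at least two mismatches. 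By Corollary \ref{cor:run_number} and Lemma \ref{lem:run_number}, the mismatches inside $[d_x,d_y]$ then alternate. Hence $\tilde{\bm{x}}$ has the shape: a block before $e_x$, two isolated flipped symbols, a segment on which $\bm{x}$ and $\bm{y}$ agree, and an alternating/shifted segment on $[d_x,d_y]$. Lemma \ref{lem:period} identifies the shifted-equal parts as $^{\le}2$-periodic. Recording this skeleton gives the first alternative: at most six $^{\le}2$-periodic pieces together with the four distinguished symbols $x_{e_x},x_{e_y},x_{d_x},y_{d_y}$.

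Next I will count. For each $k\in\{7,\dots,18\}$, an element of $B_k$ determines a row of Table \ref{tab:tuple}, and hence a pair of segments on which $\bm{x}$ and $\bm{y}$ are shift-equal or equal. I will intersect those constraints with the skeleton above. Two consequences should follow. First, sets of the opposite orientation, such as $B_8$, $B_{10}$, $B_{12}$, $B_{14}$, $B_{16}$ and $B_{18}$, need the reverse shift, which is compatible only if the corresponding window is $^{\le}2$-periodic, by Lemma \ref{lem:period}. Second, inside each remaining set the deletion positions $(d_x,d_y)$ are fixed up to choice within a run, by Lemma \ref{lem:del}. Then the number of distinct $\bm{z}$ in each set is at most a small constant, computed by Lemma \ref{lem:sub}: each admissible pair $(d_x,d_y)$ contributes at most $2$, and pairs lying in the same runs give the same sequences. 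Summing should give $|\mathcal{B}(\bm{x},\bm{y})|\le 13$. The nonempty sets, apart from $B_7$ itself, should be $B_9$, $B_{11}$ and $B_{13}$, together with the periodic-dependent ones.

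Finally, I will split on whether the windows are long or short. If some window forced to be $^{\le}2$-periodic for an opposite-orientation set is not actually periodic, those sets vanish and the sum drops to at most $8$. Otherwise the periodic structure holds throughout, and this is the first alternative. The main obstacle is the counting step: the $B_k$ overlap and several $(d_x,d_y)$ pairs can produce the same $\bm{z}$, so a naive sum overshoots $13$. I would first try to handle this by listing explicit representatives for each row of Table \ref{tab:tuple}, checking equalities among them, and using the $\mathrm{VT}^0\pmod 4$ constraint, which excludes $d_H=3$, to kill the borderline configurations.
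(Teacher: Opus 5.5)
Your proposal has a genuine gap at both of its load-bearing steps.

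\textbf{The structure read off from $B_7$ is wrong.} You correctly get, in effect, $\{e_x,e_y\}=\{j_1,j_2\}$, $(d_x,d_y)=(j_3,j_{d_H})$ and the one-sided shift $\bm{x}_{[j_3+1,j_{d_H}]}=\bm{y}_{[j_3,j_{d_H}-1]}$. However, Lemma~\ref{lem:period} requires \emph{both} shifts. A single shift only says that the runs of $\bm{x}_{[j_3,j_{d_H}]}$ break exactly after $j_3,j_4,\dots,j_{d_H-1}$. So this window has $d_H-2$ runs, and all runs after the first have arbitrary lengths. Moreover, the common blocks $\bm{u}=\bm{x}_{[j_1+1,j_2-1]}$ and $\bm{v}=\bm{x}_{[j_2+1,j_3-1]}$ are not constrained by $B_7$ at all. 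Hence ``recording the skeleton'' does not give the first alternative.

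In the paper, the alternating form of $\bm{x}_{[j_3,j_{d_H}]}$ appears only in the branch $B_8\neq\emptyset$, where $B_8$ supplies the reverse shift $\bm{x}_{[j_3,j_{d_H}-1]}=\bm{y}_{[j_3+1,j_{d_H}]}$. The bounds $r(\bm{u})\le 2$ and $r(\bm{v})\le 3$ are earned, not given. The paper shows that $r(\bm{u})\ge 3$ empties $B_9,B_{10},B_{17},B_{18}$, and that $r(\bm{v})\ge 4$ empties $B_{13},B_{14},B_{17},B_{18}$; each case gives $|\mathcal{B}(\bm{x},\bm{y})|\le 8$. In the branch $B_8=B_{12}=\emptyset$, the structured case even needs $d_H=4$, $r(\bm{u})\le 1$ and $r(\bm{v})\le 2$.

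This also shows your closing dichotomy is not the operative mechanism. The sets that die include same-orientation ones such as $B_9,B_{13},B_{17}$. They die because some $d_H(\bm{x}_{[n]\setminus\{d_x\}},\bm{y}_{[n]\setminus\{d_y\}})\ge 3$ (Lemmas~\ref{lem:sub} and~\ref{lem:E_k}), not because a window fails to be periodic.

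\textbf{The bound $13$ is only asserted, and your list of surviving sets is wrong.} $B_{11}=\emptyset$ whenever $B_7\neq\emptyset$ (Corollary~\ref{cor:B11B9}). The facts you actually need are:
\begin{itemize}
\item $B_{11}=\emptyset$.
\item $B_{15}=\emptyset$. Set $d_1=d_H(\bm{x}_{[j_1+1,j_2]},\bm{y}_{[j_1,j_2-1]})$ and $d_2=d_H(\bm{x}_{[j_2+1,j_3]},\bm{y}_{[j_2,j_3-1]})$. Then $B_1=B_5=\emptyset$ force $d_2\ge 1$ and $d_1+d_2\ge 2$, whence $d_H(\bm{x}_{[n]\setminus\{j_1\}},\bm{y}_{[n]\setminus\{j_{d_H-1}\}})=1+d_1+d_2\ge 3$.
\item $B_9=\emptyset$ when $d_H\ge 5$ (Lemma~\ref{lem:B7B9}).
\item At most one of $B_8,B_{12}$ is nonempty (again Corollary~\ref{cor:B11B9}).
\end{itemize}
Even so, ten candidates $B_7,B_8,B_9,B_{10},B_{12},B_{13},B_{14},B_{16},B_{17},B_{18}$ of size at most $2$ remain, so no plain sum yields $13$. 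You must split on $B_8$ versus $B_{12}$ and kill further sets in each branch. In the branch $B_8\neq\emptyset$ with $r(\bm{u})\le 2$ and $r(\bm{v})\le 3$, seven sets can survive. There you need the overlap $|B_8\cup B_{14}|\le 3$ of Corollary~\ref{cor:B7B13} to get $13$ instead of $14$.

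The $\mathrm{VT}^0\pmod 4$ constraint cannot do this for you. Its whole effect is already packaged in Lemma~\ref{lem:VT}, namely $d_H\ge 4$ and $B_1=\cdots=B_4=\emptyset$.
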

        
By considering $R(\bm{x})$ and $R(\bm{y})$, and using the conclusions of Theorem \ref{th:B7_nonempty}, we can also obtain Corollary \ref{cor:B9B10_nonempty}.
\begin{corollary}\label{cor:B9B10_nonempty}
    For any two distinct sequences $\bm{x},\bm{y}\in \C_1$, let $\bm{x}=\bm{a}\tilde{\bm{x}}\bm{b}, \bm{y}=\bm{a}\tilde{\bm{y}}\bm{b}$ and $d_H\triangleq d_H(\bm{x},\bm{y})$. If $B_5$ and $B_6$ are two empty sets, and $B_{9}$ or $B_{10}$ is a nonempty set, then $|\mathcal{B}(\bm{x},\bm{y})|\le 13$. Moreover, one of the following holds:
  \begin{itemize}
   \item $\tilde{\bm{x}}$ and $\tilde{\bm{y}}$ are combinations of at most six $^{\le}2$-periodic sequences and four additional symbols from $\{0,1\}$;
  \item $|\mathcal{B}(\bm{x},\bm{y})|\le 8$.
  \end{itemize} 
\end{corollary}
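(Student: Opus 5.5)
We derive Corollary \ref{cor:B9B10_nonempty} from Theorem \ref{th:B7_nonempty} by reversal. Set $\bm{x}'\triangleq R(\bm{x})$ and $\bm{y}'\triangleq R(\bm{y})$. Reversal is a bijection of $\Sigma^{n}$ (and of $\Sigma^{n-1}$) that commutes with deletions and substitutions. Concretely, $R(\bm{x}(d_x,e_x))=\bm{x}'(n+1-d_x,\,n+1-e_x)$, where the substitution index is shifted appropriately in the convention that $e_x$ refers to a position of the original $\bm{x}$. Hence $\mathcal{B}(\bm{x}',\bm{y}')=R(\mathcal{B}(\bm{x},\bm{y}))$, and the two balls have the same size. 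Likewise $\mathcal{D}$ and $\mathcal{S}$ intersections are preserved, as is $d_H$.

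The longest common prefix and suffix are exchanged and reversed: $\bm{x}'=R(\bm{b})R(\tilde{\bm{x}})R(\bm{a})$. Consequently $\tilde{\bm{x}}'=R(\tilde{\bm{x}})$ and $\tilde{\bm{y}}'=R(\tilde{\bm{y}})$.

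The first thing to check is the code membership. Theorem \ref{th:B7_nonempty} is stated for $\bm{x},\bm{y}\in\C_1$, but its proof should only use the conclusions of Lemma \ref{lem:VT}: $|\mathcal{D}(\bm{x},\bm{y})|=|\mathcal{S}(\bm{x},\bm{y})|=0$, $d_H\ge 4$, and $\mathcal{B}(\bm{x},\bm{y})=\bigcup_{k=5}^{18}B_k$. All of these are reversal-invariant properties of the pair, provided the family $\{B_k\}$ is mapped to itself under reversal (see below).

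Alternatively, $R(\C_1)$ is again a code of the same type. Indeed, $\mathrm{wt}$ is preserved, and $\mathrm{VT}^1(R(\bm{x}))=(n+1)\mathrm{wt}(\bm{x})-\mathrm{VT}^1(\bm{x})$. So any two reversed codewords still satisfy the hypotheses of Lemma \ref{lem:VT}, with possibly different residues $s_0,s_1$. Since Theorem \ref{th:B7_nonempty} holds for every choice of $s_0,s_1$, it applies to the pair $(\bm{x}',\bm{y}')$.

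The main step, and the only real obstacle, is index bookkeeping: showing that reversal maps $B_5\leftrightarrow B_6$ (or fixes each of them) and maps $B_9\cup B_{10}$ onto $B_7\cup B_8$. Under $i\mapsto n+1-i$ all inequalities among $d_x,d_y,e_x,e_y$ flip. For example, $d_x\le d_y<e_x,e_y$ becomes $e'_x,e'_y<d'_y\le d'_x$, up to the off-by-one shift caused by the deletion, which matters only at the boundary cases $e=d$.

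I will handle that off-by-one using the commutativity of deletion and substitution together with the non-strict versus strict inequalities in the definitions. If a boundary case with $e_x=d_y$ or similar crosses from one $B_k$ into a neighbouring one, I will note that such a $\bm{z}$ already lies in one of the sets $B_{17},B_{18},B_5,B_6$ counted symmetrically. One then checks that $B_5(\bm{x},\bm{y})$ corresponds to $B_6(\bm{x}',\bm{y}')$ and vice versa, so the emptiness hypotheses transfer.

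With this in place, the argument concludes as follows. If $B_9$ or $B_{10}$ is nonempty for $(\bm{x},\bm{y})$, then $B_8$ or $B_7$ is nonempty for $(\bm{x}',\bm{y}')$, and $B_5,B_6$ are empty there. Theorem \ref{th:B7_nonempty} then gives $|\mathcal{B}(\bm{x},\bm{y})|=|\mathcal{B}(\bm{x}',\bm{y}')|\le 13$. It also gives one of two alternatives. Either $|\mathcal{B}|\le 8$, or $R(\tilde{\bm{x}}),R(\tilde{\bm{y}})$ are combinations of at most six $^{\le}2$-periodic sequences plus four symbols. The second alternative is preserved under reversal, since the reverse of a $^{\le}2$-periodic string is $^{\le}2$-periodic. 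This yields the corollary.
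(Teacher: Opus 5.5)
Your route is the paper's: there the corollary is obtained in one line by applying Theorem \ref{th:B7_nonempty} to $R(\bm{x}),R(\bm{y})$. Your transfer of $|\mathcal{B}(\bm{x},\bm{y})|$, of $\tilde{\bm{x}},\tilde{\bm{y}}$, and of the $^{\le}2$-periodic structure is correct. Two points in your write-up need correcting.

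\textbf{First point: $R(\C_1)$ is not a code of the same type.} As stated, this claim is false. For $\bm{x},\bm{y}\in\C_1$ you have $\mathrm{VT}^1(R(\bm{x}))-\mathrm{VT}^1(R(\bm{y}))=(n+1)(\mathrm{wt}(\bm{x})-\mathrm{wt}(\bm{y}))-(\mathrm{VT}^1(\bm{x})-\mathrm{VT}^1(\bm{y}))$. The weights are only congruent modulo $4$, so if they differ by $4$ the first term is $\equiv 4 \pmod{2n}$.

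What rescues you is that $\mathcal{B}(\bm{x},\bm{y})\neq\emptyset$ under the corollary's hypothesis. One deletion plus one substitution changes the weight by an amount in $\{-2,-1,0,1\}$, so $|\mathrm{wt}(\bm{x})-\mathrm{wt}(\bm{y})|\le 3$. Hence $\mathrm{wt}(\bm{x})=\mathrm{wt}(\bm{y})$, and $R(\bm{x}),R(\bm{y})$ lie in a common code $\C_1$ with residues $s_0$ and $(n+1)\mathrm{wt}(\bm{x})-s_1$.

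Your other route also works, but the invariants you list are slightly off. The proof of Theorem \ref{th:B7_nonempty} (through Lemma \ref{lem:E_k}) uses $\mathcal{D}(\bm{x},\bm{y})=\mathcal{S}(\bm{x},\bm{y})=\emptyset$, $d_H\ge 4$, and the \emph{emptiness} of $B_1,\dots,B_4$. That emptiness is stronger than the union identity you list, though it is equally reversal-invariant.

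\textbf{Second point: there is no off-by-one to handle.} Since $e_x,e_y$ denote positions in the original sequences, $R(\bm{x}(d_x,e_x))=R(\bm{x})(n+1-d_x,\,n+1-e_x)$ exactly. Every strict or non-strict inequality among $d_x,d_y,e_x,e_y$ simply reverses. Reading off the definitions, this gives $R(B_5(\bm{x},\bm{y}))=B_6(R(\bm{x}),R(\bm{y}))$ and vice versa, $R(B_9(\bm{x},\bm{y}))=B_8(R(\bm{x}),R(\bm{y}))$, and $R(B_{10}(\bm{x},\bm{y}))=B_7(R(\bm{x}),R(\bm{y}))$, while $B_1,\dots,B_4$ are permuted among themselves. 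Your fallback about boundary elements ``already lying in $B_{17},B_{18},B_5,B_6$'' is therefore unnecessary. As written it was not a checkable step in any case, so it is best replaced by this direct verification.
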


\begin{lemma}\label{lem:B12_empty}
    For any two distinct sequences $\bm{x},\bm{y}\in \C_1$, if $B_k$ with $5\le k\le 12$ are empty, then $|\mathcal{B}(\bm{x},\bm{y})|\le 8$.
\end{lemma}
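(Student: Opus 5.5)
By Lemma \ref{lem:VT}, if $B_k=\emptyset$ for $5\le k\le 12$, then $\mathcal{B}(\bm{x},\bm{y})=\bigcup_{k=13}^{18}B_k$. The proof therefore reduces to bounding these six remaining sets. In each of them both $e_x$ and $e_y$ are nonzero (this is forced by the order constraints). Each set is a union of two symmetric halves related by swapping the roles of $\bm{x}$ and $\bm{y}$. Moreover, passing to $R(\bm{x}),R(\bm{y})$ maps $B_{13}\leftrightarrow B_{15}$-type configurations and $B_{14}\leftrightarrow B_{16}$-type configurations. So I would analyze $B_{13}$, $B_{14}$ and $B_{17}$ in detail and transfer the conclusions to $B_{15},B_{16},B_{18}$ by reversal.

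For each $\bm{z}$ in one of these sets, I would use the decomposition of $d_H(\bm{x}(d_x,0),\bm{y}(d_y,0))$ into the triple $(d_1,d_2,d_3)$ from Table \ref{tab:tuple}, together with the location of $e_x,e_y$ relative to $d_x,d_y$.

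\emph{Set $B_{13}$.} Take the branch $e_x<d_x\le e_y<d_y$. Here one substitution lies in the common prefix region and the other in the shifted middle region, so the triple is $(1,1,0)$. This gives $j_1=e_x$, and all other disagreements of $\bm{x},\bm{y}$ are absorbed by the single shift between $d_x$ and $d_y$. The constraint $d_H(\bm{x},\bm{y})\ge 4$ from Lemma \ref{lem:VT} then forces $d_x$ and $d_y$ to lie in specific runs: $d_x$ essentially after $j_1$, and $d_y$ at the run containing $j_{d_H}$. This is the same mechanism as in Lemma \ref{lem:del}.

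By Lemma \ref{lem:del}, choosing a different deletion position within the same run gives the same sequence. Hence $\bm{z}$ is determined up to a bounded number of run choices, and I would show $|B_{13}|\le 2$, one element from each branch. I would use Corollary \ref{cor:run_number} and Remark \ref{rem:run_number} to show that the single nonzero middle Hamming distance forces a unique run structure between consecutive disagreement positions.

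\emph{Sets $B_{14}$ and $B_{17}$.} The analysis is analogous. For $B_{17}$ both substitutions lie inside $[d_x,d_y]$, so the triple is $(0,2,0)$. I would show $|B_{17}|\le 2$ as well, since both shifted-comparison distances then equal $2$, and Lemma \ref{lem:run_number} pins down the parity, and hence the runs containing $e_x$ and $e_y$.

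\emph{Combining.} Summing the bounds naively over the six sets could exceed $8$. So I would prove pairwise incompatibilities, for example that $B_{13}$ and $B_{14}$ cannot both be nonempty, or that they share elements. Incompatibilities of this kind follow because nonemptiness of $B_{13}$ requires $d_x<d_y$ with a specific ordering of $\bm{x}$ versus $\bm{y}$ in the shift, while $B_{14}$ requires the opposite orientation; by Lemma \ref{lem:del_int} and Lemma \ref{lem:period}, having both orientations available forces alternating structure or a nonempty $\mathcal{D}(\bm{x},\bm{y})$.

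\emph{Main obstacle.} The hardest step is this final combination: turning the individual bounds $|B_k|\le 2$ into the global bound $8$. It requires a case analysis on which orientations ($d_x<d_y$ versus $d_x>d_y$) are realizable. My first attempt would fix the orientation realized by the smallest element and show that at most four of the twelve half-sets can be simultaneously nonempty, each contributing at most two sequences.
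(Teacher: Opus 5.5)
\textbf{Gap in the combination step.} Your reduction to $\bigcup_{k=13}^{18}B_k$ is fine, and so is the per-set bound. The cleanest route to the latter is that, once $B_1,\dots,B_6$ are empty, each $E_k$ with $13\le k\le 18$ is at most one pair, for instance $E_{13}\subseteq\{(j_2,j_{d_H})\}$. Lemma \ref{lem:sub} then gives $|B_k|\le 2$.

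The problem is the combination step, which is the entire content of the lemma. The incompatibility you propose, $B_{13}$ against $B_{14}$ (opposite orientations), is false even under all the hypotheses. Take $\tilde{\bm x}=01010110101010$ and $\tilde{\bm y}=11001001010101$. Then $d_H=12$, with $j_1=1$, $j_2=4$ and $j_k=k+2$ for $k\ge 2$ relative to the window. The weights are equal and $\mathrm{VT}^1$ is equal, so $\bm a\tilde{\bm x}\bm b$ and $\bm a\tilde{\bm y}\bm b$ lie in $\C_1$. The relevant shifted distances are:
\begin{itemize}
\item $d_H(\bm x_{[j_1+1,j_{d_H}]},\bm y_{[j_1,j_{d_H}-1]})=d_H(\bm x_{[j_1,j_{d_H}-1]},\bm y_{[j_1+1,j_{d_H}]})=3$, and the same holds when truncated at $j_{d_H-2}$;
\item the $E_7,E_8,E_{11},E_{12}$ comparisons each equal $1$.
\end{itemize}
So $B_5,\dots,B_{12}$ are all empty. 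Yet $d_H(\bm x_{[j_2+1,j_{d_H}]},\bm y_{[j_2,j_{d_H}-1]})=d_H(\bm x_{[j_2,j_{d_H}-1]},\bm y_{[j_2+1,j_{d_H}]})=1$, so $B_{13}$ and $B_{14}$ are both nonempty. They are in fact disjoint, with $|B_{13}\cup B_{14}|=4$. Your justification also yields no contradiction. Lemma \ref{lem:period} needs both shifted comparisons to vanish on the same window, whereas here each equals $1$. A near-alternating structure is not excluded in $\C_1$.

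\textbf{The pairs that do exclude each other.} They have the same orientation: $B_{13}$ with $B_{15}$, and $B_{14}$ with $B_{16}$. The hypotheses that must be used are $B_{11}=\emptyset$ and $B_5=\emptyset$ (respectively $B_{12}$ and $B_6$).
\begin{enumerate}
\item If $B_{13}$ and $B_{15}$ are both nonempty, then $d_H(\bm x_{[j_2+1,j_{d_H}]},\bm y_{[j_2,j_{d_H}-1]})=d_H(\bm x_{[j_1+1,j_{d_H-1}]},\bm y_{[j_1,j_{d_H-1}-1]})=1$.
\item $B_{11}=\emptyset$ forces $d_H(\bm x_{[j_2+1,j_{d_H-1}]},\bm y_{[j_2,j_{d_H-1}-1]})\ge 1$. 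So this middle piece equals $1$, and the end pieces over $[j_1+1,j_2]$ and $[j_{d_H-1}+1,j_{d_H}]$ contribute $0$.
\item Hence $d_H(\bm x_{[j_1+1,j_{d_H}]},\bm y_{[j_1,j_{d_H}-1]})=1$, which means $B_5\neq\emptyset$, a contradiction.
\end{enumerate}
Reversal gives the $B_{14}/B_{16}$ statement. Therefore at most four of the six sets $B_{13},\dots,B_{18}$ are nonempty, each of size at most $2$; for $B_{17}$ and $B_{18}$ this uses $B_5=B_6=\emptyset$. That yields the bound $8$. Your closing count over ``half-sets, each contributing two'' is not the right bookkeeping: in $B_{13}$ through $B_{16}$ each half contributes at most one element, so the count should be over the six sets.
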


\begin{lemma}\label{lem:B16_empty}
    For any two distinct sequences $\bm{x},\bm{y}\in \C_1$, if $B_k$ with $5\le k\le 16$ are empty, then $|\mathcal{B}(\bm{x},\bm{y})|\le 4$.
\end{lemma}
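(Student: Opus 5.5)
By Lemma \ref{lem:VT}, when $B_k=\emptyset$ for $5\le k\le 16$ we have $\mathcal{B}(\bm{x},\bm{y})=B_{17}\cup B_{18}$. It therefore suffices to show $|B_{17}|\le 2$ and $|B_{18}|\le 2$. The reversal map $\bm{x}\mapsto R(\bm{x})$ sends $\mathcal{B}(\bm{x},\bm{y})$ bijectively onto $\mathcal{B}(R(\bm{x}),R(\bm{y}))$. It reverses all index inequalities and exchanges the configuration types, mapping $B_{17}$ onto $B_{18}$ and preserving the emptiness of the other families collectively. The membership conditions for $\C_1$ are not preserved verbatim, but we only use the derived facts $\mathcal{D}(\bm{x},\bm{y})=\mathcal{S}(\bm{x},\bm{y})=\emptyset$ and $d_H\ge 4$, which are symmetric. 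So it suffices to bound $|B_{17}|\le 2$.

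For $\bm{z}\in B_{17}$ we have $d_x<d_y$, $d_x<e_x\le d_y$ and $d_x\le e_y<d_y$. Hence both substitutions lie inside the window $[d_x,d_y]$, and $\bm{x},\bm{y}$ agree outside it. By Table \ref{tab:tuple} we get $d_1=d_3=0$ and $d_2=2$. Since $d_1=d_3=0$, the prefix $\bm{x}_{[1,d_x-1]}$ equals $\bm{y}_{[1,d_x-1]}$ and the suffixes after $d_y$ agree, so $d_x\le j_1$ and $d_y\ge j_{d_H}$. Up to the choice of representatives within runs, we may take $d_x=j_1$ and $d_y=j_{d_H}$. Then the two shifted sequences $\bm{x}_{[j_1+1,j_{d_H}]}$ and $\bm{y}_{[j_1,j_{d_H}-1]}$ differ in exactly two positions, and $\bm{z}$ is obtained by taking either one of the two elements of $\mathcal{S}$ of this pair (Lemma \ref{lem:sub}, second bullet), padded with $\bm{a}$ and $\bm{b}$. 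Thus every $\bm{z}\in B_{17}$ has the form $\bm{a}\bm{w}\bm{b}$, where $\bm{w}$ ranges over a set of size at most $2$, provided that different admissible choices of $(d_x,d_y)$ give the same shifted pair.

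The main obstacle is exactly this last proviso: different admissible $(d_x,d_y)$ might produce different pairs at Hamming distance $2$. I would handle it with Lemma \ref{lem:del}. Deleting anywhere inside the run containing $d_x$ yields the same sequence. If the deleted run changes, the deletion result changes by Hamming distance equal to the number of runs crossed. Condition $d_1=0$ forces $d_x\le j_1$, and deleting at a run strictly before the run of $j_1$ moves the mismatch at $j_1$ into the shifted window while keeping $\bm{x}_{[1,d_x-1]}=\bm{y}_{[1,d_x-1]}$. I would then check that this produces exactly the same pair $\bm{x}(d_x,0),\bm{y}(d_y,0)$. Indeed, because $x_{j_1}\ne y_{j_1}$, any run boundary crossed on the $\bm{x}$ side is matched on the $\bm{y}$ side, and otherwise the distance in the window would exceed $2$ (or, by Corollary \ref{cor:run_number}, a parity clash arises). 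If some choice yields a genuinely different pair, I would show it forces an element of $B_{15}$ or $B_{16}$ (substitution outside the window), or else $\mathcal{D}(\bm{x},\bm{y})\neq\emptyset$ via Lemma \ref{lem:del_int}, contradicting the hypotheses. This gives $|B_{17}|\le 2$, hence $|B_{18}|\le 2$ by reversal, and therefore $|\mathcal{B}(\bm{x},\bm{y})|\le 4$.
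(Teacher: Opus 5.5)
Your reduction matches the paper's. By Lemma \ref{lem:VT} the hypothesis leaves $\mathcal{B}(\bm{x},\bm{y})=B_{17}\cup B_{18}$, and your reversal argument (which swaps $B_{17}$ and $B_{18}$ and permutes $B_5,\dots,B_{16}$ among themselves) is fine. The paper then simply cites Lemma \ref{lem:set size}, i.e., the entry $E_{17}\subseteq\{(j_1,j_{d_H})\}$ of Table \ref{tab:E_k} when $B_5=\emptyset$.

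You instead try to prove $|B_{17}|\le 2$ directly, and the step you call the main obstacle is left unproved. The sketch you give for it is also wrong in two places. Deleting in a run of $\bm{x}$ strictly before the run of $j_1$ does not reproduce the same pair, since by Lemma \ref{lem:del} $\bm{x}(d_x,0)$ changes. And crossed run boundaries are not ``matched'' on the $\bm{y}$ side: since $\bm{y}_{[d_x,j_1-1]}=\bm{x}_{[d_x,j_1-1]}$, each boundary adds a mismatch. Concretely, for $d_x\le j_1$ and $d_y\ge j_{d_H}$,
\[
d_H(\bm{x}(d_x,0),\bm{y}(d_y,0))=r_1+D+r_2,\qquad D\triangleq d_H(\bm{x}_{[j_1+1,j_{d_H}]},\bm{y}_{[j_1,j_{d_H}-1]}),
\]
where $r_1=|\{i\in[d_x,j_1-1]:x_i\neq x_{i+1}\}|$ and $r_2=|\{i\in[j_{d_H},d_y-1]:y_i\neq y_{i+1}\}|$.

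The decisive missing ingredient is the hypothesis $B_5=\emptyset$. Your argument never uses it, and without it $|B_{17}|\le 2$ is not available. If $D=1$, pairs with $r_1+r_2=1$ are admissible and give genuinely different sequence pairs; this is the case $E_{17}\subseteq\{(k_1-1,j_{d_H}),(j_1,k_1')\}$ of Lemma \ref{lem:E_k_odd}, where only $|B_{17}|\le 4$ holds.

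The fix is short:
\begin{enumerate}
\item Lemma \ref{lem:del_int} gives $D\ge 1$.
\item If $D=1$ with the single mismatch at position $p\in[j_1,j_{d_H}-1]$, then $\bm{x}(j_1,0)=\bm{y}(j_{d_H},p)\in B_5$. Hence $B_5=\emptyset$ forces $D\ge 2$.
\item Then $r_1+D+r_2=2$ forces $r_1=r_2=0$ and $D=2$. So every admissible pair equals $(\bm{x}(j_1,0),\bm{y}(j_{d_H},0))$.
\item Therefore $B_{17}\subseteq\mathcal{S}(\bm{x}(j_1,0),\bm{y}(j_{d_H},0))$, which has size $2$ by Lemma \ref{lem:sub}.
\end{enumerate}

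Your proposed fallback through $B_{15}$ or $B_{16}$ is beside the point. $B_{16}$ requires $d_y<d_x$, and reaching $B_{15}$ from $D=1$ would additionally need $B_3=\emptyset$. The contradiction with $B_5=\emptyset$ is immediate.
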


\begin{remark}\label{rem:CONST}
  From these theorems and lemmas above, we can conclude that $|\mathcal{B}(\boldsymbol{x}) \cap \mathcal{B}(\boldsymbol{y})|\leq 13$ for any two different sequences $\bm{x},\bm{y}\in \C_1$. Furthermore,  let $\bm{x}=\bm{a}\tilde{\bm{x}}\bm{b}, \bm{y}=\bm{a}\tilde{\bm{y}}\bm{b}$, then we have the following conclusions:
  \begin{itemize}
 \item If $|\mathcal{B}(\bm{x},\bm{y})|>10$, then $\tilde{\bm{x}}$ and $\tilde{\bm{y}}$ are combinations of at most six $^{\le}2$-periodic sequences and four additional symbols from $\{0,1\}$.
 \item When $B_5,B_6,B_{11},B_{12}$ are empty, if $|\mathcal{B}(\bm{x},\bm{y})|>8$, then $\tilde{\bm{x}}$ and $\tilde{\bm{y}}$ are combinations of at most six $^{\le}2$-periodic sequences and four additional symbols from $\{0,1\}$.
 \end{itemize}
\end{remark}

\section{reconstruction codes}\label{sec:constr}
In this section, we will provide several $(n,N;\mathcal{B})$-reconstruction codes for $N\in \{14,11,9,5\}$.

 \subsection{Reconstruction Codes for N=14}
    
 Based on Remark \ref{rem:CONST}, we know that for {any two distinct sequences} \( \bm{x}, \bm{y} \in \C_1\), we have $|\mathcal{B}(\boldsymbol{x}) \cap \mathcal{B}(\boldsymbol{y})|\leq 13$, therefore, we have the following theorem.
			
\begin{theorem}\label{th:N=14}
    For the given integers $s_0\in [0,3]$, $s_1\in [0,2n-1]$, we define $\C(n,14,\mathcal{B})\subseteq \Sigma^n$ as
    \[
    \C(n,14,\mathcal{B})=\C_1\triangleq\left\{\bm{x}\in \Sigma^n :~  \mathrm{VT}^{0}(\bm{x})\equiv s_0  \pmod{4},~ \mathrm{VT}^{1}(\bm{x})\equiv s_1\pmod{2n} \right\}.
    \]
    Then, $\C(n,14,\mathcal{B})$ is an $(n,14;\mathcal{B})$-reconstruction code and by the pigeonhole principle, there exists some choice of $s_0$ and $s_1$ such that $\mathrm{red}(\C_1)\le \log n+3$.
\end{theorem}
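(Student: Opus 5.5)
The theorem has two parts: the reconstruction property and the redundancy bound. Both follow from results already stated, so the plan is mostly a matter of assembling them.

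\textbf{Reconstruction property.} Fix two distinct $\bm{x},\bm{y}\in\C_1$. By Lemma~\ref{lem:VT}, $|\mathcal{D}(\bm{x},\bm{y})|=0$ and $|\mathcal{S}(\bm{x},\bm{y})|=0$. If $\mathcal{B}(\bm{x},\bm{y})$ is empty there is nothing to prove. Otherwise Lemma~\ref{lem:VT} gives $\mathcal{B}(\bm{x},\bm{y})=\bigcup_{k=5}^{18}B_k$, and I split into exhaustive cases according to which of these sets is nonempty.
\begin{enumerate}
\item If $B_5$ or $B_6$ is nonempty, Theorem~\ref{th:B5_nonempty} gives $|\mathcal{B}(\bm{x},\bm{y})|\le 13$.
\item If $B_5,B_6$ are empty and $B_7$ or $B_8$ is nonempty, Theorem~\ref{th:B7_nonempty} gives the bound $13$.
\item If $B_5,B_6$ are empty and $B_9$ or $B_{10}$ is nonempty, Corollary~\ref{cor:B9B10_nonempty} gives the bound $13$.
\item If $B_5,B_6$ are empty and $B_{11}$ or $B_{12}$ is nonempty, Theorem~\ref{th:B11_nonempty} gives the bound $12$.
\item If all $B_k$ with $5\le k\le 12$ are empty, Lemma~\ref{lem:B12_empty} gives the bound $8$.
\end{enumerate}
These cases cover every possibility. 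Hence $|\mathcal{B}(\bm{x},\bm{y})|\le 13<14$ in all cases, which is exactly the condition for $\C_1$ to be an $(n,14;\mathcal{B})$-reconstruction code. This is the content of Remark~\ref{rem:CONST}.

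\textbf{Redundancy.} The codes $\C_1(s_0,s_1)$ with $s_0\in[0,3]$ and $s_1\in[0,2n-1]$ partition $\Sigma^n$ into $4\cdot 2n=8n$ classes. By the pigeonhole principle, some choice of $(s_0,s_1)$ satisfies $|\C_1|\ge 2^n/(8n)$. Therefore
\[
\mathrm{red}(\C_1)\le \log(8n)=\log n+3.
\]

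\textbf{Main obstacle.} The only nontrivial content lies in the cited theorems, Theorems~\ref{th:B5_nonempty}, \ref{th:B11_nonempty}, \ref{th:B7_nonempty} and Lemma~\ref{lem:B12_empty}, which are assumed here. At this step the only thing to check is that the case split above is exhaustive, and it is.
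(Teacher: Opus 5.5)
Your proposal is correct and follows the paper's proof. The paper cites Remark~\ref{rem:CONST} for $|\mathcal{B}(\bm{x},\bm{y})|\le 13$; its content is exactly your five-way case split over Theorems~\ref{th:B5_nonempty}, \ref{th:B7_nonempty}, \ref{th:B11_nonempty}, Corollary~\ref{cor:B9B10_nonempty}, and Lemma~\ref{lem:B12_empty}. The paper then applies the same pigeonhole count over the $4\cdot 2n$ syndrome classes, so making the exhaustiveness of the cases explicit is a helpful addition rather than a different route.
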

\begin{IEEEproof}
    For any two distinct sequences $\bm{x},\bm{y}\in \C_1$, by Remark \ref{rem:CONST}, we can obtain that $|\mathcal{B}(\boldsymbol{x}) \cap \mathcal{B}(\boldsymbol{y})|\leq 13$. Thus, $\C(n,14,\mathcal{B})$ forms an $(n,14;\mathcal{B})$-reconstruction code. Applying the pigeonhole principle, there exist $s_0$ and $s_1$ such that $|\C_1|\ge \frac{2^n}{4\cdot 2n}$, which implies $\mathrm{red}(\C_1)\le \log n+3$. This completes the proof.
\end{IEEEproof}  

 \subsection{Reconstruction Codes for N=11}

In this subsection we will give an $(n,11,\mathcal{B})$-reconstruction code by restricting the lengths of  periodic substrings in each sequence from the $(n,14,\mathcal{B})$-reconstruction code and utilizing the $P$-bounded single-deletion single-substitution correcting code $\C_{\mathcal{DS}}^P$, which are mentioned in Section \ref{sub:conclu}.

We use \(\mathcal{R}(n, t', t)\) to denote the set of all binary sequences of length \(n\) where each substring with a period at most \(t'\) has a length at most \(t\).  The following lemma gives the size of \(\mathcal{R}(n, t', t)\). Moreover, algorithms that encode an arbitrary sequence into a sequence belonging to \(\mathcal{R}(n, t', t)\) have been given by \cite{Kobovich-26-IT-constraint}.

\begin{lemma}\cite[Theorem 13]{Chee-18-IT}\label{lem:N=9_2}
 For all positive integers $n,t'$ and $t$, we have
 \begin{align*}
 \left|\mathcal{R}(n, t', t)\right|\ge 2^n\left(1-n(\frac{1}{2})^{t-t'}\right).    
 \end{align*}
 In particular, for $t= \lceil \log n   \rceil+t'+1 $, the size of the set $\mathcal{R}(n, t', t)$ is at least $2^{n-1}$.
\end{lemma}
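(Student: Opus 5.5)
The plan is a union bound over the ``bad'' events that some window of length $t+1$ is $^{\le}t'$-periodic. First I reduce the complement of $\mathcal{R}(n,t',t)$ to such windows. Suppose $\bm{x}\notin\mathcal{R}(n,t',t)$. Then $\bm{x}$ has a substring $\bm{x}_{[i,i']}$ of length at least $t+1$ with period $p\le t'$. By the definition of period, $x_j=x_{j+p}$ for all $j\in[i,i'-p]$, and in particular for all $j\in[i,i+t-p]$. Hence the length-$(t+1)$ window $\bm{x}_{[i,i+t]}$ satisfies the shift relation with shift $p$. Conversely, I only need an upper bound on the number of bad sequences, so it suffices to count the sequences $\bm{x}$ for which there exist $i\in[1,n-t]$ and $p\in[1,t']$ with $x_j=x_{j+p}$ for all $j\in[i,i+t-p]$. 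If $t+1>n$ there are no such windows, $\mathcal{R}(n,t',t)=\Sigma^n$, and the bound is trivial, so assume $t+1\le n$.

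Next I count the sequences for a fixed pair $(i,p)$. The window $\bm{x}_{[i,i+t]}$ is determined by its first $p$ symbols, and the remaining $n-t-1$ positions outside the window are free. So there are exactly $2^{\,n-(t+1)+p}$ such sequences. Summing over $p\in[1,t']$ gives
\[
\sum_{p=1}^{t'}2^{\,n-t-1+p}<2^{\,n-t-1}\cdot 2^{\,t'+1}=2^{\,n-(t-t')}.
\]
Summing over the at most $n$ starting positions $i$, the number of sequences outside $\mathcal{R}(n,t',t)$ is at most $n\,2^{\,n-(t-t')}$. This yields $|\mathcal{R}(n,t',t)|\ge 2^n\bigl(1-n(\frac{1}{2})^{t-t'}\bigr)$.

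Finally, I substitute $t=\lceil\log n\rceil+t'+1$. Then $t-t'=\lceil\log n\rceil+1$, so $n(\frac12)^{t-t'}\le n\cdot\frac{1}{2n}=\frac12$, and therefore $|\mathcal{R}(n,t',t)|\ge 2^{n-1}$.

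The only delicate point is the reduction in the first step. I must check that a long substring with period at most $t'$ forces the shift relation on a window of length exactly $t+1$. This holds because the relation $x_j=x_{j+p}$ restricts to any subinterval of the substring. Apart from that, the argument is a routine count.
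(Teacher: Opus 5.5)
Your union-bound argument is correct, and it is essentially the standard proof. The paper does not reprove this lemma but imports it from Chee et al.; the bound $2^n\bigl(1-n(\frac{1}{2})^{t-t'}\bigr)$ is exactly what your count produces, i.e.\ a union bound over start positions $i$ and periods $p\le t'$ of a length-$(t+1)$ window. One cosmetic point: the count $2^{\,n-t-1+p}$ is exact only for $p\le t+1$ and is merely an upper bound for larger $p$, which can occur only when $t'>t$, and there the claimed bound is vacuous anyway.
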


By using the conclusion of Lemmas \ref{lem:N=11} and \ref{lem:N=9_2}, and Theorem \ref{th:N=14}, we can give an $(n,11;\mathcal{B})$-reconstruction code with $\log n+12\log\log n+O(1)$ bits of redundancy.
\begin{theorem}\label{th:N=11}
    Let $t= \lceil \log n   \rceil+3 $, $P\triangleq 6t+4$, for fixed sequences $\bm{s}=s_0s_1, \bm{g}=g_1g_2g_3$ and $\bm{g}'=g_1'g_2'g_3'$, where  $s_0\in [0,3]$, $s_1\in [0,2n-1]$, \(g_1, g_1' \in [0,6P-1]\), \(g_2, g_2' \in [0,12P^2-1]\), and \(g_3, g_3' \in [0,24P^3-1]\),  we define the code $\C(n,11,\mathcal{B})\subseteq \Sigma^n$ such that any sequence $\bm{x}\in \C(n,11,\mathcal{B})$ satisfies the following conditions:
    \begin{itemize}
    \item $\bm{x}\in \mathcal{R}(n,2,t)$;
    \item $\mathrm{VT}^{0}(\bm{x})\equiv s_0  \pmod{4}$;
    \item $\mathrm{VT}^{1}(\bm{x})\equiv s_1\pmod{2n}$;  
    \item $\sum_i\mathrm{VT}^{k}(\bm{x}^{(P,2i-1)}\bm{x}^{(P,2i)})\equiv g_k\pmod{3(2P)^k}$, $k\in \{1,2,3\}$;
    \item $\sum_i\mathrm{VT}^{k}(\bm{x}^{(P,2i)}\bm{x}^{(P,2i+1)})\equiv g_k'\pmod{3(2P)^k}$, $k\in \{1,2,3\}$;
   \end{itemize}
    then $\C(n,11,\mathcal{B})$ is an $(n,11;\mathcal{B})$-reconstruction code and  by the pigeonhole principle, there exist appropriate sequences $\bm{s}, \bm{g}$ and $\bm{g}'$ such that $\mathrm{red}(\C(n,11,\mathcal{B}))\le \log n+12\log\log n+O(1)$.
  \end{theorem}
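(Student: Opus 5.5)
The plan is to combine Remark \ref{rem:CONST} with the $P$-bounded property of Lemma \ref{lem:N=11} and the run-length constraint $\mathcal{R}(n,2,t)$. We then count redundancy with Lemma \ref{lem:N=9_2} and the pigeonhole principle.

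\emph{Reconstruction property.} Let $\bm{x}\neq\bm{y}\in\C(n,11,\mathcal{B})$. Since $\C(n,11,\mathcal{B})\subseteq\C_1$, Remark \ref{rem:CONST} gives $|\mathcal{B}(\bm{x},\bm{y})|\le 13$. Suppose, for contradiction, that $|\mathcal{B}(\bm{x},\bm{y})|\ge 11>10$. By the first bullet of Remark \ref{rem:CONST}, writing $\bm{x}=\bm{a}\tilde{\bm{x}}\bm{b}$ and $\bm{y}=\bm{a}\tilde{\bm{y}}\bm{b}$, the strings $\tilde{\bm{x}},\tilde{\bm{y}}$ are each a concatenation of at most six $^{\le}2$-periodic substrings and at most four extra symbols.

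Since $\bm{x},\bm{y}\in\mathcal{R}(n,2,t)$, every substring of period at most $2$ has length at most $t$. Hence $|\tilde{\bm{x}}|=|\tilde{\bm{y}}|\le 6t+4=P$. Both sequences also satisfy the two families of congruences of Lemma \ref{lem:N=11}, so $\bm{x},\bm{y}\in\C_{\mathcal{DS}}^P$. That lemma then forces $\mathcal{B}(\bm{x},\bm{y})=\emptyset$, contradicting $|\mathcal{B}(\bm{x},\bm{y})|\ge 11$. Therefore $|\mathcal{B}(\bm{x},\bm{y})|\le 10<11$, and the code is an $(n,11;\mathcal{B})$-reconstruction code.

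\emph{Redundancy.} Lemma \ref{lem:N=9_2} with $t'=2$ and $t=\lceil\log n\rceil+3$ gives $|\mathcal{R}(n,2,t)|\ge 2^{n-1}$. The remaining constraints partition $\mathcal{R}(n,2,t)$ into at most
\[
4\cdot 2n\cdot\prod_{k=1}^{3}\big(3(2P)^k\big)^2
\]
classes. By pigeonhole, some choice of $\bm{s},\bm{g},\bm{g}'$ yields a class of size at least $2^{n-1}$ divided by this number. The resulting redundancy is at most
\[
1+3+\log n+1+2\sum_{k=1}^3\big(\log 3+k\log(2P)\big)=\log n+12\log P+O(1).
\]
Since $P=6\lceil\log n\rceil+22=O(\log n)$, we have $12\log P=12\log\log n+O(1)$, which gives the claimed bound.

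\emph{Main obstacle.} The key step is bounding $|\tilde{\bm{x}}|$ by $P$. This needs the structural statement in Remark \ref{rem:CONST} to cover every case with $|\mathcal{B}|>10$. By Theorems \ref{th:B5_nonempty}, \ref{th:B11_nonempty}, \ref{th:B7_nonempty}, Corollary \ref{cor:B9B10_nonempty} and Lemmas \ref{lem:B12_empty} and \ref{lem:B16_empty}, the worst decomposition is six periodic pieces plus four symbols, which matches $P=6t+4$. One must also note that the Hamming-differing window $\tilde{\bm{x}}$ has the same length in $\bm{x}$ and $\bm{y}$, as required by the definition of $P$-boundedness.
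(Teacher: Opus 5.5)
Your proposal is correct and follows the paper's proof step for step. Membership in $\C_1$ together with Remark~\ref{rem:CONST} decomposes $\tilde{\bm{x}},\tilde{\bm{y}}$ into at most six $^{\le}2$-periodic pieces and four extra symbols whenever $|\mathcal{B}(\bm{x},\bm{y})|>10$, and $\mathcal{R}(n,2,t)$ then gives $|\tilde{\bm{x}}|=|\tilde{\bm{y}}|\le 6t+4=P$, contradicting Lemma~\ref{lem:N=11}; the redundancy follows from Lemma~\ref{lem:N=9_2} and pigeonhole (your constant $1+3+\log n+1$ overcounts $\log(2\cdot 4\cdot 2n)=\log n+4$ by one, which is harmless inside the $O(1)$).
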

  \begin{IEEEproof} 
   Given any two distinct sequences $\bm{x},\bm{y}\in \C(n,11,\mathcal{B})$, we note that $\C(n,11,\mathcal{B})$ is a subcode of $\C(n,14,\mathcal{B})$ and $\C_{\mathcal{DS}}^P$, so from Theorem \ref{th:N=14}, we have $|\mathcal{B}(\boldsymbol{x}) \cap \mathcal{B}(\boldsymbol{y})|\leq 13$. Suppose that there exist a pair of sequences $\bm{x}$ and $\bm{y}$ in $\C(n,11,\mathcal{B})$ such that $|\mathcal{B}(\boldsymbol{x}) \cap \mathcal{B}(\boldsymbol{y})|>10$, then let \(\bm{x} = \bm{a}\tilde{\bm{x}}\bm{b}\) and \(\bm{y} = \bm{a}\tilde{\bm{y}}\bm{b}\). According to Remark \ref{rem:CONST},  $\tilde{\bm{x}}$ and $\tilde{\bm{y}}$ are composed of at most six $^{\le}2$-periodic sequences and four additional symbols from $\{0,1\}$. On the other hand, since $\bm{x},\bm{y}\in \mathcal{R}(n,2,t)$, it follows that $|\tilde{\bm{x}}|=|\tilde{\bm{y}}|\le P$,
    so from Lemma \ref{lem:N=11}, we know that \(\bm{x}, \bm{y} \notin \C_{\mathcal{DS}}^P\), which leads to a contradiction. Thus, for any two distinct sequences $\bm{x}, \bm{y}\in \C(n, 11, \mathcal{B})$, we have $|\mathcal{B}(\boldsymbol{x}) \cap \mathcal{B}(\boldsymbol{y})|\le 10$, i.e., \(\C(n, 11, \mathcal{B})\) is an \((n, 11; \mathcal{B})\)-reconstruction code. From Lemma \ref{lem:N=9_2}, we know that $|\mathcal{R}(n,2,t)|\ge 2^{n-1}$.  By the pigeonhole principle, there exist appropriate sequences $\bm{s}, \bm{g}$ and $\bm{g}'$ such that $|\C(n, 11, \mathcal{B})|\ge \frac{2^{n-1}}{4\cdot 2n\cdot {\textstyle \prod_{k=1}^{3}}(3(2P)^k)^2}$, which implies \(\mathrm{red}(\C(n, 11, \mathcal{B}))\le \log n + 12\log \log n + O(1)\). This completes the proof.
\end{IEEEproof}

 \subsection{Reconstruction Codes for N=9}

Let \(\bm{x} = \bm{a}\tilde{\bm{x}}\bm{b}\) and \(\bm{y} = \bm{a}\tilde{\bm{y}}\bm{b}\) be two distinct sequences in $\Sigma^n$. From Remark \ref{rem:CONST}, we know that when $B_5,B_6,B_{11}$, and $B_{12}$ are all empty, if \(|\mathcal{B}(\bm{x},\bm{y})| > 8\), then $\tilde{\bm{x}}$ and $\tilde{\bm{y}}$ are composed of at most six $^{\le}2$-periodic sequences and four additional symbols from $\{0,1\}$. This structure can be forbidden by $\mathcal{R}(n,2,t)\cap\C_{\mathcal{DS}}^P$ when we set $t=\lceil \log n   \rceil+3$ and $P\ge  6t+4$. Therefore, we shall construct a reconstruction code which is a subcode of $\mathcal{R}(n,2,t)\cap\C_{\mathcal{DS}}^P$, where $t=\lceil \log n   \rceil+3$ and $P\ge  6t+4$, such that any two different sequences in it satisfying $B_5,B_6,B_{11}$, and $B_{12}$ are empty.
To do this, we need to utilize the tools mentioned earlier, i.e. strong-locally-balanced sequence and differential sequence.

\begin{theorem}\label{th:N=9}
Let $n\ge 10, \epsilon =\frac{1}{18}$, $l=1296 \log n$, $P=4l$, $t=\lceil\log n\rceil+3$, and 
\begin{align*}
    \mathcal{M}\triangleq \{\bm{x}\in \Sigma^n: \bm{x}~ \text{and}~ \psi(\bm{x})~\text{are strong-}(l,\epsilon)\text{-locally-}\text{balanced sequences}\}.
\end{align*}
For fixed sequences $\bm{s}=s_0s_1, \bm{h}=h_0h_1, \bm{g}=g_1g_2g_3$ and $\bm{g}'=g_1'g_2'g_3'$, where $s_0\in [0,3]$, $s_1\in [0,3n]$, $h_0\in [0,6]$, $h_1\in [0,6(n+1)]$, $g_1,g_1'\in [0,6P-1]$, $g_2,g_2'\in [0,12P^2-1]$, and $g_3,g_3'\in [0,24P^3-1]$, we define $\C(n,9,\mathcal{B})\subseteq \Sigma^n$ such that any sequence $ \bm{x}\in \C(n,9,\mathcal{B})$ satisfies the following conditions:
\begin{itemize}
    \item $\bm{x}\in \mathcal{M}\cap \mathcal{R}(n,2,t)$;
    \item $\mathrm{VT}^{0}(\bm{x})\equiv s_0\pmod {4}$;
    \item $\mathrm{VT}^{1}(\bm{x})\equiv s_1\pmod {3n+1}$;
    \item $\mathrm{VT}^{0}(\psi(\bm{x}))\equiv h_0\pmod {7}$;
    \item $\mathrm{VT}^{1}(\psi(\bm{x}))\equiv h_1\pmod {6(n+1)+1}$;
    \item $\sum_i\mathrm{VT}^{k}(\bm{x}^{(P,2i-1)}\bm{x}^{(P,2i)})\equiv g_k\pmod{3(2P)^k}$, $k\in \{1,2,3\}$;
    \item $\sum_i\mathrm{VT}^{k}(\bm{x}^{(P,2i)}\bm{x}^{(P,2i+1)})\equiv g_k'\pmod{3(2P)^k}$, $k\in \{1,2,3\}$.
\end{itemize}
Then, $\C(n,9,\mathcal{B})$ is an $(n,9;\mathcal{B})$-reconstruction code, and by the pigeonhole principle, there exist appropriate sequences $\bm{s}, \bm{h}, \bm{g}$ and $\bm{g}'$ such that $\mathrm{red}(\C(n,9,\mathcal{B}))\le 2\log n+12\log \log n+O(1)$.
    \end{theorem}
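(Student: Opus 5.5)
The plan is to reduce the $(n,9;\mathcal{B})$ claim to Remark \ref{rem:CONST}. The bound $|\mathcal{B}(\bm{x},\bm{y})|\le 8$ will follow once we show that, for any two distinct $\bm{x},\bm{y}\in\C(n,9,\mathcal{B})$, the sets $B_5,B_6,B_{11},B_{12}$ are empty and the structured case of the Remark cannot occur.

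\textbf{Step 1: reduction to $\C_1$.} First I will check that $\C(n,9,\mathcal{B})$ inherits everything used from $\C_1$. The condition $\mathrm{VT}^1(\bm{x})\equiv s_1 \pmod{3n+1}$ uses a modulus at least $2n+1$, so the Levenshtein-type argument behind Lemma \ref{lem:VT_1} still gives $\mathcal{D}(\bm{x},\bm{y})=\mathcal{S}(\bm{x},\bm{y})=\emptyset$. The Appendix argument of Lemma \ref{lem:VT}, which combines the weight modulo $4$ with the VT syndrome, should go through verbatim with modulus $3n+1$. This yields $d_H\ge 4$ and $\mathcal{B}(\bm{x},\bm{y})=\bigcup_{k=5}^{18}B_k$. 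Consequently Theorems \ref{th:B5_nonempty}, \ref{th:B11_nonempty}, \ref{th:B7_nonempty}, Corollary \ref{cor:B9B10_nonempty} and Lemmas \ref{lem:B12_empty}, \ref{lem:B16_empty} all apply.

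\textbf{Step 2: $B_5,B_6,B_{11},B_{12}$ are empty (main obstacle).} Take $\bm{z}\in B_5$, so $\bm{z}=\bm{x}(d_x,e_x)=\bm{y}(d_y,0)$ with $d_x<e_x\le d_y$. Then $\bm{y}$ arises from $\bm{x}$ by deleting at $d_x$, flipping at $e_x$, and inserting at $d_y$. The steps are as follows.
\begin{enumerate}
\item The weight condition modulo $4$ fixes the direction of the flip and the inserted symbol. The difference $\mathrm{VT}^1(\bm{x})-\mathrm{VT}^1(\bm{y})$ is then a sum of a weight of $\bm{x}_{[d_x,d_y]}$ (a shift term) and $\pm e_x$. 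Its absolute value is at most about $3n$, so the congruence modulo $3n+1$ turns it into an exact equation linking $e_x$ to the number of ones in the shifted window.
\item I pass to $\psi(\bm{x})$ and $\psi(\bm{y})$ using Observations \ref{obs:del} and \ref{obs:sub}. The weight of $\psi$ changes by an element of $\{0,\pm2,\pm4\}$, so the condition modulo $7$ forces this change to be $0$. The $\psi$-VT difference is bounded by about $6(n+1)$, so the condition modulo $6(n+1)+1$ gives a second exact equation, now involving the number of ones of $\psi(\bm{x})$ in the window.
\item Comparing the two equations should show that the window $[d_x,d_y]$ has either unbalanced weight in $\bm{x}$ or unbalanced weight in $\psi(\bm{x})$. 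For instance, the window might have to be a run or alternating segment, or have weight forced near $0$ or near the full length.
\item If $d_y-d_x\ge l$, this contradicts membership in $\mathcal{M}$, by the strong-$(l,\tfrac1{18})$-balance of $\bm{x}$ and $\psi(\bm{x})$. If $d_y-d_x<l$, then the differing segments satisfy $|\tilde{\bm{x}}|\le 4l=P$, and Lemma \ref{lem:N=11} makes $\mathcal{B}(\bm{x},\bm{y})$ empty.
\end{enumerate}
$B_6$ follows by symmetry. For $B_{11}$ (and symmetrically $B_{12}$), with $e_x<d_x\le d_y<e_y$, the same two-equation method applies to the middle shifted window, with the two outer substitutions contributing bounded linear terms. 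The delicate part is verifying that the moduli $3n+1$ and $6(n+1)+1$ really exceed every attainable syndrome difference in all subcases. I expect to need a case split on the values of the substituted symbols.

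\textbf{Step 3: conclusion and redundancy.} With these four sets empty, Remark \ref{rem:CONST} shows that $|\mathcal{B}(\bm{x},\bm{y})|>8$ forces $\tilde{\bm{x}},\tilde{\bm{y}}$ to consist of at most six ${}^{\le}2$-periodic pieces plus four symbols. Since $\bm{x},\bm{y}\in\mathcal{R}(n,2,t)$, this gives $|\tilde{\bm{x}}|\le 6t+4\le P$ for large $n$, and Lemma \ref{lem:N=11} then gives a contradiction, exactly as in the proof of Theorem \ref{th:N=11}.

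For the size, Lemma \ref{lem:N=9_1} and Lemma \ref{lem:N=9_2} bound the complements of $\mathcal{M}$ and $\mathcal{R}(n,2,t)$ by $o(2^n)$ and $2^{n-1}$ respectively. Hence a union bound gives $|\mathcal{M}\cap\mathcal{R}(n,2,t)|\ge 2^{n-2}$ for large $n$. Pigeonholing over the $4(3n+1)\cdot 7(6n+7)\prod_{k=1}^{3}(3(2P)^k)^2$ syndrome classes, with $P=O(\log n)$, yields redundancy $2\log n+12\log\log n+O(1)$.
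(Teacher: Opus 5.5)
Your outline follows the paper's route. You use exact $\mathrm{VT}$ equalities for $\bm x$ and $\psi(\bm x)$, strong local balance on long spans, and the $P$-bounded code on short spans to empty $B_5,B_6,B_{11},B_{12}$, then finish with Remark~\ref{rem:CONST}, $\mathcal R(n,2,t)$ and Lemma~\ref{lem:N=11}. Your Step~1 and your $2^{n-2}$ count are, if anything, more careful than the paper. However, the $B_{11}/B_{12}$ branch fails as you set it up.

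For $\bm z\in B_{11}$ one has $(d_x,d_y)=(j_2,j_{d_H-1})$ and $\{e_x,e_y\}=\{j_1,j_{d_H}\}$. So $|\tilde{\bm x}|=|e_y-e_x|+1$, not $d_y-d_x+1$. The outer substitutions also contribute terms of size $e_x,e_y$ (up to $\pm(2e_x+1),\pm(2e_y+1)$ on the $\psi$ side), which are not bounded. Your dichotomy on $d_y-d_x$ therefore misses a short middle window with far-apart substitutions: balance says nothing about a window shorter than $l$, and Lemma~\ref{lem:N=11} is unavailable because $|\tilde{\bm x}|>P$. (The paper's $\psi$-analysis of $B_{11}$ has the same weakness. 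For example, its claim $\Delta_\psi^{(1)}>0$ when $N(\bm x)=N(\bm y)=(0,0)$ needs $\sum_{i=d_x+2}^{d_y}\psi(x(0,e_x))_i\ge 3$.)

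The repair needs only your first equation. Take $e_x<d_x<d_y<e_y$; if $d_x=d_y$ then $d_H\le 3$. Then $\bm x,\bm y$ agree outside $\{e_x,e_y\}\cup[d_x,d_y]$, differ at $e_x$ and $e_y$, and satisfy $y_i=x_{i+1}$ for $i\in[d_x,d_y-1]$. Put $\sigma\triangleq x_{e_x}-y_{e_x}\in\{\pm1\}$. The weight condition modulo $4$ forces $x_{e_y}-y_{e_y}=-\sigma$ and $x_{d_x}=y_{d_y}$, and then
\[
\mathrm{VT}^1(\bm x)-\mathrm{VT}^1(\bm y)=-\sigma(e_y-e_x)-x_{d_x}(d_y-d_x)+\mathrm{wt}(\bm x_{[d_x+1,d_y]}).
\]
This is nonzero and has absolute value below $2n$, because $e_y-e_x\ge d_y-d_x+2$. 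The other clause and $B_{12}$ are identical. Hence $B_{11}=B_{12}=\emptyset$ already for every pair in $\C_1$, with no balance argument at all.

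For $B_5$, your decisive item~3 is only asserted. The mechanism you guess (a run or alternating window, or weight near $0$ or $L$) is not what happens. Take the clause $e_y=0$ and put $L\triangleq d_y-d_x$. You need a case split on $(N(\bm x),N(\bm y))$, the decreases in the weight of $\psi(\cdot)$ caused by the deletion and by the substitution, subject to $N(\bm x)_d+N(\bm x)_e=N(\bm y)_d\in\{0,2\}$.

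In the subcases $N(\bm x)=N(\bm y)=(0,0)$ and $N(\bm x)=N(\bm y)=(2,0)$, the sign of $\Delta_\psi^{(1)}$ follows from balance of $\psi(\bm x)$ alone. The two hard subcases are $N(\bm x)=(2,-2),N(\bm y)=(0,0)$ and $N(\bm x)=(0,2),N(\bm y)=(2,0)$. There one only gets $\Delta_\psi^{(1)}\le -2(e_x-d_x)+(\tfrac12+\epsilon)L+4$, respectively $\Delta_\psi^{(1)}\le -2(d_y-e_x)+(\tfrac12+\epsilon)L+4$.

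Here your first equation is essential. Let $w$ be the weight of $\bm x$ on $[d_x+1,d_y]\setminus\{e_x\}$. The equation gives $e_x-d_x=w+1$ or $e_x-d_x=L-w$, according as $(x_{d_x},x_{e_x})=(1,0)$ or $(0,1)$. Balance of $\bm x$ then makes both $e_x-d_x$ and $d_y-e_x$ at least $(\tfrac12-\epsilon)L-2$. Hence $\Delta_\psi^{(1)}\le(-\tfrac12+3\epsilon)L+8<0$, and this is exactly where $\epsilon<\tfrac16$ is needed.

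Once you supply this case analysis, and handle the clause $e_x=0$ and $B_6$ by reversal, the rest of your argument goes through as in the paper.
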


Before proving Theorem \ref{th:N=9}, we need the following conclusions and observations.
\begin{corollary}\label{cor:N=9}
Suppose $n\ge 10$, and $b=6$, set $l=b^4\log n=1296\log n$, and $\epsilon=\frac{1}{3b}=\frac{1}{18}$. Then, when $t= \lceil \log n   \rceil+3 $, the size of $\mathcal{M}\cap \mathcal{R}(n, 2, t)$ is at least $2^{n-1}$.
\end{corollary}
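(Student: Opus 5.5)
The plan is a union bound over the two complementary sets, using the quantitative forms of Lemma \ref{lem:N=9_1} and Lemma \ref{lem:N=9_2} rather than their ``at least $2^{n-1}$'' specializations. We write
\[
|\mathcal{M}\cap\mathcal{R}(n,2,t)| \ge 2^n - |\Sigma^n\setminus\mathcal{M}| - |\Sigma^n\setminus\mathcal{R}(n,2,t)|
\]
and bound each subtracted term by a small fraction of $2^n$.

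For the first term, Lemma \ref{lem:N=9_1} with $b=6$ gives
\[
|\Sigma^n\setminus\mathcal{M}|\le 4n^{2-\frac{2}{9}b^2\log e}\,2^n = 4n^{2-8\log e}\,2^n .
\]
Since $8\log e\approx 11.54$, this is at most $4n^{-9.5}2^n$. For $n\ge 10$ that is far below $2^n/4$; indeed $4\cdot 10^{-9.5}$ is tiny.

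For the second term, Lemma \ref{lem:N=9_2} with $t'=2$ and $t=\lceil\log n\rceil+3$ gives
\[
|\Sigma^n\setminus\mathcal{R}(n,2,t)|\le n\,2^{-(t-2)}2^n = n\,2^{-\lceil\log n\rceil-1}2^n \le 2^{n-1}.
\]
This alone would leave no room. The fix is that the choice $t=\lceil\log n\rceil+3$ is one larger than the threshold $\lceil\log n\rceil+t'+1$ in Lemma \ref{lem:N=9_2}. Recomputing, $t-t'=\lceil\log n\rceil+1$, so $n\,2^{-(t-t')}\le n/(2n)=1/2$. So the bound really is $1/2$, and we need one more bit of slack.

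The main obstacle is exactly this: to conclude $\ge 2^{n-1}$ we must show that the excluded $\mathcal{R}$-part is strictly less than $1/2$ minus the $\mathcal{M}$ deficit. I would handle it by using the sharper count underlying \cite[Theorem 13]{Chee-18-IT}. A sequence fails to lie in $\mathcal{R}(n,2,t)$ only if some window of length $t+1$ has period $\le 2$. The number of start positions is at most $n-t$, and the number of such windows per position is at most $2\cdot 2^{\,t'}$ minus overlaps. This gives a bound like
\[
(n-t)\,2^{-(t-t')}\le \frac{n-t}{2n}=\frac12-\frac{t}{2n}.
\]

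Then the total deficit is at most
\[
\Big(\frac12-\frac{t}{2n}+4n^{2-8\log e}\Big)2^n .
\]
It remains to check that $4n^{2-8\log e}\le t/(2n)$ for all $n\ge 10$. This holds since $4n^{-9.5}\ll (\log n+3)/(2n)$.

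If the refined count $(n-t)$ in place of $n$ proves awkward, the fallback is to redo the union bound directly. We would count, for each starting position $i\le n-t$, the sequences whose window $[i,i+t]$ is $^{\le}2$-periodic; there are at most $4\cdot 2^{n-t-1}$ of them. Summing over positions yields the same $\frac{n-t}{2^{t-1}}$-type bound, and the rest of the argument is unchanged.
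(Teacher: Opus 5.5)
Your proof is correct. It uses the same skeleton as the paper: bound $|\mathcal{M}\cap\mathcal{R}(n,2,t)|$ below by $2^n$ minus the two complements, via Lemmas \ref{lem:N=9_1} and \ref{lem:N=9_2}. You are, however, more careful at the decisive step.

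The paper plugs in the stated bound $n(1/2)^{t-2}$ and asserts $2^n\bigl(1-4n^{2-\frac{2}{9}b^2\log e}-n(1/2)^{t-2}\bigr)\ge 2^{n-1}$. But $n(1/2)^{t-2}=n/2^{\lceil\log n\rceil+1}$ equals exactly $\frac12$ when $n$ is a power of two (e.g.\ $n=16$, $t=7$). For such $n$ the asserted inequality does not follow as written. This is precisely the missing slack you identified.

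Your repair goes back to the window count: there are $n-t$ start positions, and a $^{\le}2$-periodic window of length $t+1\ge 3$ is determined by its first two bits, so it has only $4$ possible contents. This gives
$|\Sigma^n\setminus\mathcal{R}(n,2,t)|\le (n-t)\,2^{n-t+1}\le \frac{n-t}{4n}\,2^n<2^{n-2}$.
Together with $4n^{2-8\log e}\,2^n<2^{n-2}$ for $n\ge 10$, the total deficit is below $2^{n-1}$, which closes the argument for every $n\ge 10$. Your failure characterization is right: a substring of length $\ge t+1$ with period $\le 2$ has a length-$(t+1)$ prefix with period $\le 2$.

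Two presentational fixes:

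\emph{Promote your ``fallback'' to the main argument.} It is self-contained, since it does not rely on the internals of \cite[Theorem 13]{Chee-18-IT}. It is also a factor of two stronger than your primary bound $(n-t)2^{-(t-2)}$, because it uses $4$ rather than $2^{t'+1}=8$ window contents. Calling it ``the same'' bound undersells it.

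\emph{Delete the sentence about the threshold.} You claim $t=\lceil\log n\rceil+3$ is one larger than the threshold $\lceil\log n\rceil+t'+1$ of Lemma \ref{lem:N=9_2}. With $t'=2$ the two coincide, as your own recomputation on the next line shows.
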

\begin{IEEEproof}
 From Lemmas \ref{lem:N=9_1} and \ref{lem:N=9_2}, we can compute 
 \begin{align*}
  \left|\mathcal{M}\cap \mathcal{R}(n, 2, t)\right| &=  \left|\mathcal{M}\right|+\left|\mathcal{R}(n, 2, t)\right|-\left|\mathcal{M}\cup \mathcal{R}(n, 2, t)\right|\\
  &\ge 2^n\left(1-4n^{2-\frac{2}{9}b^2\log e}\right)+ 2^n\left(1-n(\frac{1}{2})^{t-2}\right)-2^n\\
  &=2^n\left(1-4n^{2-\frac{2}{9}b^2\log e}-n(\frac{1}{2})^{t-2}\right)\\
  &\ge 2^{n-1}.
 \end{align*}
\end{IEEEproof}

\begin{claim}\cite[Claims 12 and 16]{Sun-24-IT}\label{cla:N=9_1}
For any two distinct sequences $\bm{x},\bm{y}\in \C(n,9,\mathcal{B})$, if there exist $d_x,d_y,e_x,e_y\in [n]$ such that $\bm{x}(d_x,e_x)=\bm{y}(d_y,e_y)$, then
$\mathrm{VT}^{k}(\bm{x})=\mathrm{VT}^{k}(\bm{y})$ for $k\in\{0,1\}$, and $\mathrm{VT}^{k}(R(\bm{x}))=\mathrm{VT}^{k}(R(\bm{y}))$. 
\end{claim}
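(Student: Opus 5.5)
The plan is to use only the congruence constraints of $\C(n,9,\mathcal{B})$ on $\bm{x}$ itself, namely $\mathrm{VT}^{0}(\bm{x})\equiv s_0 \pmod 4$ and $\mathrm{VT}^{1}(\bm{x})\equiv s_1\pmod{3n+1}$. I would bound how much a single deletion and a single substitution can change each syndrome, so that the difference of the syndromes of $\bm{x}$ and $\bm{y}$ lies in an interval shorter than the modulus. The congruence then forces this difference to be $0$. The statements for $R(\bm{x})$ and $R(\bm{y})$ should follow from a linear identity.

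\emph{Weight.} Set $\bm{z}=\bm{x}(d_x,e_x)=\bm{y}(d_y,e_y)$. Deleting $x_{d_x}$ lowers the weight by $x_{d_x}\in\{0,1\}$, and a substitution changes it by $\pm1$ (or by $0$ if $e_x=0$). Hence $\mathrm{wt}(\bm{x})-\mathrm{wt}(\bm{z})\in[-1,2]$, and likewise for $\bm{y}$. So $\mathrm{wt}(\bm{x})-\mathrm{wt}(\bm{y})\in[-3,3]$. Since both weights are $\equiv s_0 \pmod 4$, the difference is $0$.

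\emph{First-order syndrome.} Note that $\mathrm{VT}^{1}(\bm{x})=\sum_i i\,x_i$. Deleting position $d$ gives
\[
\mathrm{VT}^{1}(\bm{x})-\mathrm{VT}^{1}(\bm{x}(d,0))=d\,x_d+\mathrm{wt}(\bm{x}_{[d+1,n]})\in[0,n].
\]
A subsequent substitution at a position $e\le n-1$ of the resulting length-$(n-1)$ word changes $\mathrm{VT}^1$ by $\pm e$, an amount of absolute value at most $n-1$. Hence $\mathrm{VT}^{1}(\bm{x})-\mathrm{VT}^{1}(\bm{z})\in[-(n-1),\,2n-1]$, and the same holds for $\bm{y}$. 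Therefore $|\mathrm{VT}^{1}(\bm{x})-\mathrm{VT}^{1}(\bm{y})|\le 3n-2<3n+1$. Combined with $\mathrm{VT}^{1}(\bm{x})\equiv\mathrm{VT}^{1}(\bm{y})\pmod{3n+1}$, this gives equality. The one point needing care is the commutation of deletion and substitution (cited from \cite{Smagloy-23-IT-DS}). It lets us fix the order "delete first, then substitute" in the length-$(n-1)$ word, so that the substitution index is at most $n-1$. This is where the modulus $3n+1$ is exactly large enough.

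\emph{Reversals.} Since $\mathrm{VT}^{0}(R(\bm{x}))=\mathrm{wt}(\bm{x})$ and
\[
\mathrm{VT}^{1}(R(\bm{x}))=\sum_{i}(n+1-i)x_i=(n+1)\mathrm{wt}(\bm{x})-\mathrm{VT}^{1}(\bm{x}),
\]
the equalities just proved for $\bm{x},\bm{y}$ immediately give $\mathrm{VT}^{k}(R(\bm{x}))=\mathrm{VT}^{k}(R(\bm{y}))$ for $k\in\{0,1\}$. I expect no real obstacle; the only delicate step is the range computation for $\mathrm{VT}^1$ above.
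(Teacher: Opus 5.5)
Your argument is correct, and it is the standard one behind the claims in \cite{Sun-24-IT} that the paper cites without reproducing. You bound the weight difference and the first-order VT difference strictly below the moduli $4$ and $3n+1$, then pass to reversals via $\mathrm{VT}^1(R(\bm{x}))=(n+1)\mathrm{wt}(\bm{x})-\mathrm{VT}^1(\bm{x})$. One small correction to your commentary: reordering the deletion and substitution is not actually needed, since substituting first and then deleting gives $|\mathrm{VT}^1(\bm{x})-\mathrm{VT}^1(\bm{y})|\le 3n$, and that is the bound for which the modulus $3n+1$ is exactly tight; your finer bound $3n-2$ leaves slack.
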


\begin{claim}\cite[Claims 14 and 17]{Sun-24-IT}\label{cla:N=9_2}
    For any two distinct sequences $\bm{x},\bm{y}\in \C(n,9,\mathcal{B})$, if there exist $d_x,d_y,e_x,e_y\in [n]$ such that $\bm{x}(d_x,e_x)=\bm{y}(d_y,e_y)$, then $\mathrm{VT}^{k}(\psi(\bm{x})=\mathrm{VT}^{k}(\psi(\bm{y}))$ for $k\in \{0,1\}$, and $\mathrm{VT}^{k}(\psi(R(\bm{x})))=\mathrm{VT}^{k}(\psi(R(\bm{y})))$.
\end{claim}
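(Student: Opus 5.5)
The plan is to transfer the common output $\bm{z}=\bm{x}(d_x,e_x)=\bm{y}(d_y,e_y)$ to the differential domain, and there compare $\psi(\bm{x})$ and $\psi(\bm{y})$ through the common descendant $\psi(\bm{z})$. By Observations~\ref{obs:del} and~\ref{obs:sub}, $\psi(\bm{z})$ is obtained from $\psi(\bm{x})$ by one ``merge'' error $\psi_i\psi_{i+1}\to\psi_i\oplus\psi_{i+1}$ at $d_x$ and one ``pair-complement'' error at $e_x$. The same holds for $\psi(\bm{y})$ with $d_y,e_y$. Since the two operations commute, we may fix whichever order is convenient.

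\textbf{Weight.} A merge changes the weight by $0$ or $-2$, and a pair-complement changes it by $-2$, $0$ or $+2$. Hence $\mathrm{wt}(\psi(\bm{z}))-\mathrm{wt}(\psi(\bm{x}))\in\{-4,-2,0,2\}$, and likewise for $\bm{y}$. This gives $|\mathrm{wt}(\psi(\bm{x}))-\mathrm{wt}(\psi(\bm{y}))|\le 6$. Together with $\mathrm{VT}^0(\psi(\bm{x}))\equiv\mathrm{VT}^0(\psi(\bm{y}))\pmod 7$, this forces equality. The equality also restricts the possible pairs of error types on the two sides: the total weight changes must coincide.

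\textbf{First moment.} I will write $\Delta_x=\mathrm{VT}^1(\psi(\bm{x}))-\mathrm{VT}^1(\psi(\bm{z}))$ and $\Delta_y$ analogously, and aim to show $|\Delta_x-\Delta_y|\le 3(n+1)$. The modulus $6(n+1)+1$ then forces $\mathrm{VT}^1(\psi(\bm{x}))=\mathrm{VT}^1(\psi(\bm{y}))$.
\begin{itemize}
\item A merge at position $i$ contributes, up to $O(1)$ terms, the weight of the suffix $\psi(\bm{x})_{[i+1,n+1]}$. This lies in $[0,n+1]$.
\item A complement of the pair at positions $i,i+1$ contributes $\pm1$ for the types $01\leftrightarrow10$, and $\pm(2i+1)$ for the types $00\leftrightarrow11$.
\end{itemize}
I will split into cases according to the matched weight-change types (e.g.\ both sides $(0,0)$, or $(-2,0)$ versus $(0,-2)$, and so on). In each case I write $\Delta_x-\Delta_y$ as a difference of suffix weights plus pair-complement terms, and check that it stays within $3(n+1)$. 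Where the crude bound fails, notably when both sides use $00\leftrightarrow11$ substitutions with large indices of opposite sign, I will bring in two further inputs. The first is the already-established equality $\mathrm{VT}^1(\bm{x})=\mathrm{VT}^1(\bm{y})$ from Claim~\ref{cla:N=9_1}. The second is the strong-$(l,\epsilon)$-local balance of $\psi(\bm{x})$ from $\bm{x}\in\mathcal{M}$, which lets me compare suffix weights over long intervals: a window of length $\ge l$ has weight between $(\frac12-\epsilon)$ and $(\frac12+\epsilon)$ of its length. These should make the suffix-weight terms nearly cancel and keep the total difference below $3(n+1)$.

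\textbf{Main obstacle.} I expect this bounding of $|\Delta_x-\Delta_y|$ in the cases with two large-index substitution terms to be the hardest step. It is where the argument of \cite{Sun-24-IT} must be reproduced in detail.

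\textbf{Reversal.} Since $x_0=x_{n+1}=0$, we have $\psi(R(\bm{x}))=R(\psi(\bm{x}))$. Therefore
\[
\mathrm{VT}^1(\psi(R(\bm{x})))=(n+2)\,\mathrm{wt}(\psi(\bm{x}))-\mathrm{VT}^1(\psi(\bm{x})),
\]
and the same identity holds for $\bm{y}$. The reversed statements then follow immediately from the $k\in\{0,1\}$ equalities already proved.
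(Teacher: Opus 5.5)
Your $\mathrm{VT}^0$ argument and the reversal step (via $\psi(R(\bm x))=R(\psi(\bm x))$ and $\mathrm{VT}^1(\psi(R(\bm x)))=(n+2)\,\mathrm{wt}(\psi(\bm x))-\mathrm{VT}^1(\psi(\bm x))$) are correct. The $\mathrm{VT}^1$ step, which is the only nontrivial part, is not proved, and the way you set it up has three problems.

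First, a merge of type $11\to0$ at position $i$ contributes $2i+1$ on top of the suffix weight; this is the entry $2d_x+1$ in Table~\ref{tab:del}. So large index terms come from deletions as well as from $00\leftrightarrow11$ substitutions, and your bullet saying a merge contributes the suffix weight ``up to $O(1)$'' is wrong.

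Second, since $\Delta_x-\Delta_y\equiv 0\pmod{6(n+1)+1}$, you only need $|\Delta_x-\Delta_y|\le 6(n+1)$. Your target $3(n+1)$ is twice as strong, and the error structure alone does not give it. With weight type $(2,2)$ on both sides and $d_x>d_y$ one gets $\Delta_x-\Delta_y=2(e_x+d_x-e_y-d_y)-\sum_{j=d_y+1}^{d_x}\psi(\bm z)_j$. This can be about $3.5n$ even when $\psi(\bm z)$ is locally balanced. A $3(n+1)$ bound would therefore require excluding such configurations with the code constraints, which is exactly the part you leave at ``should''.

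Third, the case you single out as hardest cannot occur. That is the case where the two substitution terms reinforce, with $11\to00$ on one side and $00\to11$ on the other. In the paper's notation, $N(\bm x)_e=2$ and $N(\bm y)_e=-2$ would force $N(\bm y)_d-N(\bm x)_d=4$, which is impossible since $N_d\in\{0,2\}$.

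The missing idea is only correct bookkeeping against the true threshold. Perform the substitution first and write
\[
\Delta_x=\ell_x+o_x+S_x,\qquad \ell_x=\tfrac{N(\bm x)_e}{2}(2e_x+1)+\tfrac{N(\bm x)_d}{2}(2d_x+1),\qquad S_x=\sum_{j>d_x}\psi(\bm z)_j .
\]
Here $o_x\in\{-1,0,1,2\}$ collects the $\pm1$ and $0/1$ contributions from Tables~\ref{tab:del} and~\ref{tab:sub}, and $\Delta_y$ is decomposed the same way.

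The weight equality $N(\bm x)_d+N(\bm x)_e=N(\bm y)_d+N(\bm y)_e$ says that $\ell_x$ and $\ell_y$ contain the same signed number of terms of the form $2p+1$ with $p\in[1,n]$. Hence $\ell_x-\ell_y$ is always $0$, $2(a-b)$, or $2(a+a'-b-b')$ for some $a,a',b,b'\in[1,n]$, so $|\ell_x-\ell_y|\le 4(n-1)$. You also have $|S_x-S_y|=\sum_{j=\min(d_x,d_y)+1}^{\max(d_x,d_y)}\psi(\bm z)_j\le n-1$ and $|o_x-o_y|\le 3$.

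Together these give $|\Delta_x-\Delta_y|\le 5n-2<6(n+1)+1$, which forces $\mathrm{VT}^1(\psi(\bm x))=\mathrm{VT}^1(\psi(\bm y))$. Neither local balance nor Claim~\ref{cla:N=9_1} is needed.
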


\begin{claim}\cite[Claim 18]{Sun-24-IT}\label{cla:N=9_3}
   For any two distinct sequences $\bm{x},\bm{y}\in \C(n,9,\mathcal{B})$,
    $R(\bm{x})$ and $R(\bm{y})$ also satisfy the constraints in $\C(n,9,\mathcal{B})$.
\end{claim}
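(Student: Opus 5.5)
The plan is to check, constraint by constraint, that reversal preserves what the definition of $\C(n,9,\mathcal{B})$ asks of a pair $\bm{x},\bm{y}$. The congruence constraints will be preserved with common residues (possibly different values $s_1,h_1$, but the same ones for $R(\bm{x})$ and $R(\bm{y})$). The $P$-bounded block constraints will be handled through the property they guarantee. In the paper's usage, the claim is applied to a pair with $\bm{x}(d_x,e_x)=\bm{y}(d_y,e_y)$. Then $R(\bm{x})(n+1-d_x,n+1-e_x)=R(\bm{y})(n+1-d_y,n+1-e_y)$, since reversal commutes with deletion and substitution up to this relabelling of positions. So I may use Claims~\ref{cla:N=9_1} and~\ref{cla:N=9_2} freely.

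\emph{Step 1 (structural constraints).} Reversal maps every substring $\bm{x}_{[i,i+l'-1]}$ to a substring of the same length and weight, so $R(\bm{x})$ is strong-$(l,\epsilon)$-locally-balanced whenever $\bm{x}$ is. With the conventions $x_0=x_{n+1}=0$ we have $\psi(R(\bm{x}))_i=x_{n+1-i}\oplus x_{n+2-i}=\psi(\bm{x})_{n+2-i}$, hence $\psi(R(\bm{x}))=R(\psi(\bm{x}))$. It follows that $\psi(R(\bm{x}))$ is also strong-$(l,\epsilon)$-locally-balanced, so $R(\bm{x})\in\mathcal{M}$. A substring of period $\le 2$ reverses to a substring of the same period and length, so $R(\bm{x})\in\mathcal{R}(n,2,t)$.

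\emph{Step 2 (VT-type constraints).} Weight is invariant under reversal, so $\mathrm{VT}^0(R(\bm{x}))=\mathrm{VT}^0(\bm{x})$, and likewise $\mathrm{VT}^0(\psi(R(\bm{x})))=\mathrm{VT}^0(\psi(\bm{x}))$. For first order we have
\[
\mathrm{VT}^1(R(\bm{x}))=(n+1)\mathrm{wt}(\bm{x})-\mathrm{VT}^1(\bm{x}),\qquad
\mathrm{VT}^1(\psi(R(\bm{x})))=(n+2)\mathrm{wt}(\psi(\bm{x}))-\mathrm{VT}^1(\psi(\bm{x})).
\]
By Claims~\ref{cla:N=9_1} and~\ref{cla:N=9_2}, $\bm{x}$ and $\bm{y}$ have equal weights and equal first-order syndromes, for themselves and for their differential sequences. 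Hence the displayed quantities coincide for $R(\bm{x})$ and $R(\bm{y})$, so both lie in the same residue classes modulo $4$, $3n+1$, $7$ and $6(n+1)+1$.

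\emph{Step 3 (the $P$-bounded constraints), the main obstacle.} The block partition $\bm{x}^{(P,i)}$ is not reversal-symmetric, because the last block may be partial. So the congruences on $\sum_i\mathrm{VT}^k(\cdot)$ are not literally transported. I would instead transport the property used later, namely $P$-bounded single-deletion single-substitution correction from Lemma~\ref{lem:N=11}. Suppose $R(\bm{x})=\bm{a}'\tilde{\bm{u}}\bm{b}'$ and $R(\bm{y})=\bm{a}'\tilde{\bm{v}}\bm{b}'$ with $|\tilde{\bm{u}}|\le P$. Then $\bm{x}=R(\bm{b}')R(\tilde{\bm{u}})R(\bm{a}')$ and $\bm{y}=R(\bm{b}')R(\tilde{\bm{v}})R(\bm{a}')$ differ only inside a window of length $\le P$. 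Since $\mathcal{B}(R(\bm{x}),R(\bm{y}))=R(\mathcal{B}(\bm{x},\bm{y}))=\emptyset$, the reversed pair enjoys the same $P$-bounded guarantee. If a literal version is needed, I would first try re-indexing the blocks from the right end, and otherwise fall back on this property-level statement, which is all that the proof of Theorem~\ref{th:N=9} uses.
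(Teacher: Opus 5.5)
Your argument is correct. The paper gives no proof of this claim at all (it is imported verbatim from \cite{Sun-24-IT}), so your proposal supplies the missing verification rather than competing with an argument in the paper.

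Two of your choices are right and worth stating as such. First, restricting to pairs with $\bm{x}(d_x,e_x)=\bm{y}(d_y,e_y)$ is a necessity, not a convenience. Without it the weights of $\bm{x},\bm{y}$ agree only modulo $4$. By your identity $\mathrm{VT}^1(R(\bm{x}))=(n+1)\mathrm{wt}(\bm{x})-\mathrm{VT}^1(\bm{x})$, a weight gap of $4$ shifts the reversed syndromes apart by $4(n+1)\equiv n+3\not\equiv 0 \pmod{3n+1}$. So the claim, as literally stated for arbitrary distinct codewords, fails in the sense of common residues.

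Second, your Step~3 proves less than a literal transport of the block congruences, but it proves exactly what the proof of Theorem~\ref{th:N=9} consumes. The block constraints enter only through the reduction to $|\eta-\xi|>P$. That spread of the error positions is invariant under $i\mapsto n+1-i$, so the reversed pair inherits it directly.

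Your Step~2 is slightly redundant relative to the paper, since Claims~\ref{cla:N=9_1} and~\ref{cla:N=9_2} already assert the reversed equalities. Your affine identities do, however, show that those reversed parts follow from the forward ones.

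One small point: the identity $\psi(R(\bm{x}))=R(\psi(\bm{x}))$ relies on reading the paper's definition as $\psi(x)_i=x_i\oplus x_{i-1}$. That is the version consistent with Observation~\ref{obs:del}, not the misprinted $x_i\oplus x_{i+1}$.
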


For any \(\bm{x} \in \Sigma^n\) and $e_x\in [n]$, from Observation \ref{obs:sub}, we have
\begin{equation}\label{equ:e_x}
    \psi(x(0,e_x))_i=\left\{\begin{aligned}
        &\psi(x)_i,\quad &&i\notin \{e_x,e_x+1\},\\
        &1-\psi(x)_i,\quad &&i\in \{e_x,e_x+1\}.
    \end{aligned}\right.
\end{equation}
Furthermore, when $\bm{x}(d_x,e_x)=\bm{y}(d_y,e_y)$ and $d_x > d_y$, we can deduce from Observation \ref{obs:del} that 
\begin{align*}
    \psi(x(0,e_x))_{d_x} &\oplus \psi(x(0,e_x))_{d_x+1}=\psi(y(0,e_y))_{d_x+1},
\end{align*}
\begin{align*}
    \psi(y(0,e_y))_{d_y}&\oplus \psi(y(0,e_y))_{d_y+1}=\psi(x(0,e_x))_{d_y}.
\end{align*}
More detailedly,
\begin{flalign*}
    &\psi(x(0,e_x))_i=\left\{
    \begin{array}{ll}
        \psi(y(0,e_y))_i,& i\in [1,d_y-1]\cup[d_x+2,n+1],\\
        \psi(y(0,e_y))_{i+1},&  i\in [d_y+1,d_x-1],
    \end{array}\right.
\end{flalign*}
as shown in Fig \ref{fig:1}.

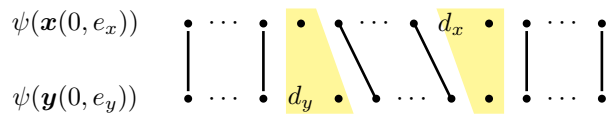
\begin{figure*}[ht]
    \centering
    \begin{tikzpicture}[line width=1pt,scale=1]
        \fill[fill=yellow!50]
        (-1.7,-0.2)--(-0.8,-0.2)--(-1.3,1.2)--(-1.7,1.2)--cycle	(0.3,1.2)--(1.2,1.2)--(1.2,-0.2)--(0.8,-0.2)--cycle;
        \draw 
        (-4.5,0) node []{$\psi(\bm{y}(0,e_y))$}
        (-4.5,1) node []{$\psi(\bm{x}(0,e_x))$}
        (-1-2,0.9)--(-1-2,0.1)
        (0-2,0.9)--(0-2,0.1)
        (-3.5+2,0)   node []{$d_y$}
        (-2.5+2,1)   node []{$\cdots$}
        (2,0)    node []{$\cdots$}
        (0,0)    node []{$\cdots$}
        (2,1)    node []{$\cdots$}
        (-0.5-2,0)   node []{$\cdots$}
        (-0.5-2,1)   node []{$\cdots$}
        (0.5,1) node[]{$d_x$};
        \filldraw [black]
        (-1-2,0)   circle [radius=1pt]
        (-1-2,1)   circle [radius=1pt]
        (0-2,0)   circle [radius=1pt]
        (0-2,1)   circle [radius=1pt]
        (-3.5+2,1)  circle [radius=1pt]
        (-3+2,0)   circle [radius=1pt]
        (-3+2,1)   circle [radius=1pt]
        (-2+2,1)   circle [radius=1pt]
        (-0.5,0)   circle [radius=1pt]
        (0.5,0)   circle [radius=1pt]
        (1,1)   circle [radius=1pt]
        (1,0)   circle [radius=1pt]
        (1.5,0)   circle [radius=1pt]
        (1.5,1)   circle [radius=1pt]
        (2.5,0)   circle [radius=1pt]
        (2.5,1)   circle [radius=1pt];
        \draw 
        (0.01,0.99)--(0.49,0.01)
        (-0.51,0.01)--(-0.99,0.99)
        (2.5,0.1)--(2.5,0.9)
        (1.5,0.1)--(1.5,0.9);
    \end{tikzpicture}
    \caption{Illustrations of $\psi(\bm{x}(0,e_x))$ and $\psi(\bm{y}(0,e_y))$ when $\psi(\bm{x}(0,e_x))_{[n+1] \setminus d_x}=\psi(\bm{y}(0,e_y))_{[n+1]\setminus d_y}$, where $d_x>d_y$. Each pair of bits connected by a solid line are of equal value, while those connected by a yellow undertone signify a connection resulted from deletions.}\label{fig:1}
\end{figure*}

Similarly, when \(\bm{x}(d_x,e_x) = \bm{y}(d_y,e_y)\) and \(d_x < d_y\), we have
\begin{align*}
    \psi(x(0,e_x))_{d_x}&\oplus \psi(x(0,e_x))_{d_x+1}=\psi(y(0,e_y))_{d_x}, 
\end{align*}
\begin{align*}
    \psi(y(0,e_y))_{d_y}&\oplus \psi(y(0,e_y))_{d_y+1}=\psi(x(0,e_x))_{d_y+1},
\end{align*}
and 
\begin{equation*}
    \psi(x(0,e_x))_i=\left\{\begin{aligned}
        &\psi(y(0,e_y))_i,\quad &&i\in [1,d_x-1]\cup [d_y+2,n+1],\\
        &\psi(y(0,e_y))_{i-1},\quad &&i\in [d_x+2,d_y].
    \end{aligned}\right.
\end{equation*}
Then, we can calculate
\begin{align*}
    \Delta_{\psi}^{(1)}\triangleq &\mathrm{VT}^{1}(\psi(\bm{x}))-\mathrm{VT}^{1}(\psi(\bm{y}))\\=&\mathrm{VT}^{1}(\psi(\bm{x}))-\mathrm{VT}^{1}(\psi(\bm{x}(0,e_x)))\notag\\
    &+\mathrm{VT}^{1}(\psi(\bm{x}(0,e_x)))-\mathrm{VT}^{1}(\psi(\bm{y}(0,e_y)))\\&+\mathrm{VT}^{1}(\psi(\bm{y}(0,e_y)))-\mathrm{VT}^{1}(\psi(\bm{y})).
\end{align*}

Let 
\begin{align*}
  \eta_e^x\triangleq \mathrm{VT}^{1}(\psi(\bm{x}))-\mathrm{VT}^{1}(\psi(\bm{x}(0,e_x))),  
\end{align*}
\begin{align*}
   \eta_e^y\triangleq \mathrm{VT}^{1}(\psi(\bm{y}(0,e_y)))-\mathrm{VT}^{1}(\psi(\bm{y})),
\end{align*}
and
\begin{align*}
  \xi_d\triangleq \mathrm{VT}^{1}(\psi(\bm{x}(0,e_x)))-\mathrm{VT}^{1}(\psi(\bm{y}(0,e_y))).  
\end{align*}
 Then $\Delta_\psi^{(1)}=\eta_e^x+\eta_e^y+\xi_d$. According to the previous analysis, we know that
\begin{align*}
    &\eta_e^x=e_x(2\psi(x)_{e_x}-1)+(e_x+1)(2\psi(x)_{e_x+1}-1),
\end{align*}
\begin{align*}
    &\eta_e^y=e_y(1-2\psi(y)_{e_y})+(e_y+1)(1-2\psi(y)_{e_y+1}).
\end{align*}

\begin{itemize}
    \item	When $d_x>d_y$
    \begin{align*}
        \xi_d=&d_y\psi(x(0,e_x))_{d_y}-d_y\psi(y(0,e_y))_{d_y}\\
        &-(d_y+1)\psi(y(0,e_y))_{d_y+1}+d_x\psi(x(0,e_x))_{d_x}\notag\\
        &+(d_x+1)\psi(x(0,e_x))_{d_x+1}-(d_x+1)\psi(y(0,e_y))_{d_x+1}\\
        &-\sum_{i=d_y+1}^{d_x-1}\psi(x(0,e_x))_i.
    \end{align*}
    \item	When $d_x<d_y$
    \begin{align*}\label{q1}
        \xi_d=&(d_y+1)\psi(x(0,e_y))_{d_y+1}-d_y\psi(y(0,e_y))_{d_y}\\
        &-(d_y+1)\psi(y(0,e_y))_{d_y+1}+d_x\psi(x(0,e_x))_{d_x}\notag\\
        &+(d_x+1)\psi(x(0,e_x))_{d_x+1}-d_x\psi(y(0,e_y))_{d_x}\\
        &+\sum_{i=d_x+2}^{d_y}\psi(x(0,e_x))_i.
    \end{align*} 
\end{itemize}
When $d_x>d_y$, let
\begin{align*}
 \zeta_d^x&\triangleq d_x\psi(x(0,e_x))_{d_x}+(d_x+1)\psi(x(0,e_x))_{d_x+1}-(d_x+1)\psi(y(0,e_y))_{d_x+1},
\end{align*}
\begin{align*}
 \zeta_d^y\triangleq d_y\psi(x(0,e_x))_{d_y}-d_y\psi(y(0,e_y))_{d_y}-(d_y+1)\psi(y(0,e_y))_{d_y+1}.
\end{align*}
When $d_x<d_y$, let
\begin{align*}
 &\xi_d^x\triangleq d_x\psi(x(0,e_x))_{d_x}\notag+(d_x+1)\psi(x(0,e_x))_{d_x+1}-d_x\psi(y(0,e_y))_{d_x},
\end{align*}
\begin{align*}
\xi_d^y&\triangleq(d_y+1)\psi(x(0,e_y))_{d_y+1}-d_y\psi(y(0,e_y))_{d_y}-(d_y+1)\psi(y(0,e_y))_{d_y+1}.
\end{align*}
Then, we will obtain that
\begin{equation}\label{equ:xy}
\Delta_\psi^{(1)}=\eta_e^x+\eta_e^y+\zeta_d^x+ \zeta_d^y-\sum_{i=d_y+1}^{d_x-1}\psi(x(0,e_x))_i,
\end{equation}
when $d_x>d_y$, and
\begin{equation}\label{equ:yx}
\Delta_\psi^{(1)}=\eta_e^x+\eta_e^y+\xi_d^x+ \xi_d^y+\sum_{i=d_x+2}^{d_y}\psi(x(0,e_x))_i,
\end{equation}
when $d_x<d_y$.

\begin{table*}
\caption{In this table, we calculate the values of $\zeta_d^x,\zeta_d^y$ for $d_x>d_y$, and calculate the values of $\xi_d^x,\xi_d^y$ for $d_x<d_y$. When we calculate $\zeta_d^x$ and $\xi_d^x$, $N_d$ represents the possible values of $N(\bm{x})_d$, and the errors represent the transformation $\psi(x(0,e_x))_{d_x}  \psi(x(0,e_x))_{d_x+1}\longrightarrow\psi(y(0,e_y))_{d_x+1}$ and $\psi(x(0,e_x))_{d_x} \psi(x(0,e_x))_{d_x+1}\longrightarrow\psi(y(0,e_y))_{d_x}$, respectively. On the other hand, when we calculate $\zeta_d^y$ and $\xi_d^y$, $N_d$ represents the possible value of $N(\bm{y})_d$, and the errors represent the transformation $\psi(y(0,e_y))_{d_y} \psi(y(0,e_y))_{d_y+1}\longrightarrow\psi(x(0,e_x))_{d_y}$ and $\psi(y(0,e_y))_{d_y} \psi(y(0,e_y))_{d_y+1}\longrightarrow\psi(x(0,e_x))_{d_y+1}$, respectively.} 
\centering
\begin{tabular}{c|c|c|c|c|c}  
\hline
\multicolumn{2}{c|}{} & \multicolumn{2}{c|}{$d_x>d_y$} & \multicolumn{2}{c}{$d_x<d_y$}\\
\hline 
$N_d$ & Error & $\zeta_d^x$ & $\zeta_d^y$ & $\xi_d^x$ & $\xi_d^y$\\
\hline
\multirow{3}{*}{$0$} & $00\to 0$ & $0$ & $0$ & $0$ & $0$\\
\cline{2-6}
& $01\to 1$ & $0$ & $-1$ & $1$ & $0$\\
\cline{2-6}
& $10\to 1$ & $-1$ & $0$ & $0$ & $1$\\
\hline 
$2$ & $11\to 0$ & $2d_x+1$ & $-2d_y-1$ & $2d_x+1$ & $-2d_y-1$ \\
\hline
\end{tabular}
\label{tab:del}
\end{table*}

\begin{table*}
\caption{In this table, we calculate the sizes of $\eta_e^x$ and $\eta_e^y$. When we calculate $\eta_e^x$, $N_e$ represents the possible values of $N(\bm{x})_e$, and the errors represent the transformation $\psi(x)_{e_x}\psi(x)_{e_x+1}\longrightarrow \overline{\psi(x)_{e_x}\psi(x)_{e_x+1}}$. On the other hand, when we calculate $\eta_e^y$, $N_e$ represents the possible values of $N(\bm{y})_e$, and the errors represent the transformation $\psi(y)_{e_y}\psi(y)_{e_y+1}\longrightarrow \overline{\psi(y)_{e_y}\psi(y)_{e_y+1}}$.} 
\centering
\begin{tabular}{c|c|c|c}  
\hline 
$N_e$ & Error & $\eta_e^x$ & $\eta_e^y$\\
\hline
$2$ & $11\to 00$ & $2e_x+1$ & $-2e_y-1$\\
\hline
\multirow{2}{*}{0} & $01\to 10$ & $1$ & $-1$\\
\cline{2-4}
& $10\to 01$ & $-1$ & $1$\\
\hline 
$-2$ & $00\to 11$ & $-2e_x-1$ & $2e_y+1$ \\
\hline
\end{tabular}
\label{tab:sub}
\end{table*}

Now, we will prove Theorem \ref{th:N=9} by using above observations.

\begin{IEEEproof}[Proof of Theorem \ref{th:N=9}]
Given  any two distinct sequences $\bm{x}$ and $\bm{y}$ in \(\C(n, 9, \mathcal{B})\), let \(\bm{x} = \bm{a}\tilde{\bm{x}}\bm{b}\) and \(\bm{y} = \bm{a}\tilde{\bm{y}}\bm{b}\). From Theorem \ref{th:N=11} we can conclude that \( \C(n,9,\mathcal{B})\) forms an \((n,11,\mathcal{B})\)-reconstruction code, thus we have that $|\mathcal{B}(\bm{x},\bm{y})|\le 10$. Furthermore, if $B_5,B_6,B_{11}$ and $B_{12}$ are empty, then we have that $|\mathcal{B}(\bm{x},\bm{y})|\le 8$, otherwise, we can obtain that $\tilde{\bm{x}}$ and $\tilde{\bm{y}}$ are composed of at most six $^{\le}2$-periodic sequences and four additional symbols from $\{0,1\}$ by Remark \ref{rem:CONST}. On the other hand, we note that \( \C(n,9,\mathcal{B})\) is a subcode of $\C_{\mathcal{DS}}^{\mathcal{P}}$. However, since $\bm{x},\bm{y}\in \mathcal{R}(n,2,t)$, we have $|\tilde{\bm{x}}|=|\tilde{\bm{y}}|\le 6(\lceil \log n   \rceil+3)+4< P=4l$, so from Lemma \ref{lem:N=11}, we know that $\bm{x},\bm{y}\notin \C_{\mathcal{DS}}^{\mathcal{P}}$, which is in contradiction with $\bm{x},\bm{y}\in \C(n, 9, \mathcal{B})$. Therefore, we only need to demonstrate that $B_5,B_6,B_{11}$ and $B_{12}$  are empty for any two distinct sequences $\bm{x}$ and $\bm{y}$ in \(\C(n, 9, \mathcal{B})\). 

Let $N(\bm{x}) \triangleq (N(\bm{x})_d, N(\bm{x})_e)$ denote that one deletion occurs in $\bm{x}$ resulting in a decrease in the number of ones in $\psi(\bm{x})$ by $N(\bm{x})_d$, while one substitution occurs in $\bm{x}$ resulting in a decrease in the number of ones in $\psi(\bm{x})$ by $N(\bm{x})_e$. From Observations \ref{obs:del} and \ref{obs:sub}, we know that $N(\bm{x})_d\in \{0,2\}$ and $N(\bm{x})_e\in \{-2,0,2\}$. Then the value of $N_{\bm{x}\bm{y}} \triangleq N(\bm{x})_d + N(\bm{x})_e = N(\bm{y})_d+ N(\bm{y})_e\in \{-2,0,2,4\}$ can be determined by $\mathrm{VT}^{0}(\psi(\bm{x})) \pmod{7}$ and $\mathrm{VT}^{0}(\psi(\bm{x}(d_x,e_x)))$. Now we will prove our conclusion based on Equation (\ref{equ:xy}) and Equation (\ref{equ:yx}), and the value of $N_{\bm{x}\bm{y}}$. We first calculate $\zeta_d^x,\zeta_d^y,\xi_d^x$ and $\xi_d^y$ for all possible values of $N(\bm{x})_d$ and $N(\bm{y})_d$ (summarized in Table \ref{tab:del}), and calculate $\eta_e^x$ and $\eta_e^y$ for all possible values of $N(\bm{x})_e$ and $N(\bm{y})_e$ (summarized in Table \ref{tab:sub}). Furthermore,  
let $\eta = \min\{e_x, e_y, d_x, d_y\}$ and $\xi = \max\{e_x, e_y, d_x, d_y\}$, it is reasonable to assume $|\eta - \xi| > P$. 

\begin{enumerate}[1)]
    \item If $B_5$ is a nonempty set, then there exist $\bm{x}(d_x,e_x) = \bm{y}(d_y,e_y)\in B_5$, where $d_x,d_y,e_x$ and $e_y$ satisfy $e_y=0$ and $d_x< e_x<d_y$ or $e_x=0$ and $d_x\le e_y<d_y$ according to the definition of $B_5$. 
    
    If $e_y=0$ and $d_x< e_x<d_y$, i.e.,  
    $\bm{y}$ only suffers one deletion, then we have $\eta_e^y = 0$. In addition, by comparing the indices of  $\bm{x}(d_x,e_x)$ and $\bm{y}(d_y,e_y)$, it can be easily checked that $x_i=y_i$ for $i\in [1,d_x-1]\cup [d_y+1,n]$ and $x_i=y_{i-1}$ for $i\in [d_x+1,d_y]\setminus{e_x}$, and $x_{e_x}\neq y_{e_x-1}$. Moreover, we have $N(\bm{x})_d,N(\bm{y})_d \in \{0, 2\}$, $N(\bm{x})_e \in \{-2, 0, 2\}$, and $N(\bm{y})_e = 0$, thus $N_{\bm{x}\bm{y}}= N(\bm{x})_d + N(\bm{x})_e = N(\bm{y})_d \in \{0, 2\}$. Since $\psi(\bm{x})$ is a strong-$(l, \epsilon)$-locally balanced sequence and $|\eta - \xi|=|d_y-d_x| > P=4l$, we have that $\sum_{i=d_x+2}^{d_y}\psi(x)_i\le  \left(\frac{1}{2} + \epsilon\right)(d_y - d_x )+1$. In addition, from Equation (\ref{equ:e_x}), we know that $\sum_{i=d_x+2}^{d_y}\psi(x(0,e_x))_i\le \sum_{i=d_x+2}^{d_y}\psi(x)_i+2\le \left(\frac{1}{2} + \epsilon\right)(d_y - d_x )+3$. 
    
    By using Equation (\ref{equ:yx}), we can obtain the following results.
\begin{itemize}
    \item When $N_{\bm{x}\bm{y}}=0$, we have  $N(\bm{x})=N(\bm{y})=(0,0)$ or $N(\bm{y})=(0,0)$, $N(\bm{x})=(2,-2)$.  
        \begin{itemize}      
         \item	If $N(\bm{x})=N(\bm{y})=(0,0)$, then $\eta_e^x\in \{1,-1\}$ and $\xi_d^x,\xi_d^y\in \{0,1\}$ according to Table \ref{tab:del} and Table \ref{tab:sub}, and
    \begin{align*}  \Delta_\psi^{(1)}=&\xi^x_d+\eta_e^x+\xi^y_d+\sum_{i=d_x+2}^{d_y}\psi(x(0,e_x))_i> 0.\notag  
    \end{align*}
    \item If $N(\bm{x})=(2,-2)$, $N(\bm{y})=(0,0)$, then $\xi_d^y\in \{0,1\}$ according to Table \ref{tab:del} and Table \ref{tab:sub}, and
    \begin{align}\label{equ:(2,2)} 
     \Delta_\psi^{(1)}=&2d_x+1+\xi_d^y-2e_x-1+\sum_{i=d_x+2}^{d_y}\psi(x(0,e_x))_i\notag\\
        \le& -2(e_x-d_x)+(\frac{1}{2}+\epsilon)(d_y-d_x)+4.   
    \end{align} 
           
    From Claim \ref{cla:N=9_1}, we have $\mathrm{VT}^{0}(\bm{x}) = \mathrm{VT}^{0}(\bm{y})$ and $\mathrm{VT}^{1}(\bm{x}) = \mathrm{VT}^{1}(\bm{y})$. In addition, since $x_{e_x}\neq y_{e_x-1}$, the possible values for $x_{d_x}, x_{e_x}, y_{d_y}, y_{e_x-1}$ are either $(x_{d_x}, x_{e_x}) = (y_{e_x-1}, y_{d_y}) = (1, 0)$ or $(x_{d_x}, x_{e_x}) = (y_{e_x-1}, y_{d_y})= (0, 1)$. 
    \begin{itemize}
        \item  When $(x_{d_x}, x_{e_x}) = (y_{e_x-1}, y_{d_y}) = (1, 0)$, we have
        \begin{align*}
        \mathrm{VT}^{1}(\bm{x}) - \mathrm{VT}^{1}(\bm{y})=d_x-e_x+1+\sum_{i=d_x+1}^{e_x-1} x_i + \sum_{i=e_x+1}^{d_y} x_i=0,
    \end{align*}
    which means that 
    \begin{equation}\label{equ:(1,0)}
      e_x - d_x = \sum_{i=d_x+1}^{e_x-1} x_i + \sum_{i=e_x+1}^{d_y} x_i+1. 
    \end{equation} 
    \item When $(x_{d_x}, x_{e_x}) = (y_{e_x-1}, y_{d_y})= (0, 1)$, we have 
    \begin{align*}
        \mathrm{VT}^{1}(\bm{x}) - \mathrm{VT}^{1}(\bm{y})=e_x-d_y+\sum_{i=d_x+1}^{e_x-1} x_i + \sum_{i=e_x+1}^{d_y} x_i=0,
    \end{align*}
    which means that
    \begin{equation*}
       \sum_{i=d_x+1}^{e_x-1} x_i + \sum_{i=e_x+1}^{d_y} x_i = d_y - e_x. 
    \end{equation*}
     In this case, we have 
    \begin{equation}\label{equ:(0,1)}
        e_x-d_x=(d_y-d_x)-(d_y-e_x)=(d_y-d_x)-\big(\sum_{i=d_x+1}^{e_x-1} x_i + \sum_{i=e_x+1}^{d_y} x_i\big).
    \end{equation}
    \end{itemize}  
        \end{itemize}
   On the other hand, since $\bm{x}$ is a strong-$(l, \epsilon)$-locally balanced sequence, we can obtain $\sum_{i=d_x+1}^{e_x-1} x_i + \sum_{i=e_x+1}^{d_y} x_i \in \left[\left(\frac{1}{2} - \epsilon\right)(d_y - d_x )- 1, \left(\frac{1}{2} + \epsilon\right)(d_y - d_x) + 1\right]$. Thus, for any case of $(x_{d_x}, x_{e_x}) = (y_{e_y}, y_{d_y})\in \{(1,0),(0,1)\}$, by Equations (\ref{equ:(1,0)}) and (\ref{equ:(0,1)}), we can derive  
    \begin{align*}
        -2(e_x-d_x)\le -2\left(\frac{1}{2} - \epsilon\right)(d_y - d_x)+2.  
    \end{align*}
    Combining with Equation (\ref{equ:(2,2)}), we can obtain that
    \begin{align*}
     \Delta_\psi^{(1)}&\le -2\left(\frac{1}{2} - \epsilon\right)(d_y - d_x )+2+ (\frac{1}{2}+\epsilon)(d_y-d_x)+4\\
     &=(-\frac{1}{2}+3\epsilon)(d_y-d_x)+6<0.
    \end{align*}
    \item When $N_{\bm{x}\bm{y}}=2$, we have $N(\bm{x})=N(\bm{y})=(2,0)$ or $N(\bm{x})=(0,2), N(\bm{y})=(2,0)$.
\begin{itemize}
    \item If $N(\bm{x})=N(\bm{y})=(2,0)$, then $\eta_e^x\in \{-1,1\}$ according to Table \ref{tab:del} and Table \ref{tab:sub}, and
    \begin{align*}
       \Delta_\psi^{(1)}=&2d_x+1-2d_y-1+\eta_e^x+\sum_{i=d_x+2}^{d_y}\psi(x(0,e_x))_i\notag\\
            \le& -2(d_y-d_x)+1+(\frac{1}{2}+\epsilon)(d_y-d_x)+3<0.\notag\\
    \end{align*}
    \item If $N(\bm{x})=(0,2), N(\bm{y})=(2,0)$, then $\xi_d^x\in \{0,1\}$ according to Table \ref{tab:del} and Table \ref{tab:sub}, and 
    \begin{align*}
       \Delta_\psi^{(1)}=&2e_x+1-2d_y-1+\xi_d^x+\sum_{i=d_x+2}^{d_y}\psi(x(0,e_x))_i\notag\\
            \le& -2(d_y-e_x)+(\frac{1}{2}+\epsilon)(d_y-d_x)+4.\notag\\
    \end{align*}
   Similar to the case $N(\bm{x})=(2,-2)$, $N(\bm{y})=(0,0)$, we can also prove that $\Delta_\psi^{(1)}<0$ in this case.
\end{itemize}
\end{itemize}
 In conclusion, we have $\Delta_\psi^{(1)} \neq 0$ if $e_y=0$. By symmetry, considering $R(\bm{y})$ and $R(\bm{x})$ instead of $\bm{x}$ and $\bm{y}$, from Claim \ref{cla:N=9_3}, we can also obtain $\Delta_\psi^{(1)} \neq 0$ if $e_x=0$. Thus, we have $\Delta_\psi^{(1)} \neq 0$ when $B_{5}$ is a nonempty set, which contradicts to $\bm{x},\bm{y}\in \C(n,9,\mathcal{B})$. Consequently, $B_5$ is an empty set.     
\item From Claims \ref{cla:N=9_1}, \ref{cla:N=9_2} and \ref{cla:N=9_3}, similarly, we can also prove $B_6$ is an empty set by considering $R(\bm{x})$ and $R(\bm{y})$ and using the conclusion of the previous case.
    \item  If $B_{11}$ is a nonempty set, then there exist $d_x,d_y,e_x$ and $e_y$ such that $\bm{x}(d_x,e_x) = \bm{y}(d_y,e_y)\in B_{11}$, where $e_x<d_x<d_y<e_y$ or $e_y<d_x<d_y<e_x$ according to the definition of $B_{11}$. If $e_x<d_x<d_y<e_y$, then we have
    \[
    N_{\bm{x}\bm{y}}\triangleq N(\bm{x})_d+N(\bm{x})_e=N(\bm{y})_d+N(\bm{y})_e\in \{4,-2,0,2\}.
    \]
    \begin{itemize}
						
    \item When $N_{\bm{x}\bm{y}}=4$, it  holds that $N(\bm{x})=N(\bm{y})=(2,2)$,  from Table \ref{tab:del} and Table \ref{tab:sub}, we have that
    \begin{align*}
    \Delta_\psi^{(1)}&=2e_x+1+2d_x+1-2d_y-1-2e_y-1+\sum_{i=d_x+2}^{dy}\psi(x(0,e_x))_i\\
    &=-2(d_y-d_x)-2(e_y-e_x)+ \sum_{i=d_x+2}^{dy}\psi(x(0,e_x))_i<0.
    \end{align*}
    \item When $N_{\bm{x}\bm{y}}=-2$, we have $N(\bm{x})=(0,-2)=N(\bm{y})$, from Table \ref{tab:del} and Table \ref{tab:sub}, we know that $\xi_d^x,\xi_d^y\in \{0,1\}$ and
    \[
\Delta_\psi^{(1)}=\xi_d^x+\xi_d^y-2e_x-1+2e_y+1+\sum_{i=d_x+2}^{dy}\psi(x(0,e_x))_i>0.
    \]
    \item When $N_{\bm{x}\bm{y}}=2$, we have $N(\bm{x}), N(\bm{y})\in \{(0,2), (2,0)\}$.
    \begin{itemize}
        
        \item If $N(\bm{x})=N(\bm{y})=(0,2)$, then $\xi_d^x,\xi_d^y\in \{0,1\}$ according to Table \ref{tab:del} and Table \ref{tab:sub}, and
        \[
        \Delta_\psi^{(1)}=2e_x+1-2e_y-1+\xi_d^x+\xi_d^y+\sum_{i=d_x+2}^{dy}\psi(x(0,e_x))_i<0.
        \]

        \item If $N(\bm{x})=N(\bm{y})=(2,0)$, then $\eta_e^y,\eta_e^x\in \{-1,1\}$ according to Table \ref{tab:del} and Table \ref{tab:sub}, and 
        \[
        \Delta_\psi^{(1)}=2d_x+1-2d_y-1+\eta_e^x+\eta_e^y+\sum_{i=d_x+2}^{dy}\psi(x(0,e_x))_i<0.
        \]
        
        \item If $N(\bm{x})=(0,2),N(\bm{y})=(2,0)$, then $\eta_e^y\in \{-1,1\}$ and $\xi_d^x\in \{0,1\}$ according to Table \ref{tab:del} and Table \ref{tab:sub}, and
        \[
        \Delta_\psi^{(1)}=2e_x+1-2d_y-1+\xi_d^x+\eta_e^y+\sum_{i=d_x+2}^{dy}\psi(x(0,e_x))_i<0.
        \]
        
        \item If $N(\bm{x})=(2,0),N(\bm{y})=(0,2)$, then $\eta_e^x\in \{-1,1\}$ and $\xi_d^y\in \{0,1\}$ according to Table \ref{tab:del} and Table \ref{tab:sub}, and
        \[
        \Delta_\psi^{(1)}=2d_x+1-2e_y-1+\eta_e^x+\xi_d^y+\sum_{i=d_x+2}^{dy}\psi(x(0,e_x))_i<0.
        \]
    \end{itemize}
    \item When $N_{\bm{x}\bm{y}}=0$, it  holds that $N(\bm{x}), N(\bm{y})\in \{(0,0),(2,-2)\}$.
    \begin{itemize}
        \item	If $N(\bm{x})=N(\bm{y})=(0,0)$, then $\eta_e^y,\eta_e^x\in \{-1,1\}$ and $\xi_d^y,\xi_d^x\in \{0,1\}$ according to Table \ref{tab:del} and Table \ref{tab:sub}, and
        \[ \Delta_\psi^{(1)}=\xi_d^x+\xi_d^y+\eta_e^x+\eta_e^y+\sum_{i=d_x+2}^{dy}\psi(x(0,e_x))_i>0.
        \]
							
    \item	If $N(\bm{x})=N(\bm{y})=(2,-2)$, then from Table \ref{tab:del} and Table \ref{tab:sub}, we have
    \begin{align*}
        \Delta_\psi^{(1)}&=2d_x+1-2d_y-1-2e_x-1+2e_y+1+\sum_{i=d_x+2}^{dy}\psi(x)_i\notag\\
        &=2(e_y-d_y)+2(d_x-e_x)+\sum_{i=d_x+2}^{dy}\psi(x(0,e_x))_i>0.
    \end{align*}
    
    \item	If $N(\bm{x})=(0,0), N(\bm{y})=(2,-2)$, then $\eta_e^x\in \{-1,1\}$ and $\xi_d^x\in \{0,1\}$ according to Table \ref{tab:del} and Table \ref{tab:sub}, and
    \begin{align*}
 \Delta_\psi^{(1)}&=\xi^x_d+\eta_e^x-2d_y-1+2e_y+1+\sum_{i=d_x+2}^{dy}\psi(x(0,e_x))_i>0.
    \end{align*}
    
    \item	If $N(\bm{x})=(2,-2), N(\bm{y})=(0,0)$, then $\eta_e^y\in \{-1,1\}$ and $\xi_d^y\in \{0,1\}$ according to Table \ref{tab:del} and Table \ref{tab:sub}, and
    \begin{align*}
\Delta_\psi^{(1)}&=\xi^y_d+\eta_e^y+2d_x+1-2e_x-1+\sum_{i=d_x+2}^{dy}\psi(x(0,e_x))_i>0.
    \end{align*}
\end{itemize}
    \end{itemize}
   In conclusion, we have $\Delta_\psi^{(1)} \neq 0$ if $e_x<d_x<d_y<e_y$. By symmetry, considering $R(\bm{y})$ and $R(\bm{x})$ instead of $\bm{x}$ and $\bm{y}$, from Claim \ref{cla:N=9_3}, we can also obtain $\Delta_\psi^{(1)} \neq 0$ if $e_y<d_x<d_y<e_x$. Thus, we have $\Delta_\psi^{(1)} \neq 0$ when $B_{11}$ is a nonempty set, which contradicts $\bm{x},\bm{y}\in \C(n,9,\mathcal{B})$. Consequently, $B_{11}$ is an empty set.
    \item By considering $R(\bm{x})$ and $R(\bm{y})$ and using Claims \ref{cla:N=9_1}, \ref{cla:N=9_2} and \ref{cla:N=9_3}, similarly, we can also prove $B_{12}$ is an empty set.
\end{enumerate}	
From the analysis above we know that $B_5,B_6,B_{11}$ and $B_{12}$ are empty for any two distinct sequences $\bm{x},\bm{y}\in \C(n, 9, B_{1, 1})$, thus $|\mathcal{B}(\bm{x},\bm{y})|\le 8$, i.e., $\C(n, 9, B_{1, 1})$ forms an $(n, 9; \mathcal{B})$-reconstruction code. Furthermore, by Corollary \ref{cor:N=9}, we have that $|\mathcal{M}\cap \mathcal{R}(n,2,t)|\ge 2^{n-1}$. Applying the pigeonhole principle, there exist appropriate sequences $\bm{s}, \bm{h}, \bm{g}$ and $\bm{g}'$ such that $|\C(n, 9, B_{1, 1})|\ge \frac{2^{n-1}}{4\cdot(3n+1)\cdot 7\cdot(6(n+1)+1)\cdot {\textstyle \prod_{k=1}^{3}}(3(2P)^k)^2}$, which implies that $\mathrm{red}(\C(n, 9, B_{1, 1}))\le 2\log n + 12\log \log n + O(1)$. This completes the proof.  
\end{IEEEproof}

\subsection{Reconstruction Codes for N=5}
			
As for the $(n, 5; \mathcal{B})$-reconstruction code, we directly utilize the single-deletion single-substitution list-decodable code with a list size of two mentioned in Lemma \ref{lem:DS_L}.

\begin{theorem}\label{th:N=5}
For any $s_0 \in [0,3]$, $s_1 \in [0,2n-1]$, and $s_2 \in [0,2n^2-1]$, let 
\begin{align*}
  \C(n,5,\mathcal{B})=\C_L\triangleq \big\{\bm{x}\in \Sigma^n :\mathrm{VT}^{0}(\bm{x})\equiv s_0 \pmod{4},~ \mathrm{VT}^k(\bm{x})\equiv s_{k} \pmod{2n^k} \text{ for } k\in \{1,2\} \big\}. 
\end{align*}
Then $\C(n,5,\mathcal{B})$ is an $(n,5;\mathcal{B})$-reconstruction code, and by the pigeonhole principle, there exists some choice of $s_0, s_1$ and $s_2$ such that $\mathrm{red}(\C(n,5,\mathcal{B}))\le 3\log n+4$.
\end{theorem}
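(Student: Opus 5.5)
The redundancy claim is immediate. The three congruences partition $\Sigma^n$ into $4\cdot 2n\cdot 2n^2=16n^3$ classes. By the pigeonhole principle some class has size at least $2^n/(16n^3)$, so $\mathrm{red}\le 3\log n+4$. The substance is to show $|\mathcal{B}(\bm{x},\bm{y})|\le 4$ for any two distinct $\bm{x},\bm{y}\in\C_L$.

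\textbf{Reduction to $B_{17}\cup B_{18}$.} Since $\C_L\subseteq \C_1$, Lemma \ref{lem:VT} gives $\mathcal{D}(\bm{x},\bm{y})=\mathcal{S}(\bm{x},\bm{y})=\emptyset$ and $\mathcal{B}(\bm{x},\bm{y})=\bigcup_{k=5}^{18}B_k$. The second part of Lemma \ref{lem:DS_L} states that whenever $\bm{x}(d_x,e_x)=\bm{y}(d_y,e_y)$, both substitution positions lie between the two deletion positions. I will check this against the definitions of $B_5,\dots,B_{18}$, interpreting $e=0$ as no substitution, so that only the nonzero $e$'s are constrained.
\begin{itemize}
\item $B_7$--$B_{10}$ each have both $e_x,e_y$ nonzero and strictly outside $[\min(d_x,d_y),\max(d_x,d_y)]$, so they are empty.
\item $B_{11}$--$B_{16}$ each place at least one nonzero $e$ strictly outside this interval, so they are empty.
\item $B_5,B_6$ have exactly one nonzero substitution, lying inside the interval, so they are not excluded by this argument.
\end{itemize}

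\textbf{Handling $B_5,B_6$.} The plan is to show these are empty using the second-order VT constraint together with the weight constraint modulo $4$. In $B_5$ with $e_y=0$ and $d_x<e_x\le d_y$, we have $\bm{x}(d_x,e_x)=\bm{y}(d_y,0)$. The weights force $\mathrm{wt}(\bm{x})-x_{d_x}\pm 1=\mathrm{wt}(\bm{y})-y_{d_y}$, and this difference lies in $\{-2,\dots,2\}$, so the mod-$4$ condition pins it down. With equal weights the situation resembles one deletion in $\bm{x}$ and one in $\bm{y}$ plus a single substitution. I would then compute $\mathrm{VT}^1(\bm{x})-\mathrm{VT}^1(\bm{y})$, which is of magnitude less than $2n$ and therefore must be exactly $0$. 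Next I would compute $\mathrm{VT}^2(\bm{x})-\mathrm{VT}^2(\bm{y})$, which is of magnitude less than $2n^2$. I expect these two equalities to force the shifted-segment sums to be inconsistent, in the style of Song et al.'s list-decoding proof. Concretely, for the case $(x_{d_x},x_{e_x})=(1,0)$, as in the proof of Theorem \ref{th:N=9}, the first-order equation fixes $e_x-d_x$ as a count of ones in $\bm{x}_{[d_x+1,d_y]}$. The second-order equation expresses a weighted sum of those ones, and comparing the two should yield a contradiction or $\bm{x}=\bm{y}$. \textbf{This step is the main obstacle.} If it fails, I would fall back on the structural argument of Theorem \ref{th:B5_nonempty}: bound $|B_5\cup B_6|$ directly and show that each pair $(d_x,d_y)$ contributes at most a bounded number of elements.

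\textbf{Counting $B_{17}\cup B_{18}$.} Here every common output arises from $\mathcal{S}(\bm{x}(d_x,0),\bm{y}(d_y,0))$ with $d_H(\bm{x}(d_x,0),\bm{y}(d_y,0))=2$, and both differing positions lie inside the deletion interval. This corresponds to the row $(0,2,0)$ of Table \ref{tab:tuple}. By Lemma \ref{lem:sub}, each such pair $(d_x,d_y)$ contributes exactly $2$ sequences.

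Distinct pairs whose deletions occur in the same runs give the same deleted words, so it suffices to count pairs of run indices. Using Lemma \ref{lem:del} and Lemma \ref{lem:del_int}, I would show that both $d_x$ and $d_y$ must fall in the run containing $j_1$ or the run containing $j_{d_H}$; otherwise the prefix or suffix parts contribute mismatches. This leaves at most two orientations: $d_x$ at the left and $d_y$ at the right, giving $B_{17}$, or the reverse, giving $B_{18}$. The two orientations give at most $2\cdot 2=4$ sequences, and hence $|\mathcal{B}(\bm{x},\bm{y})|\le 4<5$. This is the same conclusion as Lemma \ref{lem:B16_empty}, which I would invoke directly once $B_5$--$B_{16}$ are shown to be empty, completing the proof.
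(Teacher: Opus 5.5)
Your redundancy count is correct. Your use of Lemma~\ref{lem:DS_L} to rule out $B_7$ through $B_{16}$, leaving $B_5\cup B_6\cup B_{17}\cup B_{18}$, is also correct. The gap is exactly the step you flag: you never show $B_5=B_6=\emptyset$, and the proof cannot go around it. Lemma~\ref{lem:B16_empty} rests on Lemma~\ref{lem:set size}, which gives $|B_{17}|\le 2$ only when $B_5=\emptyset$. If $B_5\neq\emptyset$, Lemma~\ref{lem:E_k} lets $E_{17}$ contain the two pairs $(j_1,\ge j_{d_H}+1)$ and $(\le j_1-1,j_{d_H})$, and the bound drops to $|B_{17}|\le 4$. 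Your run-localization claim for $B_{17}$ also fails in that case. Deleting within the runs of $j_1$ and $j_{d_H}$ yields $\bm{x}(j_1,0)$ and $\bm{y}(j_{d_H},0)$, which are at distance $1$, not $2$. So your fallback of bounding $|B_5\cup B_6|$ separately cannot reach $|\mathcal{B}(\bm{x},\bm{y})|\le 4$. The $\mathrm{VT}^2$ route is only a hope. The weight and first-order constraints alone do not exclude $B_5$; Theorem~\ref{th:B5_nonempty} treats exactly such pairs in $\C_1$. You give no computation showing that the second-order equation breaks consistency.

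The missing step is short and uses only what you already have; it is the first claim of Lemma~\ref{lem:intersect}.
\begin{itemize}
\item If $B_5\neq\emptyset$, Lemma~\ref{lem:E_k_odd} gives $d_H(\bm{x}_{[j_1+1,j_{d_H}]},\bm{y}_{[j_1,j_{d_H}-1]})=1$.
\item Since $B_1=\emptyset$, we have $d_H(\bm{x}_{[j_2+1,j_{d_H}]},\bm{y}_{[j_2,j_{d_H}-1]})\ge 1$. Hence this distance is exactly $1$, attained at a unique $\lambda\in[j_2+1,j_{d_H}]$ with $x_\lambda\neq y_{\lambda-1}$.
\item Then $\bm{x}(j_2,0)$ and $\bm{y}(j_{d_H},0)$ differ exactly at positions $j_1$ and $\lambda-1$. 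So $\bm{x}(j_2,j_1)=\bm{y}(j_{d_H},\lambda-1)$, which is an element of $B_{13}$.
\item In this representation the substitution $e_x=j_1<j_2=d_x$ is nonzero and lies outside $[d_x,d_y]$. This contradicts Lemma~\ref{lem:DS_L} even under your cautious reading, so $B_5=\emptyset$.
\item The same argument via Corollary~\ref{cor:intersect}, with $B_{14}$ in place of $B_{13}$, gives $B_6=\emptyset$.
\end{itemize}
Lemma~\ref{lem:B16_empty} then finishes the proof, and your separate run-counting paragraph becomes unnecessary.

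The paper's proof is the one-line version of this. It reads Lemma~\ref{lem:DS_L} as placing every common output in $B_{17}\cup B_{18}$, since a representation with $e_y=0$ already violates $e_y\in[d_x,d_y]$, and then applies Lemma~\ref{lem:B16_empty}. Under your stricter reading, where only nonzero substitutions are constrained, you must add the Lemma~\ref{lem:intersect} step above.
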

\begin{IEEEproof}
     Given any two different sequences $\bm{x}, \bm{y} \in \C(n, 5, B_{1, 1})$, and positive integers $d_x,d_y,e_x,e_y\in [n]$. If $\bm{z} = \bm{x}(d_x,e_x) = \bm{y}(d_y,e_y)$, then from Lemma \ref{lem:DS_L}, we can obtain that $\bm{z} \in B_{17} \cup B_{18}$. Furthermore, note that $\C(n, 5, B_{1, 1})$ is a subcode of $\C_1$, so from Lemma \ref{lem:B16_empty}, it follows that $|B_{1, 1}(\bm{x}) \cap B_{1, 1}(\bm{y})| \leq 4$. Therefore, $\C(n, 5, B_{1, 1})$ forms an $(n, 5; \mathcal{B})$-reconstruction code. By the pigeonhole principle, there exist $s_0,s_1$ and $s_2$ such that $|\C(n, 5, B_{1, 1})|\ge \frac{2^n}{4\cdot 2n\cdot 2n^2}$, which implies that $\mathrm{red}(\C(n, 5, B_{1, 1}))\le 3\log n+4$. This completes the proof.
\end{IEEEproof}

So far, we have provided several reconstruction codes for the case where the intersection between single-deletion and single-substitution balls of sequences $\bm{x},\bm{y} \in \Sigma^n$ is constant.

\section{missing proofs}\label{sec:proof}

In this section, we will provide the proofs of Theorems \ref{th:B5_nonempty}, \ref{th:B11_nonempty}, \ref{th:B7_nonempty}, and Lemmas \ref{lem:B12_empty} and \ref{lem:B16_empty}. Before proving them, we shall give some conclusions about sets $B_k$ with $5\le k\le 18$ which are defined in Section \ref{sec:key}. 

For \(1 \leq k \leq 18\), let
\begin{align*}
  E_k \triangleq \{(d_x, d_y):  d_x, d_y ~\text{satisfy the conditions in}~ B_k\},
\end{align*}
i.e., $E_k$ collects deletion index pairs $(d_x,d_y)$ such that $\bm{z}=\bm{x}(d_x,e_x) = \bm{y}(d_y,e_y)\in B_k$.\footnote{We note that for \(X \in \{L, R\}\), the set \(E_k\) with \(1 \le k \le 6\) in this paper corresponds to \(\Lambda^X_{1,i}\) with \(1 \le i \le 3\) in \cite{Cai-25-arxiv-sd}, while \(E_k\) with \(7 \le k \le 18\) corresponds to \(\Lambda^X_{2,i}\) with \(1 \le i \le 6\). More precisely, given \(l \in \{1,2\}\), \(p_1 \in [3]\), \(p_2 \in [6]\), and \(X \in \{L, R\}\), for any \(d_x, d_y \in [n]\), if \((\bm{x}_{[n]\setminus d_x}, \bm{y}_{[n]\setminus d_y}) \in \Lambda^X_{l,p_l}\), then \((d_x, d_y)\) belongs to the corresponding set \(E_k\).} It can be easily checked that $B_k=\bigcup_{(d_x,d_y)\in E_k}\mathcal{S}(\bm{x}(d_x,0),\bm{y}(d_y,0))$, which suggests that the size of $B_k$ is related to $E_k$.  
Thus, we shall first determine \(E_k\), and then compute the size of \(\mathcal{B}(\bm{x},\bm{y})\) under different constraints. Let $d_H\triangleq d_H(\bm{x},\bm{y})$, we recall that \(j_i \triangleq j_i(\bm{x}, \bm{y})  \) 
denotes the \(i\)-th smallest index where $\bm{x}$ and $\bm{y}$ differ, for \(i \in [d_H]\). Then, we need the following two lemmas, which are useful for the remaining space.
\begin{lemma}\cite[Claims 1.1-1.3 and 2.1-2.6]{Cai-25-arxiv-sd}\label{lem:E_k_odd}
  For any two distinct sequences $\bm{x},\bm{y}\in\Sigma^n$, let $d_H\triangleq d_H(\bm{x},\bm{y})$, then the following results hold:
  \begin{itemize}
      \item $E_1=\{(j_2,j_{d_H})\}$ if $\bm{x}_{[j_2+1,j_{d_H}]}= \bm{y}_{[j_2,j_{d_H}-1]}$, and $E_1=\emptyset$ if $d_H(\bm{x}_{[j_2+1,j_{d_H}]}, \bm{y}_{[j_2,j_{d_H}-1]})\ge 1$.
      \item $E_3=\{(j_1,j_{d_H-1})\}$ if $\bm{x}_{[j_1+1,j_{d_H-1}]}= \bm{y}_{[j_1,j_{d_H-1}-1]}$, and $E_3=\emptyset$ if $d_H(\bm{x}_{[j_1+1,j_{d_H-1}]}, \bm{y}_{[j_1,j_{d_H-1}-1]})\ge 1$.
      \item $E_5=\{(j_1,j_{d_H})\}$ if $d_H(\bm{x}_{[j_1+1,j_{d_H}]}, \bm{y}_{[j_1,j_{d_H}-1]})= 1$, and $E_5=\emptyset$ if $d_H(\bm{x}_{[j_1+1,j_{d_H}]}, \bm{y}_{[j_1,j_{d_H}-1]})\ge 2$.
      \item If $d_H\ge 3$, then $E_7=\{(j_3,j_{d_H})\}$ if $\bm{x}_{[j_3+1,j_{d_H}]}= \bm{y}_{[j_3,j_{d_H}-1]}$, and $E_7=\emptyset$ if $d_H(\bm{x}_{[j_3+1,j_{d_H}]}, \bm{y}_{[j_3,j_{d_H}-1]})\ge 1$.
      \item If $d_H\ge 3$, then $E_9=\{(j_1,j_{d_H-2})\}$ if $\bm{x}_{[j_1+1,j_{d_H-2}]}= \bm{y}_{[j_1,j_{d_H-2}-1]}$, and $E_9=\emptyset$ if $d_H(\bm{x}_{[j_1+1,j_{d_H-2}]}, \bm{y}_{[j_1,j_{d_H-2}-1]})\ge 1$.
      \item If $d_H\ge 3$, then $E_{11}=\{(j_2,j_{d_H-1})\}$ if $\bm{x}_{[j_2+1,j_{d_H-1}]}= \bm{y}_{[j_2,j_{d_H-1}-1]}$, and $E_{11}=\emptyset$ if $d_H(\bm{x}_{[j_2+1,j_{d_H-1}]}, \bm{y}_{[j_2,j_{d_H-1}-1]})\ge 1$.
      \item $E_{13}=\{(j_2,j_{d_H})\}$ if $d_H(\bm{x}_{[j_2+1,j_{d_H}]}, \bm{y}_{[j_2,j_{d_H}-1]})=1$, and $E_{13}=\emptyset$ if $d_H(\bm{x}_{[j_2+1,j_{d_H}]}, \bm{y}_{[j_2,j_{d_H}-1]})\ge 2$.
      \item $E_{15}=\{(j_1,j_{d_H-1})\}$ if $d_H(\bm{x}_{[j_1+1,j_{d_H-1}]}, \bm{y}_{[j_1,j_{d_H-1}-1]})=1$, and $E_{15}=\emptyset$ if $d_H(\bm{x}_{[j_1+1,j_{d_H-1}]}, \bm{y}_{[j_1,j_{d_H-1}-1]})\ge 2$.
      \item $E_{17}\subseteq\{(k_1-1,j_{d_H}),(j_1,k_1')\}$ if $d_H(\bm{x}_{[j_1+1,j_{d_H}]}, \bm{y}_{[j_1,j_{d_H}-1]})=1$, where $k_1\le j_1<j_{d_H}<k_1'$. Moreover, $E_{17}=\{(j_1,j_{d_H})\}$ if $d_H(\bm{x}_{[j_1+1,j_{d_H}]}, \bm{y}_{[j_1,j_{d_H}-1]})= 2$, and $E_{17}=\emptyset$ if $d_H(\bm{x}_{[j_1+1,j_{d_H}]}, \bm{y}_{[j_1,j_{d_H}-1]})\ge 3$.
  \end{itemize}
\end{lemma}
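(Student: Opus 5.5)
The statement to prove is Lemma~\ref{lem:E_k_odd}, which determines the deletion-index sets $E_k$ for odd $k$. The plan is to handle each $E_k$ the same way.

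Fix the pair $(d_x,d_y)$ and write
\[
d_H(\bm{x}(d_x,0),\bm{y}(d_y,0))=d_1+d_2+d_3
\]
as in Table~\ref{tab:tuple}. Each set $B_k$ then forces a specific tuple. For example, for $B_1$ the substitution lies before both deletions, so the two deleted sequences must differ only in the prefix $[1,d_x-1]$. This gives $(d_1,d_2,d_3)=(1,0,0)$. For $B_5$ the substitution lies between the deletions, so $(d_1,d_2,d_3)=(0,1,0)$. The other cases are read off from where the substitutions fall:
\begin{itemize}
\item $B_7$ and $B_{17}$ give $(2,0,0)$ and $(0,2,0)$ respectively;
\item $B_{11}$ gives $(1,0,1)$;
\item $B_{13}$ and $B_{15}$ give $(1,1,0)$ and $(0,1,1)$.
\end{itemize}

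\textbf{Step 1: pin down the indices.} The prefix condition $d_H(\bm{x}_{[1,d_x-1]},\bm{y}_{[1,d_x-1]})=d_1$ together with $d_3=d_H(\bm{x}_{[d_y+1,n]},\bm{y}_{[d_y+1,n]})$ forces $d_x$ and $d_y$ relative to the disagreement positions $j_1<\dots<j_{d_H}$.

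When $d_1=m$ and $d_3=m'$ with $d_2=0$, the equality $x_i=y_i$ on the middle portion contradicts the shifted equality $x_{i+1}=y_i$ unless the middle window starts exactly at a disagreement. So we get $d_x\in(j_m,j_{m+1}]$, and the shifted-equality requirement at $d_x$ forces $d_x=j_{m+1}$. Concretely, if $d_x<j_{m+1}$, shift-equality would give $x_{d_x+1}=y_{d_x}=x_{d_x}$, and iterating along the stretch that agrees with the next disagreement gives a contradiction at $j_{m+1}$.

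Symmetrically, $d_y=j_{d_H-m'}$. This yields the singleton candidates $(j_2,j_{d_H})$, $(j_1,j_{d_H-1})$, $(j_1,j_{d_H})$, $(j_3,j_{d_H})$, and so on. Each is in $E_k$ exactly when the middle window has the prescribed Hamming distance, which gives the stated dichotomies. The hypothesis $d_H\ge 3$ is needed for $E_7,E_9,E_{11}$ so that $j_3$ or $j_{d_H-2}$ makes sense and lies on the correct side.

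\textbf{Step 2: the boundary issue for $E_5$, $E_{13}$, $E_{15}$.} The order constraints in $B_k$ (for instance $d_x<e_x\le d_y$) require the single middle disagreement to be placed consistently. Since the middle count is $1$ and both endpoints of the window are determined, this is automatic.

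\textbf{Step 3: the main obstacle, $E_{17}$.} Here $d_2=2$ and $d_1=d_3=0$, so $d_x\le j_1$ and $d_y\ge j_{d_H}$, but the deletion points are no longer forced to equal $j_1$ and $j_{d_H}$. I would move $d_x$ leftwards from $j_1$. Each step changes the middle distance by the term $[x_{d_x+1}\ne y_{d_x}]$, and by Lemma~\ref{lem:del} the count varies by at most one per unit of run structure. So the distance can stay at $2$ with $d_x<j_1$ only if the window $[j_1+1,j_{d_H}]$ alone already has distance $1$.

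Hence:
\begin{itemize}
\item if the window distance is $2$, only $(j_1,j_{d_H})$ survives;
\item if it is $\ge 3$, $E_{17}$ is empty;
\item if it is exactly $1$, at most one extra mismatch can be absorbed, either on the left at the boundary $k_1-1$ of the run containing $j_1$, or on the right at $k_1'$.
\end{itemize}
This gives $E_{17}\subseteq\{(k_1-1,j_{d_H}),(j_1,k_1')\}$. The careful step is showing that moving both endpoints simultaneously produces a third mismatch, using Corollary~\ref{cor:run_number}-type parity arguments.
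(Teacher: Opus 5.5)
Your overall plan is the natural one: translate each $B_k$ into a tuple $(d_1,d_2,d_3)$, place $d_x,d_y$ relative to $j_1<\cdots<j_{d_H}$ using $d_1$ and $d_3$, then test the middle window. (The paper gives no proof here; it only cites the claims.) Step~1, however, contains a false step.

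If $j_m<d_x<j_{m+1}$ and $d_2=0$, then for $i\in[d_x,j_{m+1}-1]$ you have both $x_i=y_i$ and $x_{i+1}=y_i$. This only yields $x_{d_x}=x_{d_x+1}=\cdots=x_{j_{m+1}}$. At $j_{m+1}$ you get $x_{j_{m+1}+1}=y_{j_{m+1}}=\overline{x_{j_{m+1}}}$, which contradicts nothing. So $d_x$ is forced into the run of $\bm{x}$ containing $j_{m+1}$, not onto $j_{m+1}$ itself.

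Here is a concrete counterexample. Take $\bm{x}=10010$ and $\bm{y}=00101$. Then $(j_1,\dots,j_4)=(1,3,4,5)$ and $\bm{x}_{[4,5]}=\bm{y}_{[3,4]}$, so the $E_1$ hypothesis holds. Yet $\bm{x}(2,1)=0010=\bm{y}(5,0)$ with $e_x=1<d_x=2<d_y=5$. Hence $(2,5)$ satisfies the defining condition of $B_1$ although $2\neq j_2$.

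So the equalities $E_k=\{\cdot\}$ hold only after identifying deletion positions that lie in the same run. Such positions give the same subsequence by Lemma~\ref{lem:del}, so $B_k=\bigcup_{E_k}\mathcal{S}(\bm{x}(d_x,0),\bm{y}(d_y,0))$ is unaffected. This is also the sense of the cited claims, which the footnote describes as concerning pairs of deleted subsequences. Your Step~3 silently adopts this normalization (``the boundary $k_1-1$ of the run containing $j_1$''), while Step~1 denies it, so the argument is internally inconsistent. You need to state the normalization explicitly and replace the ``contradiction'' by ``hence $\bm{x}(d_x,0)=\bm{x}(j_{m+1},0)$''.

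Once that is done, a single identity handles all nine items and closes the remaining soft spots. For $j_m<d_x\le j_{m+1}$ and $j_{d_H-m'}\le d_y<j_{d_H-m'+1}$ (with $m+1\le d_H-m'$),
\[
d_2=\bigl(r(\bm{x}_{[d_x,j_{m+1}]})-1\bigr)+d_H\bigl(\bm{x}_{[j_{m+1}+1,j_{d_H-m'}]},\bm{y}_{[j_{m+1},j_{d_H-m'}-1]}\bigr)+\bigl(r(\bm{y}_{[j_{d_H-m'},d_y]})-1\bigr).
\]
This holds because $y_i=x_i$ on $[d_x,j_{m+1}-1]$ and $x_{i+1}=y_{i+1}$ on $[j_{d_H-m'},d_y-1]$. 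The two outer terms are nonnegative, and they vanish exactly when $d_x$ and $d_y$ lie in the runs of $j_{m+1}$ and $j_{d_H-m'}$, respectively.

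This identity is what Step~2 actually needs. Your pinning in Step~1 was argued only for $d_2=0$. For $E_5$, $E_{13}$ and $E_{15}$ you have $d_2=1$, and for example $d_1=0$ only gives $d_x\le j_1$, so the endpoints were not yet determined when you called the boundary issue ``automatic''.

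The identity also settles Step~3 with no parity argument. When the window distance is $1$, moving both endpoints past a run boundary costs at least $1+1$, giving $d_2\ge 3$. So any surviving pair moves at most one endpoint, and by exactly one run, which yields the two candidates $(k_1-1,j_{d_H})$ and $(j_1,k_1')$. When the window distance is $2$, only the canonical pair survives; when it is at least $3$, nothing survives.
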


\begin{lemma}\cite[Claims 1.1'-1.3' and 2.1'-2.6']{Cai-25-arxiv-sd}\label{lem:E_k_even}
  For any two distinct sequences $\bm{x},\bm{y}\in\Sigma^n$, let $d_H\triangleq d_H(\bm{x},\bm{y})$, then the following results hold:
  \begin{itemize}
      \item $E_2=\{(j_{d_H},j_{2})\}$ if $\bm{x}_{[j_2,j_{d_H}-1]}= \bm{y}_{[j_2+1,j_{d_H}]}$, and $E_2=\emptyset$ if $d_H(\bm{x}_{[j_2,j_{d_H}-1]}, \bm{y}_{[j_2+1,j_{d_H}]})\ge 1$.
      \item $E_4=\{(j_{d_H-1},j_1)\}$ if $\bm{x}_{[j_1,j_{d_H-1}-1]}= \bm{y}_{[j_1+1,j_{d_H-1}]}$, and $E_4=\emptyset$ if $d_H(\bm{x}_{[j_1,j_{d_H-1}-1]}, \bm{y}_{[j_1+1,j_{d_H-1}]})\ge 1$.
      \item $E_6=\{(j_{d_H},j_{1})\}$ if $d_H(\bm{x}_{[j_1,j_{d_H}-1]}, \bm{y}_{[j_1+1,j_{d_H}]})= 1$, and $E_6=\emptyset$ if $d_H(\bm{x}_{[j_1,j_{d_H}-1]}, \bm{y}_{[j_1+1,j_{d_H}]})\ge 2$.
      \item If $d_H\ge 3$, then $E_8=\{(j_{d_H},j_{3})\}$ if $\bm{x}_{[j_3,j_{d_H}-1]}= \bm{y}_{[j_3+1,j_{d_H}]}$, and $E_8=\emptyset$ if $d_H(\bm{x}_{[j_3,j_{d_H}-1]}= \bm{y}_{[j_3+1,j_{d_H}]})\ge 1$.
      \item If $d_H\ge 3$, then $E_{10}=\{(j_{d_H-2},j_1)\}$ if $\bm{x}_{[j_1,j_{d_H-2}-1]}= \bm{y}_{[j_1+1,j_{d_H-2}]}$, and $E_{10}=\emptyset$ if $d_H(\bm{x}_{[j_1,j_{d_H-2}-1]}= \bm{y}_{[j_1+1,j_{d_H-2}]})\ge 1$.
      \item If $d_H\ge 3$, then $E_{12}=\{(j_{d_H-1},j_2)\}$ if $\bm{x}_{[j_2,j_{d_H-1}-1]}= \bm{y}_{[j_2+1,j_{d_H-1}]}$, and $E_{12}=\emptyset$ if $d_H(\bm{x}_{[j_2,j_{d_H-1}-1]}, \bm{y}_{[j_2+1,j_{d_H-1}]})\ge 1$.
      \item $E_{14}=\{(j_{d_H},j_{2})\}$ if $d_H(\bm{x}_{[j_2,j_{d_H}-1]}, \bm{y}_{[j_2+1,j_{d_H}]})=1$, and $E_{14}=\emptyset$ if $d_H(\bm{x}_{[j_2,j_{d_H}-1]}, \bm{y}_{[j_2+1,j_{d_H}]})\ge 2$.
      \item $E_{16}=\{(j_{d_H-1},j_1)\}$ if $d_H(\bm{x}_{[j_1,j_{d_H-1}-1]}, \bm{y}_{[j_1+1,j_{d_H-1}]})=1$, and $E_{16}=\emptyset$ if $d_H(\bm{x}_{[j_1,j_{d_H-1}-1]}, \bm{y}_{[j_1+1,j_{d_H-1}]})\ge 2$.
      \item $E_{18}\subseteq\{(j_{d_H},m_1-1),(m_1',j_1)\}$ if $d_H(\bm{x}_{[j_1,j_{d_H}-1]}, \bm{y}_{[j_1+1,j_{d_H}]})=1$, where, $m_1\le j_1<j_{d_H}<m_1'$. Moreover, $E_{18}=\{(j_{d_H},j_1)\}$ if $d_H(\bm{x}_{[j_1,j_{d_H}-1]}, \bm{y}_{[j_1+1,j_{d_H}]})= 2$, and $E_{18}=\emptyset$ if $d_H(\bm{x}_{[j_1,j_{d_H}-1]}, \bm{y}_{[j_1+1,j_{d_H}]})\ge 3$.
  \end{itemize}
\end{lemma}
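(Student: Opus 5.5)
The plan is to derive Lemma~\ref{lem:E_k_even} directly from Lemma~\ref{lem:E_k_odd} by swapping the roles of $\bm{x}$ and $\bm{y}$, rather than redoing the case analysis of \cite{Cai-25-arxiv-sd}. The key observation is that, for every even $k\in\{2,4,\ldots,18\}$, the defining condition of $B_k$ for the ordered pair $(\bm{x},\bm{y})$ is exactly the defining condition of $B_{k-1}$ for the ordered pair $(\bm{y},\bm{x})$, after the interchange $(d_x,e_x)\leftrightarrow(d_y,e_y)$.

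\textbf{Step 1: the involution at the level of the sets $B_k$.} I will check this correspondence for each pair $(B_{k-1},B_k)$. For example, $B_1$ for $(\bm{y},\bm{x})$ reads ``$e_x=0$, $e_y<d_y<d_x$, or $e_y=0$, $e_x<d_y<d_x$'', which is precisely $B_2$ for $(\bm{x},\bm{y})$. The same verification goes through for
\begin{itemize}
\item $B_3/B_4$ and $B_5/B_6$, where the strict and non-strict inequalities transform consistently;
\item $B_7/B_8$ and $B_9/B_{10}$;
\item $B_{11}/B_{12}$, $B_{13}/B_{14}$, $B_{15}/B_{16}$ and $B_{17}/B_{18}$, including the side conditions $e_x,e_y\neq 0$.
\end{itemize}
Since $\bm{z}=\bm{x}(d_x,e_x)=\bm{y}(d_y,e_y)$ is a symmetric relation, this gives, for every even $k$,
\[
E_k(\bm{x},\bm{y})=\{(d_x,d_y): (d_y,d_x)\in E_{k-1}(\bm{y},\bm{x})\}.
\]

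\textbf{Step 2: transport the hypotheses.} The quantities $d_H$ and the indices $j_1<\cdots<j_{d_H}$ are symmetric in $\bm{x}$ and $\bm{y}$. Applying Lemma~\ref{lem:E_k_odd} to $(\bm{y},\bm{x})$ therefore produces conditions of the form $d_H(\bm{y}_{[j_p+1,j_q]},\bm{x}_{[j_p,j_q-1]})$. By the symmetry of Hamming distance, these equal $d_H(\bm{x}_{[j_p,j_q-1]},\bm{y}_{[j_p+1,j_q]})$, which are exactly the quantities in Lemma~\ref{lem:E_k_even}. Reversing the coordinates of the resulting index pairs then turns, for instance, $(j_2,j_{d_H})$ into $(j_{d_H},j_2)$ for $E_2$, and similarly for each of the other items.

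\textbf{Step 3: the $E_{18}$ item.} The one place needing care is $E_{18}$. The auxiliary indices $k_1\le j_1<j_{d_H}<k_1'$ that Lemma~\ref{lem:E_k_odd} supplies for the pair $(\bm{y},\bm{x})$ become the indices $m_1,m_1'$ in the statement. Under the swap, $\{(k_1-1,j_{d_H}),(j_1,k_1')\}$ becomes $\{(j_{d_H},m_1-1),(m_1',j_1)\}$, as required. The remaining cases of this item ($d_H=2$ giving a singleton, $d_H\ge 3$ giving the empty set) transfer verbatim.

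The main obstacle is the bookkeeping in Step~1: checking the strict versus non-strict inequalities of all nine pairs without error. I will handle this by writing each condition explicitly with $x$ and $y$ interchanged and comparing it term by term with the paper's definitions.
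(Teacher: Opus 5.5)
Your argument is correct, but it differs from the paper, which gives no proof here: it imports the nine items directly from Claims 1.1$'$--1.3$'$ and 2.1$'$--2.6$'$ of \cite{Cai-25-arxiv-sd}, just as it imports Lemma~\ref{lem:E_k_odd} from the unprimed claims. You instead derive the even-indexed items from Lemma~\ref{lem:E_k_odd} through the involution $(\bm{x},d_x,e_x)\leftrightarrow(\bm{y},d_y,e_y)$.

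The check in your Step~1 goes through. For each $i\in[9]$, swapping the two triples in the defining condition of $B_{2i-1}$ gives exactly the defining condition of $B_{2i}$. This includes the non-strict boundaries (e.g.\ $e_y=0,\ d_x<e_x\le d_y$ in $B_5$ becomes $e_x=0,\ d_y<e_y\le d_x$ in $B_6$) and the side conditions $e_x,e_y\neq 0$. Also, $d_H$, the indices $j_1<\cdots<j_{d_H}$, and Hamming distance are all symmetric. So each hypothesis $d_H(\bm{y}_{[j_p+1,j_q]},\bm{x}_{[j_p,j_q-1]})$ from Lemma~\ref{lem:E_k_odd} becomes the one stated in Lemma~\ref{lem:E_k_even}, with the coordinates of each pair reversed.

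Your route means only one of the two cited lemmas has to be taken on trust. It also makes explicit the ``swap $\bm{x}$ and $\bm{y}$'' symmetry the paper uses informally elsewhere (proof of Lemma~\ref{lem:B7B9}, Corollary~\ref{cor:B7B13}, proof of Theorem~\ref{th:B7_nonempty}). The paper's citation instead relies on the separate verification in \cite{Cai-25-arxiv-sd}.

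One slip in Step~3: the cases ``$=2$ gives a singleton'' and ``$\ge 3$ gives the empty set'' are conditions on $d_H(\bm{x}_{[j_1,j_{d_H}-1]},\bm{y}_{[j_1+1,j_{d_H}]})$. They are not conditions on the global distance $d_H=d_H(\bm{x},\bm{y})$, which is at least $3$ in the paper's setting anyway. Read correctly, they transfer exactly as you say. Identifying $m_1,m_1'$ with the indices $k_1,k_1'$ produced for the pair $(\bm{y},\bm{x})$ is legitimate, since the statement constrains them only by $m_1\le j_1<j_{d_H}<m_1'$.
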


We note that  $E_k=\emptyset$ if and only if $B_k=\emptyset$. Thus, by using Lemmas \ref{lem:E_k_odd} and \ref{lem:E_k_even}, we can prove the following lemma.
\begin{lemma}\label{lem:E_k}
  For any two distinct sequences $\bm{x}, \bm{y} \in \C_1$, we can determine $E_k$ for $5\le k\le 18$ in Table \ref {tab:E_k}.
  
\begin{table*}[ht]
    \centering
    \caption{ For any two distinct sequences $\bm{x},\bm{y} \in \C_1$, and $5\le k\le 18$, we determine the set $E_k$, which collects the deletion index pairs $(d_x,d_y)$ such that $\bm{z}=\bm{x}(d_x,e_x) = \bm{y}(d_y,e_y)\in B_k$, where $d_H\triangleq d_H(\bm{x},\bm{y})\ge 4$.} 
    \begin{tabular}{| >{\centering\arraybackslash}p{6.5cm} | >{\centering\arraybackslash}p{6.5cm} |} 
        \hline
        $E_5\subseteq\{(j_1,j_{d_H})\}$ & $E_6\subseteq\{(j_{d_H},j_1)\}$ \\
        \hline 
        $E_7\subseteq\{(j_3,j_{d_H})\}$ & $E_8\subseteq\{(j_{d_H},j_3)\}$ \\
        \hline 
        $E_9\subseteq\{(j_1,j_{d_{H}-2})\}$ & $E_{10}\subseteq\{(j_{d_{H}-2},j_1)\}$ \\
        \hline 
        $E_{11}\subseteq\{(j_2,j_{d_H-1})\}$ & $E_{12}\subseteq\{(j_{d_H-1},j_2)\}$ \\
        \hline 
        $E_{13}\subseteq\{(j_2,j_{d_H})\}$ & $E_{14}\subseteq\{(j_{d_H},j_2)\}$ \\
        \hline 
        $E_{15}\subseteq\{(j_1,j_{d_H-1})\}$ & $E_{16}\subseteq\{(j_{d_H-1},j_1)\}$ \\
        \hline 
        \multicolumn{2}{|c|}{$E_{17}\subseteq\{(j_1,\ge j_{d_H}+1),(\le j_1-1,j_{d_H})\}$ when $B_5$ is nonempty and $E_{17}\subseteq\{(j_1,j_{d_H})\}$ when $B_5$ is empty} \\
        \hline 
        \multicolumn{2}{|c|}{$E_{18}\subseteq\{(\ge j_{d_H}+1,j_1),(j_{d_H},\le j_1-1)\}$ when $B_6$ is nonempty and $E_{18}\subseteq\{(j_{d_H},j_{1})\}$ when $B_6$ is empty}  \\
        \hline
    \end{tabular}
    \label{tab:E_k}
\end{table*}
  \end{lemma}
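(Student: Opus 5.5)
We will derive Table \ref{tab:E_k} entry by entry from Lemmas \ref{lem:E_k_odd} and \ref{lem:E_k_even}, using only the facts supplied by Lemma \ref{lem:VT}: for distinct $\bm{x},\bm{y}\in\C_1$ we have $|\mathcal{D}(\bm{x},\bm{y})|=|\mathcal{S}(\bm{x},\bm{y})|=0$ and $d_H\ge 4$. The last fact is what makes the hypotheses ``$d_H\ge 3$'' in the bullets for $E_7,E_8,E_9,E_{10},E_{11},E_{12}$ automatically satisfied. It also guarantees that the indices $j_1<j_2<j_3$ and $j_{d_H-2}<j_{d_H-1}<j_{d_H}$ are all well defined, with $j_3\le j_{d_H-1}$.

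For the entries $E_k$ with $5\le k\le 16$, each bullet of Lemma \ref{lem:E_k_odd} or Lemma \ref{lem:E_k_even} has a dichotomy. Depending on the Hamming distance of the relevant shifted substrings, $E_k$ is either a specific singleton or empty. For $E_{13},E_{14},E_{15},E_{16}$ and $E_5,E_6$, a distance of $0$ is not covered explicitly. To handle it, we note that by Corollary \ref{cor:run_number} and Lemma \ref{lem:del_int} a zero shifted distance on the full window $[j_1,j_{d_H}]$ would force $\mathcal{D}(\bm{x},\bm{y})\neq\emptyset$. For sub-windows, the case ``distance $0$'' only places the pair in a different $B$-class, namely $E_1$ or $E_3$, which is the same index pair. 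Because of this, the pair $(d_x,d_y)$ with $e$'s in the prescribed ranges cannot produce a valid single substitution making the strings equal. In every case, then, $E_k$ is contained in the stated singleton, which is exactly the ``$\subseteq$'' claimed in the table. The even-indexed entries follow either symmetrically from Lemma \ref{lem:E_k_even} or by swapping the roles of $\bm{x}$ and $\bm{y}$.

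For $E_{17}$ (and $E_{18}$ by symmetry), Lemma \ref{lem:E_k_odd} gives a three-way split according to $\delta\triangleq d_H(\bm{x}_{[j_1+1,j_{d_H}]},\bm{y}_{[j_1,j_{d_H}-1]})$. When $\delta=1$, we get $E_{17}\subseteq\{(k_1-1,j_{d_H}),(j_1,k_1')\}$ with $k_1\le j_1$ and $k_1'>j_{d_H}$, that is, pairs of the form $(\le j_1-1,j_{d_H})$ and $(j_1,\ge j_{d_H}+1)$. When $\delta=2$, we get $E_{17}=\{(j_1,j_{d_H})\}$, and when $\delta\ge3$, $E_{17}$ is empty. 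Since $\delta=0$ is impossible by Lemma \ref{lem:del_int}, it remains to link $\delta$ to $B_5$. By the $E_5$ bullet, $B_5\neq\emptyset$ if and only if $\delta=1$. Hence, when $B_5$ is nonempty we are in the first case, and when $B_5$ is empty we have $\delta\ge2$, so $E_{17}\subseteq\{(j_1,j_{d_H})\}$. The same argument with $E_6$ and $\delta'=d_H(\bm{x}_{[j_1,j_{d_H}-1]},\bm{y}_{[j_1+1,j_{d_H}]})$ handles $E_{18}$.

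The main obstacle will be the careful treatment of the boundary case in which a shifted distance equals $0$, which is not addressed explicitly in the cited bullets for $E_{13}$ through $E_{16}$. We must confirm that in this case $E_k$ is still at most the stated singleton and not some other pair. Here we will rely on the parity statement of Corollary \ref{cor:run_number} together with $|\mathcal{D}(\bm{x},\bm{y})|=0$, if needed by going back to the derivation in \cite{Cai-25-arxiv-sd}.
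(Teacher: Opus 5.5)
Your handling of $E_7$--$E_{12}$ (read off the bullets, since $d_H\ge 4$) is correct. So is $E_5,E_6$ (distance $0$ on the full window excluded by Lemma~\ref{lem:del_int}), and so is $E_{17},E_{18}$ (split on $\delta$, with $\delta=1$ iff $B_5\neq\emptyset$); this matches the paper.

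The gap is the distance-$0$ case for $E_{13}$--$E_{16}$. You correctly note that when $\bm{x}_{[j_2+1,j_{d_H}]}=\bm{y}_{[j_2,j_{d_H}-1]}$, the specific pair $(j_2,j_{d_H})$ yields elements of $B_1$ rather than $B_{13}$. But that says nothing about \emph{other} pairs, and in this case $E_{13}$ can contain a different pair, with $d_x$ in the run of $\bm{x}$ preceding $j_2$.

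The tools you restrict yourself to ($|\mathcal{D}(\bm{x},\bm{y})|=|\mathcal{S}(\bm{x},\bm{y})|=0$, $d_H\ge 4$, parity from Corollary~\ref{cor:run_number}) cannot rule this out. Take $\bm{x}=1001010$ and $\bm{y}=0000101$:
\begin{itemize}
\item $d_H=5$ with $(j_1,\dots,j_5)=(1,4,5,6,7)$, so $\mathcal{S}(\bm{x},\bm{y})=\emptyset$.
\item $\mathcal{D}(\bm{x},\bm{y})=\emptyset$ by Lemma~\ref{lem:del_int}, since $\bm{x}_{[2,7]}=001010\neq 000010=\bm{y}_{[1,6]}$ and $\bm{x}_{[1,6]}=100101\neq 000101=\bm{y}_{[2,7]}$.
\item $\bm{x}_{[5,7]}=\bm{y}_{[4,6]}=010$, so this is the distance-$0$ case.
\end{itemize}
Yet $\bm{x}(3,1)=001010=\bm{y}(7,3)$ with $e_x=1<d_x=3\le e_y=3<d_y=7$. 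Hence $(3,7)\in E_{13}$, whereas $(j_2,j_{d_H})=(4,7)$ and $\bm{x}(3,0)=101010\neq 100010=\bm{x}(4,0)$. So your conclusion that ``in every case $E_k$ is contained in the stated singleton'' is false under the hypotheses you use.

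The missing ingredient is the rest of Lemma~\ref{lem:VT}: for $\bm{x},\bm{y}\in\C_1$ the sets $B_1,B_2,B_3,B_4$, hence $E_1,\dots,E_4$, are empty. This is what the proof of Lemma~\ref{lem:VT} in the appendix establishes; the example above violates it, e.g.\ $\mathrm{wt}(\bm{x})-\mathrm{wt}(\bm{y})=1$. By the first bullet of Lemma~\ref{lem:E_k_odd}, $E_1=\emptyset$ forces $d_H(\bm{x}_{[j_2+1,j_{d_H}]},\bm{y}_{[j_2,j_{d_H}-1]})\ge 1$. The $E_{13}$ bullet then applies and gives $E_{13}\subseteq\{(j_2,j_{d_H})\}$. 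In the same way, $E_3=\emptyset$ excludes distance $0$ for $E_{15}$, and $E_2=\emptyset$, $E_4=\emptyset$ do so for $E_{14},E_{16}$ via Lemma~\ref{lem:E_k_even}. This is exactly the paper's argument, and with it the rest of your proof goes through.
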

\begin{IEEEproof}
From Lemma \ref{lem:VT}, we know that $|\mathcal{D}(\bm{x},\bm{y})|=0$ and $|\mathcal{S}(\bm{x},\bm{y})|=0$. Moreover, $B_1,B_2,B_3$ and $B_4$ are empty, thus, $E_1,E_2,E_3$ and $E_4$ are empty. In addition, we have $d_H((\bm{x},\bm{y}))\ge 4$,  then we can prove this lemma as follows.
  \begin{itemize}
      \item The sets $E_k$ for $7\le k\le 12$ can be obtained directly by Lemmas \ref{lem:E_k_odd} and \ref{lem:E_k_even}.
      \item For sets $E_5,E_6,E_{17}$ and $E_{18}$. Since $|\mathcal{D}(\bm{x},\bm{y})|=0$, then we can obtain that $d_H(\bm{x}_{[j_1+1,j_{d_H}]}, \bm{y}_{[j_1,j_{d_H}-1]})\ge 1$ and $d_H(\bm{x}_{[j_1,j_{d_H}-1]}, \bm{y}_{[j_1+1,j_{d_H}]})\ge 1$ by Lemma \ref{lem:del_int}. Thus, we can derive that $E_5\subseteq\{(j_1,j_{d_H})\}$ and $E_6\subseteq\{(j_{d_H},j_1)\}$ by Lemmas \ref{lem:E_k_odd} and \ref{lem:E_k_even}. More detailedly, we have that $d_H(\bm{x}_{[j_1+1,j_{d_H}]}, \bm{y}_{[j_1,j_{d_H}-1]})=1$ when $E_5$ is nonempty and $d_H(\bm{x}_{[j_1+1,j_{d_H}]}, \bm{y}_{[j_1,j_{d_H}-1]})\ge 2$ when $E_5$ is empty. Moreover, we have $d_H(\bm{x}_{[j_1,j_{d_H}-1]}, \bm{y}_{[j_1+1,j_{d_H}]})=1$ when $E_6$ is nonempty and $_H(\bm{x}_{[j_1,j_{d_H}-1]}, \bm{y}_{[j_1+1,j_{d_H}]})\ge 2$ when $E_6$ is empty. Consequently, the sets $E_{17}$ and $E_{18}$ can be determined by Lemmas \ref{lem:E_k_odd} and \ref{lem:E_k_even}.
      \item For sets $E_{13},E_{14},E_{15}$ and $E_{16}$. Since $E_1$ is empty,  we have  $d_H(\bm{x}_{[j_2+1,j_{d_H}]}, \bm{y}_{[j_2,j_{d_H}-1]})\ge 1$ by Lemma \ref{lem:E_k_odd}. Thereby, we can obtain that $E_{13}\subseteq\{(j_2,j_{d_H})\}$ according to Lemma \ref{lem:E_k_odd}. By considering $R(\bm{x})$ and $R(\bm{y})$, or changing the positions of $\bm{x}$ and $\bm{y}$, we can also prove the conclusion for $E_{14},E_{15}$ and $E_{16}$ since $E_2,E_3$ and $E_4$ are empty.
  \end{itemize}
\end{IEEEproof}

The following lemma can be easily proved by Lemma \ref{lem:E_k}. 

\begin{lemma}\label{lem:set size}
    For any two distinct sequences $\bm{x},\bm{y} \in \C_1$, we have the following results:
    \begin{itemize}
        \item $|B_i|\le 2$ for $5\le i\le 16$;
        \item  $|B_i| \leq 2$ when $B_{i-12} = \emptyset$, and $|B_i| \leq 4$ when $B_{i-12} \neq \emptyset$, for $i\in \{17,18\}$.
    \end{itemize}
\end{lemma}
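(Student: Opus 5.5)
The plan is to use the identity $B_k=\bigcup_{(d_x,d_y)\in E_k}\mathcal{S}(\bm{x}(d_x,0),\bm{y}(d_y,0))$ together with Lemma \ref{lem:sub}. Each single-substitution intersection $\mathcal{S}(\bm{u},\bm{v})$ of two distinct sequences has size at most $2$. Moreover, for $(d_x,d_y)\in E_k$ we have $d_H(\bm{x}(d_x,0),\bm{y}(d_y,0))\in\{1,2\}$; if the two were equal, then $\mathcal{D}(\bm{x},\bm{y})\neq\emptyset$, contradicting Lemma \ref{lem:VT}. Hence $|\mathcal{S}(\bm{x}(d_x,0),\bm{y}(d_y,0))|=2$ exactly, and
\[
|B_k|\le 2|E_k|.
\]
So the whole statement reduces to bounding $|E_k|$ via Table \ref{tab:E_k} in Lemma \ref{lem:E_k}.

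\textbf{The range $5\le i\le 16$.} Table \ref{tab:E_k} gives $E_i\subseteq\{\text{one fixed pair}\}$ for each such $i$, for instance $E_5\subseteq\{(j_1,j_{d_H})\}$ and $E_{13}\subseteq\{(j_2,j_{d_H})\}$. Thus $|E_i|\le 1$, and therefore $|B_i|\le 2$.

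\textbf{The case $i=17$.}
\begin{itemize}
\item If $B_5=\emptyset$, Table \ref{tab:E_k} gives $E_{17}\subseteq\{(j_1,j_{d_H})\}$, so $|B_{17}|\le 2$.
\item If $B_5\neq\emptyset$, Table \ref{tab:E_k} lists two candidate shapes, $(j_1,k')$ with $k'\ge j_{d_H}+1$ and $(k,j_{d_H})$ with $k\le j_1-1$. The count of two pairs must come from the more precise Lemma \ref{lem:E_k_odd}. When $B_5\neq\emptyset$, Lemma \ref{lem:E_k} shows $d_H(\bm{x}_{[j_1+1,j_{d_H}]},\bm{y}_{[j_1,j_{d_H}-1]})=1$. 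In that case Lemma \ref{lem:E_k_odd} gives $E_{17}\subseteq\{(k_1-1,j_{d_H}),(j_1,k_1')\}$ for specific indices $k_1,k_1'$, so $|E_{17}|\le 2$ and $|B_{17}|\le 4$.
\end{itemize}

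\textbf{The case $i=18$.} This is identical using Lemma \ref{lem:E_k_even}, with $B_6$ in place of $B_5$. Alternatively, apply the $i=17$ argument to $(\bm{y},\bm{x})$ or to the reversals $R(\bm{x}),R(\bm{y})$. Since $\C_1$ is closed under neither operation in general, I would instead rerun the argument directly with Lemma \ref{lem:E_k_even}.

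The only potential subtlety is the nonempty case for $E_{17}$ and $E_{18}$. There I would be careful to cite the explicit two-element set from Lemmas \ref{lem:E_k_odd} and \ref{lem:E_k_even}, rather than the table's "$\ge$/$\le$" description, which by itself does not bound the cardinality.
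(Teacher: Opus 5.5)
Your proof is correct and follows the paper's route. Both arguments get $|B_k|\le 2|E_k|$ from $\mathcal{D}(\bm{x},\bm{y})=\emptyset$ and Lemma \ref{lem:sub}, then read off $|E_k|$ from Lemma \ref{lem:E_k}. Your one addition, appealing explicitly to the two-element sets in Lemmas \ref{lem:E_k_odd} and \ref{lem:E_k_even} for $E_{17},E_{18}$, is a sensible precision.

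One small correction to your aside: swapping $\bm{x}$ and $\bm{y}$ is always legitimate, because the hypothesis is symmetric and the swap turns $B_{17},B_5$ into $B_{18},B_6$. Your concern about closure applies only to reversal.
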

\begin{IEEEproof}
    Given any two distinct sequences $\bm{x},\bm{y} \in \C_1$, by Lemma \ref{lem:VT}, we have that $|\mathcal{D}(\bm{x},\bm{y})|=0$. Thus, we can conclude that $\bm{x}(d_x,0)\neq \bm{y}(d_y,0)$ for any $d_x,d_y\in [n]$. Then, from Lemma \ref{lem:sub}, we can obtain that $|\mathcal{S}(\bm{x}(d_x,0), \bm{y}(d_y,0))|\le 2$ for any given $d_x,d_y$. Furthermore, since $B_k=\bigcup_{(d_x,d_y)\in E_k}\mathcal{S}(\bm{x}(d_x,0),\bm{y}(d_y,0))$, we can obtain that $|B_k|\le 2|E_k|$. Thus, this lemma can be proved immediately by Lemma \ref{lem:E_k}.
\end{IEEEproof}

Then, by observation, we find some relationship between these sets, which is summarized in the following lemmas and corollaries. Moreover, some of them will be proved in the Appendix.

\begin{lemma}\label{lem:B11B9}
    For any  two distinct sequences \(\bm{x},\bm{y} \in \C_1\), if \(B_{11}\) is a nonempty set, then $B_5,B_7$ and $B_9$ are empty.  
    By symmetry, considering \(R(\bm{x})\) and \(R(\bm{y})\), if $B_{12}$ is a nonempty set, $B_6,B_8$ and $B_{10}$ are empty.
\end{lemma}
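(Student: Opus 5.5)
The plan is to translate each nonemptiness condition into statements about the ``shift distances'' on consecutive difference segments. These are combined additively, and the resulting pattern is shown to contradict either Lemma \ref{lem:VT} (emptiness of $B_1,\dots,B_4$) or the weight constraint of $\C_1$. Write $d_H\triangleq d_H(\bm{x},\bm{y})\ge 4$ (Lemma \ref{lem:VT}). For $l\in[1,d_H-1]$ set
\[
\delta_l\triangleq d_H(\bm{x}_{[j_l+1,j_{l+1}]},\bm{y}_{[j_l,j_{l+1}-1]}).
\]
As in the proof of Corollary \ref{cor:run_number}, $d_H(\bm{x}_{[j_i+1,j_{i'}]},\bm{y}_{[j_i,j_{i'}-1]})=\sum_{l=i}^{i'-1}\delta_l$. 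Apply Lemma \ref{lem:run_number} to the segment $\bm{x}_{[j_l,j_{l+1}]}=\alpha\bm{w}\beta$, $\bm{y}_{[j_l,j_{l+1}]}=\overline{\alpha}\bm{w}\overline{\beta}$. It gives that $\delta_l$ is odd iff $x_{j_l}=x_{j_{l+1}}$, and even iff $x_{j_l}\neq x_{j_{l+1}}$. In particular, $\delta_l=0$ forces $x_{j_l}\neq x_{j_{l+1}}$.

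By Lemma \ref{lem:E_k_odd} (equivalently Table \ref{tab:E_k}) the four conditions read as follows.
\begin{itemize}
\item $B_{11}\neq\emptyset$ means $\delta_l=0$ for all $l\in[2,d_H-2]$.
\item $B_7\neq\emptyset$ means $\delta_l=0$ for $l\in[3,d_H-1]$.
\item $B_9\neq\emptyset$ means $\delta_l=0$ for $l\in[1,d_H-3]$.
\item $B_5\neq\emptyset$ means $\sum_{l=1}^{d_H-1}\delta_l=1$.
\end{itemize}

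\emph{Case $B_7$.} If $B_{11}$ and $B_7$ were both nonempty, then $\delta_l=0$ for all $l\in[2,d_H-1]$. Hence $\bm{x}_{[j_2+1,j_{d_H}]}=\bm{y}_{[j_2,j_{d_H}-1]}$, so $E_1\neq\emptyset$ by Lemma \ref{lem:E_k_odd}, i.e.\ $B_1\neq\emptyset$. This contradicts Lemma \ref{lem:VT}.

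\emph{Case $B_9$.} In the same way, $B_{11}$ and $B_9$ together give $\delta_l=0$ for $l\in[1,d_H-2]$. Then $E_3\neq\emptyset$, so $B_3\neq\emptyset$, again a contradiction.

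\emph{Case $B_5$.} Here I use the weight constraint. Since $\bm{x},\bm{y}$ agree off $\{j_1,\dots,j_{d_H}\}$,
\[
\mathrm{wt}(\bm{x})-\mathrm{wt}(\bm{y})=\sum_{l=1}^{d_H}(2x_{j_l}-1),
\]
and this must be $\equiv 0\pmod 4$ because $\bm{x},\bm{y}\in\C_1$. With $B_{11}\neq\emptyset$ the sum condition of $B_5$ reduces to $\delta_1+\delta_{d_H-1}=1$. Suppose $\delta_1=1$ and $\delta_{d_H-1}=0$; the other case follows by reversal, using $R(\bm{x}),R(\bm{y})$. The parity rule then gives $x_{j_1}=x_{j_2}$ and $x_{j_l}\neq x_{j_{l+1}}$ for all $l\in[2,d_H-1]$. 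So the signs $2x_{j_l}-1$ alternate on $l\in[2,d_H]$, and the sign for $j_1$ duplicates that of $j_2$.
\begin{itemize}
\item If $d_H-1$ is even, the alternating block sums to $0$, and the total is $\pm1$.
\item If $d_H-1$ is odd, the block sums to the sign of $j_2$, and the total is $\pm2$.
\end{itemize}
Neither value is $\equiv0\pmod 4$, a contradiction. Therefore $B_5=\emptyset$.

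The $B_{12}$ statement follows by applying the above to $R(\bm{x}),R(\bm{y})$, since $\C_1$'s weight constraint and the emptiness of $B_1,\dots,B_4$ are reversal-symmetric. The main point needing care is the $B_5$ case. There I must check that the parity rule from Lemma \ref{lem:run_number} applies segment by segment with exactly this orientation of shift. I also must check that the reversal indeed swaps the roles of $\delta_1$ and $\delta_{d_H-1}$ while keeping the $B_{11}$ condition.
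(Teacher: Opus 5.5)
Your proof is correct. The $B_7$ and $B_9$ cases are exactly the paper's argument, but your $B_5$ case takes a different route.

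\textbf{The $B_5$ case.} The paper observes that $\delta_1+\delta_{d_H-1}=1$ forces one end term to vanish. Together with the $B_{11}$ condition, this gives either $\delta_l=0$ on $[2,d_H-1]$, so $B_1\neq\emptyset$, or $\delta_l=0$ on $[1,d_H-2]$, so $B_3\neq\emptyset$. Either way this contradicts Lemma \ref{lem:VT}, just as in the other two cases.

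In your subcase $\delta_1=1$, $\delta_{d_H-1}=0$, you already have $\delta_l=0$ on $[2,d_H-1]$. That is literally the condition you used one paragraph earlier to get $B_1\neq\emptyset$, so the parity-and-weight computation is a detour. It is still valid: the parity rule from Lemma \ref{lem:run_number} holds in your orientation, and the totals $\pm 1$, $\pm 2$ are right. What it buys is only that this configuration is excluded by the weight congruence modulo $4$ rather than by the emptiness of $B_1$ and $B_3$. Since your $B_7$ and $B_9$ cases rely on $B_1=B_3=\emptyset$ anyway, the paper's uniform reduction is shorter and more economical.

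\textbf{The symmetry you flagged.} Plain reversal does not preserve the $\delta$'s. For $R(\bm{x}),R(\bm{y})$ one gets $\delta'_l=d_H(\bm{x}_{[j_k,j_{k+1}-1]},\bm{y}_{[j_k+1,j_{k+1}]})$ with $k=d_H-l$, which is the opposite shift. So plain reversal maps the $B_{11}$ condition to the $B_{12}$ condition.
\begin{itemize}
\item That is why plain reversal is the right tool for the lemma's second sentence: it sends $B_{11},B_5,B_7,B_9$ to $B_{12},B_6,B_{10},B_8$.
\item To swap $\delta_1$ and $\delta_{d_H-1}$ while keeping the $B_{11}$ condition, you need $R(\bm{y}),R(\bm{x})$ in place of $\bm{x},\bm{y}$, which gives $\delta'_l=\delta_{d_H-l}$.
\item More simply, the subcase $\delta_1=0$, $\delta_{d_H-1}=1$ yields to the same sign count directly, or to $B_3\neq\emptyset$.
\end{itemize}
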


The following can be obtained directly from Lemma \ref{lem:B11B9}.
\begin{corollary}\label{cor:B11B9}
 For any  two distinct sequences \(\bm{x},\bm{y} \in \C_1\), if one of $B_5,B_7,B_9$ is a nonempty set, then \(B_{11}\) must be empty. By symmetry, considering \(R(\bm{x})\) and \(R(\bm{y})\), if one of $B_6,B_8,B_{10}$ is a nonempty set, then $B_{12}$ must be empty.    
\end{corollary}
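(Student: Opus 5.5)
The corollary is the contrapositive of Lemma \ref{lem:B11B9}, so the plan is to argue purely logically, with no new combinatorics. Fix two distinct sequences $\bm{x},\bm{y}\in\C_1$ and suppose that at least one of $B_5,B_7,B_9$ is nonempty. Assume, for contradiction, that $B_{11}$ is also nonempty. Lemma \ref{lem:B11B9} then asserts that $B_5$, $B_7$ and $B_9$ are all empty, which contradicts the hypothesis. Hence $B_{11}=\emptyset$.

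For the second statement I would argue the same way, using the second half of Lemma \ref{lem:B11B9}. Suppose one of $B_6,B_8,B_{10}$ is nonempty while $B_{12}\neq\emptyset$. The lemma then forces $B_6=B_8=B_{10}=\emptyset$, a contradiction, so $B_{12}=\emptyset$.

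Alternatively, I could derive the $B_{12}$ statement from the $B_{11}$ statement via the reversal symmetry. Passing from $(\bm{x},\bm{y})$ to $(R(\bm{x}),R(\bm{y}))$, or swapping the roles of $\bm{x}$ and $\bm{y}$, interchanges the index conditions defining $B_5\leftrightarrow B_6$, $B_7\leftrightarrow B_8$, $B_9\leftrightarrow B_{10}$ and $B_{11}\leftrightarrow B_{12}$. This matches the pairing in Lemma \ref{lem:E_k}, where $E_{2k-1}\subseteq\{(j_a,j_b)\}$ corresponds to $E_{2k}\subseteq\{(j_b,j_a)\}$. For this route one must also check that $\C_1$ is closed under the operation used. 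The weight condition is clearly preserved. Closure of the VT condition under reversal is not needed, however, because only the properties $|\mathcal{D}|=|\mathcal{S}|=0$ and $d_H\ge 4$ enter, and these are invariant under reversal and swapping.

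There is no real obstacle here, since all the substance sits in Lemma \ref{lem:B11B9}. The only point needing care is to invoke that lemma in the direction it is stated (nonempty $B_{11}$ implies $B_5,B_7,B_9$ all empty) and to note that its contrapositive requires only \emph{one} of them to be nonempty.
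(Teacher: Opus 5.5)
Your contrapositive argument is correct and matches the paper, which obtains the corollary directly from both halves of Lemma~\ref{lem:B11B9}. One small caution on your optional symmetry aside. Reversal alone pairs $B_7\leftrightarrow B_{10}$ and $B_8\leftrightarrow B_9$; it is swapping $\bm{x}$ and $\bm{y}$ that gives $B_7\leftrightarrow B_8$ and $B_9\leftrightarrow B_{10}$. Also, the lemma's proof uses $B_1=B_3=\emptyset$, which follows from the VT constraints and not merely from $|\mathcal{D}(\bm{x},\bm{y})|=|\mathcal{S}(\bm{x},\bm{y})|=0$ and $d_H\ge 4$. That route still works, because under either operation the family $\{B_1,\dots,B_4\}$ is mapped to itself and $\{B_5,B_7,B_9\}$ is mapped onto $\{B_6,B_8,B_{10}\}$.
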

\begin{lemma}\label{lem:B7B9}
    For any  two distinct sequences \(\bm{x},\bm{y} \in \C_1\), if \(d_H(\bm{x},\bm{y}) \geq 5\), then we have
    \begin{itemize}
        \item \(B_7\) and \(B_9\) cannot be nonempty  simultaneously;
        \item \(B_8\) and \(B_{10}\) cannot be nonempty simultaneously.
    \end{itemize}
\end{lemma}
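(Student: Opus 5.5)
Assume for contradiction that $B_7$ and $B_9$ are both nonempty while $d_H\triangleq d_H(\bm{x},\bm{y})\ge 5$; the plan is to contradict $|\mathcal{D}(\bm{x},\bm{y})|=0$, using the $E_k$ characterizations in Lemma \ref{lem:E_k_odd}.

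\textbf{Step 1 (translate to shift identities).} Since $d_H\ge 3$, Lemma \ref{lem:E_k_odd} gives the following:
\begin{itemize}
\item $E_7\neq\emptyset$ forces $\bm{x}_{[j_3+1,j_{d_H}]}=\bm{y}_{[j_3,j_{d_H}-1]}$;
\item $E_9\neq\emptyset$ forces $\bm{x}_{[j_1+1,j_{d_H-2}]}=\bm{y}_{[j_1,j_{d_H-2}-1]}$.
\end{itemize}
When $d_H\ge 5$ we have $j_3\le j_{d_H-2}$. The two windows $[j_1+1,j_{d_H-2}]$ and $[j_3+1,j_{d_H}]$ therefore overlap or abut, and together cover $[j_1+1,j_{d_H}]$. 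Splitting the Hamming distance as in the proof of Corollary \ref{cor:run_number}, the left-shift comparison $\bm{x}_{[j_1+1,j_{d_H}]}$ versus $\bm{y}_{[j_1,j_{d_H}-1]}$ decomposes into blocks $[j_i+1,j_{i+1}]$, $i=1,\dots,d_H-1$. Every block lies inside one of the two windows: blocks with $i\le d_H-3$ lie in the first, and blocks with $i\ge 3$ lie in the second. The condition $d_H\ge 5$ guarantees that each $i$ satisfies at least one of $i\le d_H-3$ or $i\ge 3$. So every block contributes $0$, and
\[
d_H(\bm{x}_{[j_1+1,j_{d_H}]},\bm{y}_{[j_1,j_{d_H}-1]})=0 .
\]

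\textbf{Step 2 (contradiction).} By Lemma \ref{lem:del_int}, this identity gives $|\mathcal{D}(\bm{x},\bm{y})|\neq 0$. Equivalently, deleting $x_{j_1}$ and $y_{j_{d_H}}$ yields equal sequences. This contradicts Lemma \ref{lem:VT}, which guarantees $|\mathcal{D}(\bm{x},\bm{y})|=0$ for distinct $\bm{x},\bm{y}\in\C_1$. Hence $B_7$ and $B_9$ cannot both be nonempty.

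\textbf{Step 3 (the second item).} For $B_8$ and $B_{10}$, I would either swap the roles of $\bm{x}$ and $\bm{y}$ or apply the same argument to the right-shift comparisons of Lemma \ref{lem:E_k_even}. Either way, $\bm{x}_{[j_3,j_{d_H}-1]}=\bm{y}_{[j_3+1,j_{d_H}]}$ and $\bm{x}_{[j_1,j_{d_H-2}-1]}=\bm{y}_{[j_1+1,j_{d_H-2}]}$ combine into $\bm{x}_{[j_1,j_{d_H}-1]}=\bm{y}_{[j_1+1,j_{d_H}]}$, which again contradicts Lemma \ref{lem:del_int}.

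The main point to verify carefully is the block-covering index check in Step 1, since that is exactly where the hypothesis $d_H\ge 5$ enters. When $d_H=4$ the middle block $[j_2+1,j_3]$ lies in neither window, and the argument fails there.
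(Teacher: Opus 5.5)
Your proof is correct and follows the same route as the paper. Both use Lemma \ref{lem:E_k_odd} to obtain the two shift identities, glue them via $j_3\le j_{d_H-2}$ into $\bm{x}_{[j_1+1,j_{d_H}]}=\bm{y}_{[j_1,j_{d_H}-1]}$, contradict $|\mathcal{D}(\bm{x},\bm{y})|=0$ from Lemma \ref{lem:VT}, and handle $B_8,B_{10}$ by swapping $\bm{x}$ and $\bm{y}$; your block-by-block check just makes the paper's one-line gluing step, and why $d_H=4$ fails, explicit.
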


\begin{IEEEproof}
    Assume both \(B_7\) and \(B_9\) are nonempty sets. Then, 
    by Lemma \ref{lem:E_k_odd}, we have that
$\bm{x}_{[j_3+1,j_{d_H}]}=\bm{y}_{[j_3,j_{d_H}-1]}$ and  $\bm{x}_{[j_1+1,j_{d_H-2}]}=\bm{x}_{[j_1,j_{d_H-2}-1]}$.
 Given \(d_H(\bm{x},\bm{y}) \geq 5\), it follows that \(j_{d_{H}-2} \geq j_3\), consequently, $\bm{x}_{[j_1+1,j_{d_H}]}=\bm{y}_{[j_1,j_{d_H}-1]}$, implying \(|\mathcal{D}(\bm{x}) \cap \mathcal{D}(\bm{y})| \geq 1\), which contradicts \(\bm{x},\bm{y}\in \C_1\) by Lemma \ref{lem:VT}.  
Thus, \(B_7\) and \(B_9\) cannot be nonempty simultaneously. By symmetry, changing the positions of $\bm{x}$ and $\bm{y}$, we can also obtain that \(B_8\) and \(B_{10}\) cannot be nonempty simultaneously.
\end{IEEEproof}

Lemmas \ref{lem:B11B9}, \ref{lem:B7B9} and Corollary \ref{cor:B11B9} show that some sets can not be nonempty at the same time under constraints about the Hamming distance between $\bm{x}$ and $\bm{y}$. In addition, the following Lemmas \ref{lem:B7B13} and \ref{lem:intersect}, and Corollaries \ref{cor:B7B13} and \ref{cor:intersect} suggest that there are common elements in some sets.  
  
\begin{lemma}\label{lem:B7B13}
For any two distinct sequences $\bm{x}, \bm{y} \in \C_1$, if $B_7$ and $B_{13}$ are nonempty, then $|B_7 \cup B_{13}| \leq 3$. Similarly, if $B_{10}$ and $B_{16}$ are nonempty, then $|B_{10} \cup B_{16}| \leq 3$.  
\end{lemma}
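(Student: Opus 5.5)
The approach is to compute $B_7$ and $B_{13}$ explicitly from Lemma \ref{lem:E_k} and show that they share at least one common element. Together with $|B_7|,|B_{13}|\le 2$ from Lemma \ref{lem:set size}, this gives $|B_7\cup B_{13}|\le 2+2-1=3$.

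\textbf{Step 1: computing $B_7$.} By Lemma \ref{lem:E_k}, $E_7=\{(j_3,j_{d_H})\}$ and $E_{13}=\{(j_2,j_{d_H})\}$, where $d_H\ge 4$. Since $B_7\neq\emptyset$, Lemma \ref{lem:E_k_odd} gives $\bm{x}_{[j_3+1,j_{d_H}]}=\bm{y}_{[j_3,j_{d_H}-1]}$. Put $\bm{u}=\bm{x}(j_3,0)$ and $\bm{v}=\bm{y}(j_{d_H},0)$.
\begin{itemize}
\item On positions $[1,j_3-1]$, $\bm{u}$ and $\bm{v}$ coincide with $\bm{x}$ and $\bm{y}$, so they differ exactly at $j_1$ and $j_2$.
\item On positions $[j_3,j_{d_H}-1]$ they agree, by the displayed equality.
\item On positions $[j_{d_H},n-1]$ both equal $\bm{b}$.
\end{itemize}
Hence $d_H(\bm{u},\bm{v})=2$. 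By Lemma \ref{lem:sub}, $B_7=\mathcal{S}(\bm{u},\bm{v})$ consists of $\bm{v}$ with position $j_1$ flipped and $\bm{v}$ with position $j_2$ flipped. In particular $\bm{z}^{*}\triangleq\bm{y}(j_{d_H},j_1)\in B_7$. We should double-check that this element really satisfies the index pattern of $B_7$. Indeed $\bm{z}^{*}=\bm{x}(j_3,j_2)$, with $e_x=j_2<j_3\le j_{d_H}$ and $e_y=j_1<j_3$.

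\textbf{Step 2: computing $B_{13}$.} Since $B_{13}\neq\emptyset$, Lemma \ref{lem:E_k_odd} gives $d_H(\bm{x}_{[j_2+1,j_{d_H}]},\bm{y}_{[j_2,j_{d_H}-1]})=1$. Put $\bm{u}'=\bm{x}(j_2,0)$, and keep $\bm{v}=\bm{y}(j_{d_H},0)$ as before.
\begin{itemize}
\item On $[1,j_2-1]$, $\bm{u}'$ and $\bm{v}$ differ only at $j_1$.
\item On the shifted window $[j_2,j_{d_H}-1]$ they differ at exactly one position $p$.
\item On the tail both equal $\bm{b}$.
\end{itemize}
So $d_H(\bm{u}',\bm{v})=2$, with differences at $j_1<p$. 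By Lemma \ref{lem:sub}, $B_{13}$ consists of $\bm{v}$ flipped at $j_1$ and $\bm{v}$ flipped at $p$. The first of these is again $\bm{z}^{*}=\bm{y}(j_{d_H},j_1)$, which also equals $\bm{x}(j_2,p+1)$ after re-indexing to positions of $\bm{x}$. This realization satisfies $e_y=j_1<d_x=j_2<e_x\le d_y=j_{d_H}$, which is the second index pattern in the definition of $B_{13}$. Thus $\bm{z}^{*}\in B_7\cap B_{13}$, and the first claim follows.

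\textbf{Step 3: the pair $B_{10},B_{16}$.} Replacing $(\bm{x},\bm{y})$ by $(R(\bm{y}),R(\bm{x}))$ maps the index patterns of $B_{10}$ and $B_{16}$ to those of $B_7$ and $B_{13}$. The code $\C_1$ is preserved under this operation up to relabeling the syndromes, and in any case only the conclusions of Lemma \ref{lem:VT}, namely $\mathcal{D}=\mathcal{S}=\emptyset$ and $d_H\ge4$, are used, and these are symmetric. Hence the same argument gives $|B_{10}\cup B_{16}|\le 3$.

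\textbf{Main obstacle.} The delicate point is the bookkeeping in Step 2: correctly locating the single mismatch $p$ inside the shifted window, and translating the common element back into a representation $\bm{x}(d_x,e_x)=\bm{y}(d_y,e_y)$ whose indices obey the strict and non-strict inequalities in the definitions of $B_7$ and $B_{13}$. This includes the boundary case $p=j_{d_H}-1$, where $e_x=d_y$ is allowed. Once the index pattern is verified, the counting is immediate.
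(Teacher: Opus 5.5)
Steps 1 and 2 are the paper's argument. The paper also places $B_7\subseteq\{\bm{x}(j_3,j_1),\bm{y}(j_{d_H},j_1)\}$ and $B_{13}\subseteq\{\bm{x}(j_2,j_1),\bm{y}(j_{d_H},j_1)\}$, notes the common element $\bm{y}(j_{d_H},j_1)$ (your $\bm{z}^{*}$, with the paper's $\lambda$ equal to your $p+1$), and concludes $|B_7\cup B_{13}|\le 3$. Your explicit check of the index patterns is correct and more careful than the paper's.

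Step 3 has a slip: the symmetry you chose does not map the second pair onto the first. Under $(\bm{x},\bm{y})\mapsto(R(\bm{y}),R(\bm{x}))$ you get $d_{x'}=n+1-d_y$, $e_{x'}=n+1-e_y$, $d_{y'}=n+1-d_x$, $e_{y'}=n+1-e_x$.
\begin{itemize}
\item The $B_{10}$ pattern becomes $e_{x'}<d_{y'}\le d_{x'}$ and $e_{y'}<d_{y'}\le d_{x'}$, which is the pattern of $B_8$.
\item Likewise $B_{16}$ goes to $B_{14}$.
\end{itemize}
That is the pair in Corollary~\ref{cor:B7B13}, not the pair you proved. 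The map you want is $(\bm{x},\bm{y})\mapsto(R(\bm{x}),R(\bm{y}))$, which is the one the paper uses; it sends $B_{10}$ to $B_7$ and $B_{16}$ to $B_{13}$.

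Your remark that $\C_1$ is preserved under reversal up to relabeling the syndromes is also false. We have $\mathrm{VT}^{1}(R(\bm{x}))=(n+1)\mathrm{wt}(\bm{x})-\mathrm{VT}^{1}(\bm{x})$. Codewords only agree in weight mod $4$, so $(n+1)\mathrm{wt}(\bm{x})$ is not constant mod $2n$. Your fallback justification is the right one, but the list of properties you use must also include $B_1=B_2=B_3=B_4=\emptyset$. In Step 2 you need $B_1=\emptyset$, through Lemma~\ref{lem:E_k}, to rule out $d_H(\bm{x}_{[j_2+1,j_{d_H}]},\bm{y}_{[j_2,j_{d_H}-1]})=0$, a case that Lemma~\ref{lem:E_k_odd} does not address. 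Reversal permutes these four sets ($B_1\leftrightarrow B_4$, $B_2\leftrightarrow B_3$), so with the corrected map the argument goes through.
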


By swapping $\bm{x}$ and $\bm{y}$ and using the conclusion of Lemma \ref{lem:B7B13}, we can obtain the following corollary.
\begin{corollary}\label{cor:B7B13}
 For any two distinct sequences $\bm{x}, \bm{y} \in \C_1$, if $B_8,B_{14}$ are nonempty, then $|B_8\cup B_{14}|\le 3$. Similarly, if $B_{9},B_{15}$ are nonempty, then $|B_{9}\cup B_{15}|\le 3$.   
\end{corollary}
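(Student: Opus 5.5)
The plan is to derive both statements of Corollary~\ref{cor:B7B13} from Lemma~\ref{lem:B7B13} by exchanging the roles of $\bm{x}$ and $\bm{y}$, which is legitimate because $\C_1$ contains both sequences and $\mathcal{B}(\bm{x},\bm{y})=\mathcal{B}(\bm{y},\bm{x})$. Set $\bm{x}'=\bm{y}$ and $\bm{y}'=\bm{x}$. An element $\bm{z}=\bm{x}(d_x,e_x)=\bm{y}(d_y,e_y)$ can then be written as $\bm{z}=\bm{x}'(d_{x'},e_{x'})=\bm{y}'(d_{y'},e_{y'})$ with $(d_{x'},e_{x'})=(d_y,e_y)$ and $(d_{y'},e_{y'})=(d_x,e_x)$. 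The Hamming positions $j_i$ are unchanged by the swap, and $\bm{a},\bm{b}$ are symmetric, so the only thing to do is track how each defining inequality of the sets $B_k$ is rewritten.

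\textbf{Step 1: $B_8$ and $B_{14}$.} The definition of $B_8$ is $e_x<d_y\le d_x$, $e_y<d_y\le d_x$, $e_x,e_y\ne0$. After the swap this becomes $e_{y'}<d_{x'}\le d_{y'}$ and $e_{x'}<d_{x'}\le d_{y'}$, which is exactly the definition of $B_7$ for $(\bm{x}',\bm{y}')$. Hence $B_8(\bm{x},\bm{y})=B_7(\bm{x}',\bm{y}')$.

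Next take $B_{14}$, given by $e_x<d_y<e_y\le d_x$ or $e_y<d_y\le e_x<d_x$. After the swap, the first alternative reads $e_{y'}<d_{x'}<e_{x'}\le d_{y'}$ and the second reads $e_{x'}<d_{x'}\le e_{y'}<d_{y'}$. These are precisely the two alternatives defining $B_{13}$ for $(\bm{x}',\bm{y}')$, taken in reverse order. So $B_{14}(\bm{x},\bm{y})=B_{13}(\bm{x}',\bm{y}')$, and Lemma~\ref{lem:B7B13} applied to $(\bm{x}',\bm{y}')$ gives $|B_8\cup B_{14}|\le3$.

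\textbf{Step 2: $B_9$ and $B_{15}$.} The same check shows that $B_9$ (defined by $d_x\le d_y<e_x,e_y$) swaps to $d_{y'}\le d_{x'}<e_{x'},e_{y'}$, which is $B_{10}(\bm{x}',\bm{y}')$. Likewise $B_{15}$, given by $d_x<e_x\le d_y<e_y$ or $d_x\le e_y<d_y<e_x$, swaps to $d_{y'}<e_{y'}\le d_{x'}<e_{x'}$ or $d_{y'}\le e_{x'}<d_{x'}<e_{y'}$, which is $B_{16}(\bm{x}',\bm{y}')$. The second half of Lemma~\ref{lem:B7B13} then gives $|B_9\cup B_{15}|\le3$.

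\textbf{Main obstacle.} The only delicate point is the bookkeeping of strict versus non-strict inequalities, together with the convention $e=0$ (for $B_8$ and $B_9$, the requirement $e_x,e_y\neq0$ is symmetric, so it carries over unchanged). I would verify each rewritten condition literally against the paper's definitions rather than rely on an informal pattern. If one inequality turned out not to match, I would fall back on composing the swap with reversal $R(\cdot)$, since the paper already uses reversal to relate $B_7\leftrightarrow B_{10}$ and $B_{13}\leftrightarrow B_{16}$.
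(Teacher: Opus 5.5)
Your proposal is correct and follows the paper's route, which also derives the corollary by swapping $\bm{x}$ and $\bm{y}$ and applying Lemma~\ref{lem:B7B13} to $(\bm{y},\bm{x})$. Your check that $B_8(\bm{x},\bm{y})=B_7(\bm{y},\bm{x})$, $B_{14}(\bm{x},\bm{y})=B_{13}(\bm{y},\bm{x})$, $B_9(\bm{x},\bm{y})=B_{10}(\bm{y},\bm{x})$ and $B_{15}(\bm{x},\bm{y})=B_{16}(\bm{y},\bm{x})$ holds against the literal definitions, so the reversal fallback is not needed.
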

\begin{lemma}\label{lem:intersect}
For any two distinct sequences $\bm{x}, \bm{y} \in \C_1$, let $I = \{5,7,9,13,15,17\}$. If $B_5$ is a nonempty set, then $B_{13},B_{15}$ and $B_{17}$ are nonempty. Moreover, we have that
$|\bigcup_{i\in I} B_i|\le 8,|\bigcup_{i\in I\setminus\{7\}} B_i|\le 7$, $|\bigcup_{i\in I\setminus\{9\}} B_i|\le 7$ and $|\bigcup_{i\in I\setminus\{7,9\}} B_i|\le 6$.
\end{lemma}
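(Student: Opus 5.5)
We work throughout with the explicit description of the deletion index pairs from Lemma \ref{lem:E_k}. Suppose $B_5\neq\emptyset$. Then $E_5=\{(j_1,j_{d_H})\}$, and by Lemma \ref{lem:E_k_odd} this means $d_H(\bm{x}_{[j_1+1,j_{d_H}]},\bm{y}_{[j_1,j_{d_H}-1]})=1$. Call the unique mismatch position $p\in[j_1+1,j_{d_H}]$. We decompose the shifted Hamming distance as a sum over the consecutive blocks $[j_i+1,j_{i+1}]$, exactly as in the proof of Corollary \ref{cor:run_number}. So the single mismatch lies in one block $[j_r+1,j_{r+1}]$, and every other block contributes $0$.

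\emph{Nonemptiness of $B_{13}$ and $B_{15}$.} For $B_{13}$ we need $d_H(\bm{x}_{[j_2+1,j_{d_H}]},\bm{y}_{[j_2,j_{d_H}-1]})=1$. Dropping the first block from the sum gives distance $1$ if $r\ge 2$ and distance $0$ if $r=1$. The case $r=1$ would give $E_1\neq\emptyset$, which is impossible for $\bm{x},\bm{y}\in\C_1$ by Lemma \ref{lem:VT}. So the distance is $1$ and $E_{13}=\{(j_2,j_{d_H})\}$ by Lemma \ref{lem:E_k_odd}. The same argument with the last block dropped, using that $E_3=\emptyset$, gives $E_{15}=\{(j_1,j_{d_H-1})\}$.

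\emph{Nonemptiness of $B_{17}$.} Since the distance equals $1$, Lemma \ref{lem:E_k_odd} gives $E_{17}\subseteq\{(k_1-1,j_{d_H}),(j_1,k_1')\}$. To show $E_{17}\ne\emptyset$ we must exhibit one element. Starting from $\bm{x}(j_1,0)$ and $\bm{y}(j_{d_H},0)$, which differ only at the position coming from $p$, we would substitute at the mismatch on one side and move the deletion in $\bm{y}$ past $j_{d_H}$ (or the deletion in $\bm{x}$ before $j_1$) along the run containing it. This produces a common element with $d_x<e_x\le d_y$ and $d_x\le e_y<d_y$. We expect this to need a direct construction: pick $e_x$ and $e_y$ both equal to the position corresponding to $p$, adjusted by the shift.

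\emph{Counting.} By Lemma \ref{lem:set size}, $|B_5|,|B_7|,|B_9|,|B_{13}|,|B_{15}|\le 2$ and $|B_{17}|\le 4$, for a naive total of $14$. The bounds $8,7,7,6$ must therefore come from overlaps. First, Corollary \ref{cor:B7B13} and Lemma \ref{lem:B7B13} give $|B_7\cup B_{13}|\le 3$ and $|B_9\cup B_{15}|\le 3$ when both sets are nonempty. Second, we show that the two elements of $B_5$, namely $\bm{z}=\bm{x}(j_1,0)$ with one bit flipped at $p-1$, and $\bm{y}(j_{d_H},0)$ flipped at $p-1$, coincide with elements of $B_{13}$, $B_{15}$ or $B_{17}$. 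This should hold because these strings are also obtained by deleting at $j_2$ (respectively $j_{d_H-1}$) when the intervening blocks agree under the shift. Concretely, we will write out $\mathcal{S}(\bm{x}(d_x,0),\bm{y}(d_y,0))$ for each $(d_x,d_y)$ via Lemma \ref{lem:sub}, observe that the shifted segments agree away from $p$, and identify equal strings. The target is that $B_5\cup B_{13}\cup B_{15}\cup B_{17}$ has at most $6$ elements (for instance $B_5\subseteq B_{13}\cup B_{15}$ and $|B_{17}|\le 2$ up to overlap). Adding $B_7$ or $B_9$ then costs at most one new element each, by the $\le 3$ unions, which yields $7$ and $8$.

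The main obstacle is this identification of overlaps among $B_5,B_{13},B_{15},B_{17}$. It requires tracking precisely where the single mismatch $p$ lies relative to $j_2$ and $j_{d_H-1}$, and splitting into cases $r=1$, $r=d_H-1$, or interior. We would try the interior case first and reduce the boundary cases to it by reversal (replacing $\bm{x},\bm{y}$ by $R(\bm{x}),R(\bm{y})$).
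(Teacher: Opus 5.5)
There is a genuine gap. Your proof of the nonemptiness of $B_{13}$ and $B_{15}$ is correct and is the paper's argument. Two other parts fail.

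\textbf{The $B_{17}$ clause.} This step cannot be completed, because that clause of the statement is false. Sliding a deletion inside the run that contains it does not change $\bm{x}(d_x,0)$ or $\bm{y}(d_y,0)$. Membership in $E_{17}$ requires $d_x\le j_1$, $d_y\ge j_{d_H}$ and $d_H(\bm{x}_{[d_x+1,d_y]},\bm{y}_{[d_x,d_y-1]})=2$. Given $B_5\neq\emptyset$, this distance equals $1$ plus the number of $i\in[d_x+1,j_1]$ with $x_i\neq x_{i-1}$ plus the number of $i\in[j_{d_H}+1,d_y]$ with $y_i\neq y_{i-1}$. So $B_{17}\neq\emptyset$ if and only if $\bm{x}_{[1,j_1]}$ or $\bm{y}_{[j_{d_H},n]}$ has at least two runs.

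A counterexample is $\bm{x}=0^k0110\,1^m$, $\bm{y}=0^k1001\,1^m$. These have equal weight and equal $\mathrm{VT}^1$, so both lie in one $\C_1$, and both of the strings above are constant. For $n=4$ one gets $B_5=\{110,100\}$, but direct enumeration gives $B_{17}=\emptyset$.

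The paper's own proof simply asserts $B_{17}\neq\emptyset$, and later $B_5\subseteq B_{17}$; both fail in this example. The cardinality bounds survive. Your intended route through $B_5\subseteq B_{13}\cup B_{15}$, rather than the paper's $B_5\subseteq B_{17}$, is the one that still works.

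\textbf{The counting.} The counting is the real content of the lemma, and you only announce it (``should hold'', ``we will write out''). What is missing is the explicit computation via Observation \ref{obs:E_k}. Since $B_1=B_3=\emptyset$, the unique $\lambda\in[j_1+1,j_{d_H}]$ with $x_\lambda\neq y_{\lambda-1}$ lies in $[j_2+1,j_{d_H-1}]$. So your cases $r=1$ and $r=d_H-1$ never occur, and no reversal is needed. One then gets
\[
B_5=\{\bm{x}(j_1,\lambda),\bm{y}(j_{d_H},\lambda-1)\},\quad B_{13}\subseteq\{\bm{y}(j_{d_H},\lambda-1),\bm{y}(j_{d_H},j_1)\},\quad B_{15}\subseteq\{\bm{x}(j_1,\lambda),\bm{x}(j_1,j_{d_H})\},
\]
with $\bm{x}(j_1,\lambda)=\bm{y}(j_{d_H},0)$ and $\bm{y}(j_{d_H},\lambda-1)=\bm{x}(j_1,0)$. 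Your intuition that these strings also arise by deleting at $j_2$ or $j_{d_H-1}$ is exactly this identification.

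Each of the at most two pairs $(k_1-1,j_{d_H})$ and $(j_1,k_1')$ in $E_{17}$ contributes one of the two $B_5$ strings plus one new string. Hence $B_5\cup B_{13}\cup B_{15}\cup B_{17}$ lies in a fixed set of six strings. Further, $B_7\subseteq\{\bm{y}(j_{d_H},j_1),\bm{y}(j_{d_H},j_2)\}$ and $B_9\subseteq\{\bm{x}(j_1,j_{d_H}),\bm{x}(j_1,j_{d_H-1})\}$ each add at most one more string. This gives the bounds $8,7,7,6$.

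If you instead invoke Lemma \ref{lem:B7B13} for the last step, note that it yields ``at most one new element'' only because $|B_{13}|=|B_{15}|=2$ exactly. These are $\mathcal{S}$-sets of pairs at Hamming distance $2$, and you would need to state this. Without these identifications the bounds are not proved.
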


By considering $R(\bm{x})$ and $R(\bm{y})$ and using the conclusion of Lemma \ref{lem:intersect}, we immediately obtain the following corollary.  

\begin{corollary}\label{cor:intersect}
For any two distinct sequences $\bm{x}, \bm{y} \in \C_1$, let $I = \{6,8,10,14,16,18\}$. If $B_6$ is a nonempty set, then $B_{14},B_{16}$ and $B_{18}$ are nonempty. Moreover, we have that $|\bigcup_{i\in I} B_i|\le 8,|\bigcup_{i\in I\setminus\{8\}} B_i|\le 7$,  $|\bigcup_{i\in I\setminus\{10\}} B_i|\le 7$ and $|\bigcup_{i\in I\setminus\{8,10\}} B_i|\le 6$.
\end{corollary}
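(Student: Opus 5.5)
The statement is Corollary~\ref{cor:intersect}. I would derive it from Lemma~\ref{lem:intersect} by applying the reversal map, as the sentence before it suggests. The key step is to track how each set $B_k$ and each deletion/substitution index transforms under $\bm{x}\mapsto R(\bm{x})$.

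\textbf{Step 1: the code is closed under reversal.} Since $\bm{x}\in\C_1$, the weight condition is preserved. We also have $\mathrm{VT}^1(R(\bm{x}))=(n+1)\mathrm{wt}(\bm{x})-\mathrm{VT}^1(\bm{x})$. So $R(\C_1)$ is again a code of the same form, with some residues $s_0',s_1'$. The hypotheses of Lemma~\ref{lem:intersect} only use membership in such a code through Lemma~\ref{lem:VT} and the lemmas built on it. Hence Lemma~\ref{lem:intersect} applies verbatim to the pair $R(\bm{x}),R(\bm{y})$.

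\textbf{Step 2: reversal exchanges the two families of sets.} Reversal sends index $i$ to $n+1-i$. If $\bm{z}=\bm{x}(d_x,e_x)=\bm{y}(d_y,e_y)$, then
\[
R(\bm{z})=R(\bm{x})(n+1-d_x,\,n+1-e_x)=R(\bm{y})(n+1-d_y,\,n+1-e_y),
\]
where we must be careful with the length-$(n-1)$ indexing of the substitution after deletion. The paper's convention performs the substitution on the original indices, so the formula is exact.

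This map reverses every inequality among $d_x,d_y,e_x,e_y$. Checking the definitions:
\begin{itemize}
\item $e_y=0,\ d_x<e_x\le d_y$ becomes $d_y'\le e_x'<d_x'$, so $B_5$ maps to $B_6$.
\item $e_x<d_x\le d_y$ becomes $d_y'\le d_x'<e_x'$, so $B_7$ maps to $B_{10}$.
\item Similarly $B_9$ maps to $B_8$, $B_{13}$ to $B_{16}$, $B_{15}$ to $B_{14}$, and $B_{17}$ to $B_{18}$.
\end{itemize}
Thus the index set $\{5,7,9,13,15,17\}$ for $(R(\bm{x}),R(\bm{y}))$ corresponds bijectively, through $\bm{z}\mapsto R(\bm{z})$, to $\{6,10,8,16,14,18\}$ for $(\bm{x},\bm{y})$. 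The boundary cases, where non-strict inequalities become their mirror non-strict inequalities, need a brief check against the exact definitions. This bookkeeping is the only real obstacle. If a boundary mismatch appears, I would instead swap $\bm{x}$ and $\bm{y}$ in addition to reversing, and pick whichever composite matches the definitions.

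\textbf{Step 3: transfer the bounds.} Since $\bm{z}\mapsto R(\bm{z})$ is a bijection, each union in Lemma~\ref{lem:intersect} has the same cardinality as its image. Nonemptiness of $B_6$ for $(\bm{x},\bm{y})$ is nonemptiness of $B_5$ for the reversed pair. That gives nonemptiness of $B_{13},B_{15},B_{17}$ for the reversed pair, hence of $B_{16},B_{14},B_{18}$ for $(\bm{x},\bm{y})$.

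Removing $7$ corresponds to removing $10$, and removing $9$ corresponds to removing $8$. Therefore the bounds $8,7,7,6$ become the stated bounds for $I\setminus\{10\}$, $I\setminus\{8\}$, and $I\setminus\{8,10\}$. This is exactly the corollary, with the two middle bounds listed in the other order.
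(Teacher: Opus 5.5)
Your route is the one the paper takes: its entire justification is the remark that one applies Lemma~\ref{lem:intersect} to $R(\bm{x}),R(\bm{y})$. Your Step~2 bookkeeping is correct. Keep $e=0$ fixed (rather than sending it to $n+1$) and map every other index $i\mapsto n+1-i$. Then the defining inequalities of $B_5,B_7,B_9,B_{13},B_{15},B_{17}$ become exactly, strictness included, those of $B_6,B_{10},B_8,B_{16},B_{14},B_{18}$. So no extra swap of $\bm{x}$ and $\bm{y}$ is needed, and Step~3 goes through as you describe.

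The problem is Step~1: $R(\C_1)$ is \emph{not} a code of the same form. By your own identity, $\mathrm{VT}^1(R(\bm{x}))\equiv (n+1)\mathrm{wt}(\bm{x})-s_1 \pmod{2n}$, but $\C_1$ fixes $\mathrm{wt}(\bm{x})$ only modulo $4$. Two codewords whose weights differ by $4$ have reversed syndromes differing by $4(n+1)\equiv 4 \pmod{2n}$, which is nonzero for $n\ge 3$. So the claim that Lemma~\ref{lem:intersect} applies verbatim is not justified by what you wrote. Either of two patches repairs it.

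(a) Since $B_6\neq\emptyset$, some $\bm{z}=\bm{x}(d_x,e_x)=\bm{y}(d_y,e_y)$ exists. Write $\mathrm{wt}(\bm{z})=\mathrm{wt}(\bm{x})-x_{d_x}+\sigma_x$, where $\sigma_x\in\{-1,0,1\}$ is the weight change from the substitution, and similarly for $\bm{y}$. Then $|\mathrm{wt}(\bm{x})-\mathrm{wt}(\bm{y})|\le 3$, so the mod-$4$ constraint forces equal weights. Hence $R(\bm{x}),R(\bm{y})$ both lie in the code $\C_1$ with residues $s_0$ and $(n+1)\mathrm{wt}(\bm{x})-s_1 \bmod 2n$.

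(b) The proofs of Lemmas~\ref{lem:E_k} and~\ref{lem:intersect} use $\C_1$ only through the pair properties from Lemma~\ref{lem:VT}: $\mathcal{D}(\bm{x},\bm{y})=\mathcal{S}(\bm{x},\bm{y})=\emptyset$, $d_H\ge 4$, and $B_1=B_2=B_3=B_4=\emptyset$. These are invariant under simultaneous reversal, which preserves $d_H$ and the emptiness of both intersections, and permutes $B_1\leftrightarrow B_4$, $B_2\leftrightarrow B_3$.

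With either patch your argument is complete.
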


The proofs of Lemmas \ref{lem:B11B9}, \ref{lem:B7B13} and \ref{lem:intersect} will be provided in the Appendix. Then we will use these lemmas to prove Theorems \ref{th:B5_nonempty}, \ref{th:B11_nonempty}, \ref{th:B7_nonempty} and Lemmas \ref{lem:B12_empty} and \ref{lem:B16_empty}.

\subsection{The proof of Theorem \ref{th:B5_nonempty}}

\begin{IEEEproof}
The conclusion is proved only for the case where $B_5$ is nonempty. The case where $B_6$ is nonempty follows similarly by considering $R(\bm{x})$ and $R(\bm{y})$.

Let $\bm{x}_i\triangleq \bm{x}_{[n]\setminus i}$ for $i\in [n]$ and define $I\triangleq \{5,6,7,8,9,10,13,14,15,16,17,18\}$.  
Since $\bm{x}, \bm{y} \in \C_1$, it follows by Lemma \ref{lem:VT} that $B_1,B_2,B_3$ and $B_4$ are empty and $d_H\ge 4$. Then from Lemma \ref{lem:E_k_odd}, we have that $d_H(\bm{x}_{[j_2+1,j_{d_H}]},\bm{y}_{[j_2,j_{d_H}-1]})\ge 1$ and $d_H(\bm{x}_{[j_1+1,j_{d_H-1}]},\bm{y}_{[j_1,j_{d_H-1}-1]})\ge 1$. Moreover, since $B_5$ is nonempty, we also have $d_H(\bm{x}_{[j_1+1,j_{d_H}]},\bm{y}_{[j_1,j_{d_H}-1]})= 1$, which implies that $d_H(\bm{x}_{[j_2+1,j_{d_H-1}]},\bm{y}_{[j_2,j_{d_H-1}-1]})= 1$. Thus, there exists $2 \leq t \leq d_H-2$ such that $d_H(\bm{x}_{[j_t+1,j_{t+1}]},\bm{y}_{[j_t,j_{t+1}-1]})= 1$. Consequently, we have $x_{j_t}=x_{j_{t+1}}$, $y_{j_t}=y_{j_{t+1}}$, and $r(\bm{x}_{[j_t+1,j_{t+1}-1]})\leq 2$  according to Lemma \ref{lem:run_number} and Remark \ref{rem:run_number}. In addition, we also have $\bm{x}_{[j_1+1,j_t]}=\bm{y}_{[j_1,j_t-1]}$ and $\bm{x}_{[j_{t+1}+1,j_{d_H}]}=\bm{y}_{[j_{t+1},j_{d_H}-1]}$, and $d_H(\bm{x}_{[j_t,j_{t+1}-1]},\bm{y}_{[j_t+1,j_{t+1}]})\ge 1$ is odd. 

Let 
\begin{align*}
   &\bm{c} \triangleq \bm{x}_{[j_1,j_t]}, \quad \bm{d}\triangleq \bm{y}_{[j_1,j_t]}, 
\end{align*}
\begin{align*}
   \bm{w} \triangleq &\bm{x}_{[j_{t+1},j_{d_H}]}, \quad \bm{v}\triangleq \bm{y}_{[j_{t+1},j_{d_H}]},
\end{align*}
\begin{align*}
   \bm{u} \triangleq &\bm{x}_{[j_t+1,j_{t+1}-1]}=\bm{y}_{[j_t+1,j_{t+1}-1]}.
\end{align*}
Then, the structures of $\bm{x}$ and $\bm{y}$ can be rewritten as
\begin{align*}
   \bm{x} = \bm{a}\bm{c} \bm{u}\bm{w}\bm{b}, \quad \bm{y} = \bm{a} \bm{d}\bm{u}\bm{v}\bm{b}. 
\end{align*}
In addition, let $m\triangleq |\bm{a}|$, $l_1\triangleq|\bm{c}|=j_t-j_1+1$, $l_2\triangleq |\bm{u}|$, and $l_3\triangleq |\bm{w}|=j_{d_H}-j_{t+1}+1$. Then, according to the analysis above, we have $l_1,l_3\ge 2$ and $c_{l_1}=w_1,d_{l_1}=v_1=\overline{c_{l_1}}=\overline{w_1}$ and $r(\bm{u})\le 2$. Moreover,
\begin{align*}
    \bm{c}_{[2,l_1]}=\bm{d}_{[1,l_1-1]},\quad  \bm{w}_{[2,l_3]}=\bm{v}_{[1,l_3-1]},\quad d_H(\bm{u}w_1,d_{l_1}\bm{u})=1,\quad d_H(c_{l_1}\bm{u},\bm{u}v_1)\ge 1.
\end{align*} 
Furthermore, since $d_H\geq 4$, we can obtain that
\begin{align*}
  d_H(\bm{x}_{j_{d_H-1}},\bm{y}_{j_2})=2+d_H(c_{l_1}\bm{u},\bm{u}v_1)\ge 3.
\end{align*}
Then, from Lemmas \ref{lem:sub} and \ref{lem:E_k}, and Corollary \ref{cor:B11B9}, we have that $B_{11}$ and $B_{12}$ are empty.  
Consequently, in this case, we can derive   
\[
\mathcal{B}(\bm{x},\bm{y})=\bigcup_{i\in I} B_i.
\]  

Next, we will prove this lemma based on the sizes of $d_H(\bm{c}_{[1,l_1-1]},\bm{d}_{[2,l_1]})$ and $d_H(\bm{w}_{[1,l_3-1]},\bm{v}_{[2,l_3]})$. 

\begin{enumerate}
    \item If $d_H(\bm{c}_{[1,l_1-1]},\bm{d}_{[2,l_1]})=0$ and $d_H(\bm{w}_{[1,l_3-1]},\bm{v}_{[2,l_3]})=0$, i.e., $\bm{c}_{[1,l_1-1]}=\bm{d}_{[2,l_1]}$ and $\bm{w}_{[1,l_3-1]}=\bm{v}_{[2,l_3]}$. Then combining with $\bm{c}_{[2,l_1]}=\bm{d}_{[1,l_1-1]}$ and $\bm{w}_{[2,l_3]}=\bm{v}_{[1,l_3-1]}$, we can obtain that $\bm{c}$ and $\bm{d}$ are alternating sequences by Lemma \ref{lem:period}. Similarly,  $\bm{w}$ and $\bm{v}$ are alternating sequences. Thus, the structures of $\bm{x}$ and $\bm{y}$ can be written as  
 \[
\bm{x}=\bm{a}\bm{c} \bm{u}\bm{w}\bm{b}, \quad \bm{y}=\bm{a} \overline{\bm{c}}\bm{u}\overline{\bm{w}}\bm{b},
\]
where $\bm{c}$ and $\bm{w}$ are alternating sequences of length at least two. In this case, $\tilde{\bm{x}}=\bm{c} \bm{u}\bm{w}$ and $\tilde{\bm{y}}=\overline{\bm{c}}\bm{u}\overline{\bm{w}}$ consist of at most four $^{\le}2$-periodic sequences.  
 Setting $l\triangleq l_1+l_2+l_3$, it is clear that $m+l=j_{d_H}$. Next, we will prove 
$|\mathcal{B}(\bm{x},\bm{y})| \le 13$.

\begin{itemize}
\item If $r(\bm{u})=2$.  
From Remark \ref{rem:run_number}, we can obtain that 
\[
d_H(c_{l_1}\bm{u},\bm{u}\overline{w_1})=3.
\]
Consequently, for any indices $i\ge j_{d_H-1}$ and $j\le j_2$, we have  
\[
d_H(\bm{x}_i,\bm{y}_j) \ge d_H(c_{l_1}\bm{u},\bm{u}\overline{w_1})=3.
\]
Thus, by Lemmas \ref{lem:sub} and \ref{lem:E_k}, the sets  $B_6,B_{14},B_{16}$, and $B_{18}$ are empty. Applying Lemmas \ref{lem:set size} and \ref{lem:intersect}, we conclude  that
\[
|\mathcal{B}(\bm{x},\bm{y})|\le 8+2\times2=12.
\]

\item If $r(\bm{u})\le 1$ and $l_1\ge 3$. Then, from Remark \ref{rem:run_number}, we know that  $d_H(c_{l_1}\bm{u},\bm{u}\overline{w_1})=1.$ Furthermore, since $c_{l_1}=w_1$ and $\overline{c_{l_1}}=\overline{w_1}$, then either $c_{l_1}\bm{u}w_1$ or $\overline{c_{l_1}}\bm{u}\overline{w_1}$ forms a run. Without loss of generality, assume $c_{l_1}\bm{u}w_1$ is a run (if $\overline{c_{l_1}}\bm{u}\overline{w_1}$ is a run, swapping $\bm{x}$ and $\bm{y}$, we can also prove the conclusion).  This implies 
\begin{align*}
  d_H(\bm{x}_{j_{d_H}},\bm{y}_{j_1})= d_H(\bm{c}\bm{u}\bm{w}_{[1,l_3-1]},\overline{\bm{c}}_{[2,l_1]}\bm{u}\overline{\bm{w}})=d_H(c_{l_1}\bm{u},\bm{u}\overline{w_1})=1,  
\end{align*}
 making $B_6$ nonempty. Since $l_1\ge 3$, we further derive that 
\begin{align*}
 d_H(\bm{x}_{j_3},\bm{y}_{j_{d_H}})=2+ d_H(\bm{u}w_1,\overline{c_{l_1}}\bm{u})=3,   
\end{align*}
\begin{align*}
  d_H(\bm{x}_{j_{d_H}},\bm{y}_{j_3})=2+ d_H(c_{l_1}\bm{u},\bm{u}\overline{w_1})=3.  
\end{align*}
By Lemmas \ref{lem:sub} and \ref{lem:E_k}, we have that $B_7$ and $B_8$ are empty. Thus, when $l_1\ge 3$ and $r(\bm{u})\le 1$, we can obtain
\begin{align*}
    \mathcal{B}(\bm{x},\bm{y})=\bigcup_{i\in I\setminus\{7,8\}} B_i.
\end{align*}
Similarly, by considering $R(\bm{x})$ and $R(\bm{y})$, we have that $B_9$ and $B_{10}$ are empty if $l_3\ge 3$ in this case. Consequently, from Lemma \ref{lem:intersect} and Corollary \ref{cor:intersect}, we can obtain 
\[
|\mathcal{B}(\bm{x},\bm{y})|\le 6+6=12.
\]
If $l_3=2$, in this case, Lemma \ref{lem:E_k} gives $E_{10}\subseteq\{(m+l_1,m+1)\}$ and $E_{16}\subseteq\{(m+l_1+l_2+1,m+1)\}$. Consequently, $B_{10}=B_{16}$ since $c_{l_1}\bm{u}w_1$ is a run. Then also from Lemma \ref{lem:intersect} and Corollary \ref{cor:intersect}, we have
\[
|\mathcal{B}(\bm{x},\bm{y})|\le 7+6=13.
\]
Thus, when $l_1\ge 3$ and $r(\bm{u})\le 1$, we have $|\mathcal{B}(\bm{x},\bm{y})|\le 13$.

\item If $r(\bm{u})\le 1$ and $l_3\ge 3$.  
By considering $R(\bm{x})$ and $R(\bm{y})$, and using the conclusion of previous case, we can also obtain that
$|\mathcal{B}(\bm{x},\bm{y})|\le 13$.

\item If $r(\bm{u})\le 1$ and $l_1=l_3=2$. 
Here,  
\[
\bm{x}=\bm{a}c_1c_2\bm{u}w_1w_2\bm{b}, \quad \bm{y}=\bm{a}\overline{c_1}\overline{c_2}\bm{u}\overline{w_1}\overline{w_2}\bm{b}.
\]
Similar to the case $r(\bm{u})\le 1$ and $l_1\ge 3$, we assume $c_2\bm{u}w_1$ forms a run. Then, $B_6$ is nonempty, and we can obtain $B_{10}=B_{16}$. By considering $R(\bm{x})$ and $R(\bm{y})$, we also have $B_7=B_{13}$.  Thus, when $l_1=l_3=2$ and $r(\bm{u})\le 1$, we have
\begin{align*}
    \mathcal{B}(\bm{x},\bm{y})=\bigcup_{i\in I\setminus\{7,10\}} B_i.
\end{align*}

From Lemma \ref{lem:E_k},  we can obtain that
\[
E_8\subseteq\{(m+l,m+l-1)\}, \quad E_9\subseteq\{(m+1,m+2)\}.
\]
It follows by Lemma \ref{lem:sub} that
\begin{align*}
 B_8\subseteq\{\bm{a}c_1\overline{c_2}\bm{u}w_1\bm{b},\bm{a}\overline{c_1}c_2\bm{u}w_1\bm{b}\},   
\end{align*}
\begin{align*}
 B_9\subseteq\{\bm{a}c_2\bm{u}w_1\overline{w_2}\bm{b},\bm{a}c_2\bm{u}\overline{w_1}w_2\bm{b}\}.   
\end{align*}
Clearly,  
\[
B_8\cap B_9=\{\bm{a}\overline{c_1}c_2\bm{u}w_1\bm{b}\}=\{\bm{a}c_2\bm{u}w_1\overline{w_2}\bm{b}\}.
\]
Applying Lemma \ref{lem:intersect} and Corollary \ref{cor:intersect}, we conclude  
\[
|\mathcal{B}(\bm{x},\bm{y})|\le 7+7-1=13.
\]
\end{itemize}
In conclusion, we have $|\mathcal{B}(\bm{x},\bm{y})|\le 13$ when $d_H(\bm{x}_{[j_1,j_t-1]},\bm{y}_{[j_1+1,j_t]})=0$ and $d_H(\bm{x}_{[j_{t+1},j_{d_H}-1]},\bm{y}_{[j_{t+1}+1,j_{d_H}]})=0$.

\item If $d_H(\bm{c}_{[1,l_1-1]},\bm{d}_{[2,l_1]})=0$ and $d_H(\bm{w}_{[1,l_3-1]},\bm{v}_{[2,l_3]})\neq 0$, i.e., $\bm{c}_{[1,l_1-1]}=\bm{d}_{[2,l_1]}$. Then combining with $\bm{c}_{[2,l_1]}=\bm{d}_{[1,l_1-1]}$, we can obtain that $\bm{c}$ and $\bm{d}$ are alternating sequences by Lemma \ref{lem:period}. Thus, the structures of $\bm{x}$ and $\bm{y}$ in this case are given by
\[\bm{x} = \bm{a}\bm{c} \bm{u}\bm{w}\bm{b}, \quad \bm{y} = \bm{a} \overline{\bm{c}}\bm{u}\bm{v}\bm{b},\]
where $\bm{c}$ is an alternating sequence of length at least two.  
Since $d_H(\bm{w}_{[2,l_3]},\bm{v}_{[1,l_3-1]})=0$ and $d_H(\bm{w}_{[1,l_3-1]},\bm{v}_{[2,l_3]})\neq 0$, and $d_H(\bm{u}w_1,\overline{c_{l_1}}\bm{u})=1$, it follows by Lemma \ref{lem:run_number} and Corollary \ref{cor:run_number} that
\[ d_H(\bm{w}_{[1,l_3-1]}, \bm{v}_{[2,l_3]}) \geq 2, \quad d_H(c_{l_1}\bm{u}, \bm{u}v_1) \geq 1. \]
Consequently, we can deduce that 
\begin{align*}
 d_H(\bm{x}_{j_{d_H}},\bm{y}_{j_3}) \ge 2 + d_H(\bm{w}_{[1,l_3-1]}, \bm{v}_{[2,l_3]}) \geq 4,   
\end{align*}
\begin{align*}
d_H(\bm{x}_{j_{d_H}},\bm{y}_{j_2}) = 1 + d_H(c_{l_1}\bm{u}, \bm{u}v_1) + d_H(\bm{w}_{[1,l_3-1]}, \bm{v}_{[2,l_3]}) \geq 4,  
\end{align*}
\begin{align*}
d_H(\bm{x}_{j_{d_H}},\bm{y}_{j_1}) = d_H(c_{l_1}\bm{u}, \bm{u}v_1) + d_H(\bm{w}_{[1,l_3-1]}, \bm{v}_{[2,l_3]}) \geq 3. 
\end{align*}
Then, by Lemmas \ref{lem:sub} and \ref{lem:E_k}, we know that $B_6,B_8,B_{14}$ and $B_{18}$ are  empty. In this case, let $I_1\triangleq I\setminus\{6,8,14,18\}=\{5,7,9,10,13,15,16,17\}$, then, we have
 \[
\mathcal{B}(\bm{x},\bm{y})=\bigcup_{i\in I_1} B_i.
\]  
From Lemmas \ref{lem:B7B13}  and \ref{lem:intersect}, we can obtain that
    \[ |\mathcal{B}(\bm{x},\bm{y})| \leq 7 + 3 = 10. \]
In conclusion, we have $|\mathcal{B}(\bm{x},\bm{y})|\le 10$ when $d_H(\bm{x}_{[j_1,j_t-1]},\bm{y}_{[j_1+1,j_t]})=0$ and $d_H(\bm{x}_{[j_{t+1},j_{d_H}-1]},\bm{y}_{[j_{t+1}+1,j_{d_H}]})\ge 0$.

 \item If $d_H(\bm{c}_{[1,l_1-1]},\bm{d}_{[2,l_1]})\neq 0$ and $d_H(\bm{w}_{[1,l_3-1]},\bm{v}_{[2,l_3]})=0$.  
By symmetry, considering $R(\bm{y})$ and $R(\bm{x})$ instead of $\bm{x}$ and $\bm{y}$, and using the conclusion of case 2), we also have  
$|\mathcal{B}(\bm{x},\bm{y})|\le 10$.

\item If $d_H(\bm{c}_{[1,l_1-1]},\bm{d}_{[2,l_1]})\neq 0$ and $d_H(\bm{w}_{[1,l_3-1]},\bm{v}_{[2,l_3]})\neq 0$. Since $d_H(\bm{w}_{[2,l_3]},\bm{v}_{[1,l_3-1]})=0$, $d_H(\bm{w}_{[1,l_3-1]},\bm{v}_{[2,l_3]})\neq 0$, and $d_H(\bm{u}w_1,\overline{c_{l_1}}\bm{u})=1$, it follows by Lemma \ref{lem:run_number} and Corollary \ref{cor:run_number} that
\[ d_H(\bm{w}_{[1,l_3-1]}, \bm{v}_{[2,l_3]}) \geq 2, \quad d_H(c_{l_1}\bm{u}, \bm{u}v_1) \geq 1. \] 
By the conclusion of case 2), we can also obtain that
 $|\mathcal{B}(\bm{x},\bm{y})| \leq 10$. 

     \end{enumerate}
Then the proof of this lemma is completed.
\end{IEEEproof}

\subsection{The proof of Theorem \ref{th:B11_nonempty}}

\begin{IEEEproof}
The conclusion is proved only for the case where $B_{11}$ is nonempty. The case where $B_{12}$ is nonempty follows similarly by considering $R(\bm{x})$ and $R(\bm{y})$.

Let $\bm{x}_i\triangleq \bm{x}_{[n]\setminus i}$ for $i\in [n]$ and define $I\triangleq \{8,10,11,12,13,14,15,16,17,18\}$.  
Since $\bm{x}, \bm{y} \in \C_1$ and $B_{11}$ is nonempty, it follows by Lemmas \ref{lem:VT} and \ref{lem:B11B9} that $B_1,B_2,B_3,B_4,B_7$ and $B_9$ are empty, and $d_H\ge 4$. Thus, in this case, we have 
 \begin{align*}
     \mathcal{B}(\bm{x},\bm{y})=\bigcup_{i\in I} B_i.
 \end{align*}

   From Lemma \ref{lem:E_k_odd}, we know that $\bm{x}_{[j_2+1,j_{d_H-1}]}=\bm{y}_{[j_2,j_{d_H-1}-1]}$ when $B_{11}$ is nonempty. Next, we define
\begin{align*}
  d_1 \triangleq d_{H}(\bm{x}_{[j_{d_H-1}+1,j_{d_H}]},\bm{y}_{[j_{d_H-1},j_{d_H}-1]}),~~
    d_2 \triangleq d_{H}(\bm{x}_{[j_1+1,j_2]},\bm{y}_{[j_1,j_2-1]}),  
\end{align*}
\begin{align*}
  d_3\triangleq d_H(\bm{x}_{[j_{d_H-1},j_{d_H}-1]},\bm{y}_{[j_{d_H-1}+1,j_{d_H}]}),~~
    d_4\triangleq d_H(\bm{x}_{[j_1,j_2-1]},\bm{y}_{[j_1+1,j_2]}).  
\end{align*}

 Then by Lemma \ref{lem:E_k_odd}, we can obtain that 
 \begin{align*}
   d_{H}(\bm{x}_{[j_{2}+1,j_{d_H}]},\bm{y}_{[j_2,j_{d_H}-1]})=d_1\ge 1,   
 \end{align*}
 \begin{align*}
    d_{H}(\bm{x}_{[j_{1}+1,j_{d_H-1}]},\bm{y}_{[j_1,j_{d_H-1}-1]})=d_2\ge 1,  
 \end{align*}
  since $B_1$ and $B_3$ are empty.  
\begin{itemize}
\item If $B_{12}$ is a nonempty set. Then we have $\bm{x}_{[j_2,j_{d_H-1}-1]}=\bm{y}_{[j_2+1,j_{d_H-1}]}$ from Lemma \ref{lem:E_k_even}. Moreover, by symmetry, we can also obtain $d_3,d_4 \ge 1 $ by Lemma \ref{lem:E_k} since $B_2$ and $B_4$ are empty. Furthermore, from Lemma \ref{lem:period}, we can obtain that $\bm{x}_{[j_2,j_{d_H-1}]}$ and $\bm{y}_{[j_2,j_{d_H-1}]}$ are alternating sequences with length no less than two. Consequently, the structures of $\bm{x}$ and $\bm{y}$ can be given by
\[
\bm{x} = \bm{a}x_{j_1}\bm{u}\bm{w}\bm{v}x_{j_{d_H}}\bm{b}, \quad 
\bm{y} = \bm{a}\overline{x_{j_1}}\bm{u}\overline{\bm{w}}\bm{v}\overline{x_{j_{d_H}}}\bm{b},
\]
where $\bm{w}$ is an alternating sequence of length at least two.
On the other hand, we have $B_8$ and $B_{10}$ are empty according to Corollary \ref{cor:B11B9}, thus, in this case, we have
 \begin{align*}
     \mathcal{B}(\bm{x},\bm{y})=\bigcup_{i\in I\setminus\{8,10\}} B_i.
 \end{align*}
Then, we will prove this lemma based on the sizes of $d_1,d_2,d_3$ and $d_4$.
\begin{itemize}
    \item If $d_1 \geq 2$, then we have 
    \begin{align*}
      d_H(\bm{x}_{j_1},\bm{y}_{j_{d_H}})=d_H(\bm{x}_{[j_1+1,j_{d_H}]},\bm{y}_{[j_1,j_{d_H}-1]})= d_1 + d_2 \geq 3,
    \end{align*}
    \begin{align*}    
    d_H(\bm{x}_{j_2},\bm{y}_{j_{d_H}})=1+d_H(\bm{x}_{[j_2+1,j_{d_H}]},\bm{y}_{[j_2,j_{d_H}-1]})= 1+d_1 \geq 3.  
    \end{align*}
    Consequently, by Lemmas \ref{lem:sub} and \ref{lem:E_k}, we know that $B_{13}$ and $B_{17}$ are empty. 
    \begin{itemize}
        \item In this case, if both $B_{14}$ and $B_{16}$ are nonempty sets, then by Lemma \ref{lem:E_k_even}, we can obtain that
    \begin{align*}  &d_H(\bm{x}_{[j_2,j_{d_H}-1]},\bm{y}_{[j_2+1,j_{d_H}]})=  d_3=1,  
    \end{align*}
    \begin{align*}
     d&_H(\bm{x}_{[j_1,j_{d_H-1}-1]},\bm{y}_{[j_1+1,j_{d_H-1}]})= d_4=1. 
    \end{align*}
    Thus, we have  $r(\bm{u}),r(\bm{v})\leq 2$ by Lemma \ref{lem:run_number} and Remark \ref{rem:run_number}. Under these conditions, we have 
    \begin{align*}
  \tilde{\bm{x}}=x_{j_1}\bm{u}\bm{w}\bm{v}x_{j_{d_H}}, \quad  \tilde{\bm{y}}=\overline{x_{j_1}}\bm{u}\overline{\bm{w}}\bm{v}\overline{x_{j_{d_H}}},   
    \end{align*}
    consisting of at most five $^{\leq}2$-periodic sequences and two additional symbols from $\{0,1\}$. Moreover, in this case, we can derive that 
    \[
    |\mathcal{B}(\bm{x},\bm{y})|\leq 2 \times 6 = 12,
    \]
    by Lemma \ref{lem:set size}. 
    
    \item If either $B_{14}$ or $B_{16}$ is an empty set, then from Lemma \ref{lem:set size}, we will have
    \[
    |\mathcal{B}(\bm{x},\bm{y})|\leq 2 \times 5 = 10.
    \]
    \end{itemize}
    
    \item By symmetry, considering $R(\bm{y})$ and $R(\bm{x})$ instead of $\bm{x}$ and $\bm{y}$, the same argument applies for $d_2 \geq 2$, yielding $\tilde{\bm{x}},\tilde{\bm{y}}$ are combinations of at most five $^{\le}2$-periodic sequences and two additional symbols from $\{0,1\}$ when $|\mathcal{B}(\bm{x},\bm{y})|\le 12$, otherwise, $|\mathcal{B}(\bm{x},\bm{y})|\le 10$.

    \item If $d_1 = d_2 = 1$, then according to Lemma \ref{lem:run_number} and Remark \ref{rem:run_number}, the structures of $\bm{x}$ and $\bm{y}$ can be rewritten that
    \[
    \bm{x}=\bm{a}x_{j_1} \bm{u}x_{j_2} \bm{w}x_{j_{d_H-1}} \bm{v} x_{j_{d_H}} \bm{b}, \quad
    \bm{y}=\bm{a} \overline{x_{j_1}} \bm{u}  \overline{x_{j_2} \bm{w} x_{j_{d_H-1}}}\bm{v} \overline{x_{j_{d_H}}}\bm{b},
    \]
    where $x_{j_1}=x_{j_2}$, $x_{j_{d_H-1}}=x_{j_{d_H}}$, $ r(\bm{u})\leq 2$, $r(\bm{v})\leq 2$, and $x_{j_2} \bm{w}x_{j_{d_H-1}}$ is an alternating sequence of length at least two. In this case, we have 
    \begin{align*}
  \tilde{\bm{x}}=x_{j_1} \bm{u}x_{j_2} \bm{w}x_{j_{d_H-1}} \bm{v} x_{j_{d_H}}, \quad  \tilde{\bm{y}}=\overline{x_{j_1}} \bm{u}  \overline{x_{j_2} \bm{w} x_{j_{d_H-1}}}\bm{v} \overline{x_{j_{d_H}}},   
    \end{align*}
    consisting of at most five $^{\leq}2$-periodic sequences and two additional symbols from $\{0,1\}$. Then, we shall prove  $|\mathcal{B}(\bm{x},\bm{y})|\le 12$.
  \begin{itemize}
      \item If $r(\bm{u})=2$, then we have $d_4=3$ by Remark \ref{rem:run_number}, {which leads to} 
    \begin{align*}
    d_H&(\bm{x}_{j_{d_H-1}},\bm{y}_{j_{1}})=1+d_H(\bm{x}_{[j_1,j_{d_H-1}-1]},\bm{y}_{[j_1+1,j_{d_H-1}]})= 1+d_4\geq 4, 
     \end{align*}
    \begin{align*}
    &d_H(\bm{x}_{j_{d_H}},\bm{y}_{j_{1}})=d_H(\bm{x}_{[j_1,j_{d_H}-1]},\bm{y}_{[j_1+1,j_{d_H}]})=d_3+d_4\geq 4.    
    \end{align*}
    Consequently, we have $B_{16}$ and $B_{18}$ are empty according to  Lemmas \ref{lem:sub} and \ref{lem:E_k}. Thus, from Lemma \ref{lem:set size}, we can obtain that 
    \[
    |\mathcal{B}(\bm{x},\bm{y})|\leq 2 \times 6 = 12.
    \]
      
   \item If $r(\bm{v})=2$, by symmetry, considering $R(\bm{y})$ and $R(\bm{x})$ instead of $\bm{x}$ and $\bm{y}$, and using the conclusion of case $r(\bm{u})=2$, we can also obtain that $B_{14}$ and $B_{18}$ are empty. Consequently, $|\mathcal{B}(\bm{x},\bm{y})|\leq 12$.
   
   \item If $ r(\bm{u}), r(\bm{v})\leq 1$. Here, $x_{j_1} \bm{u}x_{j_2}$ or $\overline{x_{j_1}} \bm{u}  \overline{x_{j_2}}$ forms a run, and $x_{j_{d_H-1}} \bm{v} x_{j_{d_H}}$ or $\overline{x_{j_{d_H-1}}}\bm{v} \overline{x_{j_{d_H}}}$ forms a run. It can be easily checked that at least two sets among $B_{13},B_{14},B_{15},B_{16},B_{17},B_{18}$ are contained in $B_{11} \cup B_{12}$ according to Lemma \ref{lem:E_k}, thus, by Lemma \ref{lem:set size}, we have 
    \[
    |\mathcal{B}(\bm{x},\bm{y})|\leq 2\times 6= 12.
    \]
  \end{itemize}  
     
\end{itemize}
    
     Consequently, for any possible values of $d_1$ and $d_2$, the conclusions in this lemma are correct. By considering $R(\bm{x})$ and $R(\bm{y})$, the conclusions above hold for $d_3$ and $d_4$. Therefore, when $B_{12}$ is nonempty, this lemma can be proved.        		
    \item If $B_{12}$ is an empty set and $B_{8}$ is a nonempty set. Then we can obtain that $d_H(\bm{x}_{[j_{2},j_{d_H-1}-1]},y_{[j_{2}+1,j_{d_H-1}]})\neq 0$ and  $d_3=0$ by Lemma \ref{lem:E_k_even}. Furthermore, since $d_H(\bm{x}_{[j_{2}+1,j_{d_H-1}]},y_{[j_{2},j_{d_H-1}-1]})=0$ and $d_1\ge 1$, according to Lemma \ref{lem:run_number}, Remark \ref{rem:run_number} and Corollary \ref{cor:run_number}, we have $d_H(\bm{x}_{[j_{2},j_{d_H-1}-1]},y_{[j_{2}+1,j_{d_H-1}]})\ge 2$ is even and  $d_1= 2$. Thus we can obtain that 
    \begin{align*}
     &d_H(\bm{x}_{j_{2}},\bm{y}_{j_{d_H}})=1+d_H(\bm{x}_{[j_2+1,j_{d_H}]},\bm{y}_{[j_2,j_{d_H}-1]})\ge 1+d_1= 3, 
      \end{align*}
    \begin{align*}
    d_H(\bm{x}_{j_{d_H}},\bm{y}_{j_2})&=1+d_H(\bm{x}_{[j_2,j_{d_H}-1]},\bm{y}_{[j_2+1,j_{d_H}]})\ge 1+d_H(\bm{x}_{[j_{2},j_{d_H-1}-1]},y_{[j_{2}+1,j_{d_H-1}]})\ge 3, 
    \end{align*}
    \begin{align*}  
     d_H(\bm{x}_{j_{d_H-1}},\bm{y}_{j_1})&=1+d_H(\bm{x}_{[j_1,j_{d_H-1}-1]},\bm{y}_{[j_1+1,j_{d_H-1}]})\ge 1+d_H(\bm{x}_{[j_{2},j_{d_H-1}-1]},y_{[j_{2}+1,j_{d_H-1}]})\ge 3,  
    \end{align*}
     which implies that $B_{13},B_{14}$ and $B_{16}$ are empty according to Lemmas \ref{lem:sub} and \ref{lem:E_k}. In this case, we have
 \begin{align*}
     \mathcal{B}(\bm{x},\bm{y})=\bigcup_{i\in I\setminus\{12,13,14,16\}} B_i.
 \end{align*}
    \begin{itemize}
     \item  If $B_{15}$ is a nonempty set, then, from Lemma \ref{lem:E_k_odd}, we can obtain that
   \begin{align*}
   d_H(\bm{x}_{[j_1+1,j_{d_{H}-1}]},\bm{y}_{[j_1,j_{d_{H}-1}-1]})=d_2=1.
   \end{align*}
 Thus, we have $d_4\ge 1$ according to Lemma \ref{lem:run_number} and Remark \ref{rem:run_number}. Consequently, 
  \begin{align*}
   &d_H(\bm{x}_{j_{d_H-2}},\bm{y}_{j_1})=2+d_H(\bm{x}_{[j_1,j_{d_{H}-2}-1]},\bm{y}_{[j_1+1,j_{d_{H}-2}]})\ge 2+ d_4\ge 3, 
    \end{align*}
    \begin{align*}
    d_H(\bm{x}_{j_{d_H}},&\bm{y}_{j_1})=d_H(\bm{x}_{[j_1,j_{d_H}-1]},\bm{y}_{[j_1+1,j_{d_H}]})\ge  d_H(\bm{x}_{[j_{2},j_{d_H-1}-1]},y_{[j_{2}+1,j_{d_H-1}]})+d_4\ge 3,  
  \end{align*}   
 which implies that $B_{10}$ and $B_{18}$ are empty according to Lemmas \ref{lem:sub} and \ref{lem:E_k}. Therefore, by Lemma \ref{lem:set size}, we can obtain  $|\mathcal{B}(\bm{x},\bm{y})|\le2\times 4= 8$. 
 
 \item If $B_{15}$ is an empty set, then  we can also obtain  $|\mathcal{B}(\bm{x},\bm{y})|\le2\times 5= 10$ by Lemma \ref{lem:set size}.
    \end{itemize} 
 In conclusion, if $B_{12}$ is an empty set and $B_8$ is a nonempty set, this lemma can be proved.    
			
    \item If $B_{12}$ and $B_{8}$ are empty.  Then we can obtain that $d_H(\bm{x}_{[j_{2},j_{d_H-1}-1]},y_{[j_{2}+1,j_{d_H-1}]})\neq 0$ by Lemma \ref{lem:E_k_even}. Furthermore, since $d_H(\bm{x}_{[j_{2}+1,j_{d_H-1}]},y_{[j_{2},j_{d_H-1}-1]})=0$, by Corollary \ref{cor:run_number}, we have $d_H(\bm{x}_{[j_{2},j_{d_H-1}-1]},y_{[j_{2}+1,j_{d_H-1}]})\ge 2$ is even. Thus, we can obtain that 
    \begin{align*}
     &d_H(\bm{x}_{j_{d_H}},\bm{y}_{j_2})=1+d_H(\bm{x}_{[j_2,j_{d_H}-1]},\bm{y}_{[j_2+1,j_{d_H}]})\ge 1+d_H(\bm{x}_{[j_{2},j_{d_H-1}-1]},y_{[j_{2}+1,j_{d_H-1}]})\ge 3, 
      \end{align*}
    \begin{align*}
     d_H&(\bm{x}_{j_{d_H-1}},\bm{y}_{j_1})=1+d_H(\bm{x}_{[j_1,j_{d_H-1}-1]},\bm{y}_{[j_1+1,j_{d_H-1}]})\ge 1+d_H(\bm{x}_{[j_{2},j_{d_H-1}-1]},y_{[j_{2}+1,j_{d_H-1}]})\ge 3,  
    \end{align*}
     which implies that $B_{14}$ and $B_{16}$ are empty according to Lemmas \ref{lem:sub} and \ref{lem:E_k}. In this case, we have
 \begin{align*}
     \mathcal{B}(\bm{x},\bm{y})=\bigcup_{i\in I\setminus\{8,12,14,16\}} B_i.
 \end{align*}
 
 At this point, if $B_{13}$ is a nonempty set, then, from Lemma \ref{lem:E_k_odd}, we can derive that
   \begin{align*}
   d_H(\bm{x}_{[j_2+1,j_{d_{H}}]},\bm{y}_{[j_2,j_{d_{H}}-1]})=d_1=1.
   \end{align*}
 Thus, we have $d_3\ge 1$ according to Lemma \ref{lem:run_number} and Remark \ref{rem:run_number}. Consequently,
 \begin{align*}
d_H(\bm{x}_{j_{d_H}},\bm{y}_{j_{1}})\ge d_H(\bm{x}_{[j_{2},j_{d_H-1}-1]},y_{[j_{2}+1,j_{d_H-1}]})+d_3\ge 3,   
 \end{align*}
which implies that  $B_{18}$ is an empty set according to Lemmas \ref{lem:sub} and \ref{lem:E_k}. Then, by Lemma \ref{lem:set size}, we can obtain $|\mathcal{B}(\bm{x},\bm{y})|\le2\times 5= 10$. Moreover, if $B_{13}$ is an empty set, then we also have $|\mathcal{B}(\bm{x},\bm{y})|\le 2\times 5= 10$. 
Thereby, if $B_{12}$ and $B_8$ are empty, the conclusion of this lemma can be proved. 
\end{itemize}

Then the proof of this lemma is completed.
\end{IEEEproof}

\subsection{The proof of Theorem \ref{th:B7_nonempty}}

\begin{IEEEproof}
The conclusion is proved only for the case where $B_7$ is nonempty. The case where $B_8$ is nonempty follows similarly by exchanging the positions of $\bm{x}$ and $\bm{y}$.

Let $\bm{x}_i\triangleq \bm{x}_{[n]\setminus i}$ for $i\in [n]$ and define $I\triangleq \{7,8,9,10,12,13,14,16,17,18\}$.  
Since $\bm{x}, \bm{y} \in \C_1$ and $B_{7}$ is nonempty, it follows by Corollary \ref{cor:B11B9} that $B_{11}$ is an empty set. Moreover, we have $d_H\ge 4$ according to Lemma \ref{lem:VT}. From Lemma \ref{lem:E_k_odd}, we know that $\bm{x}_{[j_3+1,j_{d_H}]}=\bm{y}_{[j_3,j_{d_H}-1]}$ when $B_{7}$ is nonempty. Let
   \begin{align*}
d_1\triangleq d_{H}(\bm{x}_{[j_1+1,j_2]},\bm{y}_{[j_1,j_2-1]}), \quad d_2\triangleq d_{H}(\bm{x}_{[j_2+1,j_3]},\bm{y}_{[j_2,j_3-1]}),
 \end{align*}
 \begin{align*}
d_3\triangleq d_{H}(\bm{x}_{[j_2,j_3-1]}, \bm{y}_{[j_2+1,j_3]}), \quad d_4\triangleq d_{H}(\bm{x}_{[j_1,j_2-1]},\bm{y}_{[j_1+1,j_2]}).
\end{align*}
Since $B_1$ and $B_5$ are empty, we have that $d_H(\bm{x}_{[j_2+1,j_{d_H}]},\bm{y}_{[j_2,j_{d_H}-1]})\ge 1$ and $d_H(\bm{x}_{[j_1+1,j_{d_H}]},\bm{y}_{[j_1,j_{d_H}-1]})\ge 2$ according to Lemma \ref{lem:E_k_odd}. On the other hand, it is clear that
  \begin{align*}   &d_H(\bm{x}_{[j_2+1,j_{d_H}]},\bm{y}_{[j_2,j_{d_H}-1]})=d_2,
   \end{align*}
    \begin{align*}
    d_H&(\bm{x}_{[j_1+1,j_{d_H}]},\bm{y}_{[j_1,j_{d_H}-1]})=d_1+d_2,
  \end{align*}
 which implies $d_2\ge 1$ and $d_1+d_2\ge 2$. Consequently, we have 
 \begin{align*}
     d_H(\bm{x}_{j_1},\bm{y}_{j_{d_H-1}})=1+d_H(\bm{x}_{[j_1+1,j_{d_H-1}]},\bm{y}_{[j_1,j_{d_H-1}-1]})= 1+d_1+d_2\ge 3,
 \end{align*}
which implies $B_{15}$ is an empty set according to Lemmas \ref{lem:sub} and \ref{lem:E_k}. Thus, in this case, we have 
 \begin{align*}
     \mathcal{B}(\bm{x},\bm{y})=\bigcup_{i\in I} B_i.
 \end{align*}
Next, we will prove this lemma based on the sets $B_8$ and $B_{12}$.

\begin{itemize}
  \item If $B_8$ is a nonempty set, then  $B_{12}$ is empty and $\bm{x}_{[j_3,j_{d_H}-1]}=\bm{y}_{[j_3+1,j_{d_H}]}$ from Lemma \ref{lem:E_k_even} and Corollary \ref{cor:B11B9}. Combining with $\bm{x}_{[j_3+1,j_{d_H}]}=\bm{y}_{[j_3,j_{d_H}-1]}$, we have that $\bm{x}_{[j_3,j_{d_H}]}$ and $\bm{y}_{[j_3,j_{d_H}]}$ are alternating sequences from Lemma \ref{lem:period}. Thus, the structures of $\bm{x}$ and $\bm{y}$ are as follows:
    \[
    \bm{x}=\bm{a}x_{j_1} \bm{u}x_{j_2}\bm{v}x_{j_{3}} \bm{w} x_{j_{d_H}} \bm{b}, \quad \bm{y}=\bm{a} \overline{x_{j_1}} \bm{u}  \overline{x_{j_2}} \bm{v} \overline{x_{j_{3}}\bm{w}x_{j_{d_H}}}\bm{b},
    \]
    where $x_{j_3} \bm{w}x_{j_{d_H}}$ is an alternating sequence with length at least two. Moreover, since $B_8$ is a nonempty set, and $B_2$ and $B_6$ are empty, by symmetry, swapping $\bm{x}$ and $\bm{y}$, and using the conclusions about $B_7$, we also have  $d_3\ge 1$ and $d_3+d_4\ge 2$. Furthermore, we can obtain that $B_{16}$ is nonempty. Thus, in this case, we have
 \begin{align*}
     \mathcal{B}(\bm{x},\bm{y})=\bigcup_{i\in I\setminus\{12,16\}} B_i.
 \end{align*} 

\begin{itemize}
  \item If $r(\bm{u})\ge 3$, then $d_1,d_4\ge 2$ according to Remark \ref{rem:run_number}. Consequently, we can obtain that
\begin{align*}
&d_H(\bm{x}_{j_{d_H}},\bm{y}_{j_1})=d_H(\bm{x}_{[j_1,j_{d_H}-1]},\bm{y}_{[j_1+1,j_{d_H}]})= d_3+d_4\ge 3,
\end{align*}
\begin{align*} &d_H(\bm{x}_{j_1},\bm{y}_{j_{d_H}})=d_H(\bm{x}_{[j_1+1,j_{d_H}]},\bm{y}_{[j_1,j_{d_H}-1]})= d_1+d_2\ge 3, 
\end{align*}
\begin{align*} 
 d_H(&\bm{x}_{j_1},\bm{y}_{j_{d_H-2}})=2+d_H(\bm{x}_{[j_1+1,j_{d_{H}-2}]},\bm{y}_{[j_1,j_{d_{H}-2}-1]})\ge 2+d_1\ge 4, 
\end{align*}
\begin{align*} 
 d_H(&\bm{x}_{j_{d_H-2}},\bm{y}_{j_1})=2+d_H(\bm{x}_{[j_1,j_{d_{H}-2}-1]},\bm{y}_{[j_1+1,j_{d_{H}-2}]})\ge 2+d_4\ge 4,  
\end{align*}
which implies $B_{9},B_{10},B_{17}$ and $B_{18}$ are empty according to Lemmas \ref{lem:sub} and \ref{lem:E_k}. Thus, from Lemma \ref{lem:set size}, we can conclude  $|\mathcal{B}(\bm{x},\bm{y})|\le 2\times 4= 8$.

  \item If $r(\bm{v})\ge 4$, then $d_2,d_3\ge 3$ according to Remark \ref{rem:run_number}. Consequently, we can obtain that
\begin{align*}
&d_H(\bm{x}_{j_{d_H}},\bm{y}_{j_1})=d_H(\bm{x}_{[j_1,j_{d_H}-1]},\bm{y}_{[j_1+1,j_{d_H}]})= d_3+d_4\ge 3,
\end{align*}
\begin{align*}  &d_H(\bm{x}_{j_1},\bm{y}_{j_{d_H}})=d_H(\bm{x}_{[j_1+1,j_{d_H}]},\bm{y}_{[j_1,j_{d_H}-1]})= d_1+d_2\ge 3, 
\end{align*}
\begin{align*} 
d&_H(\bm{x}_{j_2},\bm{y}_{j_{d_H}})=1+d_H(\bm{x}_{[j_2+1,j_{d_{H}}]},\bm{y}_{[j_2,j_{d_{H}}-1]})= 1+d_2\ge 4, 
\end{align*}
\begin{align*}  d&_H(\bm{x}_{j_{d_H}},\bm{y}_{j_2})=1+d_H(\bm{x}_{[j_2,j_{d_{H}}-1]},\bm{y}_{[j_2+1,j_{d_{H}}]})= 1+d_3\ge 4,   
\end{align*}
which implies $B_{13},B_{14},B_{17}$ and $B_{18}$ are empty according to Lemmas \ref{lem:sub} and \ref{lem:E_k}. Thus, from Lemma \ref{lem:set size}, we can also conclude $|\mathcal{B}(\bm{x},\bm{y})|\le 2\times 4= 8$.

  \item If $r(\bm{u})\le 2$ and $ r(\bm{v})\le 3$, then 
  \begin{align*}
      \tilde{\bm{x}}=x_{j_1} \bm{u}x_{j_2}\bm{v}x_{j_{3}} \bm{w} x_{j_{d_H}},\quad  \tilde{\bm{y}}= \overline{x_{j_1}} \bm{u}  \overline{x_{j_2}} \bm{v} \overline{x_{j_{3}}\bm{w}x_{j_{d_H}}},
  \end{align*}
   are combinations of at most six $^{\le}2$-periodic sequences and two additional symbols from $\{0,1\}$, in this case, we shall prove $|\mathcal{B}(\bm{x},\bm{y})|\le 13$.
\begin{itemize}
  \item If $B_{13}$ is a nonempty set, according to Lemma \ref{lem:E_k_odd},  we have 
  \begin{align*}
  d_H(\bm{x}_{[j_2+1,j_{d_{H}}]},\bm{y}_{[j_2,j_{d_{H}}-1]})= d_2=1,
  \end{align*}
   which means $d_2=1$ and $d_1\ge 1$. Furthermore, we have that $d_3\ge 1$ is odd from Lemma \ref{lem:run_number} and Remark \ref{rem:run_number}. 
  
  If $d_3=1$, then $d_4\ge 1$. Consequently, 
  \begin{align*}
   d_H(\bm{x}_{j_{d_H-2}},\bm{y}_{j_1})=2+d_H(\bm{x}_{[j_1,j_{d_{H}-2}-1]},\bm{y}_{[j_1+1,j_{d_{H}-2}]})\ge 2+d_4\ge 3,
\end{align*}
which means $B_{10}$ is an empty set according to Lemmas \ref{lem:sub} and \ref{lem:E_k}. In addition, we know that $|B_8\cup B_{14}|\le 3$ from Corollary \ref{cor:B7B13}, combining with Lemma \ref{lem:set size}, we have $|\mathcal{B}(\bm{x},\bm{y})|\le 3+2\times 5=13$.

If $d_3>1$, then $d_3\ge 3$. Consequently,
\begin{align*}
&d_H(\bm{x}_{j_{d_H}},\bm{y}_{j_1})=d_H(\bm{x}_{[j_1,j_{d_H}-1]},\bm{y}_{[j_1+1,j_{d_H}]})= d_3+d_4\ge 3,
\end{align*}
\begin{align*}  d&_H(\bm{x}_{j_{d_H}},\bm{y}_{j_2})=1+d_H(\bm{x}_{[j_2,j_{d_{H}}-1]},\bm{y}_{[j_2+1,j_{d_{H}}]})= 1+d_3\ge 4,   
\end{align*}
which implies $B_{14}$ and $B_{18}$ are empty according to Lemmas \ref{lem:sub} and \ref{lem:E_k}. Thus, from Lemma \ref{lem:set size}, we can conclude that $|\mathcal{B}(\bm{x},\bm{y})|\le 2\times 6= 12$.

  \item If $B_{13}$ is an empty set, then we also have $|\mathcal{B}(\bm{x},\bm{y})|\le 3+2\times 5=13$ by Corollary \ref{cor:B7B13}, and Lemma \ref{lem:set size}. 
\end{itemize}
\end{itemize}
In conclusion, when $B_{8}$ is a nonempty set, this lemma can be proved.
 \item If $B_{8}$ is an empty set and $B_{12}$ is a nonempty set.  Then we have that $d_H(\bm{x}_{[j_3,j_{d_H}-1]},\bm{y}_{[j_3+1,j_{d_H}]})\ge 1$ and $\bm{x}_{[j_2,j_{d_H-1}-1]}=\bm{y}_{[j_2+1,j_{d_H-1}]}$ from Lemma \ref{lem:E_k_even}. Consequently, $d_3=0$. In addition, since $\bm{x}_{[j_3+1,j_{d_H}]}=\bm{y}_{[j_3,j_{d_H}-1]}$, thus, we have that $d_H(\bm{x}_{[j_3,j_{d_H}-1]},\bm{y}_{[j_3+1,j_{d_H}]})\ge 2$ is even according to Corollary \ref{cor:run_number}. On the other hand, 
 \begin{align*}
  d_H(\bm{x}_{[j_3,j_{d_H}-1]},\bm{y}_{[j_3+1,j_{d_H}]})= d_H(\bm{x}_{[j_{d_H-1},j_{d_H}-1]},\bm{y}_{[j_{d_H-1}+1,j_{d_H}]}),   
 \end{align*}
 
 since $\bm{x}_{[j_2,j_{d_H-1}-1]}=\bm{y}_{[j_2+1,j_{d_H-1}]}$. Then, we can obtain  $d_H(\bm{x}_{[j_{d_H-1},j_{d_H}-1]},\bm{y}_{[j_{d_H-1}+1,j_{d_H}]})= 2$ from Lemma \ref{lem:run_number} and Remark \ref{rem:run_number} since $\bm{x}_{[j_{d_H-1}+1,j_{d_H}]}=\bm{y}_{[j_{d_H-1},j_{d_H}-1]}$. Furthermore, we can derive that 
 \begin{align*}
 d_H(\bm{x}_{[j_2,j_{d_{H}}-1]},\bm{y}_{[j_2+1,j_{d_{H}}]})= d_H(\bm{x}_{[j_{d_H-1},j_{d_H}-1]},\bm{y}_{[j_{d_H-1}+1,j_{d_H}]})= 2,   
\end{align*}
 which implies $B_{14}$ is an empty set according to Lemma \ref{lem:E_k_even}. 
  Moreover, since $d_3=0$ and $d_2\ge 1$, we have $d_2= 2$ by Lemma \ref{lem:run_number} and Remark \ref{rem:run_number}. Consequently, we can obtain that
   \begin{align*}
     d_H(\bm{x}_{j_2},\bm{y}_{j_{d_H}})=1+d_H(\bm{x}_{[j_2+1,j_{d_{H}}]},\bm{y}_{[j_2,j_{d_{H}}-1]})= 1+d_2= 3,  
   \end{align*}
   which implies $B_{13}$ is an empty set according to Lemmas \ref{lem:sub} and \ref{lem:E_k}. Thus, in this case, we have
   \begin{align*}
     \mathcal{B}(\bm{x},\bm{y})=\bigcup_{i\in I\setminus\{8,13,14\}} B_i.
 \end{align*} 
   \begin{itemize}
       \item If $B_{16}$ is nonempty. Then we have $d_H(\bm{x}_{[j_1,j_{d_{H-1}}-1]},\bm{y}_{[j_1+1,j_{d_{H-1}}]})=d_4=1$ from Lemma \ref{lem:E_k_even}. Thus, we can obtain that $d_1\ge 1$ is odd by Lemma \ref{lem:run_number} and Remark \ref{rem:run_number}. Consequently,
  \begin{align*}
   &d_H(\bm{x}_{j_1},\bm{y}_{j_{d_H-2}})=2+d_H(\bm{x}_{[j_1+1,j_{d_{H}-2}]},\bm{y}_{[j_1,j_{d_{H}-2}-1]})\ge 2+d_1\ge 3, 
   \end{align*}
\begin{align*} 
  &d_H(\bm{x}_{j_{d_H-2}},\bm{y}_{j_1})=2+d_H(\bm{x}_{[j_1,j_{d_{H}-2}-1]},\bm{y}_{[j_1+1,j_{d_{H}-2}]})= 2+d_4= 3,
 \end{align*}
\begin{align*}     d_H(\bm{x}_{j_{d_H}},&\bm{y}_{j_1})=d_H(\bm{x}_{[j_1,j_{d_H}-1]},\bm{y}_{[j_1+1,j_{d_H}]})= d_4+d_H(\bm{x}_{[j_{d_H-1},j_{d_H}-1]},\bm{y}_{[j_{d_H-1}+1,j_{d_H}]})\ge 3,
 \end{align*}
     which implies $B_{9}, B_{10}$ and $B_{18}$ are empty according to Lemmas \ref{lem:sub} and \ref{lem:E_k}. Then, from Lemma \ref{lem:set size}, we can conclude that $|\mathcal{B}(\bm{x},\bm{y})|\le 2\times 4= 8$.
      \item If $B_{16}$ is empty. Then we have $d_H(\bm{x}_{[j_1,j_{d_{H-1}}-1]},\bm{y}_{[j_1+1,j_{d_{H-1}}]})=d_4\ge 2$ from Lemma \ref{lem:E_k_even}. Consequently,
 \begin{align*}
      &d_H(\bm{x}_{j_{d_H-2}},\bm{y}_{j_1})=2+d_H(\bm{x}_{[j_1,j_{d_{H}-2}-1]},\bm{y}_{[j_1+1,j_{d_{H}-2}]})= 2+d_4\ge 4,
 \end{align*}
 \begin{align*}       d_H(\bm{x}_{j_{d_H}},&\bm{y}_{j_1})=d_H(\bm{x}_{[j_1,j_{d_H}-1]},\bm{y}_{[j_1+1,j_{d_H}]})= d_4+d_H(\bm{x}_{[j_{d_H-1},j_{d_H}-1]},\bm{y}_{[j_{d_H-1}+1,j_{d_H}]})\ge 4,
 \end{align*}
      which implies $ B_{10}$ and $B_{18}$ are empty according to Lemmas \ref{lem:sub} and \ref{lem:E_k}. Thus, from Lemma \ref{lem:set size}, we can also conclude that $|\mathcal{B}(\bm{x},\bm{y})|\le 2\times 4= 8$.
   \end{itemize}
   In conclusion, if $B_{8}$ is an empty set and $B_{12}$ is a nonempty set, we have $|\mathcal{B}(\bm{x},\bm{y})|\le 8$.
 \item If $B_8$ and  $B_{12}$ are empty. Then we have  $d_H(\bm{x}_{[j_3,j_{d_H}-1]},\bm{y}_{[j_3+1,j_{d_H}]})\ge 1$ and $d_H(\bm{x}_{[j_2,j_{d_H-1}-1]},\bm{y}_{[j_2+1,j_{d_H-1}]})\ge 1$ from Lemma \ref{lem:E_k_even}. On the other hand, since $\bm{x}_{[j_3+1,j_{d_H}]}=\bm{y}_{[j_3,j_{d_H}-1]}$,  we have that $d_H(\bm{x}_{[j_3,j_{d_H}-1]},\bm{y}_{[j_3+1,j_{d_H}]})\ge 2$ is even according to Corollary \ref{cor:run_number}. Consequently, we can obtain  
 \begin{align*}
 d_H(\bm{x}_{[j_2,j_{d_{H}}-1]},\bm{y}_{[j_2+1,j_{d_{H}}]})\ge d_H(\bm{x}_{[j_{3},j_{d_H}-1]},\bm{y}_{[j_{3}+1,j_{d_H}]})\ge 2,   
\end{align*}
 which implies $B_{14}$ is an empty set according to Lemma \ref{lem:E_k_even}. Thus, in this case, we have
   \begin{align*}
     \mathcal{B}(\bm{x},\bm{y})=\bigcup_{i\in I\setminus\{8,12,14\}} B_i.
 \end{align*} 
 \begin{itemize}
     \item If $d_H=4$. Then we have $j_3=j_{d_H-1}$. Moreover, the structures of $\bm{x}$ and $\bm{y}$ are as follows:
    \[
    \bm{x}=\bm{a}x_{j_1} \bm{u}x_{j_2}\bm{v}x_{j_{3}} \bm{w} x_{j_{4}} \bm{b}, \quad \bm{y}=\bm{a} \overline{x_{j_1}} \bm{u}  \overline{x_{j_2}} \bm{v} \overline{x_{j_{3}}}\bm{w}\overline{x_{j_{4}}}\bm{b}.
    \]
    From the analysis above, we know that $d_H(x_{j_2}\bm{v},\bm{v} \overline{x_{j_{3}}})\ge 1$ and $d_H(\bm{w} x_{j_{4}},\overline{x_{j_{3}}}\bm{w})=0$. Consequently, we can conclude $d_H(x_{j_{3}} \bm{w},\bm{w}\overline{x_{j_{4}}})=2$, $r(\bm{w})=1$ and $r(\bm{v})\ge 0$ according to Lemma \ref{lem:run_number} and Remark \ref{rem:run_number}. Furthermore, we can obtain that
    \begin{align*}
        d_H(\bm{x}_{j_{d_H}},\bm{y}_{j_1})=d_H(x_{j_1} \bm{u}x_{j_2}\bm{v}x_{j_{3}}\bm{w},\bm{u}  \overline{x_{j_2}} \bm{v} \overline{x_{j_{3}}}\bm{w}\overline{x_{j_{4}}}) \ge d_H(x_{j_2}\bm{v},\bm{v} \overline{x_{j_{3}}})+d_H(x_{j_{3}} \bm{w},\bm{w}\overline{x_{j_{4}}})\ge 3,
        \end{align*}
      which implies $B_{18}$ is an empty set according to Lemmas \ref{lem:sub} and \ref{lem:E_k}.
      \begin{itemize}
          \item If $r(\bm{v})\ge 3$, then $d_2,d_3\ge 2$ according to Remark \ref{rem:run_number}. Consequently, we can obtain 
         \begin{align*}
               &d_H(\bm{x}_{j_2},\bm{y}_{j_{d_H}})=1+d_H(\bm{v}x_{j_{3}} \bm{w} x_{j_{4}},\overline{x_{j_2}} \bm{v} \overline{x_{j_{3}}}\bm{w})= 1+d_2\ge 3, 
         \end{align*}
         \begin{align*} 
              d_H(&\bm{x}_{j_{d_H-1}},\bm{y}_{j_1})=1+d_H(x_{j_1} \bm{u}x_{j_2}\bm{v},\bm{u}  \overline{x_{j_2}} \bm{v} \overline{x_{j_{3}}})=1+ d_3+d_4\ge 3,   
         \end{align*}
     which implies $B_{13}$ and $B_{16}$ are empty according to Lemmas \ref{lem:sub} and \ref{lem:E_k}. Thus, from Lemma \ref{lem:set size}, we can conclude that $|\mathcal{B}(\bm{x},\bm{y})|\le 2\times 4= 8$.
     \item If $r(\bm{u})\ge 2$, then $d_1,d_4\ge 1$ according to Remark \ref{rem:run_number}. Consequently, we can obtain 
     \begin{align*}
         d_H(&\bm{x}_{j_1},\bm{y}_{j_{d_H-2}})=2+d_H(\bm{u}x_{j_2},\overline{x_{j_1}} \bm{u})= 2+d_1\ge 3, 
    \end{align*}
    \begin{align*} 
         d_H(&\bm{x}_{j_{d_H-2}},\bm{y}_{j_1})=2+d_H(x_{j_1} \bm{u},\bm{u}  \overline{x_{j_2}})= 2+d_4\ge 3,  
    \end{align*}
   which implies $B_{9}$ and $B_{10}$ are empty according to Lemmas \ref{lem:sub} and \ref{lem:E_k}. Thus, from Lemma \ref{lem:set size}, we can also conclude that $|\mathcal{B}(\bm{x},\bm{y})|\le 2\times 4= 8$.
   \item If $ r(\bm{v})\le 2$ and $r(\bm{u})=1$, then
    \begin{align*}
      \tilde{\bm{x}}=x_{j_1} \bm{u}x_{j_2}\bm{v}x_{j_{3}} \bm{w} x_{j_{4}},\quad  \tilde{\bm{y}}= \overline{x_{j_1}} \bm{u}  \overline{x_{j_2}} \bm{v} \overline{x_{j_{3}}}\bm{w}\overline{x_{j_{4}}},
   \end{align*}
   are combinations of at most four $^{\le}2$-periodic sequences and four additional symbols from $\{0,1\}$. Furthermore, from Lemma \ref{lem:set size}, we can obtain that $|\mathcal{B}(\bm{x},\bm{y})|\le 2\times 6= 12$.
      \end{itemize}
  \item If $d_H\ge 5$. Then we have $B_9$ is an empty set since $B_7$ is nonempty by Lemma \ref{lem:B7B9}. 
     \begin{itemize}
         \item If $d_3\ge 1$, then we have
        \begin{align*}
             &d_H(\bm{x}_{j_{d_H-2}},\bm{y}_{j_1})=2+d_H(\bm{x}_{[j_1,j_{d_{H}-2}-1]},\bm{y}_{[j_1+1,j_{d_{H}-2}]})\ge 2+d_3\ge 3,
         \end{align*}
         \begin{align*} d_H(\bm{x}&_{j_{d_H}},\bm{y}_{j_1})=d_H(\bm{x}_{[j_1,j_{d_H}-1]},\bm{y}_{[j_1+1,j_{d_H}]})\ge  d_3+d_H(\bm{x}_{[j_{3},j_{d_H}-1]},\bm{y}_{[j_3+1,j_{d_H}]})\ge 3, 
           \end{align*}
      which implies $B_{10}$ and $B_{18}$ are empty according to Lemmas \ref{lem:sub} and \ref{lem:E_k}. Thus, from Lemma \ref{lem:set size}, we can conclude that $|\mathcal{B}(\bm{x},\bm{y})|\le 2\times 4= 8$. 
      \item If $d_3=0$, then since $d_2\ge 1$, we have $d_2=2$ from Lemma \ref{lem:run_number} and Remark \ref{rem:run_number}. In addition, since $d_H(\bm{x}_{[j_2,j_{d_H-1}-1]},\bm{y}_{[j_2+1,j_{d_H-1}]})\ge 1$ and $d_3=0$, we have $d_H(\bm{x}_{[j_3,j_{d_H-1}-1]},\bm{y}_{[j_3+1,j_{d_H-1}]})\ge 1$. On the other hand, we know that $\bm{x}_{[j_3+1,j_{d_H-1}]}=\bm{y}_{[j_3,j_{d_H-1}-1]}$, then, we can obtain that $d_H(\bm{x}_{[j_3,j_{d_H-1}-1]},\bm{y}_{[j_3+1,j_{d_H-1}]})\ge 2$ is even by Corollary \ref{cor:run_number}. Consequently, 
      \begin{align*}
      &d_H(\bm{x}_{j_2},\bm{y}_{j_{d_H}})=1+d_H(\bm{x}_{[j_2+1,j_{d_{H}}]},\bm{y}_{[j_2,j_{d_{H}}-1]})= 1+d_2= 3,
      \end{align*}
      \begin{align*} 
      d_H(\bm{x}_{j_{d_H-1}},\bm{y}_{j_1})&=1+ d_H(\bm{x}_{[j_{1},j_{d_H-1}-1]},\bm{y}_{[j_1+1,j_{d_H-1}]})\ge 1+d_H(\bm{x}_{[j_3,j_{d_H-1}-1]},\bm{y}_{[j_3+1,j_{d_H-1}]})\ge 3,
     \end{align*}
   which implies $B_{13}$ and $B_{16}$ are empty according to Lemmas \ref{lem:sub} and \ref{lem:E_k}. Thus, from Lemma \ref{lem:set size}, we can also conclude that $|\mathcal{B}(\bm{x},\bm{y})|\le 2\times 4= 8$.
     \end{itemize}
 \end{itemize}
 In conclusion, when $B_{8}$ and $B_{12}$ are empty, this lemma can be proved.
\end{itemize}

Then the proof of this lemma is completed.
\end{IEEEproof}

\subsection{The proof of Lemmas \ref{lem:B12_empty} and  \ref{lem:B16_empty}}
We note that Lemma \ref{lem:B16_empty} follows by Lemmas \ref{lem:VT} and \ref{lem:set size} directly. Thus, we only need to prove Lemma \ref{lem:B12_empty}. 
\begin{IEEEproof}
    Let $I\triangleq\{13,14,15,16,17,18\}$, then from Lemma \ref{lem:VT}, we have that
    \begin{align*}
     \mathcal{B}(\bm{x},\bm{y})=\bigcup_{i\in I} B_i.
 \end{align*} 
    Since $B_{11}$ is empty, we have $d_{H}(\bm{x}_{[j_{2}+1,j_{d_H-1}]},\bm{y}_{[j_{2},j_{d_H-1}-1]})\ge 1$ according to Lemma \ref{lem:E_k_odd}. In this case, we claim that 
    $B_{13}$ is empty or $B_{15}$ is empty. Otherwise, if $B_{13}$ and $B_{15}$ are both nonempty, then by Lemma \ref{lem:E_k_odd}, we can obtain that $d_{H}(\bm{x}_{[j_{2}+1,j_{d_H}]},\bm{y}_{[j_{2},j_{d_H}-1]})= 1$ and $d_{H}(\bm{x}_{[j_{1}+1,j_{d_H-1}]},\bm{y}_{[j_{1},j_{d_H-1}-1]})= 1$. Since $d_{H}(\bm{x}_{[j_{2}+1,j_{d_H-1}]},\bm{y}_{[j_{2},j_{d_H-1}-1]})\ge 1$, we can conclude that $d_{H}(\bm{x}_{[j_{2}+1,j_{d_H-1}]},\bm{y}_{[j_{2},j_{d_H-1}-1]})= 1$ and $d_{H}(\bm{x}_{[j_1+1,j_2]},\bm{y}_{[j_1,j_2-1]})= d_{H}(\bm{x}_{[j_{d_H-1}+1,j_{d_H}]},\bm{y}_{[j_{d_H-1},j_{d_H}-1]})=0$. Consequently, 
    \begin{align*}
     d_{H}(\bm{x}_{[j_{1}+1,j_{d_H}]},\bm{y}_{[j_{1},j_{d_H}-1]})=& d_{H}(\bm{x}_{[j_1+1,j_2]},\bm{y}_{[j_1,j_2-1]})\\
     &+ d_{H}(\bm{x}_{[j_{2}+1,j_{d_H-1}]},\bm{y}_{[j_{2},j_{d_H-1}-1]}) \\
     &+ d_{H}(\bm{x}_{[j_{d_H-1}+1,j_{d_H}]},\bm{y}_{[j_{d_H-1},j_{d_H}-1]})= 1, 
    \end{align*}
  which implies that $B_5$ is nonempty by Lemma \ref{lem:E_k}. This contradicts the condition $B_5$ is empty, thus, $B_{13}$ is empty or $B_{15}$ is empty. By considering $R(\bm{x})$ and $R(\bm{y})$, we can also obtain that $B_{14}$ is empty or $B_{16}$ is empty. Consequently, we have $|\mathcal{B}(\bm{x},\bm{y})|\le 2\times 4= 8$ according to Lemma \ref{lem:set size}. 
\end{IEEEproof}	

\section{conclusion}\label{sec:concl}

In this paper, we study \((n, N; \mathcal{B})\)-reconstruction codes when the intersections between distinct single-deletion balls and single-substitution balls are empty.  
Our results show that when \(N\) is set to \(14\), \(11\), \( 9\), and \(5\), the redundancy of binary \((n, N; \mathcal{B})\)-reconstruction codes can be \(\log n +3\), \(\log n+12\log\log n+O(1)\), \(2\log n+12\log\log n+O(1)\), and \(3 \log n + 4\), respectively.  We leave the construction of 
\((n, N; \mathcal{B})\)-reconstruction codes for other values of \(N\) for future work. 

\section*{appendix}\label{sec:app}

\subsection{The proof of Lemma \ref{lem:VT}}\label{subsec:VT}\label{app:A}

\begin{IEEEproof}
    It is clear that $\C_1$ can correct a single edit error since it is a subcode of the single-edit correcting code $\mathrm{VT}_{s}(n)$. Thus, we can obtain that $|\mathcal{D}(\bm{x},\bm{y})|=0$ and $|\mathcal{S}(\bm{x},\bm{y})|=0$. Furthermore, from Lemma \ref{lem:sub}, we know that \(d_H(\bm{x},\bm{y})\ge 3\) since $|\mathcal{S}(\bm{x},\bm{y})|=0$. If \(d_H(\bm{x},\bm{y})=3\),  then 
    \begin{align*}
      |\mathrm{VT}^{0}(\bm{x})-\mathrm{VT}^{0}(\bm{y})|=|x_{j_1}+x_{j_2}+x_{j_3}-y_{j_1}-y_{j_2}-y_{j_3}|\in \{1,3\},  
    \end{align*} 
    which contradicts the condition $\mathrm{VT}^{0}(\bm{x}) \equiv \mathrm{VT}^{0}(\bm{y}) \equiv s_0 \pmod{4}$. Thus, we have $d_H(\bm{x},\bm{y})\ge 4$. On the other hand, if $B_1$ is a nonempty set, then we have $\bm{x}_{[j_{2}+1,j_{d_H}]}=\bm{y}_{[j_{2},j_{d_H}-1]}$ according to Lemma \ref{lem:E_k_odd}. Under this condition, if $y_{j_1}=1$, then $x_{j_1} = 0$, we claim that $x_{j_2} = 1$ and $y_{j_{d_H}} = 0$, otherwise
    \[
    |\mathrm{VT}^{0}(\bm{x}) - \mathrm{VT}^{0}(\bm{y})| = |x_{j_1} - y_{j_1} + x_{j_2} - y_{j_{d_H}}| \in \{1,2\},
    \]
    which also contradicts the condition $\mathrm{VT}^{0}(\bm{x}) \equiv \mathrm{VT}^{0}(\bm{y}) \equiv s_0 \pmod{4}$. However, when $x_{j_1} = y_{j_{d_H}} = 0$ and $y_{j_1} = x_{j_2} = 1$, we will have
    \[
    \sum_{i=1}^{n} i x_i - \sum_{i=1}^{n} i y_i = j_2+\sum_{i=j_2+1}^{j_{d_H}}x_i-j_1.
    \]
    It is clear that $0<j_2+\sum_{i=j_2+1}^{j_{d_H}}x_i-j_1<2n$, which contradicts $\bm{x}, \bm{y} \in \mathrm{VT}_{s}(n)$. 

    Similarly, if $x_{j_1} = 1$, then $y_{j_1} = 0$, and we can claim that $x_{j_2} = 0$ and $y_{j_{d_H}} = 1$. Under this condition, we have 
    \begin{align*}
   \sum_{i=1}^{n} i x_i - \sum_{i=1}^{n} i y_i = j_1+\sum_{i=j_2+1}^{j_{d_H}}x_i-j_{d_H}.     
    \end{align*}
It is clear that $-2n<j_1+\sum_{i=j_2+1}^{j_{d_H}}x_i-j_{d_H} \le j_1-j_2<0$, which also contradicts $\bm{x},\bm{y} \in \mathrm{VT}_{s}(n)$. Thus $B_1$ must be an empty set. By swapping $\bm{x}$ and $\bm{y}$, we can obtain that $B_2$ is empty. Furthermore, by considering $R(\bm{x})$ and $R(\bm{y})$, we can prove that $B_3$ and $B_4$ are empty.
\end{IEEEproof}

\subsection{The proof of Lemma \ref{lem:B11B9}}\label{subsec:B11B9}

\begin{IEEEproof}
For any two different sequences $\bm{x},\bm{y} \in \C_1$, let $d_H\triangleq d_H(\bm{x},\bm{y})$, then we have $d_H\ge 4$ by Lemma \ref{lem:VT}. Since $B_{11}$ is a nonempty set, we have $\bm{x}_{[j_2+1,j_{d_H-1}]}=\bm{y}_{[j_2,j_{d_H-1}-1]}$ by  Lemma \ref{lem:E_k_odd}. 

Suppose that $B_5$ is a nonempty set, then by Lemma \ref{lem:E_k_odd}, we have $d_H(\bm{x}_{[j_1+1,j_{d_H}]},\bm{y}_{[j_1,j_{d_H}-1]})=1$. Since $\bm{x}_{[j_2+1,j_{d_H-1}]}=\bm{y}_{[j_2,j_{d_H-1}-1]}$, it follows that $d_H(\bm{x}_{[j_1+1,j_2]},\bm{y}_{[j_1,j_2-1]})=1$ or $d_H(\bm{x}_{[j_{d_H-1}+1,j_{d_H}]},\bm{y}_{[j_{d_H-1},j_{d_H}-1]})=1$. Thus, $\bm{x}_{[j_2+1,j_{d_H}]}=\bm{y}_{[j_2,j_{d_H}-1]}$ or $\bm{x}_{[j_1+1,j_{d_H-1}]}=\bm{y}_{[j_1,j_{d_H-1}-1]}$, which implies that $B_1$ or $B_3$ is a nonempty set, contradicting the assumption \(\bm{x},\bm{y} \in \C_1\) by Lemma \ref{lem:VT}.
Therefore, $B_5$ is an empty set.

Similarly, if $B_7$ is a nonempty set, Lemma \ref{lem:E_k_odd} gives that  $\bm{x}_{[j_3+1,j_{d_H}]}=\bm{y}_{[j_3,j_{d_H}-1]}$. Since $d_H\ge 4$, we have $j_3\le j_{d_H-1}$, combining with $\bm{x}_{[j_2+1,j_{d_H-1}]}=\bm{y}_{[j_2,j_{d_H-1}-1]}$, we can obtain that $\bm{x}_{[j_2+1,j_{d_H}]}=\bm{y}_{[j_2,j_{d_H}-1]}$, which suggests $B_1$ is a nonempty set. This contradicts the assumption \(\bm{x},\bm{y} \in \C_1\) by Lemma \ref{lem:VT}.
Thus, $B_7$ is an empty set. 

By symmetry, considering $R(\bm{y})$ and $R(\bm{x})$ instead of $\bm{x}$ and $\bm{y}$, we can obtain that \(B_3\) must be nonempty if \(B_9\) is nonempty,  leading to a contradiction.  
Therefore, \(B_9\) must be empty. 
\end{IEEEproof}

Before proving Lemmas \ref{lem:B7B13} and \ref{lem:intersect}, we need the following observation.

\begin{observation}\label{obs:E_k}
  For any $d_x,d_y\in [n]$, and any two distinct sequences $\bm{x},\bm{y}\in \Sigma^n$. If there exists $\lambda\in [n-1]$ such that $x(d_x,0)_{\lambda}\neq y(d_y,0)_{\lambda}$, then when $d_x<d_y$, we have
  \begin{equation*}
        \begin{cases}
            x_{\lambda}\neq y_{\lambda},\quad & \text{if}~~ \lambda\in [1,d_x-1],\\
            x_{\lambda+1}\neq y_{\lambda},\quad &\text{if}~~\lambda\in [d_x,d_y-1],\\
            x_{\lambda+1}\neq y_{\lambda+1},\quad &\text{if}~~\lambda\in [d_y,n-1].	
        \end{cases}		
    \end{equation*}

When $d_x>d_y$, we have
  \begin{equation*}
        \begin{cases}
            x_{\lambda}\neq y_{\lambda},\quad & \text{if}~~ \lambda\in [1,d_y-1],\\
            x_{\lambda}\neq y_{\lambda+1},\quad &\text{if}~~\lambda\in [d_y,d_x-1],\\
            x_{\lambda+1}\neq y_{\lambda+1},\quad &\text{if}~~\lambda\in [d_x,n-1].	
        \end{cases}		
    \end{equation*}
\end{observation}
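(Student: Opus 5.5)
The plan is to write out the coordinates of the two deletion results explicitly, since the observation is purely an index-bookkeeping statement. First I would record that deleting $x_{d_x}$ from $\bm{x}$ gives a sequence of length $n-1$ whose entries satisfy
\[
x(d_x,0)_{\lambda}=\begin{cases} x_{\lambda}, & \lambda\in[1,d_x-1],\\ x_{\lambda+1}, & \lambda\in[d_x,n-1],\end{cases}
\]
and, analogously, $y(d_y,0)_{\lambda}=y_{\lambda}$ for $\lambda\le d_y-1$ and $y(d_y,0)_{\lambda}=y_{\lambda+1}$ for $\lambda\ge d_y$. This follows directly from the definition of a deletion: the prefix before the deleted position is unchanged, and every later symbol moves one position to the left.

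Next I would treat the case $d_x<d_y$. The interval $[1,n-1]$ splits into three pieces. For $\lambda\in[1,d_x-1]$, both sequences are in the ``unshifted'' regime. For $\lambda\in[d_x,d_y-1]$, $\bm{x}(d_x,0)$ is shifted but $\bm{y}(d_y,0)$ is not. For $\lambda\in[d_y,n-1]$, both are shifted. Substituting the displayed formulas into the hypothesis $x(d_x,0)_\lambda\neq y(d_y,0)_\lambda$ gives, respectively, $x_\lambda\neq y_\lambda$, $x_{\lambda+1}\neq y_\lambda$, and $x_{\lambda+1}\neq y_{\lambda+1}$, which are the three claimed cases.

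The case $d_x>d_y$ is the same computation with the roles of the two sequences swapped in the middle range. On $[d_y,d_x-1]$, the sequence $\bm{y}(d_y,0)$ is shifted and $\bm{x}(d_x,0)$ is not, which yields $x_\lambda\neq y_{\lambda+1}$. The two outer ranges give the same conclusions as before. Alternatively, this case follows from the first by exchanging $\bm{x}$ and $\bm{y}$.

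There is no genuine obstacle here. The only care needed is at the interval endpoints: a position $\lambda$ equal to the deleted index belongs to the shifted regime. Empty intervals (for example when $d_x=1$) are handled automatically by the paper's convention that $[i,j]=\emptyset$ for $i>j$.
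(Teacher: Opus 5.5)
Your proof is correct. Write $x(d_x,0)_\lambda$ as $x_\lambda$ for $\lambda<d_x$ and as $x_{\lambda+1}$ for $\lambda\ge d_x$, do the same for $\bm{y}$, and split $[1,n-1]$ at $\min\{d_x,d_y\}$ and $\max\{d_x,d_y\}$; this gives exactly the six stated cases. The paper states this observation without proof as immediate from the definition of a deletion, and your index bookkeeping is precisely the argument it leaves implicit.
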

\subsection{The proof of Lemma \ref{lem:B7B13}}\label{subsec:B7B13}

\begin{IEEEproof}
We prove the first part of this lemma, and the second follows by considering $R(\bm{x})$ and $R(\bm{y})$.
From Lemma \ref{lem:E_k_odd}, we know that $d_H(\bm{x}_{[j_2+1,j_{d_H}]},\bm{y}_{[j_2,j_{d_H}-1]})=1$ if $B_{13}$ is nonempty. Then, by Lemmas \ref{lem:sub} and \ref{lem:E_k}, and Observation \ref{obs:E_k}, we have that
\begin{align*}
    B_7 &\subseteq \left\{\bm{x}( j_3, j_1), \bm{y}(j_{d_H}, j_1)\right\} = \{\bm{x}( j_3, j_2),\bm{y}( j_{d_H}, j_2)\}, 
\end{align*}
\begin{align*} 
    B_{13} &\subseteq \{\bm{x}( j_2, j_1), \bm{y}( j_{d_H}, j_1)\} = \{\bm{x}(j_2, \lambda), \bm{y}( j_{d_H}, \lambda-1)\},  
\end{align*}
where $\lambda \in [j_2 + 1, j_{d_H}]$ and $x_{\lambda} \neq y_{\lambda-1}$.  
Since $B_7$ and $B_{13}$ share a common element, it follows that $|B_7 \cup B_{13}| \leq 3$.  
\end{IEEEproof}

 \subsection{The proof of Lemma \ref{lem:intersect}}\label{subsec:intersect}

 \begin{IEEEproof}
Given any two distinct sequences $\bm{x},\bm{y}\in \C_1$, from Lemma \ref{lem:E_k_odd}, we know that $B_1$ and $B_3$ are empty,  thus, we have $d_H(\bm{x}_{[j_2+1,j_{d_H}]},\bm{y}_{[j_2,j_{d_H}-1]})\ge 1$ and $d_H(\bm{x}_{[j_1+1,j_{d_H-1}]},\bm{y}_{[j_1,j_{d_H-1}-1]})\ge 1$ according to Lemma \ref{lem:E_k_odd}. In addition, if $B_{5}$ is nonempty, then we have $d_H(\bm{x}_{[j_1+1,j_{d_H}]},\bm{y}_{[j_1,j_{d_H}-1]})=1$, thus, we can obtain that $B_{17}$ is nonempty. On the other hand, we have
\begin{align*}
d_H(\bm{x}_{[j_1+1,j_{d_H}]},\bm{y}_{[j_1,j_{d_H}-1]})&= d_H(\bm{x}_{[j_1+1,j_{2}]},\bm{y}_{[j_1,j_{2}-1]})+ d_H(\bm{x}_{[j_2+1,j_{d_H}]},\bm{y}_{[j_2,j_{d_H}-1]})\\
&=d_H(\bm{x}_{[j_1+1,j_{d_H-1}]},\bm{y}_{[j_1,j_{d_H-1}-1]})+d_H(\bm{x}_{[j_{d_H-1}+1,j_{d_H}]},\bm{y}_{[j_{d_H-1},j_{d_H}-1]}),
\end{align*}
which means that 
$d_H(\bm{x}_{[j_2+1,j_{d_H}]},\bm{y}_{[j_2,j_{d_H}-1]})= d_H(\bm{x}_{[j_1+1,j_{d_H-1}]},\bm{y}_{[j_1,j_{d_H-1}-1]})=1$, $d_H(\bm{x}_{[j_1+1,j_{2}]},\bm{y}_{[j_1,j_{2}-1]})=0$ and $d_H(\bm{x}_{[j_{d_H-1}+1,j_{d_H}]},\bm{y}_{[j_{d_H-1},j_{d_H}-1]})=0$. Thus, we know that $B_{13}$ and $B_{15}$ are nonempty from Lemma \ref{lem:E_k_odd}. In addition, we also have $d_H(\bm{x}_{[j_2+1,j_{d_H-1}]},\bm{y}_{[j_2,j_{d_H-1}-1]})=1$ since $d_H(\bm{x}_{[j_2+1,j_{d_H}]},\bm{y}_{[j_2,j_{d_H}-1]})= d_H(\bm{x}_{[j_1+1,j_{d_H-1}]},\bm{y}_{[j_1,j_{d_H-1}-1]})=1$. 
Then, by Lemmas \ref{lem:sub} and \ref{lem:E_k}, and Observation \ref{obs:E_k}, we can obtain that
\begin{align*}
    B_5  &\subseteq \{\bm{x}( j_1, \lambda), \bm{y}( j_{d_H}, \lambda - 1)\}  
          = \{\bm{x}( j_1, 0), \bm{y}( j_{d_H}, 0)\}, 
\end{align*}
\begin{align*} 
    B_7  &\subseteq \{\bm{x}( j_3, j_1), \bm{y}( j_{d_H}, j_1)\}  
          = \{\bm{x}(j_3, j_2), \bm{y}( j_{d_H}, j_2)\}, 
\end{align*}
\begin{align*} 
     B_9  &\subseteq \{\bm{x}(j_1, j_{d_H}), \bm{y}(j_{d_{H}-2}, j_{d_H})\}=\{\bm{x}(j_1, j_{d_H-1}), \bm{y}( j_{d_{H}-2}, j_{d_H-1})\}, 
\end{align*}
\begin{align*} 
    B_{13} &\subseteq \{\bm{y}( j_{d_H}, \lambda - 1), \bm{y}( j_{d_H}, j_1)\}  
          = \{\bm{x}( j_2, \lambda),\bm{x}( j_2, j_1)\}, 
\end{align*}
\begin{align*} 
    B_{15} &\subseteq \{\bm{x}( j_1, j_{d_H}), \bm{x}( j_1, \lambda)\} 
          = \{\bm{y}(j_{d_H-1}, j_{d_H}), \bm{y}( j_{d_H-1}, \lambda - 1)\},
\end{align*}
where $\lambda \in [j_2 + 1, j_{d_H-1}]$ and $x_{\lambda}\neq y_{\lambda-1}$.  

It is clear that $B_{13} \subseteq B_5 \cup B_7$ and $B_{15} \subseteq B_5 \cup B_9$. Moreover,  
\begin{align*}
 B_{13} \cup B_5&\subseteq \{\bm{x}( j_1, \lambda), \bm{y}( j_{d_H}, \lambda - 1), \bm{y}( j_1, j_{d_H})\}, 
\end{align*}
\begin{align*} 
 B_{15} \cup B_5&\subseteq \{\bm{x}( j_1, \lambda), \bm{y}( j_{d_H}, \lambda - 1),\bm{x}( j_1, j_{d_H})\}.
\end{align*}

Next, consider $(d_x, d_y) \in E_{17}$. Since $B_5$ is a nonempty set, we have $E_{17}\subseteq \{(\leq j_1 - 1, j_{d_H}),(j_1, \geq j_{d_H} + 1)\}$ from Lemma \ref{lem:E_k}. Furthermore,
 when $(d_x, d_y) = (\leq j_1 - 1, j_{d_H})$, we have  
\[
B_{17} \subseteq  \{\bm{y}( j_{d_H}, \lambda_1 - 1), \bm{y}( j_{d_H}, \lambda - 1)\}=\{\bm{x}( d_x, \lambda_1), \bm{x}( d_x, \lambda)\}.
\]
 When $(d_x, d_y) = (j_1, \geq j_{d_H} + 1)$, we have  
\[
B_{17} \subseteq \{\bm{x}( j_1, \lambda), \bm{x}( j_1, \lambda_2)\}  
= \{\bm{y}( d_y, \lambda - 1),\bm{y}( d_y, \lambda_2 - 1)\},
\]
where $\lambda_1 \leq j_1,\lambda_2 \geq j_{d_H}$ and $x_{\lambda_1}\neq y_{\lambda_1-1},x_{\lambda_2}\neq y_{\lambda_2-1}$. Thus, we can obtain that
\begin{align*}
  B_{17} \subseteq \{\bm{x}( j_1, \lambda), \bm{x}( j_1, \lambda_2),\bm{y}( j_{d_H}, \lambda_1 - 1), \bm{y}( j_{d_H}, \lambda - 1)\}  
= \{\bm{x}( d_x, \lambda_1), \bm{x}( d_x, \lambda),\bm{y}( d_y, \lambda - 1),\bm{y}( d_y, \lambda_2 - 1)\}.  
\end{align*}
It is clear that  $B_5\subseteq B_{17}$, which implies  $|B_{13} \cup B_{17}| \leq 5$, $|B_{15} \cup B_{17}| \leq 5$ and $|B_{13}\cup B_{15} \cup B_{17}| \leq 6$. Moreover, we have $B_{13}\subseteq B_7\cup B_{17}$ and $B_{15}\subseteq B_9\cup B_{17}$, it follows that $\left|\bigcup_{i\in I\setminus\{7,9\}} B_i\right|\leq 6$ and 
\begin{align*}
 \left|\bigcup_{i\in I} B_i\right|&=| B_7\cup B_9\cup B_{17}|\leq |B_7|+|B_{9}|+ |B_{17}|\le 8,
 \end{align*}
\begin{align*} 
 \left|\bigcup_{i\in I\setminus\{7\}} B_i\right|&=| B_9\cup B_{13}\cup B_{17}|\leq |B_9|+|B_{13}\cup B_{17}|\le 7,
 \end{align*}
\begin{align*} 
 \left|\bigcup_{i\in I\setminus\{9\}} B_i\right|&=| B_7\cup B_{15}\cup B_{17}|\leq |B_7|+|B_{15}\cup B_{17}|\le 7.
\end{align*}
\end{IEEEproof}

\end{document}